\newcommand{\ccond}{{\rm \bf COND}}
\numberwithin{equation}{section}
\newcolumntype{C}[1]{>{\centering\arraybackslash}m{#1}}
\newcolumntype{R}[1]{>{\raggedleft\arraybackslash}m{#1}}
\newtheorem{proposition}{Proposition}
\newtheorem{theorem}{Theorem}
\newtheorem{condition}{Condition}
\newtheorem{corollary}{Corollary}
\newtheorem{definition}{Definition}
\theoremstyle{plain}
\newtheorem{lemma}{Lemma}
\newtheorem{remark}{Remark}
\newtheorem{example}{Example}
\DeclareMathOperator*{\argmin}{arg\,min}
\DeclareMathOperator*{\argmax}{arg\,max}
\newcommand{\haar}{\ensuremath{\textsc{haar}}}
\newcommand{\tr}{\ensuremath{\textsc{tr}}}
\newcommand{\ownparagraph}[1]{\ \\ \noindent {\bf #1.\ }}
\newcommand{\gauss}{\ensuremath{\textsc{gauss}}}
\newcommand{\gaussvar}{\ensuremath{\textsc{g}}}
\newcommand{\Scond}{\ensuremath{{S}_{\textsc{cond}}}}
\newcommand{\ui}{\ensuremath{{\textsc{ui}}}}
\newcommand{\seqrip}{\ensuremath{{\textsc{seq-rip}}}}
\newcommand{\rip}{\ensuremath{{\textsc{rip}}}}
\newcommand{\cond}{\ensuremath{{\textsc{cond}}}}
\newcommand{\commentout}[1]{}
\newcommand{\cU}{\ensuremath{\mathcal U}}
\newcommand{\reals}{\ensuremath{{\mathbb R}}}
\newcommand{\naturals}{\ensuremath{{\mathbb N}}}
\newcommand{\meanspace}{\ensuremath{{\mathtt M}}}
\newcommand{\meanspacenul}{\ensuremath{{\mathtt M}_p}}
\newcommand{\canspace}{\ensuremath{\mathtt B}}
\newcommand{\g}[1]{\ensuremath{#1}}
\newcommand{\gj}[2]{\ensuremath{#1}_{(#2)}}
\newcommand{\si}[2]{\ensuremath{#1}_{#2}}
\newcommand{\sij}[3]{\ensuremath{#1}_{#2(#3)}}
\newcommand{\sn}[2]{\ensuremath{#1}^{#2}}
\renewcommand{\vec}[1]{\ensuremath{{\bm #1}}}
\newcommand{\vm}{\ensuremath{\vec{\mu}}}
\newcommand{\vb}{\ensuremath{\vec{\beta}}}
\newcommand{\hvm}{\ensuremath{\hat{\vec{\mu}}}}
\newcommand{\vmc}{\ensuremath{{\vec{\mu}}^{\circ}}}
\newcommand{\nulhyp}{\ensuremath{\mathcal{P}}}
\newcommand{\althyp}{\ensuremath{\mathcal{Q}}}
\newcommand{\genhyp}{\ensuremath{\mathcal{Q}^{\textsc{gen}}}}
\newcommand{\cQ}{\ensuremath{\mathcal{Q}}}
\newcommand{\nuli}{\ensuremath{p}}
\newcommand{\alti}{\ensuremath{q}}
\newcommand{\pseqrip}{\ensuremath{p_{ \leftsquigarrow q_{\breve{\vm}_{|i-1}}(U_i)}
(\si{U}{i})}}
\newcommand{\Pseqrip}{\ensuremath{P_{ \leftsquigarrow q_{\breve{\vm}_{|i-1}}(U_i)}
(\si{U}{i})}}
\newcommand{\Jp}{\ensuremath{J}}
\newcommand{\peter}[1]{\textcolor{blue}{Peter says: #1}}
\providecommand{\leftsquigarrow}{%
  \mathrel{\mathpalette\reflect@squig\relax}%
}
\newcommand{\reflect@squig}[2]{%
  \reflectbox{$\m@th#1\rightsquigarrow$}%
}
\begin{document}

\begin{frontmatter}
\title{E-Values for exponential families: \\ the general case}
\runtitle{E-values for exponential families: the general case}

\begin{aug}

\author[A]{\fnms{Yunda}~\snm{Hao}\ead[label=e1]{yundahao@cuhk.edu.hk}}
\author[B,C]{\fnms{Peter}~\snm{Grünwald}
\ead[label=e2]{pdg@cwi.nl}}

\address[A]{
Department of Statistics and Data Science, the Chinese University of Hong Kong,
Hong Kong, China\printead[presep={,\ }]{e1}}

\address[B]{CWI, Science Park 123, 1098 XG Amsterdam, The Netherlands\printead[presep={,\ }]{e2}}

\address[C]{Mathematical Institute, Leiden University, Leiden, The Netherlands}
\end{aug}



\begin{abstract}
We analyze common types of e-variables and e-processes for composite exponential family nulls: the optimal e-variable based on the reverse information projection (RIPr), a conditional (COND) e-variable, and the universal inference (UI) and sequen\-tialized RIPr e-processes. Whereas earlier derivations of the RIPr e-variable, for parametric and nonparametric nulls alike, were restricted to cases in which it reduces to a simple-vs.-simple likelihood, we manage to derive it also in `anti-simple'  cases in which it cannot be so reduced. We characterize the RIPr for simple and Bayes-mixture based alternatives, either precisely (for Gaussian nulls and alternatives) or in an approximate sense (general exponential family nulls). We also provide conditions under which the RIPr e-variable is (again exactly vs. asymptotically) equal to the COND e-variable, and we determine, up to $o(1)$, the e-power of the four e-statistics as a function of sample size.
For $d$-dimensional null and alternative, the e-power of UI tends to be smaller by a term of $(d/2) \log n + O(1)$  than that of the COND e-variable, which is the clear winner.
\end{abstract}

\begin{keyword}[class=MSC]
\kwd[Primary ]{62H15}
\kwd[; secondary ]{62L10}
\end{keyword}

\begin{keyword}
\kwd{Exponential family}
\kwd{e-value}
\kwd{growth rate}
\kwd{hypothesis testing}
\end{keyword}

\end{frontmatter}



\section{Introduction}\label{sec:introduction}
Interest in e-values --- a term coined only in 2019 --- has exploded in recent years. Key publications include \cite{wasserman2020universal,GrunwaldHK19,VovkW21,Shafer:2021}; see the introduction \citep{ramdas2023savi} for many more references.
E-values are the values taken on by e-variables and e-processes. E-variables allow for effortless null hypothesis testing under optional continuation to a future study; e-processes additionally allow for optional stopping within a study. 

Here we consider composite null hypotheses {$\nulhyp$}\label{notation:null_hypo} for i.i.d. data {$U_1, U_2, \ldots$}\label{notation:U} with each $U_i$ taking values in some set ${\cal U}$. Recall that an e-variable for sample size $n$ is any nonnegative measurable statistic $S^{(n)}= S^{(n)}(U^n)$ of the data {$U^n = (U_1, \ldots, U_n)$}\label{notation:U_n} such that, under every distribution  {$P \in \nulhyp$}\label{notation:null_P}, we have ${\mathbb E}_P[S^{(n)}] \leq 1.$
For a simple alternative {$\althyp = \{Q \}$}\label{notation:alter_hypo}, expressing {$U_1, U_2, \ldots \sim Q$}\label{notation:Q}, the {\em e-power\/} of e-variable $S^{(n)}$ is given by 
\begin{equation}\label{eq:above}
{\mathbb E}_Q[ \log S^{(n)}]. 
\end{equation}
We consider the e-variable (known variously as GRO (growth rate optimal) or {\em num\'eraire\/}) that maximizes this  e-power over all e-variables on $U^n$. \cite{GrunwaldHK19,ramdas2023savi} provide ample justification to define an analogue of power in this manner. A (perhaps {\em the\/}) central result in e-value theory \citep{GrunwaldHK19,LardyHG24, larsson2024numeraireevariablereverseinformation} says that, under 
mild regularity conditions, this maximizing e-variable is given  by
\begin{equation}\label{eq:ripr}
    S_{Q, \rip}^{(n)} := \frac{q(\sn{U}{n})}{
p_{ \leftsquigarrow q(\sn{U}{n})}(\sn{U}{n})},
\end{equation}
where lower-case letters denote probability densities, and {$p_{ \leftsquigarrow q(\sn{U}{n})}$}\label{notation:RIPr} is the 
{\em reverse information projection (RIPr)\/} of $Q(U^n)$ 
on the set {$\textsc{conv}(\{P(U^n): P \in \nulhyp\})$. Here $\textsc{conv}$}\label{notation:conv} is the convex hull, and for general distributions $R$ for $U_1, U_2, \ldots$, we let $R(U^n)$ denote their marginal on the first $n$ outcomes (see Section~\ref{sec:exponential} for precise definitions of all quantities in this introductory section).  We refer to $S_{Q,\rip}^{(n)}$ alternatively as the RIPr e-variable or simply the {\em optimal\/} e-variable.
Once we have identified the optimal e-variable for simple alternative $\althyp$, there is a standard and easy way to find good e-variables for composite $\althyp$ as well: we represent $\althyp$ by a single $Q^*$ that `learns' in the sense that, for any $Q \in \althyp$, if $U_1, U_2$ are i.i.d. $\sim Q$,  then $q^*(U_i \mid U^{i-1})$ serves as a consistent estimator of $q(U_1)$. This is usually achieved by letting $Q^*$ be a Bayes mixture ('the method of mixtures') or a `prequential (predictive-sequential) plug-in' distribution over $\althyp$  \cite{ramdas2023savi}. For such $Q^*$ we ideally still employ
(\ref{eq:ripr}) with $Q$ set to $Q^*$, but it may be hard to compute. It is often easier to consider its
`sequentialized' (hence potentially sub-optimal) version\label{notation:seqrippy}
\begin{align} \label{eq:seqripnew}
 S_{Q^*,\seqrip}^{(n)}  = \prod_{i=1}^n \frac{q^*(\si{U}{i} \mid \sn{U}{i-1})}{p_{ \leftsquigarrow q^*(\si{U}{i} \mid \sn{U}{i-1})}(\si{U}{i})},  
\end{align}
where for each $\sn{U}{i-1} \in {\cal U}^{i-1}$, $p_{\leftsquigarrow q^*(\si{U}{i} \mid \sn{u}{i-1})}$ is the `local' (in time) RIPr of $Q^*(\si{U}{i}  \mid \sn{U}{i-1} = \sn{u}{i-1})$ on the set
$\textsc{conv}(\nulhyp(U_i))$, i.e. the convex hull of $\nulhyp$ restricted to a single outcome. 

\ownparagraph{A generic issue}
Over the past five years, substantial progress has been made in calculating  $S_{Q,\rip}$ and/or $S_{Q,\seqrip}$ for a wide variety of nulls $\nulhyp$, and with $Q \in \althyp$ or $Q=Q^*$ as above. However, to the best of our knowledge, {\em all\/} these `solved' cases have in common that the RIPrs involved are {\em degenerate}, being achieved by a {\em single\/} element of the null. As a consequence, in all such cases, the resulting e-variable can be written as either 
\begin{equation}\label{eq:simple}
     S_{Q,\rip}^{(n)}= \frac{q(V^n)}{p(V^n)}
 \text{\ (`super-simple case')  \ or\ \ } S_{Q,\seqrip}^{(n)}  = \prod_{i=1}^n \frac{q(\si{U}{i} \mid \sn{U}{i-1})}{p_{i}(U_i)} \text{\ (`simple case')},
\end{equation}
i.e. as a product of simple-vs-simple likelihood ratios, for some fixed $P \in \nulhyp$ (with density $p$) viz. a sequence of elements $P_i$, each in $\nulhyp$, with densities $p_i$ (here $V^n= f(U^n)$ for a measurable function $f$, usually but not always \cite{perez2024estatistics} the identity, and with some abuse of notation we denote the density of $V^n$ also by $q$ resp. $p$). 
We refer to such situations as {\em super-simple\/} and {\em simple}, respectively. The super-simple case includes tests for monotone likelihood ratio families and tests with nuisance parameters expressing a group invariance such as the t-test \cite{perez2024estatistics,GrunwaldHK19,larsson2024numeraireevariablereverseinformation}. The simple case is ubiquitous in nonparametric settings such as testing for the mean of i.i.d. random variables \cite{ramdas2023savi,waudby2024estimating} in which the $\seqrip$ approach is quite natural because $\nulhyp(U_{(1)})$ is itself convex (such e-variables look quite different on the surface, but are really of $S_{Q,\seqrip}$ form \cite[Section 5.2-5.4]{larsson2024numeraireevariablereverseinformation}).

Yet still, many (if not most) testing problems encountered in daily practice are neither simple or super-simple in the sense of (\ref{eq:simple}). In `simple yet not super-simple' settings, reasonable e-variables have been proposed but it has been unknown how much e-power is lost when replacing the optimal-yet-hard-to-compute $S_{Q,\rip}$ by the suboptimal $S_{Q,\seqrip}$. These cases include finite-dimensional linear regression and $k$-sample Bernoulli (contingency table) tests \cite{TurnerLG24,HaoGLLA23}. These are special cases of a general subclass of tests with exponential family nulls that satisfy 
a {\em simplicity condition\/} on  the  covariance matrices corresponding to $\althyp$ and $\nulhyp$,  first identified by \citep{GrunwaldLHBJ24}. For other multivariate exponential family nulls, this condition does not hold, and  no powerful e-values were known to date. 

\ownparagraph{This paper} In this paper we comprehensively analyze e-value and e-power for exponential family nulls, which we may parameterize as 
\begin{equation}\label{eq:expfam}
\nulhyp
 = \{P_{\vec{\mu}}: \vec{\mu} \in \meanspacenul \}\end{equation}
where {$\meanspacenul  \subset \reals^{d}, d 
\geq 1$}\label{notation:mean_param} denotes the mean-value parameter space. 
We identify a counterpart of the covariance {\em simplicity condition\/} on $\nulhyp$ and $\althyp$ mentioned above, denoted the {\em anti-simple case}, under which neither $S_{Q,\rip}$ {nor} $S_{Q,\seqrip}$ can be written as in (\ref{eq:simple}), and we characterize the (nondegenerate) RIPr against simple $\althyp$ for such cases. We show that both in the simple- and anti-simple case, $S_{Q,\rip}^{(n)}$ is well-approximated by another type of {\em conditional\/} e-variable $S_{Q,\cond}$, which is given by
\begin{equation}\label{eq:Scond}
    S_{Q,\cond}^{(n)} :=  \frac{q(\sn{U}{n} \mid Z)}{
p(\sn{U}{n} \mid Z)} := \frac{q(\sn{U}{n} \mid Z)}{
p_W(\sn{U}{n} \mid Z)}, \text{\ with\ } p_W(\sn{U}{n}) := \int p_{\vm}(\sn{U}{n}) d W(\vm) 
\end{equation}
where $Z$ is a minimal sufficient statistic for $\nulhyp$ and $p_W$ the Bayes marginal density based on prior $W$ on $\meanspacenul$. By sufficiency the denominator is identical for any choice of prior $W$, including all degenerate priors putting mass on a single $\vm$, making the left-hand side well-defined and allowing us to drop the subscript $W$.
While exponential family nulls may still feel like a relatively simple setting, they already require development of novel concepts and proof techniques:
our first key insight is that, under the anti-simple condition on their covariances matrices, if $\nulhyp$ is a Gaussian location family and $\althyp$ is also Gaussian, then if we take prior $W_0$ to be a specific a Gaussian with variance $O(1/n)$, we obtain that the marginal distribution of $Z$ under $P_{W_0}$ is identical to its distribution under $Q$ (hence the models are in a sense indistinguishable if we only consider the sufficient statistic $Z$). Combined with the fact that (\ref{eq:Scond}) must hold for this prior,  we show that the RIPr is given by this $p_{W_0}$ and $S_{Q,\cond}$ is equal to $S_{Q,\rip}$ in this Gaussian case. 
We then extend this insight, using a  local central limit theorem with explicit bounds on the error terms \cite{BhattacharyaR76} to analyze the distribution of $Z$ and a nonstandard (because $W_0$ depends on $n$) Laplace approximation of $p_{W_0}$, to find that, still in the anti-simple case, $S_{Q,\cond}$ and $S_{Q,\rip}$ 
achieve asymptotically the same e-power for general exponential family nulls as well. 
Once we have these results in hand, and adding in further results which substantially generalize findings of \cite{grunwald2005asymptotic,kotlowski2011maximum},
we also find precise  asymptotic (up to $o(1)$) expressions for the e-power  of $S^{(n)}_{Q,\seqrip}$, thus quantifying how much is lost by using it instead of $S^{(n)}_{Q,\rip}$. We do the same for yet another popular e-variable, 
\begin{align} \label{eq:uinewb}
S_{Q,\ui}^{(n)}  
:= \frac
{q(\sn{U}{n})}{\sup_{\vm \in \meanspace} p_{\vm}(\sn{U}{n})} = 
\frac
{q(\sn{U}{n})}{p_{\hat{\vm}_{|n}}(\sn{U}{n})},
\end{align}
where the second equality holds whenever the MLE (maximum likelihood estimator) $\hat{\vm}_{|n}$ is well-defined. $S_{Q,\ui}$ is one possible (and in fact quite standard) instantiation of the {\em universal inference} method pioneered by \cite{wasserman2020universal}; see also e.g. \cite{tse2022note}.
By the growth optimality of $S_{Q,\rip}^{(n)}$ we have, for general $Q$, 
\begin{align}\label{eq:basicepower}
    {\mathbb E}_{ Q} [
    \log S_{Q,\rip}^{(n)}
    ] & \geq \max \left\{ 
    {\mathbb E}_{Q} [
    \log S_{Q,\seqrip}^{(n)}
    ],
    {\mathbb E}_{ Q} [
    \log S_{Q,\ui}^{(n)}
    ], 
    {\mathbb E}_{ Q} [
    \log S_{Q,\cond}^{(n)}
    ] \right\}.
    \end{align}
Theorem~\ref{thm:simpleH1gauss} (Gaussian case) and~\ref{thm:simpleH1general} (general exponential family case) refine this statement into precise expressions of e-power for the four e-variables under consideration, not just under the `true' alternative $\althyp= \{Q\}$, but also in the {\em misspecified case\/} when  data are sampled i.i.d. from a distribution $R \neq Q$.

Theorem~\ref{thm:compositeH1gauss} (Gaussian case) and~\ref{thm:compositeH1general} (general case) extend these results to a special type of composite alternatives, namely, a second  exponential family with the same sufficient statistic as the null. Many practical parametric testing problems are of this form, as recalled in and below Example~\ref{ex:twosample}. 
As our most remarkable result (Theorem~\ref{thm:compositeH1general} Part 4), we find that,  when using the method of mixtures, equipping $\althyp$ with a prior $W_1$,  then, under regularity conditions, the RIPr $p_{ \leftsquigarrow q(\sn{U}{n})}(\sn{U}{n})$
is, in an approximate sense, given by the Bayes marginal $P_{W_1}$ for the {\em same\/} prior $W_1$, irrespective of whether we are in the simple or the anti-simple case. 

Additional theoretical insights include that, in the Gaussian case of Theorem~\ref{thm:compositeH1gauss}, the e-variable $S_{\cond}$ coincides with the e-variable obtained by equipping $\nulhyp$ and $\althyp$ with the improper right Haar prior, that is suggested by P\'erez-Ortiz et al. \cite{perez2024estatistics} (Section~\ref{sec:gausscomposite}, (\ref{eq:haariscond})); and that the conditions needed for well-behavedness of the plug-in method and   universal inference are `dual' to each other:  
compare Condition~\ref{cond:plugin}, Section~\ref{sec:generalcomposite}
with Condition~\ref{cond:uinew}, Section~\ref{sec:generalsimple}.

Summarizing (and for simplicity leaving $O(1)$ terms unspecified) our main findings for the --- practically more relevant --- composite $\althyp$ case, we obtain the following relations. These are obtained from  Corollary~\ref{cor:GaussEpowerb} (Section~\ref{sec:gausscomposite}) and Corollary~\ref{cor:generalEpowerb} (Section~\ref{sec:generalcomposite}). Here we used  particular, convenient versions of the plug-in and mixture  method to deal with composite $\althyp$, with precise definitions (including the definition of `strict', `simple', `anti-simple' and $d_{qp}$) in Section~\ref{sec:gauss} and~\ref{sec:general}. We find that, under appropriate (yet mild) regularity conditions on $\nulhyp$ and $\althyp$, that for all $Q \in \althyp$,
    \begin{align}
&  {\mathbb E}_{Q} \left[
    \log {S_{\rip}^{(n)}}/{S_{\cond}^{(n)}} 
    \right] & = \ & o(1). \label{eq:thefirst}\\
&   {\mathbb E}_{ Q} \left[
    \log 
     {S_{\cond}^{(n)}}/{ S_{\ui}^{(n)}}  
    \right] &= \ & \frac{d}{2} \log n + O(1).\label{eq:thefirstB} \\ 
&     {\mathbb E}_{Q} \left[
    \log 
    {S_{\seqrip}^{(n)}}/{S_{\ui}^{(n)}}
    \right] 
   &= \ & \frac{d_{qp}}{2} \log n + O(1)
   \text{\ with\ } 0 < d_{qp} < d, \text{\rm \ in the strict simple case}. 
   \\
 &    {\mathbb E}_{Q} \left[
    \log
    {S_{\seqrip}^{(n)}}/{ S_{\cond}^{(n)}} 
    \right]  &\leq\ & - n \epsilon \label{eq:thelast}
    \text{\rm \ for some $\epsilon > 0$, all large $n$, in the strict anti-simple case}.
    \end{align}
 where $d$ is the dimensionality of the exponential family and  $d_{qp}$ is a notion of `effective dimension' whose exact size depends on $Q$. Note in particular 
that $S_{\seqrip}$ is not competitive in the anti-simple case. 
The resemblance of (\ref{eq:thefirst}) to the ubiquitous BIC model selection criterion is no coincidence: both are derived via a Laplace approximation of a Bayesian marginal likelihood. The occurrence of  $d_{qp} < d$ arises due to the use of plug-in methods --- it also appears in the results on {\em prequential\/} model selection by \cite{grunwald2005asymptotic,kotlowski2011maximum}, whose   techniques for  analyzing log-loss of sequential plug-in estimators we employ and (vastly) generalize. 
Given that, in contrast to $S_{\rip}$, we always know how to calculate $S_{\cond}$, it appears that $S_{\cond}$ is in some sense the clear winner --- a fact that came as a surprise to us, especially since its definition requires neither a prior on the alternative (as in the method of mixtures) nor an estimator (as in the plug-in method): in a sense, it's not ``learning''! However, $S_{\cond}$ does carry one big disadvantage: in contrast to $S_{\ui}$ and $S_{\seqrip}$, usually (i.e. for most exponentially family nulls), $S_{\cond}$ does not define an e-process --- we discuss this further in the final Section~\ref{sec:conclusion}.  
Finally, we emphasize that while some of our results on e-power are asymptotic, the four types of e-variables we employ, and hence the Type-I error guarantees they lead to, are invariably nonasymptotic, i.e. valid at each sample size.

\ownparagraph{Contents} In the remainder of this introductory section, we first (Section~\ref{sec:exponential}) formally define all technical notions referred to in the above. For ease of reference we also provide a list of notations. In Section~\ref{sec:gauss}, we provide our theorems for  multivariate Gaussians, in Section~\ref{sec:general}, we provide the analogous results for general exponential family nulls. Section~\ref{sec:simulations} instantiates these results to two examples for which we provide finite-sample simulations. Section~\ref{sec:gaussproofs} provides the proofs of the Gaussian results, Section~\ref{sec:generalproofs} provides the proofs of the general exponential family results --- these are given in terms of various lemmas that are interesting in their own right and whose proofs are delegated to appendices.
Finally, Section~\ref{sec:conclusion} discusses potential future work. 

\begin{multicols}{2}
\printnomenclature
\end{multicols}


\begin{table*}[tp]
  \centering
  \caption{Summary of notations.}
  \label{notation_part1}
  \small
  \scriptsize
  \footnotesize
  \begin{threeparttable}
    \scalebox{1}{
    \begin{tabular}{l|l|l}
    \hline
    \textbf{Notation}& \textbf{Description}& \textbf{Page}\cr
    \hline
    \hline
    $U \ ;\ {\cal U}$  & $U$ is a random vector representing a single outcome, taking values in ${\cal U}$ & \pageref{notation:U}
    
    \cr
    $U^n$ & $U^n = (U_1, U_2, \ldots, U_n)$, an i.i.d. data sequence
    &\pageref{notation:U_n}
\cr
$X$ & $:= t(U)$, the sufficient statistic for a single outcome: $X=(\gj{X}{1},\ldots,\gj{X}{d}) = (t_1(U),\ldots, t_d(U))$  & \pageref{notation:X}

\cr
$X^n$ & $:= (X_1, \ldots, X_n) = (t(U_1),\ldots, t(U_n))$: the sequence of sufficient statistics in the observed sample &
\pageref{notation:X_n}
\cr
$\mathcal{P}$, $\mathcal{Q}$ & null and alternative hypothesis, respectively & \pageref{notation:alter_hypo}
\cr
$\mathtt{M}_p, \mathtt{M}_q$& mean-value parameter space of $\mathcal{P}, \mathcal{Q}$, respectively & \pageref{notation:mean_paramPP}, \pageref{notation:mean_param_Q}
\cr
$Q, q$, ($Q_{\vm}, q_{\vm}$) & a distribution $(Q)$ and its density $(q)$ in the alternative hypothesis (with mean $\vm$)& \pageref{notation:Q}, \pageref{notation:Q_mu}

\cr
$P, p$, ($P_{\vm}, p_{\vm}$)& a distribution $(P)$ and its density $(p)$ in the null hypothesis (with mean $\vm$)& \pageref{notation:null_P}, \pageref{eq:expfam}

\cr
$p_{W}$, $q_{W}$ & the Bayes marginal density based on prior $W$ on $\mathtt{M}_p$ and $\mathtt{M}_q$, respectively. & \pageref{eq:Scond}, \pageref{notation:qw}

\cr
$R$ & a distribution outside alternative hypothesis and null hypothesis& \pageref{eq:generalizedKL}

\cr
$\textsc{conv}(\mathcal{G})$ & convex hull of set $\mathcal{G}$&
\pageref{notation:conv}
\cr
$p_{\leftsquigarrow q(\sn{U}{n})}$ & reverse information projection (RIPr) of $Q(U^n)$ on the set $\textsc{conv}(\{P(U^n): P \in \mathcal{P} \})$& \pageref{notation:RIPr}
\cr
$p_{\leftsquigarrow q_{\vm}(U_i)}$ &  reverse information projection (RIPr) of $Q_{\vm}(U_i)$ on the set $\textsc{conv}(\{P(U): P \in \mathcal{P} \})$& \pageref{eq:sgauss}

\cr
$S_{Q, \rip}^{(n)}$& $:= \frac{q(U^n)}{p_{\leftsquigarrow q(\sn{U}{n})}}$, growth rate optimal e-variable under $Q(\sn{U}{n})$ with density $q$ & \pageref{eq:ripr}
\cr
$S_{Q,\seqrip}^{(n)}$&  $:= \prod_{i=1}^n \frac{q(\si{U}{i} \mid \sn{U}{i-1})}{p_{ \leftsquigarrow q(\si{U}{i} \mid \sn{U}{i-1})}(\si{U}{i})}$, sequentialized-RIPr e-variable under $Q(\sn{U}{n})$ with density $q$ & \pageref{notation:seqrippy}
\cr
$S_{Q, \cond}^{(n)}, S_{\cond}^{(n)}$ & $:= \frac{q(U^n \vert Z)}{p(U^n \vert Z)}$, $Z$ is a minimal sufficient statistic for $\mathcal{P}$& \pageref{eq:Scond}

\cr
$S_{Q, \ui}^{(n)}$& $:= \frac{q(U^n)}{\sup_{\vm\in \mathtt{M}_p} p_\vm(U^n)}$& \pageref{eq:uinewb}

\cr

$\hat{\vm}_{\vert n}$& $:= \frac{\sum_{i=1}^n x_i}{n}$, i.e. maximum likelihood estimator for $U^n$ on $\mathtt{M}_p$& \pageref{notation:MLE}

\cr

$\breve{\vm}_{\vert n}$ & $:= \frac{x_0 n_0 + \sum_{i=1}^n x_i}{n + n_0}$ for some given $x_0 \in \mathtt{M}_p$ and $n_0 > 0$& \pageref{eq:preq}

\cr

$S_{\breve{\vec\mu}, \seqrip}^{(n)} $ &$:= \prod\limits_{i=1}^n \frac{q_{\breve{\vec{\mu}}_{\vert i-1}}(U_i)}{p_{\leftsquigarrow q_{\breve{\vm}_{\vert i-1}}(U_i)}(U_i)}$& \pageref{eq:sgauss}

\cr

$S_{\breve{\vm},\ui}^{(n)}\  ; \ S_{W_1,\ui}^{(n)}$ & $:= \frac{\prod\limits_{i=1}^n q_{\breve{\vm}_{\vert i-1}}(U_i)}{
\sup_{\vm\in \mathtt{M}_p} p_\vm(U^n)}$ \text{\ and\ } 
$\frac{q_{W_1}(\sn{U}{n})}{\sup_{\vm\in \mathtt{M}_p} p_\vm(U^n) }$, resp.
&
\pageref{eq:sgauss},
\pageref{eq:sgaussb}
\cr

$S_{W_1,\rip}^{(n)}$ &$:= \frac{q_{W_1}(\sn{U}{n})}{
p_{ \leftsquigarrow q_{W_1}(\sn{U}{n})}(\sn{U}{n})}$&
\pageref{eq:sgaussb}
\cr

$d_{ab}$ & trace of $\Sigma_a \Sigma_b^{-1}$, where $\Sigma_a, \Sigma_b$ are positive definite $d \times d$ matrices& \pageref{notation:d_ab}

\cr

$D_{\gauss}(B)$ & $:= \frac{1}{2} \left( - \log \det(B) - \left( d - \tr(B) \right) \right)$, B is an invertible $d\times d$ matrix, $\tr(B)$ is its trace& \pageref{eq:gaussrules}

\cr

$\Sigma_p(\vm), \Sigma_q(\vm)$&  covariance matrix of $X$ under $P_\vm$, $Q_\vm$& \pageref{notation:cov_p}, \pageref{notation:cov_p}

\cr

$D_{\Sigma_r}(\Sigma_q \vert\vert \Sigma_p)$& $:= D_{\gauss}(\Sigma_r \Sigma_p^{-1}) - D_{\gauss}(\Sigma_r \Sigma_q^{-1})$& \pageref{eq:triplegauss}

\cr

$D(R \vert\vert P)$& Kullback-Leibler (KL) divergence between distributions $R$ and $P$ on sample space ${\cal U}$ &
\pageref{notation:KL_RP}
\cr
$D_R(Q \vert\vert P)$& $= D(R \vert\vert P) - D(R \vert\vert Q)$& \pageref{eq:generalizedKL}

\cr
$D(R(\sn{U}{n}) \vert\vert P(\sn{U}{n}))$& Kullback-Leibler (KL) divergence between $R$ and $P$ on $\mathcal{U}^n$& \pageref{notation:KL_RP_on_Un}

\cr

    \hline

    
    \end{tabular}}
    \end{threeparttable}
\end{table*}




\subsection{Technical preliminaries}\label{sec:exponential}
In all our results, the null hypothesis $\nulhyp$ is a regular $d$-dimensional exponential family defined on underlying random element $\g{U}$ taking values in some set $\cU$ and sufficient statistic vector  {$\g{X} = (\gj{X}{1}, \ldots, \gj{X}{d})^\top$}\label{notation:X}. We can write $\gj{X}{j}= t_j(\g{U})$ for given  functions $t_1, \ldots, t_{d}$.
For the Gaussian case, Section~\ref{sec:gauss}, we can take $X=U$, but in general (see e.g. Example~\ref{ex:twosample}), distinguishing between $X$ and $U$ is crucial.
Here and in the sequel we freely use standard properties of exponential families without explicitly referring to their proofs, for which we refer to, e.g. \citep{brown1986fundamentals,BarndorffNielsen78,efron_2022}.
We parameterize $\nulhyp$ in terms of the mean-value parameter space $\meanspacenul$, 
\label{notation:mean_paramPP} 
so that we can write $\nulhyp$ as (\ref{eq:expfam}) with $\meanspacenul  \subset \reals^{d}$
and ${\mathbb E}_{P_{\vec{\mu}}}[X]= \vec{\mu}$. We denote the $d \times d$ covariance matrix of $X$ under $P_{\vec{\mu}}$ as {$\Sigma_p(\vm)$}\label{notation:cov_p}. Recall that $\Sigma_p(\vm)$ is continuous and positive  definite for all $\vm \in \meanspace_p$, and, since the family is regular, $\meanspace_p$ is a convex open set. 
We fix some measure $\nu$ that is mutually absolutely continuous with some, and hence all, $P \in \nulhyp$. We denote the density of $\g{U}$ under  $P_{\vec{\mu}}$ relative to $\nu$  as $p_{\vec{\mu}}$. 

$\nulhyp$ is extended to multiple outcomes by the i.i.d. assumption.  It thus becomes a set of distributions for random process $\si{U}{1}, \si{U}{2}, \ldots$, with for $i \geq 1$, $\si{U}{i}$ an i.i.d. copy of $\g{U}$, and $\si{X}{i} =
(\sij{X}{i}{1}, \ldots, \sij{X}{i}{d})$ with $\sij{X}{i}{j} = t_j(\si{U}{i})$. We abbreviate {$\sn{X}{n} = (\si{X}{1},\ldots, \si{X}{n})$}\label{notation:X_n} and similarly $\sn{U}{n} = (\si{U}{1},\ldots, \si{U}{n})$ and write $P_{\vm}(U^n)$ (density $p_{\vm}(U^n)$) for  the marginal distribution of $U^n$ under $P_{\vm}$.
We use the notation $\nulhyp$ to refer both to the set of distributions for a single outcome $U$ and for the random processes $(U_i)_{i \in \naturals}$, as in each instance it will be clear what is meant. When referring to the set of marginal distributions for the first $n$ outcomes, we use the notation $\nulhyp(U^n) := \{ P_{\vm}(U^n): \vm \in \meanspace_p \}.$

\ownparagraph{MLE and the UI e-variable}
Since we assume $\nulhyp$ to be regular, the maximum likelihood estimator (MLE) in the mean-value parameterization, {$\hvm_{|n}$}\label{notation:MLE}, based on data $U^n$ exists, is unique and equal to $n^{-1} \sum_{i=1}^n X_i$ whenever the latter quantity lies in the set $\meanspacenul$.
With slight abuse of notation, we shall extend the definition of  $\hvm_{|n}$ and simply set it to be equal to  $n^{-1} \sum_{i=1}^n X_i$ even if the latter quantity is not contained in $\meanspace_p$; this can happen, for example, with the Bernoulli distribution if all $X_i$ are equal to $1$, or all are equal to $0$. For such cases we set $p_{{\hvm}_{|n}}(U^n) := \sup_{\vm \in \meanspacenul} p_{\vm}(U^n)$.
In this way, the UI e-variable $S_{Q,\ui}$ can always be written in the rightmost form of (\ref{eq:uinewb}). 

\ownparagraph{The alternative $\althyp$ and the COND e-variable}
The {\em alternative\/} hypothesis $\althyp$ will either (Theorem~\ref{thm:simpleH1gauss} and~\ref{thm:simpleH1general}) be a singleton $\althyp = \{Q \}$ or (Theorem~\ref{thm:compositeH1gauss} and~\ref{thm:compositeH1general}) will itself be another regular exponential family, defined on $\cU$, with the same sufficient statistic $X$ as $\nulhyp$. In both cases we assume that all elements $Q\in \althyp$ are mutually absolutely continuous with $\nu$ and denote by $q$ the density of $Q$ relative to $\nu$.  For the  case that $\althyp$ is an exponential family, we extend all notation in the obvious way: {$\meanspace_q$}\label{notation:mean_param_Q} denotes the mean-value parameter space, $\canspace_q$ is a canonical parameter space, {$\Sigma_q(\vm)$}\label{notation:cov_q} is the $d \times d$ covariance matrix corresponding to {$Q_{\vm}$}\label{notation:Q_mu}, and so on.
Again, $\althyp$ is  extended to sequences of outcomes by the i.i.d. assumption.

In the definition (\ref{eq:Scond}) of $S_{Q,\cond}$, we invariably take 
 $Z= n^{-1/2} \sum_{i=1}^n (X_{i,1}, \ldots, X_{i, d})^\top= n^{1/2} \hat{\vm}_{|n}$.
 $p(\cdot \mid Z)$ is set to be the density of $P \mid Z$, the regular  conditional distribution of $\sn{U}{n}$ given $Z$, which is identical for all $P \in \nulhyp$ because $Z$ is a sufficient statistic and the family is regular. Similarly, we set $q(\cdot \mid Z)$ to be the density of $Q \mid Z$, the regular conditional distribution of $\sn{U}{n}$ given $Z$. Again, this distribution is identical for all $Q \in \althyp$ because either $\althyp=\{Q \}$ is simple or because $Z$ is sufficient for $\althyp$ and $\althyp$ is regular. 
 Both densities $p(\cdot \mid Z)$ and $q(\cdot \mid Z)$    are taken relative to a common underlying (random, since it depends on $Z$) measure $\nu'(\cdot \mid Z)$ which we fix once and for all. We may for example take $\nu' \mid Z$ equal to $(1/2) (P \mid Z) + (1/2) (Q\mid Z)$, but our theorems hold under any other choice as well, since the corresponding $S_{Q,\cond}$s must be almost surely equal  under all $P \in \nulhyp$ and $Q \in \althyp$. 
 To see that 
$S_{Q,\cond}^{(n)}$ indeed is an e-variable, note that for all $P \in \nulhyp$, ${\mathbb E}_P[S_{Q,\cond}^{(n)} ]$ may be rewritten as 
 \begin{align}\label{eq:laura}
{\mathbb E}_{Z \sim P} [ {\mathbb E}_{U^n\sim P}[S_{Q,\cond}^{(n)} |Z ] ] =
  {\mathbb E}_{Z \sim P} \left[ \int p(U^n \mid Z) \cdot \frac{q(U^n\mid Z)}{p(U^n \mid Z)} {\rm d} \nu'(U^n \mid Z)  \right] 
  =
 {\mathbb E}[1 ] = 1.
\end{align}

\ownparagraph{KL Divergence, RIPr and seq-RIPr}
The {\em KL (Kullback-Leibler) divergence\/} $D(\cdot \| \cdot)$ plays a central role in our analysis. For $R,P$ distributions for i.i.d. random process $U_1, U_2, \ldots$ as above, we write {$D(R(V) \| P(V))$}\label{notation:KL_RP} to denote the KL divergence between the $R$- and $P$-marginal distributions for random vector $V$, respectively; for example,  {$D(R(U^n) \| P(U^n))$}\label{notation:KL_RP_on_Un}. Whenever we write $D(R \| P)$, this is meant to abbreviate $D(R(U) \| P(U))$. We further define (as is common, see e.g. \cite{kotlowski2011maximum}) the {\em generalized KL divergence\/} between $Q$ and $P$ under sampling distribution $R$ (also assumed to have a density relative to $\nu$) as  
\begin{equation}\label{eq:generalizedKL}
D_R(Q \|P) := {\bf E}_{U \sim R}\left[ \log \frac{q(U)}{p(U)}
\right] \overset{(a)}{=}
D(R \| P) - D(R \| Q),
\end{equation}
where (a) holds whenever either  $D(R \| P)$ or $D(R \| Q)$ is finite. Yet $D_R(Q \| P)$ is still well-defined in some cases in which both $D(R \| P)$ or $D(R \| Q)$ are $\infty$ \citep{LardyHG24}. 

Now, let $Q^*$ be an arbitrary distribution for random process $\si{U}{1}, \si{U}{2}$ such that its conditional densities $q^*(\si{x}{1}), q^*(\si{x}{2} \mid \si{x}{1}), \ldots$ relative to the chosen background measure $\nu$ are well-defined. 
Gr\"unwald et al. \cite{GrunwaldHK19} define the GRO (growth-rate optimal) e-variable for a sample of size $n$, relative to $Q^*$, to be the e-variable $S^{(n)}$ that, among all e-variables that can be written as a function of data $\sn{U}{n}$, maximizes {\em growth-rate} (\ref{eq:above}) (with $Q$ set to $Q^*$), also known as  {\em e-power\/} \cite{wang2023ebacktesting}, ${\mathbb E}_{\sn{U}{n} \sim Q}[\log S^{(n)}]$, and show that it is given by (\ref{eq:ripr}). 
More precisely, \cite[Theorem 1]{GrunwaldHK19} implies that a sub-probability density $p_{ \leftsquigarrow q^*(\sn{U}{n})}$ exists, such that (\ref{eq:ripr}) gives the growth-optimal e-variable maximizing (\ref{eq:above}) (with $Q$ set to $Q^*$) whenever $D(Q^*(U^n) \| P_{\vm}(U^n)) < \infty$ for all $\vm \in \meanspace_p$ (this result has later been vastly generalized \cite{LardyHG24,larsson2024numeraireevariablereverseinformation} but for us the initial, simple version suffices). \cite[Theorem 1]{GrunwaldHK19} further implies that {\em if\/}
there exists a prior $W$ on $\meanspace_p$ such that the Bayes marginal $p_{W} (\sn{U}{n})$ as defined in (\ref{eq:Scond}) 
achieves $\min D(Q^* \|P_W)$, with the minimum over all distributions $W$ on $\meanspace_p$, {\em then\/} 
$p_{ \leftsquigarrow q^*(\sn{U}{n})}(\sn{U}{n})= p_W(\sn{U}{n})$. In the theorems below we encounter various cases in which such a prior does exist, sometimes degenerate ($W(\{\vm\}=1$ for a single $\vm \in \meanspace_p$) and sometimes Gaussian. 

Applying this RIPr existence result for the case $n=1$ with $Q^*$ substituted by  $Q^*(U_i \mid U^{i-1}= u^{i-1})$ shows that, whenever
$D({Q^*}(U_i \mid U^{i-1}= u^{i-1}) \| P_{\vm}(U_i)) < \infty$ for all $\vm \in \meanspace_p$, the local (in time) RIPr 
$p_{\leftsquigarrow q^*(\si{U}{i} \mid \sn{u}{i-1})}$ of $Q^*(\si{U}{i}  \mid \sn{U}{i-1} = \sn{u}{i-1})$ on the set
$\textsc{conv}(\nulhyp(U_i))$ is well-defined for all $i$, and then so is $S^{(n)}_{Q^*,\seqrip}$ as in (\ref{eq:seqripnew}).



\section{The Gaussian location family}\label{sec:gauss}
\subsection{$\althyp$ simple, $\nulhyp$ multivariate Gaussian location}
\label{sec:gausssimple}
Let the null hypothesis $\nulhyp$ be the Gaussian location family for  $\g{X} = U=  (\gj{X}{1},\ldots, \gj{X}{d})$ with nondegenerate $d \times d$ covariance matrix $\Sigma_{\nuli}$. Fix a particular mean vector $\vec{\mu}^* \in \meanspacenul =  \reals^{d}$. We take as alternative $\althyp= \{Q \}$ with $Q$ a Gaussian for $X$ with mean $\vec{\mu}^*$ but with nondegenerate covariance matrix $\Sigma_{\alti} \neq \Sigma_{\nuli}$.

To prepare for our results, we develop simplified expressions for the generalized KL divergence (\ref{eq:generalizedKL}) that hold in this Gaussian case. 
First, for invertible $d \times d$ matrix $B$, we let 
\begin{equation}\label{eq:gaussrules}
D_{\gauss}(B) := \frac{1}{2} \left( - \log \det(B) - \left( d - \tr(B) \right)   \right),
\end{equation}
where $\tr (B)$ is the trace of $B$ and $\det(B)$ is the determinant of $B$. The subscript derives from the fact that $D_{\gauss}(\Sigma_{\alti} \Sigma^{-1}_{\nuli})$ is the KL divergence between two $d$-dimensional Gaussians that share the same (arbitrary) mean vector and have covariances $\Sigma_{\alti}$ and $\Sigma_{\nuli}$ respectively --- which also tells us that for general positive definite and symmetric  $\Sigma_{\alti}, \Sigma_{\nuli}$,
$D_{\gauss}(\Sigma_{\alti} \Sigma_{\nuli}^{-1}) \geq 0$ with equality iff $\Sigma_{\alti} = \Sigma_{\nuli}$. 
The following characterization, derived from standard properties of determinant and trace,  will prove useful below: letting $\lambda_1, \ldots, \lambda_{d}$ be the eigenvalues of $\Sigma_{\nuli}^{-1/2} \Sigma_{\alti} \Sigma_{\nuli}^{-1/2}$, we have
\begin{align}
    \label{eq:HVAchar}
    D_{\gauss}(\Sigma_{\alti} \Sigma_{\nuli}^{-1}) = D_{\gauss}(\Sigma_{\nuli}^{-1/2} \Sigma_{\alti} \Sigma_{\nuli}^{-1/2}) 
    = \frac{1}{2} \left( \sum_{j=1}^{d} \left( - \log \lambda_j - (1- \lambda_j) \right) \right).
\end{align}

{It is helpful to introduce the following nonstandard notation: for positive definite $d \times d$ matrices $\Sigma_a, \Sigma_b$, we set  {$d_{ab} := \tr(\Sigma_a \Sigma_b^{-1})$}\label{notation:d_ab}.} Using this definition, we further define:
\begin{align}\label{eq:triplegauss}
D_{\Sigma_r}(\Sigma_q \| \Sigma_p) :=  
D_{\gauss}(\Sigma_r \Sigma_p^{-1}) - D_{\gauss}(\Sigma_r \Sigma_q^{-1}) = - \frac{1}{2} \log \det (\Sigma_q \Sigma_p^{-1}) + \frac{d_{rp} - d_{rq}}{2} 
\end{align}
as a generalization (in the sense that $D_{\Sigma_q}(\Sigma_q \| \Sigma_p) = D_{\gauss}(\Sigma_q \Sigma_p^{-1})$\;), of (\ref{eq:gaussrules}). 
Then, for any distribution $R$ on $X$ with mean $\vm^*$ and covariance $\Sigma_r$, we have (as  shown in Section~\ref{sec:gaussproofs} as part of the proof of the theorem below), that with $D_R$ the generalized KL divergence as in (\ref{eq:generalizedKL}),
\begin{equation}
    \label{eq:forevergauss}
D_R(Q_{\vm^*} \| P_{\vm^*}) = D_{ \Sigma_r}(\Sigma_q \| \Sigma_p).
\end{equation}
The prime interest of the theorems below occurs if we analyze e-power under the alternative $Q$, i.e. for the case that $R=Q$. Yet, all results hold more generally under any other sampling distribution $R$
with the same mean $\vm^*$ as $Q$, so we will state and prove our results for such general $R$.
We shall refer to the case $R \neq Q$ as  
{\em misspecified}, in contrast to the {\em well-specified\/} case $R=Q$. Allowing $R\neq Q$ mainly adds strength to Theorem~\ref{thm:compositeH1gauss} for composite alternatives further on; in Theorem~\ref{thm:simpleH1gauss} directly below, if a statistician employs $S_Q$ it implies she has knowledge of $\vm^*$, so the added generality of allowing general $R$ with the same $\vm^*$ is somewhat limited.

In (\ref{eq:gausscond}) below and later in (\ref{eq:condB}) we use both notations $D_R$ and $D_{\Sigma_r}$ simultaneously--- this is done to ease comparison to the exponential family version of our results stated in Section~\ref{sec:exponential}.
In particular,
\begin{align}\label{eq:dlite}
 d_{qq}= d_{pp} = d, 0 < d_{qp}
   \text{\ and if $\Sigma_q - \Sigma_p$ is negative semidefinite, then\ }
   d_{qp} \leq d,
\end{align}
the latter inequality becoming strict if $\Sigma_q - \Sigma_p$ is negative definite. To derive the inequality, note that if $\Sigma_q - \Sigma_p$ is negative definite, then $\Sigma_{\nuli}^{-1/2} \Sigma_{\alti} \Sigma_{\nuli}^{-1/2} - I$ is negative definite, and then all eigenvalues $\lambda_j$ above are smaller than 1, 
so $d_{qp} 
    = \sum_{j=1}^d \lambda_j$ is smaller than $d$.



\begin{theorem}\label{thm:simpleH1gauss}
Let $\mathcal{P}$ and $Q$ be as above
and  let $R$ be a distribution on $X=U$ with the same mean ${\mathbb E}_{R}[\g{X}] = \vm^*$ as $Q$ and with covariance matrix $\Sigma_r$. Let $U_1, U_2, \ldots$ be i.i.d. $\sim R$. Then $D_R(Q \|P_{\vm^*})$ is finite and we have:
\begin{enumerate}
    \item {\bf (UI)} Let $S_{Q,\ui}^{(n)} = \frac{q(\sn{U}{n})}{p_{\hvm_{|n}}(\sn{U}{n})}$ be defined as in (\ref{eq:uinewb}). We have:
    \begin{align}\label{eq:gaussui}
{\mathbb E}_{R} [
    \log S_{Q,\ui}^{(n)}
    ] = n D_R(Q \| P_{\vec{\mu}^*}) - 
\frac{d_{rp}}{2}.
   \end{align}
\item {\bf (COND)} Let $S_{Q, \cond}^{(n)} = \frac{q(\sn{U}{n} \mid Z)}{
p(\sn{U}{n} \mid Z)}$ be as in (\ref{eq:Scond}). We have: 
    \begin{align}\label{eq:gausscond}
{\mathbb E}_{R} [
    \log S_{Q,\cond}^{(n)}
    ] 
    = (n-1) D_R(Q \| P_{\vec{\mu}^*})
    = n \cdot D_R(Q \| P_{\vec{\mu}^*}) - D_{\Sigma_r}(\Sigma_q \| \Sigma_p).
   \end{align}
\item {\bf (seq-RIPr/RIPr Simple Case; seq-RIPr Anti-Simple Case)} Let $S_{Q,\rip}^{(n)}, S_{Q,\seqrip}^{(n)}$ be as in (\ref{eq:ripr}) and (\ref{eq:seqripnew}). If $\Sigma_{\alti}- \Sigma_{\nuli}$ is negative semidefinite (the `simple' case), then we have, for all $n$: 
\begin{align}\label{eq:raar}
S_{Q,\rip}^{(n)} = S_{Q,\seqrip}^{(n)} =  \frac{q(\sn{U}{n})}{p_{\vec{\mu}^*}(\sn{U}{n})} 
\text{\ \ so that\ \ } 
{\mathbb E}_{R} [
    \log S_{Q,\rip}^{(n)}
    ] = n D_R(Q \| P_{\vec{\mu}^*}).
   \end{align}
   If $\Sigma_{\alti}- \Sigma_{\nuli}$ is positive semidefinite (`anti-simple case'), then $S^{(n)}_{Q,\seqrip} = 1$, i.e. it is trivial. 
\item  {\bf (RIPr, Anti-Simple Case)} 
 Let $S_{Q,\rip}^{(n)}, S_{Q,\cond
}^{(n)}$ and $S_{Q,\seqrip}^{(n)}$ be defined as in (\ref{eq:ripr}), (\ref{eq:Scond}) and (\ref{eq:seqripnew}). If $\Sigma_{\alti}- \Sigma_{\nuli}$ is positive semidefinite, then  we have
\begin{equation}\label{eq:ripisgauss}
S_{Q,\rip}^{(n)} = \frac{q(U^n)}{p_{W_0}(U^n)} =
S_{Q,\cond}^{(n)},
\end{equation}
where $W_0$ is a Gaussian prior with mean $\vec{\mu}^*$ and covariance matrix $\Pi_0 :=(\Sigma_{\alti}- \Sigma_{\nuli})/n$. $P_{W_0} = P_{\leftsquigarrow q(U^n)}$ is then the RIPr of $Q$ onto $\textsc{conv}(\nulhyp(U^n))$.  
\end{enumerate}
\end{theorem}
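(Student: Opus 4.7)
The theorem has four parts; Parts 1 and 2 are direct Gaussian calculations, while Parts 3 and 4 hinge on identifying the RIPr via explicit candidate priors on $\meanspace_p$.

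For Part 1 (UI), I would decompose $\log S_{Q,\ui}^{(n)} = \log\bigl(q(U^n)/p_{\vm^*}(U^n)\bigr) - \log\bigl(p_{\hvm_{|n}}(U^n)/p_{\vm^*}(U^n)\bigr)$. The first piece has $R$-expectation $n D_R(Q\|P_{\vm^*})$, while the standard Gaussian identity $\sum_i (U_i - \hvm_{|n})^\top \Sigma_p^{-1}(U_i - \hvm_{|n}) = \sum_i (U_i - \vm^*)^\top \Sigma_p^{-1}(U_i - \vm^*) - n(\hvm_{|n}-\vm^*)^\top\Sigma_p^{-1}(\hvm_{|n}-\vm^*)$ collapses the second piece to $\tfrac12 n (\hvm_{|n} - \vm^*)^\top \Sigma_p^{-1}(\hvm_{|n} - \vm^*)$, whose $R$-expectation equals $\tfrac12\tr(\Sigma_p^{-1}\Sigma_r) = d_{rp}/2$. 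For Part 2 (COND), sufficiency of $Z$ lets me write $p(U^n\mid Z) = p_{\vm^*}(U^n)/p_{\vm^*}(Z)$, so that $S_{Q,\cond}^{(n)} = \bigl(q(U^n)/p_{\vm^*}(U^n)\bigr)\cdot\bigl(p_{\vm^*}(Z)/q(Z)\bigr)$. The first factor again contributes $n D_R(Q\|P_{\vm^*})$; the second is the log-ratio of two Gaussians on $Z$ with common mean $\sqrt n \vm^*$ and covariances $\Sigma_p$ versus $\Sigma_q$, whose $R$-expectation (noting $\mathrm{Cov}_R(Z) = \Sigma_r$) simplifies via (\ref{eq:triplegauss}) to $-D_{\Sigma_r}(\Sigma_q\|\Sigma_p)$. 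The two equivalent forms in (\ref{eq:gausscond}) follow from identity (\ref{eq:forevergauss}), which I would verify by a parallel Gaussian calculation en route.

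For Part 3 the central ingredient is the single-outcome likelihood-ratio moment
\begin{equation*}
{\mathbb E}_{U \sim Q}\!\left[\frac{p_\vm(U)}{p_{\vm^*}(U)}\right] = \exp\!\left(\tfrac12 (\vm-\vm^*)^\top \Sigma_p^{-1}(\Sigma_q - \Sigma_p)\Sigma_p^{-1}(\vm-\vm^*)\right),
\end{equation*}
obtained by completing the square inside the Gaussian integral. In the simple case $\Sigma_q \preceq \Sigma_p$ the exponent is $\leq 0$ for every $\vm$; averaging over any prior $W$ and invoking Jensen gives ${\mathbb E}_Q[p_W/p_{\vm^*}] \leq 1$, hence $D(Q\|P_W) \geq D(Q\|P_{\vm^*})$, so the local RIPr is $P_{\vm^*}$. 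Applying the same argument directly to i.i.d.\ $U^n$ yields $S_{Q,\rip}^{(n)} = S_{Q,\seqrip}^{(n)} = q(U^n)/p_{\vm^*}(U^n)$, with expectation as in Part 1. In the anti-simple case the same moment shows that the Gaussian prior $N(\vm^*, \Sigma_q - \Sigma_p)$ produces a single-outcome Bayes marginal equal to $Q$; hence the local RIPr is $Q$ itself and $S_{Q,\seqrip}^{(n)} \equiv 1$.

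For Part 4 I would lift the anti-simple trick from one outcome to the sufficient statistic $Z$. Under $P_\vm$, $Z \sim N(\sqrt n \vm, \Sigma_p)$, so mixing with $W_0 = N(\vm^*, (\Sigma_q - \Sigma_p)/n)$ yields $P_{W_0}(Z) = N(\sqrt n \vm^*, \Sigma_p + (\Sigma_q - \Sigma_p)) = N(\sqrt n \vm^*, \Sigma_q) = Q(Z)$. By sufficiency of $Z$, the conditional $p_\vm(U^n\mid Z)$ is $\vm$-free, so $p_W(U^n) = p(U^n\mid Z)\,p_W(Z)$ for every prior $W$; combined with $p_{W_0}(Z) = q(Z)$ this identifies $q(U^n)/p_{W_0}(U^n) = q(U^n\mid Z)/p(U^n\mid Z) = S_{Q,\cond}^{(n)}$. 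To confirm that $P_{W_0}$ is the RIPr, the criterion recalled in Section~\ref{sec:exponential} requires $p_{W_0}$ to minimize $D(Q\|P_W)$ over priors; the factorization and marginal match give, for every $W$,
\begin{equation*}
{\mathbb E}_{U^n \sim Q}\!\left[\frac{p_W(U^n)}{p_{W_0}(U^n)}\right] = {\mathbb E}_{Z \sim Q}\!\left[\frac{p_W(Z)}{p_{W_0}(Z)}\right] = {\mathbb E}_{Z \sim P_{W_0}}\!\left[\frac{p_W(Z)}{p_{W_0}(Z)}\right] = 1,
\end{equation*}
whence Jensen yields $D(Q\|P_W) \geq D(Q\|P_{W_0})$. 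The main obstacle is the conceptual step of matching marginals on $Z$ rather than on $U^n$: one cannot have $P_W = Q$ on $U^n$, since any nondegenerate mixture is exchangeable but not i.i.d., whereas $Q$ is i.i.d.; yet one \emph{can} arrange $P_{W_0}(Z) = Q(Z)$ by Gaussian scaling, after which sufficiency of $Z$ reduces the RIPr optimality condition to a trivial identity.
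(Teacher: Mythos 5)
Your proposal is correct and reaches the same conclusions as the paper, but by a more self-contained, elementary route for Parts 3 and 4. The paper's Part 3 (simple case) invokes Theorem 1 of \cite{GrunwaldLHBJ24} to identify the single-outcome RIPr as $P_{\vm^*}$, and then Corollary 1 of \cite{GrunwaldHK19} (``there can be only one e-variable of the form $q/p$ with $p$ in $\textsc{conv}(\nulhyp)$, and it is the RIPr'') to lift the identification to $U^n$ and to establish $S_{Q,\seqrip} = S_{Q,\rip}$. You instead compute $\mathbb{E}_{U\sim Q}\bigl[p_\vm(U)/p_{\vm^*}(U)\bigr]$ by completing the square, observe that negative semidefiniteness of $\Sigma_q - \Sigma_p$ makes it $\leq 1$ uniformly in $\vm$, and deduce $D(Q^{(n)}\|P_W^{(n)}) \geq D(Q^{(n)}\|P_{\vm^*}^{(n)})$ for every prior $W$ by Fubini plus Jensen --- in effect re-deriving the Gaussian specialization of the \cite{GrunwaldLHBJ24} covariance criterion from first principles. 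Similarly, for Part 4 the paper's Lemma~\ref{lem:referee} again appeals to Corollary 1 of \cite{GrunwaldHK19} to conclude that the Bayes factor $q/p_{W_0}$, being an e-variable, must be $S_{Q,\rip}^{(n)}$; you instead establish the RIPr optimality directly by noting $P_{W_0}(Z) = Q(Z)$, factoring through sufficiency of $Z$ so that $p_W(U^n)/p_{W_0}(U^n) = p_W(Z)/p_{W_0}(Z)$, and then verifying $\mathbb{E}_{U^n\sim Q}[p_W/p_{W_0}] = 1$ for all $W$, whence Jensen. The paper's route is more modular and extends more readily beyond the Gaussian case (as it indeed does in Theorems~\ref{thm:simpleH1general}--\ref{thm:compositeH1general}); your route is shorter, self-contained, and makes the Gaussian mechanism --- matching marginals on the sufficient statistic rather than on $U^n$ --- completely transparent, which you correctly flag as the conceptually crucial step. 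Parts 1 and 2 agree with the paper's calculations (the quadratic-form expectations underlying Proposition~\ref{prop:basicgauss} and the $Z$-factorization underlying Lemma~\ref{lem:referee}) up to bookkeeping. No gaps.
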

\begin{remark}[Simple vs. Anti-Simple]
{\rm We see that, if $\Sigma_q- \Sigma_p$ is negative semidefinite, then the RIPr is simply an element of $\nulhyp$ and the growth-optimal e-variable is of the same form as it would be if $\nulhyp$ were simple. Following \cite{GrunwaldLHBJ24} and as anticipated in the introduction we call this the {\em simple\/} case and indeed Part 3 of the theorem is really a direct corollary of the main result of that paper (all other parts are novel). We will formalize the notion of `simplicity' for general exponential families in Section~\ref{sec:general}.}
\end{remark}
\begin{remark}[{Remark: comparing e-power}]
{\rm  
In the {\em strictly} simple case that  $\Sigma_{\alti} - \Sigma_{\nuli}$ is negative  definite, we have, already mentioned below (\ref{eq:dlite}), that all $\lambda_j$ in (\ref{eq:HVAchar}) are smaller than 1. Similarly, we are in the strict anti-simple case iff all these $\lambda_j$ are greater than 1. 
This eigen-characterization leads to the following corollary about the e-power for the strict anti-simple and simple cases of UI vs. the other e-variables. 
In the strict anti-simple case,
we have, by Part 4 above, $S_{Q,\cond}^{(n)}= S_{Q,\rip}^{(n)}$, so, using (\ref{eq:gausscond}) and (\ref{eq:HVAchar}) and the fact that in this strict anti-simple case, all eigenvalues $\lambda_j$ are  larger than $1$, 
we find (\ref{eq:eigenWaan}) below. Using (\ref{eq:dlite}) for the strict simple case, we find (\ref{eq:eigenAardig}):}
\end{remark}
\begin{corollary}{\bf [e-power and growth optimality]}
 \label{cor:GaussEpowera}
In the strict anti-simple case,  \begin{align}
 {\mathbb E}_{Q} \left[
    \log {S_{Q,\rip}^{(n)}}/{
    S_{Q,\ui}^{(n)}}
    \right]  =  
    & 
    \frac{d_{qp}}{2}
  - D_{\Sigma_q}(\Sigma_q \| \Sigma_p) 
  \label{eq:eigenWaan} 
  =
  \frac{d}{2} + \frac{1}{2} \sum_{j=1}^d \log \lambda_j >  \frac{d}{2},     
  \text{\ \it whereas}
\\ 
\label{eq:eigenAardig}
     {\mathbb E}_{Q} \left[
    \log S_{Q,\rip}^{(n)}/ 
    S_{Q,\ui}^{(n)}
    \right]    = &   
    \frac{d_{qp}}{2} 
    < \frac{d}{2},
    \text{\ \it in the strict simple case}. 
\end{align}
\end{corollary}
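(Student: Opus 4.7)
The plan is to derive both identities by direct subtraction of the expressions in parts (1), (2), (3), (4) of Theorem~\ref{thm:simpleH1gauss}, specialized to $R = Q$, and then to use the eigen-characterization (\ref{eq:HVAchar}) together with $d_{qp} = \sum_{j=1}^{d} \lambda_j$ to simplify.

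First, I would specialize part (1) to $R = Q$: since $\Sigma_r = \Sigma_q$ in this case, $d_{rp} = d_{qp}$, so
\[
\mathbb{E}_{Q}[\log S_{Q,\ui}^{(n)}] \;=\; n\, D_{Q}(Q \| P_{\vm^{*}}) - \tfrac{d_{qp}}{2}.
\]
For the strict anti-simple case, I would then invoke part (4) to replace $S_{Q,\rip}^{(n)}$ by $S_{Q,\cond}^{(n)}$, and use part (2) at $R = Q$, noting $D_{\Sigma_q}(\Sigma_q \| \Sigma_p) = D_{\gauss}(\Sigma_q \Sigma_p^{-1}) - D_{\gauss}(I) = D_{\gauss}(\Sigma_q \Sigma_p^{-1})$ since $D_{\gauss}(I) = 0$. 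Subtracting gives
\[
\mathbb{E}_{Q}\!\left[\log S_{Q,\rip}^{(n)}/S_{Q,\ui}^{(n)}\right] = \tfrac{d_{qp}}{2} - D_{\gauss}(\Sigma_q \Sigma_p^{-1}),
\]
which is the first equality in (\ref{eq:eigenWaan}). For the second equality I would plug in (\ref{eq:HVAchar}) and use $d_{qp} = \tr(\Sigma_{q}\Sigma_{p}^{-1}) = \sum_{j} \lambda_{j}$, so that
\[
\tfrac{d_{qp}}{2} - \tfrac{1}{2}\sum_{j=1}^{d}\bigl(-\log \lambda_{j} - (1-\lambda_{j})\bigr)
= \tfrac{1}{2}\sum_{j} \lambda_{j} + \tfrac{1}{2}\sum_{j}\log \lambda_{j} + \tfrac{d}{2} - \tfrac{1}{2}\sum_{j} \lambda_{j}
= \tfrac{d}{2} + \tfrac{1}{2}\sum_{j=1}^{d} \log \lambda_{j}.
\]
The strict inequality $> d/2$ then follows because strict anti-simplicity ($\Sigma_q - \Sigma_p$ positive definite) forces all eigenvalues $\lambda_j$ of $\Sigma_p^{-1/2}\Sigma_q \Sigma_p^{-1/2}$ to exceed $1$, so each $\log \lambda_j > 0$.

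For the strict simple case, the argument is even shorter: part (3) at $R = Q$ gives $\mathbb{E}_{Q}[\log S_{Q,\rip}^{(n)}] = n\, D_{Q}(Q \| P_{\vm^{*}})$, and subtracting the UI expression yields $\mathbb{E}_{Q}[\log S_{Q,\rip}^{(n)}/S_{Q,\ui}^{(n)}] = d_{qp}/2$. Strict simplicity makes every $\lambda_j < 1$, so $d_{qp} = \sum_{j} \lambda_{j} < d$, giving (\ref{eq:eigenAardig}).

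There is no real obstacle here; the corollary is essentially algebraic bookkeeping on top of Theorem~\ref{thm:simpleH1gauss}. The only points requiring care are (i) the identification $D_{\Sigma_q}(\Sigma_q \| \Sigma_p) = D_{\gauss}(\Sigma_q \Sigma_p^{-1})$ via $D_{\gauss}(I) = 0$, and (ii) matching the eigenvalue cancellation $\sum_j \lambda_j - \sum_j \lambda_j = 0$ so that the leftover $\frac{d}{2} + \frac{1}{2}\sum_j \log \lambda_j$ emerges cleanly. Both are direct consequences of definitions (\ref{eq:gaussrules})--(\ref{eq:triplegauss}) and should not require any further lemmas.
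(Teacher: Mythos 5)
Your proposal is correct and follows exactly the route the paper sketches in the remark preceding the corollary: specialize Theorem~\ref{thm:simpleH1gauss} to $R=Q$, use Part 4 to identify $S_{Q,\rip}$ with $S_{Q,\cond}$ in the anti-simple case and Part 3 in the simple case, then reduce via $D_{\gauss}(I)=0$ and the eigenvalue identities $d_{qp}=\sum_j\lambda_j$ and (\ref{eq:HVAchar}). The algebraic cancellation and the strictness of the inequalities from $\lambda_j>1$ (resp.\ $\lambda_j<1$) are handled exactly as intended.
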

Even though $S_{Q,\ui}$ therefore always has less e-power than $S_{Q,\rip}$, the difference (in contrast to the composite $\althyp$ case in Section~\ref{sec:gausscomposite} below) does not keep growing with $n$.
The conditional e-variable $S_{Q,\cond}$ is identical to $S_{Q,\rip}$ in the anti-simple case but in the simple case it is hard to compare to $\ui$; in general neither one outperforms the other. Again, with composite $\althyp$, the situation changes.   

\subsection{$\nulhyp$, $\althyp$ both multivariate Gaussian location}
\label{sec:gausscomposite}
We now consider the case that $\nulhyp$ is a $d$-dimensional Gaussian location family 
as before, but now $\cQ$ is composite: it is itself the full $d$-dimensional Gaussian location family with nondegenerate covariance matrix $\Sigma_q\neq \Sigma_p$. 
The mean-value parameter spaces are   \(\mathtt{M}_q = \mathtt{M}_p = \mathbb{R}^d\). 

We again establish explicit formulae and corresponding expected logarithmic growth of our various e-variables. 
Since $\althyp$ is composite, we need a method to estimate or `learn' a distribution in $\althyp$ given a data sequence \( X_1, \ldots, X_n \). 
As stated in the introduction, we use two standard methods for this. The first method, usually called the plug-in method, is to use at each $i$ a regularized ML estimator based on the past and defining  $Q$ as the product of predictive distributions.
We use the variation studied by \cite{grunwald2005asymptotic} who, following \cite{Dawid84}, call this the {\em prequential ML method}, setting, for each $i$,  
\begin{equation}
    \label{eq:preq}
Q^*(U_i \mid U^{i-1}) := 
Q_{\breve{\vec{\mu}}_{|i-1}} (U_i), \text{\ where\ }
\breve{\vec{\mu}}_{|n} := \frac{x_0 n_0 + \sum_{i=1}^n x_i}{n + n_0}
\end{equation}
for some fixed constant $n_0 > 0$ (not necessarily an integer) and $x_0 \in \meanspace_p$. 
We take $n_0 > 0$ to ensure that $\breve{\vm}_{|0}$ is well-defined and also, when later applied to exponential families rather than Gaussians, to make sure that the relevant KL divergences remain finite. 

Whenever we use an e-variable of the form $S_{Q^*,\cdot}$ with $Q^*$ defined by (\ref{eq:preq}), we abbreviate this to $S_{\breve{\vm},\cdot}$. 
We use this method for e-variables of UI and sequential RIPr type. In particular, we set
\begin{align}
    \label{eq:sgauss}
    &
     S_{\breve{\vec\mu},\ui}^{(n)} = \frac{\prod_{i=1}^n q_{\breve{\vec{\mu}}_{|i-1}}(U_i)}{p_{\hvm_{|n}}(\sn{U}{n})},    
 \ \ \  S_{\breve{\vec\mu}, \seqrip}^{(n)} = \prod\limits_{i=1}^n \frac{q_{\breve{\vec{\mu}}_{|i-1}}(U_i)}{
\pseqrip 
},
\end{align}
where
$\pseqrip$
is the RIPr of $Q_{\breve{\vec{\mu}}_{|i-1}}(U_i)$ onto $\textsc{conv}(\mathcal{P}(U_i))$
as in (\ref{eq:seqripnew}).  

The second method to learn the alternative as data comes in, known as Robbins' method of mixtures \cite{ramdas2023savi}, is to set $Q^*$ to a Bayes marginal distribution, $Q^*(U^n) :=Q_{W_1}(U^n)$
where {$q_{W_1}(U^n) = \int q_{\vm}(U^n) dW_1(\vm)$}\label{notation:qw} for some prior $W_1$ on $\meanspace_q$. In Theorem~\ref{thm:compositeH1gauss} below we uniquely consider Gaussian priors, i.e. priors of the form $W_1=N(\vm_1,\Pi_1)$ where we implicitly fix the dimension to $d$, i.e. $\vm_1 \in \reals^d$  and $\Pi_1$ a $d \times d$ nondegenerate covariance matrix. 
We use this method for the UI e-variables and for the growth-optimal e-variables, i.e. of RIPr type, in the anti-simple case. 
In particular, we set
\begin{align}
    \label{eq:sgaussb}
    &
      S_{W_1,\ui}^{(n)} = \frac{q_{W_1}(U^n)}{p_{\hvm_{|n}}(\sn{U}{n})},
 \ \ \  
S_{W_1,\rip}^{(n)} = \frac{q_{W_1}(\sn{U}{n})}{
p_{ \leftsquigarrow q_{W_1}(\sn{U}{n})}(\sn{U}{n})}
\end{align}
with 
$p_{ \leftsquigarrow q_{W_1}(\sn{U}{n})}$  the RIPr of $Q_{W_1}(U^n)$ onto $\textsc{conv}(\nulhyp(U^n)
)$ as in (\ref{eq:ripr}).

\begin{theorem}\label{thm:compositeH1gauss}
Let $\mathcal{P}, \mathcal{Q}$ be as above, and let $R$ be a distribution on $X$($= U$) with mean ${\mathbb E}_{R}[\g{X}] = \vm^*$ and with covariance matrix $\Sigma_r$. Let $U_1, U_2, \ldots$ be i.i.d. $\sim R$. Then $D_R(Q_{\vm^*} \| P_{\vm^*})$ is finite  and we have: 
\begin{enumerate}
    \item {\bf (UI)} Let $S_{\breve{\vec\mu},\ui}^{(n)}$ be as in (\ref{eq:sgauss}) above. We have:  
    \begin{align}\label{eq:seq_UI_gauss}
   {\mathbb E}_{\sn{U}{n} \sim R} [
    \log S_{\breve{\vec\mu},\ui}^{(n)}
    ] = 
n \cdot D_R(Q_{\vm^*} \| P_{\vm^*})
 -    \frac{d_{rq}}{2} \log n 
 + O_{\breve{\vm},\ui}(1),
 \end{align}
 where, with $O_{\textsc{a}},O_{\textsc{b}}$ given by (\ref{eq:Oabc}) below, 
 \begin{align*}
O_{\breve{\vm},\ui}(1)=   -  \frac{d_{rp}}{2} - \frac{1}{2} \cdot O_{\textsc{a}}( \| x_0 - \vm^* \|_2^2) - \frac{d_{rq}}{2}\cdot O_{\textsc{b}}(1).
   \end{align*}
   Similarly, let
$S_{W_1,\ui}^{(n)}$ be as in (\ref{eq:sgaussb}) above. We have:  
    \begin{align}\label{eq:seq_UI_gaussb}
   {\mathbb E}_{\sn{U}{n} \sim R} [
    \log S_{W_1,\ui}^{(n)}
    ] = n \cdot 
   D_R(Q_{\vm^*} \| P_{\vm^*}) - \frac{d}{2} \log \frac{n}{2 \pi}
    + O_{W_1,\ui}(1),
    \end{align}
where a precise expression for  $O_{W_1,\ui}$ is given in (\ref{eq:Oabc}) below; up to $o(1)$, it is: 
\begin{align}\label{eq:OWui}
 O_{W_1,\ui}(1) 
    = \frac{d_{rq} - d_{rp}}{2} + \frac{1}{2} \log \det \Sigma_q + \log  w_1(\vm^*)  + o(1).
   \end{align}
\item{ \bf (COND)} Let $S_{\cond}^{(n)}$ be as in (\ref{eq:Scond}) above. We have: 
\begin{align}
    \label{eq:condB}
       {\mathbb E}_{R} [
    \log S_{\cond}^{(n)}
    ] =& (n -1) \cdot D_{\Sigma_r}(\Sigma_q \| \Sigma_p) = 
    n \cdot  D_R(Q_{\vm^*} \| P_{\vm^*}) -   D_{  \Sigma_r}(\Sigma_q \| \Sigma_p).
\end{align}
\item {\bf (seq-RIPr, Simple and Anti-Simple Case)} 
Suppose that $\Sigma_{\alti}- \Sigma_{\nuli}$ is negative semidefinite.
Let $S_{\breve{\vec\mu}, \seqrip}^{(n)}$ be as in (\ref{eq:sgauss}) above. We have: 
$\Pseqrip
= P_{\breve{\vec{\mu}}_{|i-1}}(U_i)$ and
\begin{align}\label{eq:Gaussseq_LVA}
{\mathbb E}_{R} [
    \log S_{\breve{\vec\mu},\seqrip}^{(n)}
    ] 
    = n \cdot D_R(Q_{\vm^*} \| P_{\vm^*})
    + \frac{d_{rp} - d_{rq}}{2} \log n
    + O_{\seqrip}(1),
    \end{align} 
    where  
$O_{\seqrip}(1) =  -
O_{\textsc{a}}( \| x_0 - \vm^* \|_2^2) - \frac{d_{rp} - d_{rq}}{2} \cdot O_{\textsc{b}}(1)$.
If $\Sigma_{\alti}- \Sigma_{\nuli}$ is positive semidefinite, then $S_{\breve{\vec\mu},\seqrip}^{(n)}
= 1$ is trivial.

\item  {\bf (RIPr, General Case)}
Let $S_{W_1,\rip}^{(n)}$ be as in (\ref{eq:sgaussb}) so that $S_{W_1,\rip}^{(n)}$ is the GRO e-variable. Then for all $n$ such that $\Pi_1 +(\Sigma_{\alti}- \Sigma_{\nuli})/n$ is positive semidefinite (in particular, in the Anti-Simple case, $\Sigma_{\alti}- \Sigma_{\nuli}$ is itself positive semidefinite so this will then hold for all $n$), 
\begin{equation}\label{eq:W0depends}
p_{ \leftsquigarrow q_{W_1}(\sn{U}{n})}(\sn{U}{n})=p_{W_0}(\sn{U}{n})
\text{\ where\ }
W_0 = \mathcal{N}(\vec{\mu}_1, \Pi_1 + ({\Sigma_{\alti}- \Sigma_{\nuli}})/n),\end{equation}
and we have
\begin{equation}\label{eq:ripisgaussb}
S_{W_1,\rip}^{(n)} = \frac{q_{W_1}(U^n)}{p_{W_0}(U^n)} =
S_{\cond}^{(n)}.
\end{equation}
\commentout{
\begin{align}\label{eq:GaussGlobal_HVA}
{\mathbb E}_{\sn{U}{n} \sim R} [
    \log S_{W_1,\rip}^{(n)}
    ] 
    = (n-1) \cdot D_R(Q_{\vm^*} \| P_{\vm^*}).
\end{align}
}
\end{enumerate}
\end{theorem}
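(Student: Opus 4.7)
The proof splits along the four parts of the statement: Parts 2 and 3 will largely reuse Theorem~\ref{thm:simpleH1gauss}, Part 4 is the conceptual heart, and Part 1 is essentially moment computation. For Part 2 (COND), I observe that because $\althyp$ is a regular exponential family sharing sufficient statistic $Z$ with $\nulhyp$, the conditional density $q_{\vm}(U^n\mid Z)$ does not depend on $\vm\in\meanspace_q$; hence $S_{\cond}^{(n)}$ coincides as a statistic with $S_{Q_{\vm^*},\cond}^{(n)}$ of Theorem~\ref{thm:simpleH1gauss}, and (\ref{eq:condB}) becomes exactly (\ref{eq:gausscond}). For Part 3 (seq-RIPr) in the simple case, I apply Theorem~\ref{thm:simpleH1gauss}(3) pointwise at each $i$ to the sub-problem $\{Q_{\breve{\vm}_{|i-1}}\}$ versus $\nulhyp$: since both distributions share mean $\breve{\vm}_{|i-1}$, the local RIPr is $P_{\breve{\vm}_{|i-1}}$, and $S_{\breve{\vm},\seqrip}^{(n)}$ telescopes into $\prod_i q_{\breve{\vm}_{|i-1}}(U_i)/p_{\breve{\vm}_{|i-1}}(U_i)$. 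Expanding the Gaussian log-ratio and exploiting independence of $U_i$ from $\breve{\vm}_{|i-1}$, the zeroth-order term sums to $n\cdot D_R(Q_{\vm^*}\|P_{\vm^*})$, while the quadratic correction from $\mathrm{Cov}_R(\breve{\vm}_{|i-1})\approx \Sigma_r/(i+n_0)$ contributes $\frac{d_{rp}-d_{rq}}{2}\log n + O_{\seqrip}(1)$, where $O_{\seqrip}(1)$ absorbs a bias term $O_{\textsc{a}}(\|x_0-\vm^*\|_2^2)$ and a harmonic-sum remainder $O_{\textsc{b}}(1)$. The anti-simple case is immediate from Theorem~\ref{thm:simpleH1gauss}(3).

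Part 4 is the heart of the matter; my approach is guess-and-verify. I propose the candidate $W_0 := \mathcal{N}(\vm_1, \Pi_1+(\Sigma_q-\Sigma_p)/n)$, a legitimate Gaussian precisely under the stated positivity hypothesis. A direct Gaussian moment computation shows that $P_{W_0}$ and $Q_{W_1}$ induce the same marginal $\mathcal{N}(\vm_1,\Pi_1+\Sigma_q/n)$ on $Z=\hvm_{|n}$. Since $Z$ is sufficient for both $\nulhyp$ and $\althyp$, for every prior $W$ on $\meanspace_p$ one has $P_W(U^n\mid Z)=P(U^n\mid Z)$ and $Q_{W_1}(U^n\mid Z)=Q(U^n\mid Z)$. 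Invoking the RIPr-via-Bayes characterization from \cite[Theorem 1]{GrunwaldHK19}, it suffices to show $W_0$ minimises $D(Q_{W_1}\|P_W)$ over priors $W$; combining Gibbs' inequality with the matching-marginals identity gives
\begin{equation*}
D(Q_{W_1}\|P_W) - D(Q_{W_1}\|P_{W_0}) = {\mathbb E}_{Q_{W_1}}\!\left[\log\frac{p_{W_0}(Z)}{p_W(Z)}\right] = D\bigl(P_{W_0}(Z)\,\|\,P_W(Z)\bigr) \geq 0.
\end{equation*}
This proves (\ref{eq:W0depends}), and the same $Z$-factorization $q_{W_1}(U^n)/p_{W_0}(U^n) = q(U^n\mid Z)/p(U^n\mid Z)$ then delivers (\ref{eq:ripisgaussb}).

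Part 1 (UI) reduces to moment computations. For the plug-in version, I split $\log S_{\breve{\vm},\ui}^{(n)}$ into the three telescoping pieces $\sum_i\log(q_{\breve{\vm}_{|i-1}}(U_i)/q_{\vm^*}(U_i)) + \sum_i\log(q_{\vm^*}(U_i)/p_{\vm^*}(U_i)) + \log(p_{\vm^*}(U^n)/p_{\hvm_{|n}}(U^n))$. The middle summand has expectation $n\cdot D_R(Q_{\vm^*}\|P_{\vm^*})$; the first, by the same Gaussian expansion as in Part 3 but with the roles of $p$ and $q$ interchanged, contributes $-\frac{d_{rq}}{2}\log n$ plus the bias/remainder terms from (\ref{eq:Oabc}); the third equals exactly $\frac{n}{2}(\hvm_{|n}-\vm^*)^\top\Sigma_p^{-1}(\hvm_{|n}-\vm^*)$, whose $R$-expectation is $\frac{d_{rp}}{2}$. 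For the mixture UI, the key simplifying fact is that $q_{W_1}(U^n)$ is itself jointly Gaussian in $U^n$, so the Laplace approximation is exact and yields $\log q_{W_1}(U^n) = \log q_{\hvm_{|n}}(U^n) + \log w_1(\hvm_{|n}) + \frac{d}{2}\log(2\pi/n) + \frac{1}{2}\log\det\Sigma_q + o(1)$; recombining with the UI denominator produces (\ref{eq:seq_UI_gaussb}) with constant (\ref{eq:OWui}).

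The principal obstacle is the conceptual step in Part 4: guessing $W_0$ correctly. Once one recognises that matching marginals on the sufficient statistic $Z$ between the two Bayes mixtures is precisely what makes Gibbs' inequality collapse, the verification is essentially bookkeeping, and the same construction will template the general-exponential-family analogue in Theorem~\ref{thm:compositeH1general} (where the marginal match holds only approximately, via a local CLT). A secondary technical concern is the domain of validity of $W_0$: in the strict simple case $\Pi_1+(\Sigma_q-\Sigma_p)/n$ can fail to be positive semidefinite for small $n$, which is exactly why (\ref{eq:W0depends}) is stated only for $n$ large enough that this shifted covariance remains valid; in the anti-simple case the shift is itself positive semidefinite, so no restriction on $n$ arises.
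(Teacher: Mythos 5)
Your proposal is correct, and Parts~1--3 follow essentially the paper's own route (the paper derives Parts~1 and~3 by combining the formulas in Proposition~\ref{prop:basicgauss}, and Part~2 directly from Lemma~\ref{lem:referee}; your per-$i$ invocation of Theorem~\ref{thm:simpleH1gauss}(3) and the telescoping computations are the same content). The interesting divergence is in Part~4. Both arguments hinge on the same key observation — that the prior $W_0 = \mathcal{N}(\vm_1,\Pi_1+(\Sigma_q-\Sigma_p)/n)$ makes the $Z$-marginals of $P_{W_0}$ and $Q_{W_1}$ coincide — but they convert this into RIPr-optimality differently. The paper (via Lemma~\ref{lem:referee}) observes that this matching makes $q_{W_1}/p_{W_0}$ equal to $S_{\cond}$, hence an e-variable, and then invokes Corollary~1 of \cite{GrunwaldHK19} (for a fixed $Q^*$ there is at most one prior $W$ making $q^*/p_W$ an e-variable, and that $P_W$ is automatically the RIPr). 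You instead verify the KL-minimization directly: using sufficiency of $Z$ to collapse the likelihood ratio of Bayes marginals to the ratio of $Z$-marginal densities, and using the matched marginals to replace $Q_{W_1}$ by $P_{W_0}$ in the outer expectation, you obtain $D(Q_{W_1}\|P_W)-D(Q_{W_1}\|P_{W_0})=D(P_{W_0}(Z)\|P_W(Z))\ge 0$ by Gibbs' inequality. Your route is self-contained and arguably more transparent about \emph{why} matching $Z$-marginals is the right criterion; the paper's route is shorter because the heavy lifting (uniqueness and the e-variable/RIPr link) is outsourced to the cited result. Both are valid; they buy essentially the same thing. One small slip worth fixing: in your Part~1 decomposition the third term is $\log(p_{\vm^*}(U^n)/p_{\hvm_{|n}}(U^n)) = -\tfrac{n}{2}(\hvm_{|n}-\vm^*)^\top\Sigma_p^{-1}(\hvm_{|n}-\vm^*)$ (note the sign), whose $R$-expectation is $-d_{rp}/2$; you dropped the minus in the display though you evidently used it correctly since the final constant matches $O_{\breve{\vm},\ui}(1)$.
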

\begin{remark}[Plug-In vs. Bayesian $Q^*$]{\rm 
We studied the sequential RIPr (Part 3 of the theorem) only in combination with a plug-in $Q^*$ and the global RIPr (Part 4)  only in combination with a Bayesian $Q^*$, but this was done for mathematical convenience only: one could in principle (though the analysis seems substantially more complicated in both cases) also study the RIPr for plug-in $Q^*$ or the sequential RIPr for Bayesian $Q^*$.  
}\end{remark}
\begin{remark}[Comparing e-power of the various methods]{\rm 
To determine e-power, we focus on the well-specified case again  with $R= Q_{\vm^*}$ so that $\Sigma_r = \Sigma_q$.
Theorem~\ref{thm:compositeH1gauss} then  implies:  
\begin{corollary}{\bf [e-power and growth optimality]}
\label{cor:GaussEpowerb} Under the conditions of Theorem~\ref{thm:compositeH1gauss}, ${\mathbb E}_Q[\log S_{\cond}^{(n)}] > n\epsilon $ for some $\epsilon > 0$ and all $n > 1$, and we have, 
    \begin{align}
&     {\mathbb E}_{Q_{\vm^*}} [
    \log S_{W_1,\ui}^{(n)}/ S_{\breve{\vec\mu},\ui}^{(n)}] = O(1),\ 
{\mathbb E}_{ Q_{\vm^*}} [
    \log S_{\cond}^{(n)}/ S_{\breve{\vec\mu},\ui}^{(n)}] = \frac{d}{2} \log n + O(1)
. \nonumber \\
&   {\mathbb E}_{Q_{\vm^*}} [
    \log S_{\breve{\vec\mu},\seqrip}^{(n)}/ S_{\breve{\vec\mu},\ui}^{(n)}] 
   = \frac{d_{qp}}{2} \log n + O(1) \text{\ with\ } d_{qp} \leq d, \text{\rm \ in simple case}. \label{eq:strict}
\\  \nonumber 
&   {\mathbb E}_{Q_{\vm^*}} [\log S_{\breve{\vec\mu},\seqrip}^{(n)}] = 0, \ 
{\mathbb E}_{Q_{\vm^*}} [
    \log S_{W_1,\rip}^{(n)}/ S_{\cond}^{(n)}] 
    = 0
    \text{\rm \ in anti-simple case},
    \nonumber
    \end{align}
where $d_{qp}$ is as in {(\ref{eq:eigenAardig})}, and we used the inequality in, (\ref{eq:dlite}). The inequality in (\ref{eq:strict}) becomes strict in the strict simple case.  Importantly, we only know how to explicitly calculate $S_{W_1,\rip}$ in the anti-simple case (then it is equal to $S_{\cond}$) and we never need to specify any prior $W_1$ or $\breve{\vm}$ when calculating $S_{\cond}$. 
\end{corollary}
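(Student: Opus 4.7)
The plan is to read the corollary off as a direct arithmetic consequence of Theorem~\ref{thm:compositeH1gauss}, after specializing the sampling distribution to the well-specified case $R = Q_{\vm^*}$. Under this specialization $\Sigma_r = \Sigma_q$, so that $d_{rq} = \tr(\Sigma_q \Sigma_q^{-1}) = d$ and $d_{rp} = d_{qp}$, and the generalized quantity $D_{\Sigma_r}(\Sigma_q\|\Sigma_p)$ collapses to the $O(1)$ constant $D_{\gauss}(\Sigma_q \Sigma_p^{-1})$. Once this substitution is made, every claim in the corollary reduces to cancelling or subtracting two of the formulas (\ref{eq:seq_UI_gauss})--(\ref{eq:ripisgaussb}).

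I would first verify the linear-growth assertion for $S_{\cond}^{(n)}$. Part 2 of Theorem~\ref{thm:compositeH1gauss}, so specialized, gives ${\mathbb E}_{Q_{\vm^*}}[\log S_{\cond}^{(n)}] = (n-1)\, D_{\gauss}(\Sigma_q \Sigma_p^{-1})$; since $\Sigma_q \neq \Sigma_p$ this number is strictly positive (it is the KL divergence between two mean-matched Gaussians with distinct covariances, as noted right below~(\ref{eq:gaussrules})). Setting $\epsilon := \tfrac{1}{2} D_{\gauss}(\Sigma_q \Sigma_p^{-1}) > 0$ then delivers the bound $(n-1) \cdot 2\epsilon \geq n \epsilon$ for every integer $n \geq 2$.

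For the three asymptotic comparisons I would simply subtract the appropriate expressions. Specializing (\ref{eq:seq_UI_gauss}) and (\ref{eq:seq_UI_gaussb}) at $R = Q_{\vm^*}$ produces the same leading and subleading terms $n\, D_{Q_{\vm^*}}(Q_{\vm^*}\|P_{\vm^*}) - \tfrac{d}{2}\log n$, so ${\mathbb E}[\log(S_{W_1,\ui}^{(n)}/S_{\breve{\vm},\ui}^{(n)})] = O(1)$. Equation (\ref{eq:condB}) specialized carries no $\log n$ term, so subtracting (\ref{eq:seq_UI_gauss}) yields $\tfrac{d}{2}\log n + O(1)$ for ${\mathbb E}[\log(S_{\cond}^{(n)}/S_{\breve{\vm},\ui}^{(n)})]$. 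In the simple case (\ref{eq:Gaussseq_LVA}) contributes $+\tfrac{d_{qp}-d}{2}\log n$ (using $d_{rp}=d_{qp}$, $d_{rq}=d$), so its subtraction from (\ref{eq:seq_UI_gauss}) produces $\tfrac{d_{qp}}{2}\log n + O(1)$, with $d_{qp} \leq d$ (strictly $<d$ in the strict simple case) by (\ref{eq:dlite}). The two anti-simple identities are then immediate: Part 3 of Theorem~\ref{thm:compositeH1gauss} asserts $S_{\breve{\vm},\seqrip}^{(n)} \equiv 1$, so its log-expectation vanishes, and (\ref{eq:ripisgaussb}) gives $S_{W_1,\rip}^{(n)} = S_{\cond}^{(n)}$ pointwise, so the log-ratio is identically zero. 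The only point requiring any attention is that the $O(1)$ remainders in Theorem~\ref{thm:compositeH1gauss} are uniform in $n$ (they hide no $\log n$); this is already built into the statement of that theorem, so there is no real obstacle beyond careful accounting of the indices $d, d_{qp}, d_{rp}, d_{rq}$ under the substitution $\Sigma_r = \Sigma_q$.
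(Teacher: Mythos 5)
Your proposal is correct and takes exactly the same route as the paper: the paper's own "proof" consists precisely of the remark preceding the corollary (specialize to $R = Q_{\vm^*}$, hence $\Sigma_r = \Sigma_q$, $d_{rq}=d$, $d_{rp}=d_{qp}$, and read the claims off Theorem~\ref{thm:compositeH1gauss} by subtracting the corresponding displays), and you have merely supplied the routine bookkeeping that the paper leaves implicit. The details you check — including the $\epsilon := \tfrac12 D_{\gauss}(\Sigma_q\Sigma_p^{-1})$ argument for the linear lower bound and the cancellation $\tfrac{d_{qp}-d}{2}\log n + \tfrac{d}{2}\log n = \tfrac{d_{qp}}{2}\log n$ — are all accurate.
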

One may now check that this is consistent with the results (\ref{eq:thefirstB})-(\ref{eq:thelast}) provided informally in the introduction, where $S_{\ui}$ may stand for both $S_{W_1,\ui}$ and $S_{\breve{\vm},\ui}$ and $S_{\rip}$ stands for $S_{W_1,\rip}$. ((\ref{eq:thefirst}) will be implied by Corollary~\ref{cor:generalEpowerb} later on).}
\end{remark}
\begin{remark}[The $O(1)$ terms]{\rm  Precise expressions for $O_{\textsc{a}}$, $O_{\textsc{b}}$, $O_{W_1,\ui}$ are as follows. \begin{equation}\label{eq:Oabc}
\begin{aligned}
&O_{\textsc{a}}( \| x_0 - \vm^* \|_2^2) = 
\sum_{i=0}^{n-1} \frac{1}{(1 + i/n_0 )^2} (\vec{\mu}^* - x_0)^\top \Sigma_\alti^{-1} (\vec{\mu}^* - x_0).\\
&O_{\textsc{b}} (1) =  - \log n + 
\sum_{i=1}^{n-1} \frac{i}{(n_0 + i)^2}.
\\ 
&O_{W_1,\ui}(1) 
= - \frac{d_{rp}}{2} - D_{\Sigma_r}\left(\Sigma_\alti|| \Sigma_\alti + n \Pi_1 \right)+    \frac{d}{2} \log \frac{n}{2 \pi} - \frac{1}{2}  O_{\textsc{c}}( \| \vm^* - \vm_1\|_2^2).
\\  
&O_{\textsc{c}}(\| \vm^* - \vm_1\|_2^2) =
(\vm^* - \vm_1)^\top (\Pi_1 + \Sigma_q/n)^{-1} (\vm^* - \vm_1).
\end{aligned}
\end{equation}

The $O_{\textsc{a}}$ and $O_{\textsc{c}}$  terms measure alignment between prior belief ($x_0$ or $\vm_1$) and true $\vm^*$, and become $0$ if the belief is correct. $O_{\textsc{b}}$ satisfies, for all $n$: $O_{\textsc{b}}(1) \leq \gamma + 1/(2n)$ (where $\gamma = 0.577 \ldots$ is Euler's constant). 
}\end{remark}\begin{remark}[Relation to optimal e-variables for group invariant testing]{\rm 
Perez et al. \cite{perez2024estatistics} showed that if $\althyp$ and $\nulhyp$ are both location families (not necessarily exponential families) then the Bayes factor obtained by equipping both models with the (improper) right Haar prior gives an e-variable; it is even the GROW (worst-case growth optimal) e-variable for the two models. The right Haar prior for a  location family is just the Lebesgue measure on $\vm$. 
We may apply this result to our Gaussian location families.
Using the fact \cite{HendriksenHG21} that the resulting  e-variable satisfies 
$S^{(n)}_{\haar}(U^1)=1$ and 
$$S^{(n)}_{\haar}(U^n)
= q_{W_1\mid X_1}(X_2, \ldots, X_n)/p_{W_0 \mid X_1}(X_2, \ldots, X_n),$$
where $W_1|X_1 = N(X_1, \Sigma_q)$ and $W_0|X_1 =N(X_1, \Sigma_p)$ are the formal Bayes posteriors (based on the right Haar prior), after observing $X_1$, we may employ the same techniques as used in proving Theorem~\ref{thm:compositeH1gauss} to prove (see Appendix~\ref{app:longproofssimpleH1gauss} for details) that 
\begin{align}\label{eq:haariscond}
{\mathbb E}_{R} [
    \log S_{\haar}^{(n)}
    ] =& (n -1) \cdot D_{  \Sigma_r}(\Sigma_q \|  \Sigma_p),
\end{align}
and thus has the same e-power as the conditional e-variable.
This e-power must then also be {\em worst-case\/} growth optimal (GROW) in the sense of \cite{GrunwaldHK19}, and by the results in that paper, we must have that $S_{\haar}^{(n)} = S_{\cond}^{(n)}$ are in fact {\em equal}.
}\end{remark}

\section{Multivariate exponential family}\label{sec:general}
We now extend our results to general exponential families.
We first formalize the simplicity condition which we mentioned in the introduction.
Then, in Section~\ref{sec:generalsimple}, we present Theorem~\ref{thm:simpleH1general}, the analogue to Theorem~\ref{thm:simpleH1gauss} (simple alternative) and then, in Section~\ref{sec:generalcomposite}, Theorem~\ref{thm:compositeH1general}, the analogue to Theorem~\ref{thm:compositeH1gauss}.

\subsection{The Simple and Anti-Simple Cases of Gr\"unwald et al. \cite{GrunwaldLHBJ24}}\label{subsec:Simple_and_Anti}
In some special situations which we shall collectively refer to as {\em The Simple Case}, $S_{Q,\rip}^{(1)}$ reduces to a {\em simple\/} likelihood ratio between $Q$ and a single element of $\nulhyp$. 
This is the upshot of the central result of \cite{GrunwaldLHBJ24}, which we now summarize.  \cite{GrunwaldLHBJ24} provides extensive discussion. 

Fix a regular exponential family null hypothesis $\nulhyp$ for outcome $U$ with sufficient statistic $X$ as in (\ref{eq:expfam}).   
Recall the canonical parameterizations of such a family: we may take any $\vm \in \meanspace_p$ and define
\begin{equation}
    \label{eq:canonical}
p^{\textsc{can}}_{\vb}(U) = \frac{1}{Z_p(\vb)} \cdot e^{\sum_{j=1}^d \beta_j t_j(U)} \cdot p_{\vm}(U),
\end{equation}
where $Z_p(\vb)$ is the normalizing constant and we define the canonical parameter space $\canspace_p = \{\vb: Z_p(\vb) < \infty \}$.  As is well known, the set of distributions  
$\{P^{\textsc{can}}_{\vb}: \vb \in \canspace_p \}$ where $P^{\textsc{can}}_{\vb}$ has density $p^{\textsc{can}}_{\vb}$, coincides with $\nulhyp$.

Next, let $Q$ be a distribution on $U$ under which $X$ has mean $\vm$. Suppose $X$ has a moment generating function under $Q$ and $Q$ is mutually absolutely continuous with $\nu$ so it has density relative to $\nu$. It will be convenient to denote this density by $q_{\vm}$, explicitly listing the mean $\vm$. 
 We may now {\em generate\/} a second exponential family $\genhyp$ with carrier density ${q_\vm}$ and the same sufficient statistic $X$ as $\nulhyp$. 
The elements of this family are defined like (\ref{eq:canonical}) but with $p_{\vm}$ replaced by $q_{\vm}$ and $Z_p$ by $Z_q$ and $\canspace_p$ by $\canspace_q$. 
We  adopt precisely the same notational conventions for $\genhyp$ as we did for $\althyp$, in particular $\meanspace^{\textsc{gen}}_q$ is its mean parameter space and $\Sigma^{\textsc{gen}}_{\alti}(\vm)$ is its $d \times d$ covariance matrix function.
The family $\genhyp$ thus created is often a natural choice to take as a composite alternative when the null is fixed to be $\nulhyp$, corresponding to testing a particular value of a natural notion of {\em effect size}, as illustrated in Section~\ref{sec:simulations}.  


Note that there exist a variety of canonical parameterizations of an exponential family, depending on which $p_{\vm}$ we take to be the carrier density in (\ref{eq:canonical}). For general regular exponential families on $U$ that share the same sufficient statistic, we say that canonical parameterizations of $\nulhyp$ and $\cQ$ {\em match\/} if $P^{\textsc{can}}_{\bf 0}$ and $Q^{\textsc{can}}_{\bf 0}$ are both well-defined and have the same mean, i.e. for some $\vm \in \meanspace_q \cap \meanspace_p$, we have $
{\mathbb E}_{P^{\textsc{can}}_{\bf 0}}[X]=  {\mathbb E}_{Q^{\textsc{can}}_{\bf 0}}[X]= \vec{\mu}.$
\begin{definition}\label{def:simple}{\rm Fix an exponential family $\nulhyp$ as above. 
\begin{enumerate}  \item  Let $\althyp$ be an exponential family for $U$. 
   We say that $\nulhyp$ and $\althyp$ are {\em matching pairs\/} if they are both regular and they have the same sufficient statistic $X$ and their elements are mutually absolutely continuous and we have that (a) $\meanspace_q \subseteq \meanspace_p$ and, (b), for every matching canonical parameterization of $\nulhyp$ and $\althyp$, we have that $\canspace_p \subseteq \canspace_q$. 
   \item Let $\althyp$ be a collection of distributions on $U$ so that each $Q \in \althyp$, together with sufficient statistic $X$, generates the same exponential family $\genhyp$. (i)
   We say that we are {\em in the simple case\/} if $\genhyp$ is a matching pair with $\nulhyp$ and, for all $\vm \in \meanspace^{\textsc{gen}}_q$,
    $\Sigma^{\textsc{gen}}_{\alti}(\vm) - \Sigma_{\nuli}(\vm)$ is negative semidefinite. We say that we are in the {\em strictly\/} simple case if we are in the simple case and, for all $\vm \in \meanspace^{\textsc{gen}}_q$, $\Sigma^{\textsc{gen}}_{\alti}(\vm)-  \Sigma_{\nuli}(\vm)$ is negative definite.
    (ii) We are in the $\vm$-anti-simple case if $\vm \in \meanspace^{\textsc{gen}}_q \cap \meanspace_p$ and    $\Sigma^{\textsc{gen}}_{\alti}(\vm) - \Sigma_{\nuli}(\vm)$ is positive semidefinite ($\nulhyp$ and $\genhyp$ are not required to be matching pairs). We are in the strict $\vm$-anti-simple case if $\vm \in \meanspace^{\textsc{gen}}_q$ and    $\Sigma^{\textsc{gen}}_{\alti}(\vm) - \Sigma_{\nuli}(\vm)$ is positive definite. 
    \end{enumerate}}
\end{definition}
We note that \cite{GrunwaldLHBJ24} only formalized the simple case; the $\vm$-{\em anti-simple\/} case is new.
In the following subsection we will be concerned with simple alternative $\althyp = \{Q \}$, and then $\genhyp$ is well-defined as soon as $X$ has a moment generating function under $Q$. Then, in Section~\ref{sec:generalcomposite}, we consider $\althyp$ that are themselves exponential families with sufficient statistic $X$. As is well-known, each member of a regular exponential family generates that same exponential family, so in that case we simply have $\althyp=\genhyp$. 
\citep{GrunwaldLHBJ24} proved the following result:

\ownparagraph{Theorem}[Corollary of Theorem 1 of \cite{GrunwaldLHBJ24}]
{\em 
Consider a testing problem with null $\nulhyp$ and alternative $\althyp$ and suppose we are in the simple case. Then for every  $\vm^*\in \meanspace_q$, 
the RIPr of $Q^{(n)}_{\vm^*}$ is given by $P^{(n)}_{\vm^*}$ so that the RIPr e-variable is given by
$$
  S_{Q_{\vm^*},\rip}^{(n)} = \frac{q_{\vm^*}(\sn{U}{n})}{
p_{\vm^*}(\sn{U}{n})}.
$$ \ \\
} 
\cite{GrunwaldLHBJ24} gives many examples of $\nulhyp$ and $\althyp$ that are `matching pairs' and that fall under the `simple case'. These include, for example, Bernoulli $k$-sample tests, the $k$-sample tests of Example~\ref{ex:twosample} below, and also a variation of linear regression with Gaussian noise. A prime example was already used implicitly in the previous section: if $\nulhyp$ and $\althyp$ are both  sets of multivariate Gaussians, we have $\meanspace_p=\meanspace_q= \canspace_p = \canspace_q= \reals^d$ so we have matching pairs. Since in this case, the covariance matrices $\Sigma_p$ and $\Sigma_q$ do not depend on $\vm$, we are in the simple case as soon as $\Sigma_q- \Sigma_p$ is negative semidefinite. 
\subsection{$\althyp$ simple, $\nulhyp$ multivariate exponential family}
\label{sec:generalsimple}
We now assume $\nulhyp$ to be a regular exponential family and consider simple alternative
$\althyp = \{Q \}$ with $Q$ a distribution on $U$
with density $q$, mean ${\mathbb E}_{Q}[X] = \vec{\mu}^*$ and covariance matrix $\Sigma_\alti$. Below we state Theorem~\ref{thm:simpleH1general} which extends Theorem~\ref{thm:simpleH1gauss} to this setting. The results for $\ui, \cond, \seqrip$ and $\rip$  will very closely follow those of that previous theorem, but, because of the added generality, regularity conditions are needed for some of them. We first discuss the most involved one, the 
\ccond\ 
condition:

We say that a $d$-dimensional random vector $X$ is {\em of lattice form\/} if there exist real numbers $b_1, \ldots, b_d$ and $h_1, \ldots, h_d$ such that, for all $j \in \{1, \ldots, d \}$, all $u \in \cU$, 
$X_j(u) \in \{b_j + s h_j  : s \in {\mathbb Z} \}$.
Obviously, most random variables with finite or countable support that are commonly encountered are of lattice form. Being either of this form or having a continuous density (with respect to Lebesgue measure) is the standard condition for the (multivariate) {\em local central limit theorem\/} 
\citep{BhattacharyaR76}
to hold, which is instrumental in the proof of (\ref{eq:gencond}) below. Concretely, we require the following:
\begin{condition} (\ccond)
    \label{cond:cond}
$X$ has a moment generating function under $Q$ and $R$ in Theorem~\ref{thm:simpleH1general} below. Moreover, either $X$ has a bounded continuous density (denoted $q^{[x]}$, $r^{[x]}$, $p_{\vm}^{[x]}$ respectively), with respect to Lebesgue measure under $Q$, $R$ and  all $\vm \in \meanspace_p$, or $X$ is of lattice form (and then has probability mass functions $q^{[x]}$, $r^{[x]}$, $p_{\vm}^{[x]}$ respectively).  The support ${\cal X} \subseteq \reals^d$ (${\cal X}$ is countable in the lattice case) under $Q$, $R$ and all $P \in \nulhyp$ coincides. 
Moreover, there is $A, a > 0$ such that
\begin{equation}\label{eq:thisisweak}
\sup_{s \in\reals: s \geq A} \frac{\sup \left\{ \log \frac{p^{[x]}_{\vm^*}(x)}{q^{[x]}(x)} : x \in {\cal X}, \|x- \vm^*\|_2 \leq s\right\}}{s^a} < \infty.
\end{equation}
\end{condition}
\noindent Since all our results are only relevant for the case that $D_R(Q \| P_{\vec{\mu}^*}) < \infty$ anyway, requirement (\ref{eq:thisisweak}) merely says that the likelihood ratios cannot be super-exponentially far apart, so in that sense it is quite weak; yet it requires $R$ to have exponentially small tails. From inspecting the proof it can be seen that if $X$ has just three moments under $R$, the result (\ref{eq:gencond}) still holds if $q$ and $p_{\vm^*}$ are only polynomially apart, i.e. if (\ref{eq:thisisweak}) holds with the logarithm removed from the equation; we have not bothered to formalize this.  The requirement that $X$ has a moment generating function under $Q$ (rather than $R$) is essential for the proof though. 

We further need a condition to prove a lower bound of the e-power of $S_{\ui}$:
\begin{condition}{\bf (UI$^{\geq}$)}
    \label{cond:uinew}
     $R$ as in Theorem~\ref{thm:simpleH1general} below is such that for some $0 < \gamma < 1/2$, for all $n$:
\begin{equation*}
 {\mathbb E}_R \left[{\bf 1}_{\| \hvm_{|n} - \vec{\mu}^* \|_2 \geq n^{- \gamma}}  \cdot   n \cdot D(P_{\hvm_{|n}} \| P_{\vec{\mu}^*} )\right]  = o(1).
\end{equation*}
\end{condition}
This condition refers to $D(P_{\hvm_{|n}} \| P_{\vm^*})$ also for cases in which $\hvm_{|n} \not \in \meanspace_p$; in Appendix~\ref{app:preparinggeneraltheoremsA} we extend the definition to that case. We suspect the condition is quite weak: it is readily verified for one-dimensional models such as e.g. Poisson, negative binomial, exponential and so on (we illustrate that it holds for the Poisson distribution, when $X$ has $\geq 3$ moments under $R$, 
in Appendix~\ref{app:checking}),
but, unlike for the `dual' Condition~\ref{cond:plugin} that we will discuss in the next subsection, we have not been able to come up  with a general easy-to-verify `$n$-free' condition that implies Condition~\ref{cond:uinew} for multivariate families. 

In this section, the covariance matrices in $\nulhyp$, $\althyp$ and $R$ are dependent on $\vm$. We thus extend our notation for the trace (\ref{eq:dlite}) to $d_{ab}(\vm) := \tr(\Sigma_a(\vm) \Sigma_b^{-1}(\vm))$. We also use (and will keep using in Theorem~\ref{thm:compositeH1general} and the proofs) the nonstandard notation 
$f(n) \leq O(1)$, to mean  that there is a positive constant $M > 0$ such that, for all $n \in \naturals$, we have $f(n) \leq M$; $f(n) \leq o(1)$ means that there is a sequence of positive numbers $M_1, M_2, \ldots$ tending to $0$ such that  for all $n \in \naturals$, we have $f(n) \leq M_n$.
\begin{theorem}\label{thm:simpleH1general}
Let $\mathcal{P}$ and  $\althyp= \{Q\}$ be null and alternative as in Section~\ref{sec:exponential}, where $Q$ has mean $\mathbb{E}_Q[X] = \vec{\mu}^*$ with $\vec{\mu}^* \in \meanspace_p$ and  covariance matrix $\Sigma_q$. Let $R$ be a distribution on $X$ with the same mean ${\mathbb E}_{R}[\g{X}] = \vm^*$ and with covariance matrix $\Sigma_r$, such that the first 3 moments of $X$ exist under $R$ and $Q$ and $D_R(Q \| P_{\vm^*})$ is well-defined and finite. Then, letting $U_1, U_2, \ldots$ be i.i.d. $\sim R$, we have:
\begin{enumerate}
    \item {\bf (UI)} Let $S_{Q,\ui}^{(n)} = \frac{q(\sn{U}{n})}{p_{\hvm_{|n}}(\sn{U}{n})}$ be as in (\ref{eq:uinewb}). 
    We have:
   \begin{equation}\label{eq:genui}
{\mathbb E}_{\sn{U}{n} \sim R} [
    \log S_{Q,\ui}^{(n)}
    ] \leq  n D_R(Q \| P_{\vec{\mu}^*}) - \frac{d_{rp}(\vm^*)}{2} + o(1).
   \end{equation}
If, moreover, $R$ is such that Condition~\ref{cond:uinew} (``{\bf UI}$^{\geq}$'') holds, 
then (\ref{eq:genui}) holds with $\leq$ replaced by $=$. 
\item {\bf (COND)} Let $S_{Q, \cond}^{(n)} = \frac{q(\sn{U}{n} \mid Z)}{
p(\sn{U}{n} \mid Z)}$ be defined as in (\ref{eq:Scond}) and $D_{\Sigma_r}(\cdot \| \cdot)$ as in (\ref{eq:triplegauss}). Suppose Condition~\ref{cond:cond} (``$\bf{COND}$'') holds. We have: 
\begin{align}\label{eq:gencond}
{\mathbb E}_{\sn{U}{n} \sim R} [
    \log S_{Q,\cond}^{(n)}
    ] \geq n D_R(Q \| P_{\vec{\mu}^*}) - D_{\Sigma_r}(\Sigma_q \| \Sigma_\nuli(\vec{\mu}^*))  + o(1).
   \end{align}
Moreover, if Condition~\ref{cond:cond} (``$\bf{COND}$'') also holds with the role of $q$ and $p_{\vm^*}$ interchanged, (\ref{eq:gencond}) holds with equality. 
\item {\bf (seq-RIPr/RIPr Simple Case, seq-RIPr Anti-Simple Case)}
Let $S_{Q,\rip}^{(n)}, S_{Q,\seqrip}^{(n)}$ be defined as in (\ref{eq:ripr}) and (\ref{eq:seqripnew}). Suppose that we are in the simple case of Definition~\ref{def:simple}. Then we have, for all $n$, that 
\begin{align}\label{eq:simple_E_power}
S_{Q,\rip}^{(n)} = S_{Q,\seqrip}^{(n)} =  \frac{q(\sn{U}{n})}{p_{\vec{\mu}^*}(\sn{U}{n})}
\ \ ; \ \ 
{\mathbb E}_{\sn{U}{n} \sim R} [
    \log S_{Q,\rip}^{(n)}
    ] = n D_R(Q \| P_{\vec{\mu}^*}).
   \end{align}
   If $\Sigma_q - \Sigma_{\nuli}(\vec{\mu}^*)$ is positive definite (`strict anti-simple case') then we have:
   \begin{align}\label{eq:noncompetitive} {\mathbb E}_{Q} [
    \log S_{Q,\seqrip}^{(n)}
    ] = n (D(Q||P_{\vec{\mu}^*})- \epsilon) \end{align}
where $0 < \epsilon = D(Q \| P_{\vec{\mu}^*}) - D(Q \| P_{\leftsquigarrow q(U)}) \leq D(Q \| P_{\vec{\mu}^*})$, where $\leq$ becomes an equality, and e-power (\ref{eq:simple_E_power}) becomes 0, if $\inf  D(Q \| P_W)= 0$, the infimum being over all priors $W$ on $\meanspace_p$. 
\item  {\bf (RIPr, Anti-Simple Case)} 
 Let $S_{Q,\rip}^{(n)}, S_{Q,\cond
}^{(n)}$ be defined as in (\ref{eq:ripr}) and (\ref{eq:Scond}) and $D_{\Sigma_q}(\cdot \| \cdot)$ as in (\ref{eq:triplegauss}). If $\Sigma_{\alti}- \Sigma_{\nuli}(\vec{\mu}^*)$ is {\em positive\/} semidefinite, then, letting $W_{0,(n)}=N(\vm^*,(\Sigma_q-\Sigma_p)/n)$ be a Gaussian distribution on $\meanspace_p$,  we have
\begin{align}\label{eq:gekb}
     {\mathbb E}_Q [
    \log S_{Q,\rip}^{(n)}
    ] \leq {\mathbb E}_Q\left[ \log \frac{q(U^n)}{p_{W_{0,(n)}}(U^n)} \right]  \leq  n D(Q||P_{\vec{\mu}^*}) - D_{\Sigma_q}(\Sigma_\alti \Sigma_\nuli^{-1}(\vec{\mu}^*)) + o(1).
    \end{align}
    Moreover, combining (\ref{eq:gekb}) with (\ref{eq:gencond}) shows that if Condition~\ref{cond:cond} (`\ccond') holds, then  (\ref{eq:gekb}) holds with equality up to $o(1)$ and also
    ${\mathbb E}_Q[\log S_{Q,\rip}^{(n)}] = {\mathbb E}_Q[\log S_{Q,\cond}^{(n)}]+ o(1)$.
\end{enumerate}
\end{theorem}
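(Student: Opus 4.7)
The plan is to exploit the exponential-family identity
$\log p_{\hvm_{|n}}(U^n) - \log p_{\vm^*}(U^n) = n \cdot D(P_{\hvm_{|n}}\|P_{\vm^*})$, valid whenever $\hvm_{|n}\in \meanspace_p$ (we extend it to boundary cases using the definition of $p_{\hvm_{|n}}$ given in Section~\ref{sec:exponential}). Taking logs of $S_{Q,\ui}^{(n)}$ and expectations gives
\[
\mathbb{E}_R[\log S_{Q,\ui}^{(n)}] \;=\; n D_R(Q\|P_{\vm^*}) \;-\; n\,\mathbb{E}_R[D(P_{\hvm_{|n}}\|P_{\vm^*})].
\]
I would Taylor-expand $D(P_\vm\|P_{\vm^*})$ around $\vm^*$, using the exponential-family fact that the Hessian is the Fisher information $\Sigma_p^{-1}(\vm^*)$, and apply the delta method to $\hvm_{|n}=\bar X$ under $R$, whose covariance is $\Sigma_r(\vm^*)/n$. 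This produces the limit $d_{rp}(\vm^*)/2$. The upper bound (\ref{eq:genui}) drops out from Fatou's lemma applied after a local quadratic lower bound on KL; the matching lower bound under Condition UI$^{\geq}$ requires the tail $\{\|\hvm_{|n}-\vm^*\|_2\ge n^{-\gamma}\}$ to contribute $o(1)$ to $n \mathbb{E}_R[D(P_{\hvm_{|n}}\|P_{\vm^*})]$, which is exactly what that condition ensures.

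\textbf{Part 2 (COND).} By sufficiency of $Z=\sqrt n\,\hvm_{|n}$ and mutual absolute continuity, I would factor
\[
\log S_{Q,\cond}^{(n)} \;=\; \log\frac{q(U^n)}{p_{\vm^*}(U^n)} \;-\; \log\frac{q^{[z]}(Z)}{p^{[z]}_{\vm^*}(Z)}.
\]
Taking expectation under $R$ gives $n D_R(Q\|P_{\vm^*})$ from the first term. For the second term, Condition~\ccond{} (lattice form or bounded continuous density) allows me to invoke the multivariate local CLT with explicit error control from \cite{BhattacharyaR76}, which replaces the pmf/density ratio by the Gaussian ratio $\phi_{\Sigma_q}(Z')/\phi_{\Sigma_p(\vm^*)}(Z')$ up to a uniformly small remainder, where $Z'=Z-\sqrt n\,\vm^*$. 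Taking $\mathbb{E}_R$ of this Gaussian log-ratio and using $\mathrm{Cov}_R(Z')=\Sigma_r$ yields $D_{\Sigma_r}(\Sigma_q\|\Sigma_p)$ via the computation in (\ref{eq:triplegauss}). The technical work is to upgrade pointwise LCLT to an expectation statement: the growth bound (\ref{eq:thisisweak}) combined with the moment assumption on $R$ and Fatou provides the $\ge$ direction; swapping the roles of $q$ and $p_{\vm^*}$ gives the reverse inequality, hence equality.

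\textbf{Part 3 (simple; anti-simple seq-RIPr).} The equalities in (\ref{eq:simple_E_power}) for the simple case are an immediate application of the cited Theorem~1 of \cite{GrunwaldLHBJ24}, which identifies the RIPr with the single element $P_{\vm^*}$; the resulting likelihood ratio already factorizes across $i$, so RIPr and seq-RIPr coincide, and the expectation under $R$ is $n D_R(Q\|P_{\vm^*})$ by definition. For the strict anti-simple case, the per-step seq-RIPr is $q(U)/p_{\leftsquigarrow q(U)}(U)$, which is identical across $i$ because $Q$ is simple and the nulls on a single outcome form a convex closure; so ${\mathbb E}_Q[\log S_{Q,\seqrip}^{(n)}]=n\, D(Q\|P_{\leftsquigarrow q(U)})$. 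Setting $\epsilon := D(Q\|P_{\vm^*})-D(Q\|P_{\leftsquigarrow q(U)})$, the bound $\epsilon>0$ follows because $P_{\leftsquigarrow q(U)}$ is a strict mixture by the characterization in \cite{GrunwaldLHBJ24}, and $\epsilon\le D(Q\|P_{\vm^*})$ with equality when the single-outcome RIPr collapses, i.e.\ when $\inf_W D(Q\|P_W)=0$.

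\textbf{Part 4 (RIPr, anti-simple).} The first inequality in (\ref{eq:gekb}) follows immediately from the variational characterization of the RIPr: $P_{\leftsquigarrow q(U^n)}$ minimizes $D(Q\|P)$ over $\textsc{conv}(\nulhyp(U^n))$, so any specific mixture $P_{W_{0,(n)}}$ with $W_{0,(n)}=N(\vm^*,(\Sigma_q-\Sigma_p)/n)$ provides an upper bound (this is where positive semi-definiteness of $\Sigma_q-\Sigma_p$ is needed, so that $W_{0,(n)}$ is well-defined). For the second inequality, I would use sufficiency of $Z$ to write
\[
\log\frac{q(U^n)}{p_{W_{0,(n)}}(U^n)} \;=\; \log\frac{q(U^n\mid Z)}{p(U^n\mid Z)} \;+\; \log\frac{q^{[z]}(Z)}{p^{[z]}_{W_{0,(n)}}(Z)},
\]
so the first summand has expectation $\mathbb{E}_Q[\log S_{Q,\cond}^{(n)}]$, which by Part~2 equals $n D(Q\|P_{\vm^*})-D_{\Sigma_q}(\Sigma_q\|\Sigma_p)+o(1)$. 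The main obstacle, and the heart of the argument, is to show that the marginal $p^{[z]}_{W_{0,(n)}}(Z)$ matches $q^{[z]}(Z)$ up to $o(1)$ in expected log-ratio. In the Gaussian case this is an exact convolution identity ($N(\sqrt n\vm^*,\Sigma_p)*N(0,\Sigma_q-\Sigma_p)=N(\sqrt n\vm^*,\Sigma_q)$). For the general exponential family, I would combine the LCLT of Part~2 applied to $q^{[z]}$ with a nonstandard Laplace approximation to $p^{[z]}_{W_{0,(n)}}(Z)=\int p^{[z]}_\vm(Z) dW_{0,(n)}(\vm)$; since the prior variance $(\Sigma_q-\Sigma_p)/n$ is of the same order $1/n$ as the likelihood's width, the standard fixed-prior Laplace argument must be adapted (treat the prior and likelihood on the same footing, using the LCLT on $p^{[z]}_\vm$ rescaled by $\sqrt n$). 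This is the step I expect to be the most delicate. Finally, combining the lower bound from Part~2 (which is sharp under Condition~\ccond{}) with the upper bound just obtained pins down $\mathbb{E}_Q[\log S_{Q,\rip}^{(n)}]$ and $\mathbb{E}_Q[\log S_{Q,\cond}^{(n)}]$ to equality up to $o(1)$.
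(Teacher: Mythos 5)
Parts 1--3 of your proposal follow the paper's own route. For Part~1, the Taylor expansion around $\vm^*$ together with a tail argument is exactly the proof of (\ref{eq:MLfullsampleg}) in Proposition~\ref{prop:basicgeneral}, and the role of Condition~\ref{cond:uinew} as the precise requirement to upgrade $\leq$ to $=$ is the same. For Part~2, your factorization through the sufficient statistic $Z$ and the use of the Bhattacharya--Rao local CLT matches Lemma~\ref{lem:condsimpleH1}, including the observation that symmetry of Condition~\ref{cond:cond} under swapping $q$ and $p_{\vm^*}$ gives equality. For Part~3, your observation that the per-step RIPr does not depend on $i$ (since $\althyp=\{Q\}$ is simple) is correct, and your conclusion $\epsilon>0$ is sound, though the paper's cleaner route is: Proposition~2 of \cite{GrunwaldLHBJ24} shows $q/p_{\vm^*}$ is not an e-variable, and Theorem~1 of \cite{GrunwaldHK19} then forces the infimum of $D(Q\|P)$ over the convex hull to be strictly less than $D(Q\|P_{\vm^*})$.

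Part~4 has a genuine gap. You write that the first summand of the sufficiency decomposition ``by Part~2 equals $nD(Q\|P_{\vm^*})-D_{\Sigma_q}(\Sigma_q\|\Sigma_p)+o(1)$.'' But Part~2, stated as (\ref{eq:gencond}), gives only a \emph{lower} bound on ${\mathbb E}_Q[\log S_{Q,\cond}^{(n)}]$, and it holds only under Condition~\ref{cond:cond}. The upgrade to equality requires Condition~\ref{cond:cond} with the roles of $q$ and $p_{\vm^*}$ interchanged. Neither condition appears among the hypotheses for (\ref{eq:gekb}), which is stated \emph{unconditionally} (given positive semidefiniteness of $\Sigma_q-\Sigma_p$); only the ``moreover'' clause of Part~4 is allowed to invoke Condition~\ref{cond:cond}. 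Since a lower bound on the $\cond$-term runs the wrong way for your purposes, your decomposition cannot close the argument. You would additionally need to show $D(Q(Z)\|P_{W_{0,(n)}}(Z))=o(1)$, which again presupposes exactly the local-CLT control of densities of $Z$ that Condition~\ref{cond:cond} provides.

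The paper sidesteps this by never conditioning on $Z$ in this step. Its Lemma~\ref{prop:difficult} bounds $D(Q(U^n)\|P_{W_{0,(n)}}(U^n))$ directly: first it discards the event $\{\|\hvm_{|n}-\vm^*\|_2\geq n^{\alpha-1/2}\}$ using only moment bounds (Lemma~\ref{lem:KLbounds}, no LCLT needed), then on the good event it writes the integrand of the Bayes marginal as $\exp(-nD(P_{\hvm_{|n}}\|P_{\vm}))$ via the exponential-family robustness identity (\ref{eq:fullrobustnessb}), applies a third-order Taylor bound on the KL, and evaluates the resulting Gaussian integral explicitly. This is essentially a Laplace approximation applied to the density ratio $p_{\vm^*}(U^n)/p_{W_{0,(n)}}(U^n)$ itself, not to a marginal of $Z$, so it requires only moment assumptions and the positive-semidefiniteness of $\Sigma_q-\Sigma_p$. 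If you want to repair your proposal, the fix is to replace the sufficiency-decomposition argument for the second inequality in (\ref{eq:gekb}) with a direct Laplace bound on $-\log(p_{W_{0,(n)}}(U^n)/p_{\vm^*}(U^n))$ in the spirit of Lemma~\ref{prop:difficult}, reserving the sufficiency decomposition for the ``moreover'' clause where Condition~\ref{cond:cond} is assumed.
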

Note the very close resemblance between each part of this result and the corresponding part in Theorem~\ref{thm:simpleH1gauss}. 
We briefly remark on some specifics.

First, for the {\bf UI} part:  the case in which it holds with equality is related to the celebrated Wilks phenomenon: if we let $Q \in \nulhyp$ and set $R=Q$, the theorem is still valid and the problem becomes `well-specified'. The $D_R(Q \| P_{\vec{\mu}^*})$ term then becomes $0$ 
and we are left with $d/2$, the expectation of a $\chi^2$-random variable with $d/2$ degrees of freedom, in accordance with Wilks' result \cite{kotlowski2011maximum}.

For the anti-simple case, we only consider the well-specified setting with $R=Q$ because the statement for general $R$ is rather involved.  
Theorem~\ref{thm:simpleH1general} shows that in the anti-simple case, the  RIPr-prior that achieves GRO against $Q$ at sample size $n$ can be approximated, in the e-power sense, by a Gaussian with variance of order $O(1/n)$, but it does not tell us if the approximation is good enough to get something close to a real e-value if we use this prior in the denominator. 
Note that even though we do not know if the Gaussian prior gives us an e-variable, we can still find another e-variable that provably is close to GRO  -- the theorem shows that we can simply use $S_{Q,\cond}$. 
Performing a similar analysis as in Corollary~\ref{cor:GaussEpowera}, we find: 
\begin{corollary}{\bf [e-power and growth optimality]}
 \label{cor:generalEpowera} The relations of Corollary~\ref{cor:GaussEpowera} still hold in the setting of Theorem~\ref{thm:simpleH1general}, with $d_{qp}= d_{qp}(\vm^*)$ now dependent on $\vm^*$.
\end{corollary}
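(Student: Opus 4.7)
The plan is to derive Corollary~\ref{cor:generalEpowera} directly from Theorem~\ref{thm:simpleH1general} by specializing to the well-specified case $R=Q$ (so that $\Sigma_r = \Sigma_q$ and $D_R(Q\|P_{\vm^*}) = D(Q\|P_{\vm^*})$) and then subtracting the asymptotic e-power of $S_{Q,\ui}^{(n)}$ from that of $S_{Q,\rip}^{(n)}$, mirroring the way Corollary~\ref{cor:GaussEpowera} is extracted from Theorem~\ref{thm:simpleH1gauss}. The heavy lifting -- the sharp UI expansion and the local-CLT driven COND/RIPr expansions -- is already packaged inside Theorem~\ref{thm:simpleH1general}, so the remaining work is algebra plus checking that the covariance identities from Section~\ref{sec:gauss} transport verbatim to the $\vm^*$-dependent matrices $\Sigma_p(\vm^*)$ and $\Sigma_q$.

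First I would read off from Part~1, invoking Condition~\ref{cond:uinew} to promote the inequality to equality, the asymptotic ${\mathbb E}_Q[\log S_{Q,\ui}^{(n)}] = nD(Q\|P_{\vm^*}) - d_{qp}(\vm^*)/2 + o(1)$. In the strict simple case I would pair this with the exact identity from Part~3, ${\mathbb E}_Q[\log S_{Q,\rip}^{(n)}] = nD(Q\|P_{\vm^*})$, to obtain ${\mathbb E}_Q[\log(S_{Q,\rip}^{(n)}/S_{Q,\ui}^{(n)})] = d_{qp}(\vm^*)/2 + o(1)$. To recover the strict bound $d_{qp}(\vm^*) < d$, I would re-run the argument below~(\ref{eq:dlite}) at the specific point $\vm^*$: strict negative definiteness of $\Sigma_q - \Sigma_p(\vm^*)$ forces every eigenvalue $\lambda_j$ of $\Sigma_p^{-1/2}(\vm^*)\Sigma_q\Sigma_p^{-1/2}(\vm^*)$ to lie strictly below $1$, so their sum -- which equals $d_{qp}(\vm^*)$ -- is strictly less than $d$.

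In the strict anti-simple case I would combine Part~4 of the theorem (under Condition~\ref{cond:cond} in both directions, which the theorem already invokes) to get ${\mathbb E}_Q[\log S_{Q,\rip}^{(n)}] = nD(Q\|P_{\vm^*}) - D_{\gauss}(\Sigma_q\Sigma_p^{-1}(\vm^*)) + o(1)$, subtract the UI expression above, and then apply the eigenvalue identity~(\ref{eq:HVAchar}) to rewrite $d_{qp}(\vm^*)/2 - D_{\gauss}(\Sigma_q\Sigma_p^{-1}(\vm^*))$ as $d/2 + (1/2)\sum_j \log\lambda_j$. In the strict anti-simple case $\Sigma_q - \Sigma_p(\vm^*)$ is positive definite, so every $\lambda_j > 1$ and hence $\sum_j \log\lambda_j > 0$, matching the Gaussian conclusion that the ratio exceeds $d/2$. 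I do not expect a genuine technical obstacle here -- the only care needed is bookkeeping of the regularity hypotheses (\textbf{UI}$^{\geq}$ to upgrade $\leq$ to $=$ in the UI line, and \ccond\ with the roles of $q$ and $p_{\vm^*}$ swapped for the sharp anti-simple RIPr line), both of which are supplied by the setting of Theorem~\ref{thm:simpleH1general} itself.
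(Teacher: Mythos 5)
Your argument is correct and matches the implicit route the paper intends: the corollary is not given an explicit proof in the paper, but is meant to be read off by specializing Theorem~\ref{thm:simpleH1general} to $R=Q$ and repeating the algebraic and eigenvalue manipulations from Corollary~\ref{cor:GaussEpowera}, exactly as you lay out. One small imprecision: for the anti-simple RIPr line you write that Condition~\ref{cond:cond} is needed ``in both directions,'' but Theorem~\ref{thm:simpleH1general}, Part~4 only requires the one-directional version, because of the sandwich ${\mathbb E}_Q[\log S_{Q,\rip}^{(n)}] \geq {\mathbb E}_Q[\log S_{Q,\cond}^{(n)}] \geq nD(Q\|P_{\vm^*}) - D_{\gauss}(\Sigma_q\Sigma_p^{-1}(\vm^*)) + o(1)$ from~(\ref{eq:gencond}) combined with the upper bound~(\ref{eq:gekb}); the two-sided Condition~\ref{cond:cond} is only needed to sharpen the stand-alone COND statement~(\ref{eq:gencond}) to an equality, which you do not need here. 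Relatedly, ``supplied by the setting of Theorem~\ref{thm:simpleH1general} itself'' is a slight overstatement -- \textbf{UI}$^{\geq}$ and \ccond\ are hypotheses invoked by specific parts of that theorem rather than blanket assumptions, so they should be understood as carried over into Corollary~\ref{cor:generalEpowera} as additional regularity requirements; with that bookkeeping noted, the proof is complete.
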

Theorem~\ref{thm:simpleH1general} provides a general asymptotic analysis, but we note that the Gaussian case of Theorem~\ref{thm:simpleH1gauss} is not the only case in which asymptotics are not  required. For example, another case which does not require asymptotics for all e-variables except $\ui$ occurs if there is $P \in \nulhyp$ whose marginal distribution $P(X)$ for $\g{X}$ coincides with $Q(X)$.
In case $\nulhyp$ and $\althyp$ are matching pairs we are then simultaneously in the (nonstrict) simple and anti-simple cases.  We then have
\begin{equation}
D_R(Q \| P_{\vm^*}) = {\mathbb E}_{X\sim R} \left[\log \frac{q(X)}{p_{\vm^*}(X)}
\right] +  {\mathbb E}_{U\sim R} \left[\log \frac{q(U|X)}{p_{\vm^*}(U|X)}\right] = {\mathbb E}_{U\sim R} \left[\log \frac{q(U|X)}{p_{\vm^*}(U|X)}\right]. \nonumber
\end{equation}
\begin{proposition}\label{prop:seqcondwins}
    Suppose that $Q(X) = P_{\vec{\mu}^*}(X)$ for some $\vec{\mu}^* \in \meanspacenul$. Then  
for all $n$, $S_{Q,\rip}^{(n)}=S_{Q,\seqrip}^{(n)}= 
S_{Q,\cond}^{(n)} = q(\sn{U}{n} \mid \sn{X}{n})/p(\sn{U}{n} \mid \sn{X}{n})$ so that 
    $$
       {\mathbb E}_{\sn{U}{n} \sim R} [
    \log S_{Q,\rip}^{(n)}] = {\mathbb E}_{\sn{U}{n} \sim R} [ S_{Q,\seqrip}^{(n)}] 
=  {\mathbb E}_{\sn{U}{n} \sim R} [
    \log S_{\cond}^{(n)}]  = n \cdot D_R(Q \| P_{\vm^*}).
    $$
\end{proposition}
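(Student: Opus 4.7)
The plan is to exploit sufficiency of $\sn{X}{n}$ for $\nulhyp$ together with the hypothesis $Q(X) = P_{\vm^*}(X)$ to show that all three of $S_{Q,\rip}^{(n)}$, $S_{Q,\seqrip}^{(n)}$, and $S_{Q,\cond}^{(n)}$ collapse to the same simple-vs-simple likelihood ratio $q(\sn{U}{n})/p_{\vm^*}(\sn{U}{n})$, and then to use the sufficient-statistic factorization one more time to recognise that ratio as $q(\sn{U}{n}\mid\sn{X}{n})/p(\sn{U}{n}\mid\sn{X}{n})$.

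First I would observe that, by sufficiency, $p_{\vm}(\sn{U}{n})/p_{\vm^*}(\sn{U}{n})$ is a function of $\sn{X}{n}$ alone, equal to the ratio of marginals $p_{\vm}(\sn{X}{n})/p_{\vm^*}(\sn{X}{n})$. Combined with the i.i.d.-upgraded hypothesis $Q(\sn{X}{n}) = P_{\vm^*}(\sn{X}{n})$, this gives ${\mathbb E}_Q[p_{\vm}(\sn{U}{n})/p_{\vm^*}(\sn{U}{n})] = {\mathbb E}_{P_{\vm^*}}[p_{\vm}(\sn{X}{n})/p_{\vm^*}(\sn{X}{n})] = 1$ for every $\vm \in \meanspace_p$. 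Integrating against any prior $W$ on $\meanspace_p$ yields ${\mathbb E}_Q[p_W(\sn{U}{n})/p_{\vm^*}(\sn{U}{n})] = 1$, so by Jensen's inequality $D(Q(\sn{U}{n})\|P_{\vm^*}(\sn{U}{n})) \leq D(Q(\sn{U}{n})\|P_W(\sn{U}{n}))$ for every such $W$. Thus $\delta_{\vm^*}$ minimizes $D(Q(\sn{U}{n}) \| P_W(\sn{U}{n}))$ over all priors $W$, and the RIPr characterization from \cite{GrunwaldHK19} (Theorem 1) quoted just after (\ref{eq:Scond}) identifies $P_{\vm^*}$ as the RIPr of $Q(\sn{U}{n})$ onto $\textsc{conv}(\nulhyp(\sn{U}{n}))$, giving $S_{Q,\rip}^{(n)} = q(\sn{U}{n})/p_{\vm^*}(\sn{U}{n})$.

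Next, for $S_{Q,\seqrip}^{(n)}$ I apply the identical argument at each time-step $i$: since $Q$ is i.i.d., $q(\si{U}{i}\mid\sn{U}{i-1}) = q(\si{U}{i})$, and the single-outcome version of the identity $Q(X) = P_{\vm^*}(X)$ identifies the local-in-time RIPr of $Q(\si{U}{i})$ onto $\textsc{conv}(\nulhyp(\si{U}{i}))$ as $P_{\vm^*}(\si{U}{i})$; the product over $i = 1,\ldots,n$ telescopes to $q(\sn{U}{n})/p_{\vm^*}(\sn{U}{n})$. For $S_{Q,\cond}^{(n)}$, I note that $Z$ is a measurable function of $\sn{X}{n}$, so $Q(Z) = P_{\vm^*}(Z)$; writing each conditional density as joint divided by marginal with respect to a common dominating measure, the factors $q(Z)$ and $p(Z)$ cancel, leaving $q(\sn{U}{n})/p_{\vm^*}(\sn{U}{n})$ once more. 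The further identification with $q(\sn{U}{n}\mid\sn{X}{n})/p(\sn{U}{n}\mid\sn{X}{n})$ then follows from the sufficient-statistic factorization $p_{\vm^*}(\sn{U}{n}) = p(\sn{U}{n}\mid\sn{X}{n})\,p_{\vm^*}(\sn{X}{n})$, the $\nulhyp$-invariance of $p(\sn{U}{n}\mid\sn{X}{n})$, and $q(\sn{X}{n}) = p_{\vm^*}(\sn{X}{n})$.

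The e-power claim is then immediate from the i.i.d. structure under $R$: ${\mathbb E}_R[\log(q(\sn{U}{n})/p_{\vm^*}(\sn{U}{n}))] = n\,{\mathbb E}_R[\log(q(U_1)/p_{\vm^*}(U_1))] = n\,D_R(Q\|P_{\vm^*})$. The only mildly delicate step I anticipate is the measure-theoretic handling of the $q(Z)$/$p(Z)$ cancellation in the COND calculation, but since the paper defines $S_{Q,\cond}^{(n)}$ only up to almost-sure equivalence with respect to any common dominating measure (as discussed around equation (\ref{eq:laura})), this presents no genuine obstacle; no asymptotics or additional regularity conditions beyond the stated hypotheses are required.
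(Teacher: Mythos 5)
Your proof is correct, and for the $\rip$ and $\seqrip$ parts it takes a genuinely different and more self-contained route than the paper. The paper's proof simply notes that the hypothesis $Q(X)=P_{\vm^*}(X)$ forces $\Sigma_q=\Sigma_p(\vm^*)$, then invokes Theorem~\ref{thm:simpleH1general} Part~3 (the Simple Case), which in turn rests on the covariance-semidefiniteness theorem of \cite{GrunwaldLHBJ24}; this works because, as noted in the text preceding the proposition, the hypothesis in fact makes $\genhyp$ and $\nulhyp$ identical on the sufficient statistic, so one is simultaneously in the nonstrict simple and anti-simple cases (and also automatically matching pairs) — a fact the paper's two-line proof leaves implicit. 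You instead bypass the entire simple-case machinery: sufficiency makes $p_W/p_{\vm^*}$ a function of $\sn{X}{n}$ alone, the hypothesis then gives ${\mathbb E}_Q[p_W/p_{\vm^*}]=1$ for every prior $W$ on $\meanspace_p$, Jensen's inequality shows $\delta_{\vm^*}$ minimizes $D(Q\|P_W)$, and the RIPr characterization from \cite{GrunwaldHK19} quoted after (\ref{eq:Scond}) identifies $P_{\vm^*}$ as the RIPr; the per-time-step version of the same argument then yields $\seqrip$. This is elementary, avoids any appeal to \cite{GrunwaldLHBJ24}, and makes visible exactly which consequence of the hypothesis is being used, at the cost of re-deriving a special case of machinery the paper already has on the shelf. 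For the $\cond$ part your cancellation of $q^{\circ}(Z)=p^{\circ}_{\vm^*}(Z)$ in the factorization is essentially the same step the paper extracts from (\ref{eq:condy}), so there the approaches coincide. The one step worth flagging is your implicit assumption that the RIPr of $Q(U^n)$ is well-defined, which requires $D(Q(U^n)\|P_{\vm}(U^n))<\infty$ for all $\vm\in\meanspace_p$; the paper carries this as a standing assumption in the surrounding theorems, so your argument inherits it, but it is worth stating since the proposition itself does not repeat it.
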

\ownparagraph{Proof of Proposition~\ref{prop:seqcondwins}}
The results for $S_{Q,\rip}$ and $S_{Q,\seqrip}$ follow from Theorem~\ref{thm:simpleH1general}, the Simple Case, since $\Sigma_q =  \Sigma_\nuli(\vec{\mu}^*)$. This only leaves the proof for $S_{Q,\cond}$, which follows by inspecting the proof for Part 2 (`\ccond') of Theorem~\ref{thm:simpleH1general} and noting that the result directly follows from (\ref{eq:condy}).  
\begin{example}\label{ex:twosample}
    {\bf [Two-sample tests]} \rm  Let ${\cal P}^{\circ} = 
    \{P^{\circ}_{\mu}: \mu \in \meanspace^{\circ}\}$ be a regular 1-dimensional exponential family with sufficient statistic $Y$ and mean-value parameter space $\meanspace^{\circ}$. Set $U = (Y_a, Y_b)$, $X= Y_a+ Y_b$, and, for arbitrary $\mu_a,\mu_b \in \meanspace^{\circ}$, let $R_{\mu_a,\mu_b}$ be the distribution under which $Y_a \sim P^{\circ}_{\mu_a}$ and $Y_b\sim P^{\circ}_{\mu_b}$
    independently. Let $\nulhyp = \{R_{\mu,\mu}: \mu \in \meanspace^{\circ} \}$ stand for the null hypothesis, representing that $(Y_a,Y_b)$ are i.i.d. with the same mean $\mu$. Then (see e.g. \cite{HaoGLLA23} for details)  $\nulhyp$ is an exponential family with mean-value parameter space $\meanspacenul= 2 \meanspace^{\circ}$. Since  ${\mathbb E}_{R_{\mu,\mu}}[X] = 2 \mu$, the parameter $\vm^*$ in the mean-value parameterization corresponding to $\mu$, i.e. the $\vm^*$ for which $P_{\vm^*} = R_{\mu,\mu}$, must satisfy $\vm^* = 2\mu$. 
    Now, fix arbitrary $\mu_a, \mu_b \in \meanspace^{\circ}$ with $\mu_a \neq \mu_b$ and let $Q= \{R_{\mu_a,\mu_b}\}$ be the corresponding simple alternative. Consider the cases that (a) the underlying model is the Gaussian location family, i.e. $\meanspace^{\circ} = \reals$ and $P^{\circ}_{\mu} = N(\mu,\sigma^2)$ for some fixed $\sigma^2$, or (b), that the underlying model is Poisson, i.e.  $\meanspace^{\circ}= \reals^+$ and  $P^{\circ}_{\mu}$ is Poisson  with mean $\mu$.
\cite{HaoGLLA23} show that in both these cases (as well as their generalization from 2- to $k$-sample testing), the condition of Proposition~\ref{prop:seqcondwins} holds. Thus, in both the Gaussian location and the Poisson case, nonasymptotic results can be obtained. 
    They also looked at several other exponential families ${\cal P}^{\circ}$. In all such other cases, the premise of Proposition~\ref{prop:seqcondwins} does not apply and asymptotic analysis is needed. With the Bernoulli model,  we are in the simple case; with the exponential, beta and Gaussian scale families, we are not \cite{GrunwaldLHBJ24,HaoGLLA23}. We provide some simulations pertaining to Bernoulli and exponential distribution $2$-sample tests in Section~\ref{sec:simulations}.
\end{example}
\cite{GrunwaldLHBJ24} provides several more examples of exponential families that fall under the simple case. Intriguingly, this includes a variation of Gaussian linear regression which is different from the standard version involving t-statistics of \cite{LindonRegression22,perez2024estatistics}. We may thus apply Theorem~\ref{thm:simpleH1general} and~\ref{thm:compositeH1general} to this regression variation. In contrast to the earlier Gaussian location results and the  Gaussian- and Poisson $k$-sample tests of Example~\ref{ex:twosample}, we cannot avoid asymptotics here, due to the appearance of non-central $\chi^2$ distributions in the analysis \cite[Section 4.3.2 and 4.4]{GrunwaldLHBJ24}. 
\subsection{$\nulhyp$, $\althyp$ both multivariate exponential families}
\label{sec:generalcomposite}
In the section, we provide  similar results as above but with composite $\mathcal{Q}$. That is,
$
\mathcal{Q} = \{Q_\vec{\mu}: \vec{\mu} \in \mathtt{M}_q \}
\text{ and }
\mathcal{P} = \{P_\vec{\mu}: \vec{\mu} \in \mathtt{M}_p \},
$
where $\mathcal{Q}$ and $\mathcal{P}$ are 
matching pairs of exponential families as in Definition~\ref{def:simple}.
The results involving the prequential plug-in method now require the following condition, referring to the sampling distribution $R$ in Theorem~\ref{thm:simpleH1general} which has ${\mathbb E}_R[X] = \vm^* \in \meanspace_p$. 
\begin{condition}\label{cond:plugin}{\bf (Plug-in)}
There (a) exists an odd integer  $m \geq 3$ such that  the first $m$ moments of $X$ exist under $R$, and (b), for some $0 < \gamma < 1/2$,
\begin{align}
    \label{eq:mleplugin} & 
{\mathbb E}_R \left[{\bf 1}_{\| \hvm_{|n}- \vec{\mu}^* \|_2 \geq n^{-\gamma}}  \cdot  n \cdot  D( Q_{\vec{\mu}^*}\| Q_{\breve{\vec{\mu}}_{|n}})  \right] = o(1). 
\end{align} 
As we show in Lemma~\ref{lem:KLbounds}, (\ref{eq:mleoutrightB}), in Appendix~\ref{app:preparinggeneraltheoremsA}, a  sufficient condition for (\ref{eq:mleplugin}) is that (a) holds, and (c) 
there also exist real $s > 1$ and  $A > 0$ such that $\{\vm: \| \vm - \vm^* \|_2 < A\} \subset \meanspace_q$ and for  all $x_0 \in \meanspace_q, 0 < \alpha < 1$, 
\begin{equation}\label{eq:more}
\sup_{\ \vm \in {\meanspace}_q: \| \vm - \vm^* \|_2 \geq A} 
\frac{D(Q_{\vm^*} \| Q_{(1-\alpha) \vm+ \alpha x_0} )}{
\| \vm - \vm^* \|_2^{m-s}} < \infty.
\end{equation}
\end{condition}
As is well-known, {\em locally\/} both $D(Q_{\vm} \| Q_{\vm^*})$ and  $D(Q_{\vm^*} \| Q_{\vm})$  are equal, up to lower order terms, to the quadratic form $(\vm^* - \vm)^\top \Sigma^{-1}_q (\vm^* - \vm) \asymp \| \vm^* - \vm \|_2^2$. 
The {\bf UI}$^{\geq}$ Condition~\ref{cond:uinew}   implies that, up to $o(1)$, the KL divergence $D(Q_{\vm} \| Q_{\vm^*})$, for fixed $\vm^*$ and varying $\vm$, is determined by this local quadratic behaviour.
The {\bf plug-in} Condition~\ref{cond:plugin} implies the same for   $D(Q_{\vm^*} \| Q_{\vm})$. As can be seen, as  $\vm$ moves farther away from $\vm^*$,  this allows for a faster-than-quadratic increase of the KL divergences, but it should be at most polynomial in the $\ell_2$-distance, and the rate of increase should be `matched' by sufficiently high moments of $R$ existing. 
The plug-in condition appears to be quite weak and is easily verified with appropriate values for $m$ (details omitted) for 1-dimensional exponential families such as the Bernoulli, Poisson, geometric, binomial and negative binomial model. Below, and in the appendix, we verify it for some standard 2-dimensional models. Nevertheless, it does not {\em always\/} hold: for the 1-dimensional regular family $\althyp$ with $X=U \in \meanspace_q =\reals$ generated by the {\em Landau distribution\/} \citep{Barlev23}, we find that  $D(Q_{\vm^*} \| Q_{\vm^{\circ}})$ grows exponentially in $\vm^{\circ}$ as $\vm^{\circ} \rightarrow \infty$, thereby violating the condition. 
The appearance of the `anchor' $x_0$ derives from the existence of a similar regularization term in the definition of $\breve\vm$. Note that the condition would in some cases not hold without it; for example, in the Bernoulli model, 
$\sup_{\vm\in (0,1)} D(P_{\vm^*} \| P_{\vm}) = D(P_{\vm^*} \| P_1) = \infty$. 
\commentout{reason why in general cannot take the `raw' KL divergence $D(P_{\vm^*} \| P_{\vm})$  but have to modify it slightly can be seen from the Bernoulli example:
\begin{example}{\bf [Gaussian Location-Scale]}
{\rm 
Let $\nulhyp$ be the Bernoulli model. Then $\meanspace_p = (0,1)$ and $\bar\meanspace_p = [0,1]$. Then for any $\vm^* \in \meanspace_p$, $\sup_{\vm \in [0,1]} 
D(P_{\vm} \| P_{\vm^*})$ is bounded, implying that
$\sup_{\ \vm \in \bar{\meanspace}_p, \vm \neq \vm^* } 
\frac{
D(P_{\vm} \| P_{\vm^*})}{
\| \vm - \vm^* \|_2^{m-s}}$ is finite, in this case even for {\em every\/} $m\geq 3$ and $s> 1$. Since $D(P_{\vm^*} \| 0) = \infty$, we do not have the same statement with $P_{\vm}$ and $P_{\vm^*}$ exchanged. By mixing in a point $x_0$ in the interior (recall $\meanspace_p$ is open),  the supremum becomes finite again, as is easily shown, so that the condition holds.
}
\end{example}}

\begin{example}\label{example:Gaussian-Location-Scale}{\bf [Gaussian location-scale]}
    {\rm Let $\althyp$ be the 2-dimensional family of Gaussian distributions $N(\mu,\sigma^2)$. In terms of the mean-value parameterization, we get 
    $\vm = (\mu_1,\mu_2)$ with $\mu_1 = \mu$ and $\mu_2 = \mu^2+ \sigma^2$ and $\meanspace_q = \{ (\mu_1, \mu_2): \mu_1 \in \reals, \mu_2 > \mu_1^2\}$ so that $Q_{\vm}$ with $\vm =(\mu_1,\mu_2)$ represents $N(\mu_1, \mu_2- \mu_1^2)$.
A standard calculation gives, with  $\vm^* = (\mu^*_1, \mu^{*2}_1 + \sigma^{*2})$, 
and $\vm^{\circ} = (\mu_1^{\circ}, \mu_2^{\circ})$, that
\begin{align}\label{eq:KL_gaussian}
D(Q_{\vm^*} \| Q_{\vm^{\circ}}) = 
\frac{1}{2} \left( 
\log \frac{\mu^{\circ}_2 - \mu^{\circ 2}_1}{\sigma^{*2}} +\frac{\sigma^{*2}+ (\mu^*_1 - \mu^{\circ}_1)^2}{\mu^{\circ}_2 - \mu_1^{\circ 2}} -1
\right).
\end{align}
Now let $\vm^{\circ} = (1-\alpha) \vm + \alpha x_0$. In Appendix~\ref{proof:guassian_example}, we show (this involves some work)
that for every  $0 < \alpha < 1$, every  
$x_0 = (x_{0,1},x_{0,2}) \in \meanspace_q$,  
we have 
\begin{align}\label{eq:inf_variance}
\inf_{\vm \in \meanspace_q} (\mu^{\circ}_2 - \mu_1^{\circ 2}) = \alpha (x_{0,2} - x_{0,1}^2).
\end{align}
with $\vm^{\circ} = (\mu^{\circ}_1,\mu^{\circ}_2)$ a function of $\vm$ and 
$x_{0,2} > x_{0,1}^2$ (so (\ref{eq:inf_variance}) $> 0$) because $x_0 \in \meanspace_q$.
 This implies that the second term in (\ref{eq:KL_gaussian}) is $O(\sigma^{*2} + (\mu_1^* - \mu_1^{\circ})^2)$.
As we also show in the appendix, this then easily implies that for every $\vm^* \in \meanspace_q$, every $A > 0$, we have (\ref{eq:more}).
with $m-s=2$, verifying Condition~\ref{cond:plugin}  as soon as $R$ has $5$ or more moments.}
\end{example}\noindent
With similar arguments, one can show (\ref{eq:more}) holds with $m-s=2$ for $\althyp$ the family of Gamma distributions; details are in Appendix~\ref{app:checking}.



\begin{theorem}\label{thm:compositeH1general}
Let $\nulhyp$ and $\althyp$ be two regular exponential families that are matching pairs in the sense of Definition~\ref{def:simple} and such that $\nulhyp \cap \althyp = \emptyset$ and $\meanspace_q \subseteq \meanspace_p$, as above. Consider a distribution $R$ as above with mean ${\mathbb E}_R[\g{X}] = \vec{\mu}^* \in \mathtt{M}_\alti \subseteq \mathtt{M}_\nuli$ and with covariance matrix $\Sigma_r$, such that the first 3 moments of $X$ exist under $R$ and $D_R(Q \|P)$ is finite for all $Q \in \althyp, P \in \nulhyp$. Let $U_1, U_2, \ldots$ be i.i.d. $\sim R$. 
We have: 
\begin{enumerate}
    \item {\bf (UI)}
    Let $S_{\breve{\vm},\ui}^{(n)}
    $ be
    as in (\ref{eq:sgauss}). 
Suppose  Condition~\ref{cond:plugin} (``{\bf plug-in}'') holds. We have:
\begin{align}\label{eq:seq_UI}
  {\mathbb E}_{R} [
    \log S_{\breve{\vec\mu},\ui}^{(n)}
    ] \leq  n D_R(Q_{\vec\mu^*} \| P_{\vec{\mu}^*})
    - \frac{d_{rq}(\vm^*)}{2} \log n +O(1).
   \end{align}
Moreover, suppose that $R$ has density $r$, that $W_1$ is a prior on $\meanspace_q$ with continuous and strictly positive density in a neighborhood of $\vm^*$, and such that for some $\epsilon > 0$, we have \\ ${\mathbb E}_{R} |\log p_{\vm^*}(U)/r(U)|^{1+\epsilon} < \infty$. Then: 
   \begin{equation}\label{eq:uibayes}
{\mathbb E}_{R} [
    \log S_{W_1,\ui}^{(n)}
    ] \leq   n D_R(Q_{\vec\mu^*} \| P_{\vec{\mu}^*}) - \frac{d}{2} \log \frac{n}{2 \pi} + O_{W_1,\ui}(1),
   \end{equation}
where $O_{W_1,\ui}(1)$ was  given up to $o(1)$ in (\ref{eq:OWui}). 
If moreover Condition~\ref{cond:uinew} (``{\bf UI}$^{\geq}$'') holds for $R$ then (\ref{eq:seq_UI}) and (\ref{eq:uibayes}) hold with equality. 
\item {\bf (COND)} Let $S_{\cond}^{(n)}
$ be
as in (\ref{eq:Scond}) and $D_{\Sigma_r}(\cdot \| \cdot)$ be as in (\ref{eq:triplegauss}). Suppose Condition~\ref{cond:cond} (``$\bf{COND}$'') holds for $R$, $Q= Q_{\vm^*}$ and $P_{\vm^*}$. We have: 
\begin{align}\label{eq:Composite_gencond}
{\mathbb E}_{R} [
    \log S_{\cond}^{(n)}
    ] \geq  n D_R(Q_{\vec{\mu}^*} \| P_{\vec{\mu}^*}) - D_{\Sigma_r}(\Sigma_q(\vm^*) \| \Sigma_\nuli(\vm^*) ) + o(1).
   \end{align}
Moreover, if Condition~\ref{cond:cond} holds with the role of $q$ and $p_{\vm^*}$ interchanged, (\ref{eq:Composite_gencond}) holds with equality. 
\item {\bf (seq-RIPr, Simple and Anti-Simple Case)} Suppose 
Condition~\ref{cond:plugin} (``{\bf plug-in}'') holds. 
Let $S_{\breve{\vec\mu}, \seqrip}^{(n)}$ be as in (\ref{eq:sgauss}). If we are in the {\em simple case\/} of Definition~\ref{def:simple}, 
then we have: 
$
\Pseqrip
= P_{\breve{\vec{\mu}}_{|i-1}}(U_i)$ and
\begin{align}\label{eq:seq_LVA}
{\mathbb E}_{R} [
    \log S_{\breve{\vec\mu},\seqrip}^{(n)}
    ] 
    = n D_R(Q_{\vec\mu^*} \| P_{\vec{\mu}^*}) + \frac{d_{rp}(\vm^*) - d_{rq}(\vm^*)}{2} \log n + O_{\seqrip}(1)+ o(1),
\end{align}
   where $O_{\seqrip}(1)$ is as  below  (\ref{eq:Gaussseq_LVA}). On the other hand, if $\Sigma_q(\vm^*) - \Sigma_p(\vm^*)$ is positive definite (`strict anti-simple case') then for some $\epsilon > 0$, for all $n$,
\begin{equation}\label{eq:seqbad}
{\mathbb E}_{Q} [
    \log S_{\breve{\vec\mu},\seqrip}^{(n)}
    ] 
    \leq   n (D(Q_{\vec\mu^*} \| P_{\vec{\mu}^*})- \epsilon ).
\end{equation}
   \item {\bf(RIPr, General Case)} 
   Let $S_{W_1,\rip}^{(n)}$ be defined as in (\ref{eq:ripr}) with $W_1$ a prior with support contained in $\meanspace_p \cap \meanspace_q$ and with positive continuous density in a neighborhood of $\vm^*$, and let  $D_{\Sigma_r}(\cdot \| \cdot)$ be as in (\ref{eq:triplegauss}). We have 
\begin{align}\label{eq:gekc}
     {\mathbb E}_{Q} [
    \log S_{W_1,\rip}^{(n)}
    ] & \leq
   {\mathbb E}_Q\left[\log \frac{q_{W_1}(U^n)}{p_{W_1}(U^n)} \right] =
    n D(Q||P_{\vec{\mu}^*}) - D_{\Sigma_\alti(\vm^*) }(\Sigma_\alti(\vm^*) \| \Sigma_\nuli (\vm^*)) + o(1),
    \end{align}
and if additionally Condition~\ref{cond:cond} (`\ccond') holds, then the inequality becomes an equality.

   \end{enumerate}
\end{theorem}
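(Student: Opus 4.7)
The plan is to establish the asymptotic equality in (\ref{eq:gekc}) via a Laplace expansion of both Bayes marginals, then obtain the inequality (and its sharpening to equality under \ccond) by combining the minimizing property of the RIPr with the insight that $p_{\leftsquigarrow q_{W_1}(U^n)}$ is itself (approximately) a Bayes marginal $p_{W^*}$ on $\meanspace_p$ close to $W_1$. For Step 1 (Laplace approximation for the equality), since $\nulhyp$ and $\althyp$ are matching pairs sharing the sufficient statistic $X$, the MLE $\hat{\vm}_{|n}=n^{-1}\sum_i X_i$ is common to both families, and in their mean-value parameterizations the observed informations at the MLE are $n\Sigma_q^{-1}(\hat{\vm}_{|n})$ and $n\Sigma_p^{-1}(\hat{\vm}_{|n})$ respectively. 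Under $Q=Q_{\vm^*}$ the MLE concentrates at $\vm^*$ at rate $n^{-1/2}$, so the standard Laplace approximation (valid because $w_1$ is continuous and strictly positive at $\vm^*$, and both families are regular) gives
\[
\log \tfrac{q_{W_1}(U^n)}{p_{W_1}(U^n)} \;=\; \log \tfrac{q_{\hat{\vm}_{|n}}(U^n)}{p_{\hat{\vm}_{|n}}(U^n)} + \tfrac{1}{2}\log\det\!\bigl(\Sigma_q(\hat{\vm}_{|n})\Sigma_p^{-1}(\hat{\vm}_{|n})\bigr) + o_P(1).
\]
Taking $\mathbb{E}_Q$ and using $\sqrt{n}(\hat{\vm}_{|n}-\vm^*)\Rightarrow N(0,\Sigma_q(\vm^*))$ together with a Taylor expansion of $\log q_\vm,\log p_\vm$ around $\vm^*$ yields $\mathbb{E}_Q[\log(q_{\hat{\vm}_{|n}}/q_{\vm^*})]\to d/2$ and $\mathbb{E}_Q[\log(p_{\hat{\vm}_{|n}}/p_{\vm^*})]\to d_{qp}(\vm^*)/2$; rearranging via the definition of $D_{\Sigma_q}$ in (\ref{eq:triplegauss}) gives the claimed equality up to $o(1)$.

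For Step 2 (the inequality), because $W_1$ has support in $\meanspace_q\subseteq\meanspace_p$, the marginal $p_{W_1}$ lies in $\textsc{conv}(\nulhyp(U^n))$, so the minimizing property of the RIPr gives $D(Q_{W_1}\|p_{\leftsquigarrow q_{W_1}})\le D(Q_{W_1}\|p_{W_1})$, i.e.\ the averaged inequality $\mathbb{E}_{Q_{W_1}}[\log(p_{\leftsquigarrow q_{W_1}}/p_{W_1})]\ge 0$. To lift this to the per-$\vm^*$ statement in (\ref{eq:gekc}), I would invoke the RIPr-existence result from \cite{GrunwaldHK19} to express $p_{\leftsquigarrow q_{W_1}(U^n)}=p_{W^*}(U^n)$ for some (possibly $n$-dependent) prior $W^*$ on $\meanspace_p$, and then apply the Laplace expansion of Step 1 to $\log p_{W^*}(U^n)$. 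Guided by the Gaussian analogue (Theorem~\ref{thm:compositeH1gauss} Part 4, where $W^*=\mathcal{N}(\vm_1,\Pi_1+(\Sigma_q-\Sigma_p)/n)$ is an $O(1/n)$ covariance perturbation of $W_1$), I expect $W^*$ to be a local perturbation of $W_1$ whose Laplace contribution matches that of $W_1$ up to $o(1)$ at every $\vm^*$ in the support of $w_1$; combining this with the averaged inequality yields $\mathbb{E}_Q[\log p_{\leftsquigarrow q_{W_1}}]\ge \mathbb{E}_Q[\log p_{W_1}]+o(1)$, which is the desired per-$\vm^*$ bound.

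For Step 3 (equality under \ccond\ and the main obstacle), Condition~\ref{cond:cond} activates the multivariate local CLT with explicit error bounds of \cite{BhattacharyaR76}, which lets us approximate the density of $Z=n^{1/2}\hat{\vm}_{|n}$ under both $Q_{W_1}$ and $P_{W^*_n}$ up to $o(1/\sqrt n)$. Mimicking the Gaussian construction (Theorem~\ref{thm:simpleH1gauss} Part 4), the candidate $W^*_n$ is taken as a convolution of $W_1$ with an $\mathcal{N}(0,(\Sigma_q(\vm^*)-\Sigma_p(\vm^*))/n)$-type kernel (suitably interpreted when this matrix is indefinite, for example by leveraging the $\Scond$ lower bound from Part 2), chosen so that the $Z$-marginals of $Q_{W_1}$ and $P_{W^*_n}$ match to $o(1/\sqrt n)$; by sufficiency of $Z$ this forces $p_{\leftsquigarrow q_{W_1}}$ and $p_{W^*_n}$ to agree in log-likelihood under $Q$ up to $o(1)$, upgrading the Step-2 inequality to an equality. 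The main obstacle lies precisely here: unlike the Gaussian case where the RIPr admits a closed form, in general one must control a nonstandard $n$-dependent Laplace approximation uniformly in a neighborhood of $\vm^*$ while tracking local-CLT error terms; Steps 1 and 2 are comparatively routine, resting only on Laplace/Wilks expansions and the minimizing characterization of the RIPr.
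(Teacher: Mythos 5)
Your Step 1 reproduces the paper's actual written proof of Part 4 almost exactly: the paper equips $\meanspace_p$ with the same prior $W_1$ (density set to $0$ off $\meanspace_q$), decomposes $\log(q_{W_1}/p_{W_1})$ as $n\log(q_{\vm^*}/p_{\vm^*}) + \log(q_{W_1}/q_{\vm^*}) + \log(p_{\vm^*}/p_{W_1})$, and applies the Bayesian-redundancy Laplace expansion (\ref{eq:gaussredundancyg}) of Proposition~\ref{prop:basicgeneral} twice (once with $\althyp$, once with $\nulhyp$ in its role), obtaining $n D(Q_{\vm^*}\|P_{\vm^*}) + \tfrac12\log\det(\Sigma_q\Sigma_p^{-1}) + \tfrac{d}{2} - \tfrac{d_{qp}}{2} + o(1)$, which the paper then recognizes as $nD - D_{\gauss}(\Sigma_q\Sigma_p^{-1}) + o(1)$. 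Your route — a direct Laplace expansion around the shared MLE $\hat{\vm}_{|n}$ followed by Wilks-style Taylor terms — arrives at the same place and is an equivalent calculation; note however that the paper does not need any fresh Laplace machinery here, since the required expansion is already established once, with error control, in Lemma~\ref{lem:KLbounds}/Proposition~\ref{prop:basicgeneral}, and the regularity condition there is verified automatically because the log-ratio reduces to a quadratic form in $X$, which has all moments under $Q$.

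Where your proposal diverges from the paper, and where the real difficulty lies, is in Steps 2 and 3. You correctly put your finger on the central subtlety — that the GRO/minimizing characterization of the RIPr gives an inequality only after averaging over $\vm^*\sim W_1$, i.e. $\mathbb{E}_{Q_{W_1}}[\log(p_{\leftsquigarrow q_{W_1}}/p_{W_1})]\geq 0$, whereas (\ref{eq:gekc}) is a per-$\vm^*$ statement. The paper's written proof of Part 4, however, does not actually address this: it derives \emph{only} the rightmost equality in (\ref{eq:gekc}) and leaves both the inequality and the conditional-equality claim implicit (presumably as consequences of the per-$\vm^*$ sandwich $\mathbb{E}_{Q_{\vm^*}}[\log S_{\cond}]\leq \mathbb{E}_{Q_{\vm^*}}[\log S_{W_1,\rip}]\leq \mathbb{E}_{Q_{\vm^*}}[\log(q_{W_1}/p_{W_1})]$ combined with Part 2, the outer terms being equal to within $o(1)$). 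Your proposed fix — writing $p_{\leftsquigarrow q_{W_1}}=p_{W^*}$ for an $n$-dependent prior $W^*$ and then Laplace-expanding $\log p_{W^*}$ uniformly near $\vm^*$ — is a genuinely different and much more ambitious route than anything the paper attempts, and you yourself flag that controlling the $n$-dependent Laplace approximation and identifying $W^*$'s local structure are unresolved. In particular, outside the exactly-solvable Gaussian case, one does not know that $W^*$ has a continuous strictly positive density near $\vm^*$, which is exactly what your Step-2 Laplace argument would require; and your Step-3 suggestion to interpret the kernel $\mathcal{N}(0,(\Sigma_q-\Sigma_p)/n)$ when $\Sigma_q-\Sigma_p$ is indefinite has no obvious meaning. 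So while your plan correctly identifies the load-bearing issue, Steps 2 and 3 amount to a sketch of a potential alternative proof rather than a reproduction of the paper's argument, and neither your fill-in nor the paper's own brief treatment constitutes a complete per-$\vm^*$ justification of the inequality and the conditional equality.
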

Note again the close similarity to the corresponding  Theorem~\ref{thm:compositeH1gauss}. 
The theorem implies the following corollary, which, like Corollary~\ref{cor:GaussEpowerb}, commensurates with (\ref{eq:thefirst})-(\ref{eq:thelast}) in the introduction.
\begin{corollary}{\bf [e-power and growth optimality]}
Suppose Condition~\ref{cond:cond} (`\ccond') and~\ref{cond:plugin} ({\bf plug-in}) holds for $R=Q= Q_{\vm^*}$, $\nulhyp$ and $\althyp$. Then for some $\epsilon >0$, it holds  ${\mathbb E}_{Q} [
    \log S_{\cond}^{(n)}]  \geq n  \epsilon - o(1)$, and  
\label{cor:generalEpowerb}
    \begin{align}
&   {\mathbb E}_{ Q_{\vm^*}} [
    \log S_{\cond}^{(n)}/ S_{\breve{\vec\mu},\ui}^{(n)}] \overset{(a)}{\geq} \frac{d}{2} \log n + O(1).  \nonumber \\
    &  {\mathbb E}_{Q} [
    \log S_{\rip}^{(n)}/ S_{\cond}^{(n)}] = o(1). \nonumber \\
&   {\mathbb E}_{Q_{\vm^*}} [
    \log S_{\breve{\vec\mu},\seqrip}^{(n)}/ S_{\breve{\vec\mu},\ui}^{(n)}] 
   \overset{(a)}{\geq} \frac{d_{qp}(\vec{\mu}^*)}{2} \log n + O(1) \text{\ with\ } d_{qp} \leq d, \text{\rm \ in the simple case}. \nonumber\\ 
     &  {\mathbb E}_{Q} [
    \log S_{\seqrip}^{(n)}/ S_{\cond}^{(n)}]  \leq - n \epsilon + o(1)   \text{\rm \ for some $\epsilon > 0$, in the strict anti-simple case}. 
      \nonumber
    \end{align}
where (a) becomes an equality if Condition~\ref{cond:uinew} holds, and $d_{qp}(\vec{\mu}^*)$ is as in (\ref{eq:dlite}), 
with $d_{qp}(\vm^*)  \leq  d$ in the simple case (with strict inequality in the strictly simple case) following  in the same way as in (\ref{eq:dlite}).
\end{corollary}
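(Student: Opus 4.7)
\textbf{Proof plan for Corollary~\ref{cor:generalEpowerb}.}

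The plan is to substitute $R=Q=Q_{\vm^*}$ into the four formulas of Theorem~\ref{thm:compositeH1general} and obtain the five claims by pairwise subtraction. First I would observe that under this substitution $\Sigma_r=\Sigma_q(\vm^*)$, so $d_{rq}(\vm^*)=\tr(\Sigma_q\Sigma_q^{-1})=d$, $d_{rp}(\vm^*)=d_{qp}(\vm^*)$, and $D_{\Sigma_q}(\Sigma_q(\vm^*)\|\Sigma_p(\vm^*))$ reduces to a fixed $O(1)$ constant depending only on $\vm^*$. Since the two models are matching pairs with $\nulhyp\cap\althyp=\emptyset$, we have $D(Q_{\vm^*}\|P_{\vm^*})>0$, which will drive the linear-in-$n$ growth in the first claim.

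For claim (1), Part 2 of the theorem directly yields ${\mathbb E}_Q[\log S_{\cond}^{(n)}]\geq nD(Q_{\vm^*}\|P_{\vm^*}) - D_{\Sigma_q}(\Sigma_q\|\Sigma_p) + o(1)$, and any $\epsilon$ strictly smaller than $D(Q_{\vm^*}\|P_{\vm^*})$ absorbs the bounded constant for large $n$. For claim (2) I would subtract the UI upper bound (\ref{eq:seq_UI}) from the COND lower bound (\ref{eq:Composite_gencond}); the $nD_R(\cdot)$ terms cancel and $(d_{rq}(\vm^*)/2)\log n=(d/2)\log n$ remains together with bounded remainders. Condition~\ref{cond:uinew} promotes (\ref{eq:seq_UI}) to an equality, and hence (a) to an equality. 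Claim (4) in the simple case is analogous: subtracting (\ref{eq:seq_UI}) from the seq-RIPr formula (\ref{eq:seq_LVA}) gives $\log n$ coefficients that combine as $(d_{rp}-d_{rq})/2+d_{rq}/2=d_{rp}/2=d_{qp}(\vm^*)/2$, and the inequality $d_{qp}(\vm^*)\leq d$ under simplicity is inherited verbatim from the eigenvalue argument in (\ref{eq:dlite}).

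For claim (3) I would use Part 4 of the theorem: under \ccond\ the inequality there becomes an equality, giving ${\mathbb E}_Q[\log S_{W_1,\rip}^{(n)}]=nD(Q\|P_{\vm^*})-D_{\Sigma_q(\vm^*)}(\Sigma_q\|\Sigma_p)+o(1)$, which matches the equality form of Part 2 for ${\mathbb E}_Q[\log S_{\cond}^{(n)}]$ provided \ccond\ also holds with the roles of $q$ and $p_{\vm^*}$ interchanged---both directions being included in the corollary's hypothesis. Subtraction then yields $o(1)$. For claim (5) in the strict anti-simple case I would combine (\ref{eq:seqbad}), ${\mathbb E}_Q[\log S_{\breve{\vm},\seqrip}^{(n)}]\leq n(D(Q\|P_{\vm^*})-\epsilon)$, with the unspecialized form of Part 2, ${\mathbb E}_Q[\log S_{\cond}^{(n)}]\geq nD(Q\|P_{\vm^*})+O(1)+o(1)$; the leading $nD$ terms cancel, leaving the ratio bounded above by $-n\epsilon+O(1)+o(1)$, which is $\leq -n\epsilon'+o(1)$ for any $0<\epsilon'<\epsilon$ and all sufficiently large $n$, hence of the claimed form after renaming $\epsilon'$ back to $\epsilon$.

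The whole argument is essentially the bookkeeping of $o(1)$ and $O(1)$ terms; the one subtle point is claim (3), where \ccond\ in both directions (as assumed in the corollary) must be used to upgrade Part 4's upper bound to an equality, so that the RIPr expression truly matches $S_{\cond}^{(n)}$ up to $o(1)$ rather than merely being bounded above by it.
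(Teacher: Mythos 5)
Your proof is correct and follows the same (implicit) route the paper takes: the corollary is offered without a separate proof because it is a direct bookkeeping consequence of Theorem~\ref{thm:compositeH1general}, and you carry out exactly that bookkeeping. The substitution $R=Q=Q_{\vm^*}$ correctly gives $d_{rq}(\vm^*)=d$ and $d_{rp}(\vm^*)=d_{qp}(\vm^*)$, and the pairwise subtractions yield the stated coefficients for the $\log n$ and $n$ terms. Your handling of claim (1) (absorbing the constant $D_{\Sigma_q}(\cdot\|\cdot)$ into $o(1)$ after choosing $\epsilon<D(Q_{\vm^*}\|P_{\vm^*})$ and noting that the positive sequence $M_n$ in the paper's definition of $o(1)$ need only tend to zero, not be uniformly small) is fine, and your treatment of claims (4) and (5) correctly respects the mutually exclusive simple / strict-anti-simple regimes.

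You also put your finger on the one genuine subtlety, claim (3). With \ccond\ in the ``forward'' direction only, Part~4 gives an equality for $S_{W_1,\rip}$ but Part~2 gives only a lower bound for $S_{\cond}$, which yields $\leq o(1)$ rather than $=o(1)$; the growth-optimality inequality (\ref{eq:basicepower}) does not close the gap here because $S_{W_1,\rip}$ is GRO against $Q_{W_1}$ while the expectation is under $Q_{\vm^*}$. So the equality needs the role-interchanged version of \ccond\ to upgrade Part~2 to an equality, exactly as you say. Whether the corollary's phrase ``\ccond\ holds for $R=Q=Q_{\vm^*}$, $\nulhyp$ and $\althyp$'' is meant to encompass the reversed roles is a slight ambiguity in the paper rather than in your argument; you read it the way it must be read for the equality in claim (3) to hold, and you flag that dependency, which is the important thing.
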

In the next section, we illustrate the theorem with some simulations pertaining to the two-sample setting of Example~\ref{ex:twosample}. 
\section{Examples and Simulations}\label{sec:simulations}
To illustrate our asymptotic results we complement them with some finite-sample size simulations. For simplicity, we restrict ourselves to two simple 2-sample tests as in Example~\ref{ex:twosample}, which, however, are quite different in that they fall under the simple- and anti-simple case, respectively. Thus, with details as in Example~\ref{ex:twosample}, let ${\cal P}^{\circ}$ be a 1-dimensional exponential family,
set  $U = (Y_a, Y_b)$, $X= Y_a+ Y_b$, and let $R_{\mu_a,\mu_b}$ express that $Y_a \sim P^{\circ}_{\mu_a}$ and $Y_b\sim P^{\circ}_{\mu_b}$ independently such that  $\nulhyp = \{R_{\mu,\mu}: \mu \in \meanspace^{\circ} \}$ stands for the null hypothesis, representing that $(Y_a,Y_b)$ are i.i.d. with the same mean $\mu$. 

\subsection{The Simple Case: Bernoulli 2-Sample Test}
In this case ${\cal P}^{\circ}$ is the Bernoulli model, so that, recast in the mean-value parameterization, the null can be written as $\nulhyp = \{P_{\vm}: \vm \in \meanspacenul\}$ where $\meanspacenul= (0,2)$, and $P_{\vm}$ expresses that $Y_a, Y_b$ are independent Bernoulli, both with mean $\vm/2$. The results of \cite{HaoGLLA23} imply that this is an instance of the {\em simple\/} case for ${\cal Q}$ as defined below. We start with point alternative $ \{Q \}$ with $Q=R_{\mu_a,\mu_b}$. Under $Q$, $Y_a$ and $Y_b$ are independent Bernoulli, with mean $\mu_a$ and $\mu_b$ respectively. 
Note that $2 \vm^* = {\mathbb E}_Q[X] = \mu_a + \mu_b$. We  use $Q$  to  generate a full alternative $\althyp:= \althyp^{\textsc{gen}}$ with the same sufficient statistic  $X=Y_a+Y_b$ as $\nulhyp$, as in Section~\ref{subsec:Simple_and_Anti}.
Let $\delta(\mu_a,\mu_b) := \log ({\mu_a(1- \mu_b)})/((1-\mu_a) \mu_b)$ be the {\em log odds ratio\/} corresponding to $(\mu_a,\mu_b)$. 
As shown by 
\cite[Section 4.3]{GrunwaldLHBJ24}, the generated alternative $\althyp^{\textsc{gen}}$ will correspond to the set of all $(\mu'_a,\mu'_b)$ that have the same {\em log odds ratio\/} $\delta^* := \delta(\mu_a,\mu_b)$ as the original $Q$, i.e. all $(\mu'_a,\mu'_b)\in (0,1)^2$ such that
$\delta(\mu'_a,\mu'_b) = \delta^*$. 
It is easy to see that for any $(\mu_a,\mu_b)$ with $\delta(\mu_a,\mu_b)= \delta^*$, we can write $\althyp^{\textsc{gen}} = \{Q_{\mu'_a,\mu'_b}: (\mu'_a,\mu'_b) \in \mathtt{M}_{\delta^*}\}$ with 
\begin{equation}\label{eq:H1_para_space}
\mathtt{M}_{\delta^*} = \left\{\left(\frac{\mu_a e^\beta}{1-\mu_a+\mu_a e^\beta}, \frac{\mu_b e^\beta}{1-\mu_b+\mu_b e^\beta} \right)^\top: \beta \in \mathbb{R} \right\}.\end{equation}
Figure~\ref{fig:parameter-space} illustrates various \( \mathtt{M}_q^{\textsc{gen}} \) with different \( \delta^* \) values. Note that when \( \delta^*= 0 \), \( \mathtt{M}_{\delta^*} = \mathtt{M}_p \).

The log odds ratio has long been a standard notion of effect size in two-sample tests, and this approach implicitly adopts it. Of course, in practice we would often want to test the larger alternative hypothesis that $\delta \geq \delta^*$. Clearly all types of e-variables we develop in this paper can be used in that setting as well, since the validity of e-values is independent of the actually chosen alternative; we suspect that they still have close to optimal e-power in a worst-case sense (the GROW sense of \cite{GrunwaldHK19}) if used to test $\delta \geq \delta^*$ and we plan to investigate whether this is really so  in future work.  For now, we concentrate on testing $\delta=0$ against alternative $\delta^*$. 

Our simulations supplement Theorem~\ref{thm:compositeH1general}, exploring e-powers for all e-variables appearing in that result, and, for reasons to become clear, also for a ``pseudo'' e-variable. That is, we consider
\begin{equation}\label{notation-e}
    S_{W_1,\rip}^{(n)},\quad
    S_{W_1,\ui}^{(n)},\quad 
    S_{\breve{\vm},\ui}^{(n)},\quad
    S_{\breve{\vec\mu},\seqrip}^{(n)},\quad
    S_{\cond}^{(n)},\quad
    \frac{q_{W_1}(U^n)}{p_{W_1}(U^n)}.
\end{equation}
\commentout{We consider a two-sample test setting: we let $U = (U_{[1]}, U_{[2]})^\top$. Under the null, $(U_{[1]}, U_{[2]})$ are i.i.d. $\sim P$ where $P$ is taken from some exponential family; under the alternative, $(U_{[1]}, U_{[2]})$ are independent but have different means $\mu_1$ and $\mu_2$ respectively. We set $X = U_{[1]} + U_{[2]}$.

Under Bernoulli model: the composite null hypothesis is defined as $\mathcal{P} = \{P_{\mu}: \mu \in [0, 2] \}$. According to (\ref{eq:canonical}), we have
\begin{equation}
    \label{eq:Bernoulli}
p^{\textsc{can}}_{\vb}(U) = \frac{1}{Z_p(\vb)} \cdot e^{\beta(U_1 + U_2)} \cdot p_{\vm}(U),
\end{equation}
Now we pick up a distribution $Q \in \mathcal{Q}$ and then generate the composite alternative $\mathcal{Q}^{\textsc{gen}} = \{Q_\beta^{\textsc{can}}: \beta \in \mathtt{B}_q \}$ with density
\begin{equation}
    \label{eq:Bernoulli_alter}
q^{\textsc{can}}_{\vb}(U) = \frac{1}{Z_q(\vb)} \cdot e^{\beta(U_1 + U_2)} \cdot q(U).
\end{equation}
Let $\mu_1 := \mathbb{E}_Q[U_1]$ and $\mu_2 := \mathbb{E}_Q[U_2]$. Then (\ref{eq:Bernoulli_alter}) can be rewritten as:
\begin{equation}
    \label{eq:Bernoulli_alter_new}
q^{\textsc{can}}_{\vb}(U) = \frac{1}{Z_q(\vb)} \cdot \exp\left(\left(\beta + \log\frac{\mu_1}{1-\mu_1} \right)\cdot U_1 + \left(\beta + \log\frac{\mu_2}{1-\mu_2} \right)\cdot U_2 \right).
\end{equation}
Since \( \mathcal{Q}^{\textsc{gen}} \) is a distribution set of the two-sample Bernoulli model, the corresponding mean parameter space is:
\begin{equation}\label{eq:H1_para_space}\mathtt{M}_q^{\textsc{gen}} = \left\{\left(\frac{\mu_1 e^\beta}{1-\mu_1+\mu_1 e^\beta}, \frac{\mu_2 e^\beta}{1-\mu_2+\mu_2 e^\beta} \right)^\top: \beta \in \mathbb{R} \right\}.\end{equation}

Denote $\delta := \log\frac{\mu_1}{1-\mu_1} - \log\frac{\mu_2}{1-\mu_2}$, then for any $(\mu_1', \mu_2')^\top \in \mathtt{M}_q^{\textsc{gen}}$, we have $\log\frac{\mu_1'}{1-\mu_1'} - \log\frac{\mu_2'}{1-\mu_2'} = \delta$.}
For the simulations, first we fix $\delta^*$, then we employ \( W_1 \) as a uniform prior over the corresponding \( \mathtt{M}_{\delta^*} \). 
Specifically, for the given $\delta^*$, we draw \(\beta\) uniformly from the interval \([-10, 10]\) 
and use it to determine the corresponding parameters \((\mu_a', \mu_b')^\top \in \mathtt{M}_q^{\textsc{gen}}\) as described by (\ref{eq:H1_para_space}), where we set $(\mu_a,\mu_b)$ such that $\delta^*(\mu_a,\mu_b) = \delta^*$ and $\mu_a = 1- \mu_b$. 
For computing \( S_{\breve{\vm},\ui}^{(n)} \) and \( S_{\breve{\vec\mu},\seqrip}^{(n)} \), we set \(\breve{\vm}_{|n} = {(1 + \sum_{i=1}^n x_i)}/(n+1)\).
A distribution \( Q^* \in \mathcal{Q}^{\textsc{gen}}  \), corresponding to a specific point in Figure~\ref{fig:parameter-space} for the selected $\delta^*$, is selected from \( \mathcal{Q}^{\textsc{gen}} \) to generate i.i.d. data sequences \( U^n \). Note that, turning to the mean-value parameterizaiton for \( \mathcal{Q}^{\textsc{gen}} \), we can write $Q= Q_{\vm^*}$ where $\vm^*= \mu_a + \mu_b$ for the specific $(\mu_a,\mu_b)$ with $Q= R_{\mu_a,\mu_b}$.  For each sequence, we compute the logarithm of the quantities in (\ref{notation-e}). Repeating this process under \( Q^* \), we then take the empirical mean of these logarithms to approximate the e-powers under \( Q^* \). Our figures depict the quantity $nD(Q_{\vm^*}|| P_{\vm^*}) - \mathbb{E}_{Q^*}[\log S^{(n)}]$, which may be thought of as the loss in e-power of a tester employing $S^{(n)}$ compared to an oracle tester who has access to $Q_{\vm^*}$ and can use the GRO optimal e-variable relative to $Q_{\vm^*}$. We perform simulations for \( \delta^* = 1 \) and \( \delta^* = 2 \), selecting three points for each \( \delta^* \) to generate data sequences, as well as a single point for the more extreme $\delta^* = 2 \log 19 \approx 5.9$ corresponding to $(\mu_a,\mu_b) = (0.95,0.05)$. Two instances are presented in Figure~\ref{fig:simulation_results_delta1} and~\ref{fig:extreme_param}; other instances, all exhibiting the same qualitative behavior, are in Figure~\ref{fig:simulation_results_delta2} in Appendix~\ref{app:Simulations}.


Theorem~\ref{thm:compositeH1general} indicates that the e-powers of the optimal $S_{W_1,\rip}^{(n)}$ and of \( S_{\cond}^{(n)} \) and \( \frac{q_{W_1}(U^n)}{p_{W_1}(U^n)} \) become identical (hence optimal) as $n$ increases. Our simulations indicate that converges already happens at quite small $n$. It turns out that calculating $S_{W_1,\rip}^{(n)}$ directly (via numerical optimization of the KL divergence) is highly computationally intensive for large $\delta^*$; we could not do it for $\delta^*\approx 5.9$ in Figure~\ref{fig:extreme_param}. But, interestingly, and as our main take-away from this section, the experiments indicate that this need not bother us: we know from Theorem~\ref{thm:compositeH1general} that the e-power of the valid and optimal $S_{W_1,\rip}^{(n)}$ must lie inbetween that of the (valid) e-variable $S_{\cond}^{(n)}$ and the (pseudo-) e-variable \( \frac{q_{W_1}(U^n)}{p_{W_1}(U^n)} \), i.e. it must be squeezed in between the small space between the  red and blue lines in 
Figure~\ref{fig:extreme_param}. 
Consequently, we may use $S_{\cond}^{(n)}$ -- which is much easier to calculate -- as a safe substitute of $S_{W_1,\rip}^{(n)}$ while sacrificing essentially negligible e-power. We also note that, as a basic calculation shows, $d_{qp}(\vm^*) = (\mu_a (1-\mu_a) + \mu_b (1-\mu_b))/(2 \bar\mu (1- \bar\mu)$, where $\bar\mu = (\mu_a+\mu_b)/2$ which implies via (\ref{eq:Composite_gencond}) that the red curve corresponding to $\Scond$ in Figure~\ref{fig:extreme_param} should converge to $D_{\Sigma_r}(\Sigma_q(\vm^*) \| \Sigma_\nuli(\vm^*) ) = D_{\gauss}(d_{qp}) \approx 0.425$; we see that this convergence takes place already at quite small $n$.  

Turning to $S_{\breve{\vec\mu},\seqrip}^{(n)}$, we see from  (\ref{eq:seq_LVA}) that its e-power will be asymptotically smaller than that of the optimal  \( S_{W_1,\rip}^{(n)} \) by a logarithmic amount $- c \log n + O(1)$, with
\begin{equation}\label{eq:drp-drq}
c= \frac{d_{qp}(\vm^*) - d}{2} = -\frac{(\mu_a - \mu_b)^2}{2(\mu_a + \mu_b)(2 - (\mu_a + \mu_b))} \leq 0,
\end{equation}
which follows using $d=1$ and the above expression of $d_{qp}$.
(\ref{eq:drp-drq}) is close to 0 for all  \( (\mu_a, \mu_b)^\top \in (0, 1)^2 \) with $\mu_a \neq \mu_b$ except for values close to the upper-left and lower-right corner in Figure~\ref{fig:parameter-space}. This is nicely in accord with our simulations: Figure~\ref{fig:simulation_results_delta1} and Figure~\ref{fig:simulation_results_delta2} indicate that the e-power of \( S_{\breve{\vec\mu},\seqrip}^{(n)} \) is just  slightly lower than that of the optimal \( S_{W_1,\rip}^{(n)} \) for $\delta^* \in \{1,2\}$, but if we sample from  \((\mu_a,\mu_b) = (0.95, 0.05)^\top\) with corresponding $\delta^* \approx 5.9$, then $c$ in (\ref{eq:drp-drq}) is close to $1/2$, and indeed, as Figure~\ref{fig:extreme_param} demonstrates,  the e-power loss of \(S_{\breve{\vec\mu},\seqrip}^{(n)}\) starts behaving like the sub-optimal  \( S_{\breve{\vec\mu},\ui}^{(n)} \) and \( S_{W_1,\ui}^{(n)} \), which are also suboptimal by  a logarithmic amount, with constant factor $c= d/2= 1/2$. 

\begin{figure}
    \centering
    \includegraphics[width=0.5\linewidth]{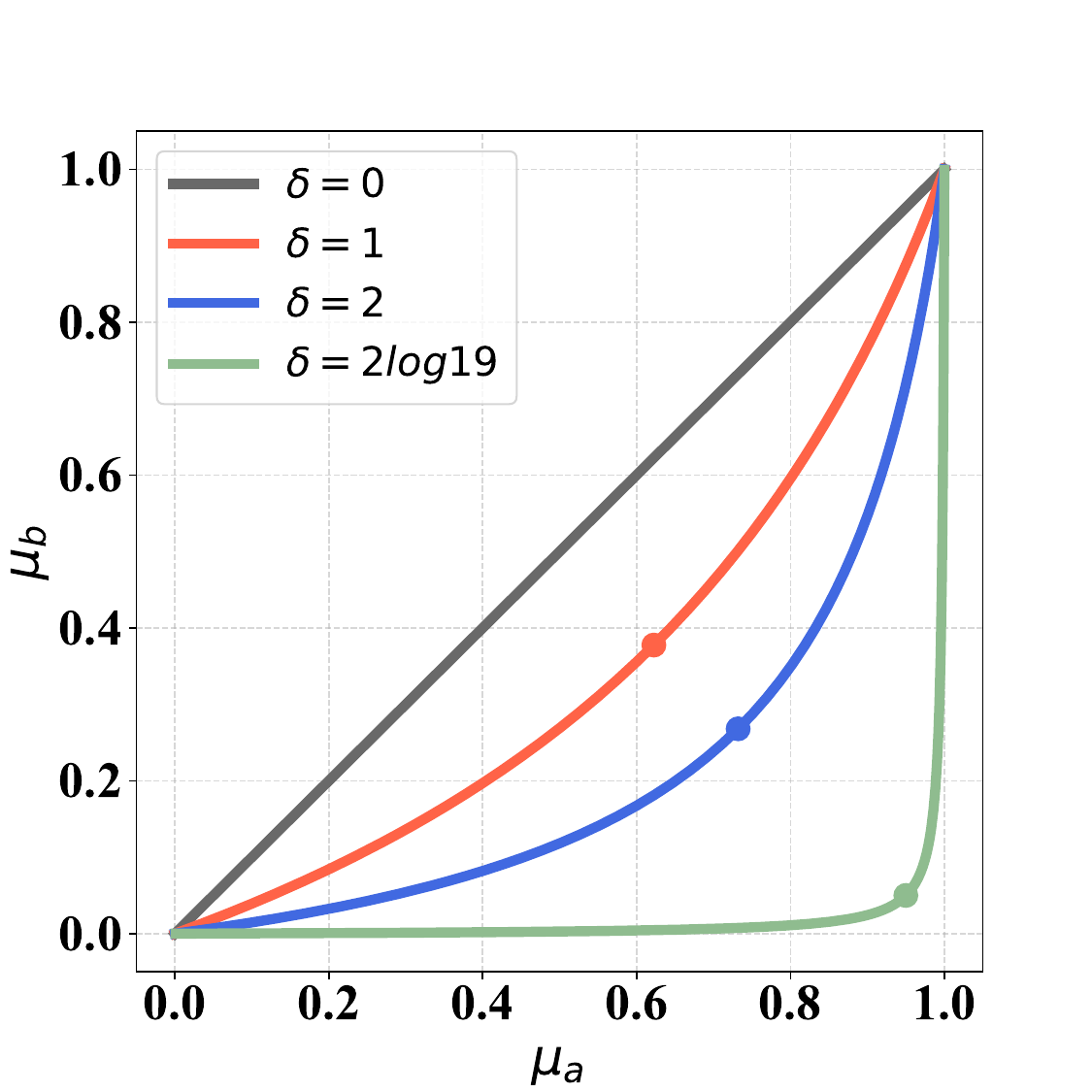}
    \caption{Parameter space visualization under the Bernoulli model, with starting points indicated in the figure for generating \(\mathtt{M}_q^{\textsc{gen}}\).}
    \label{fig:parameter-space}
    \centering
    \includegraphics[width=0.7\linewidth]{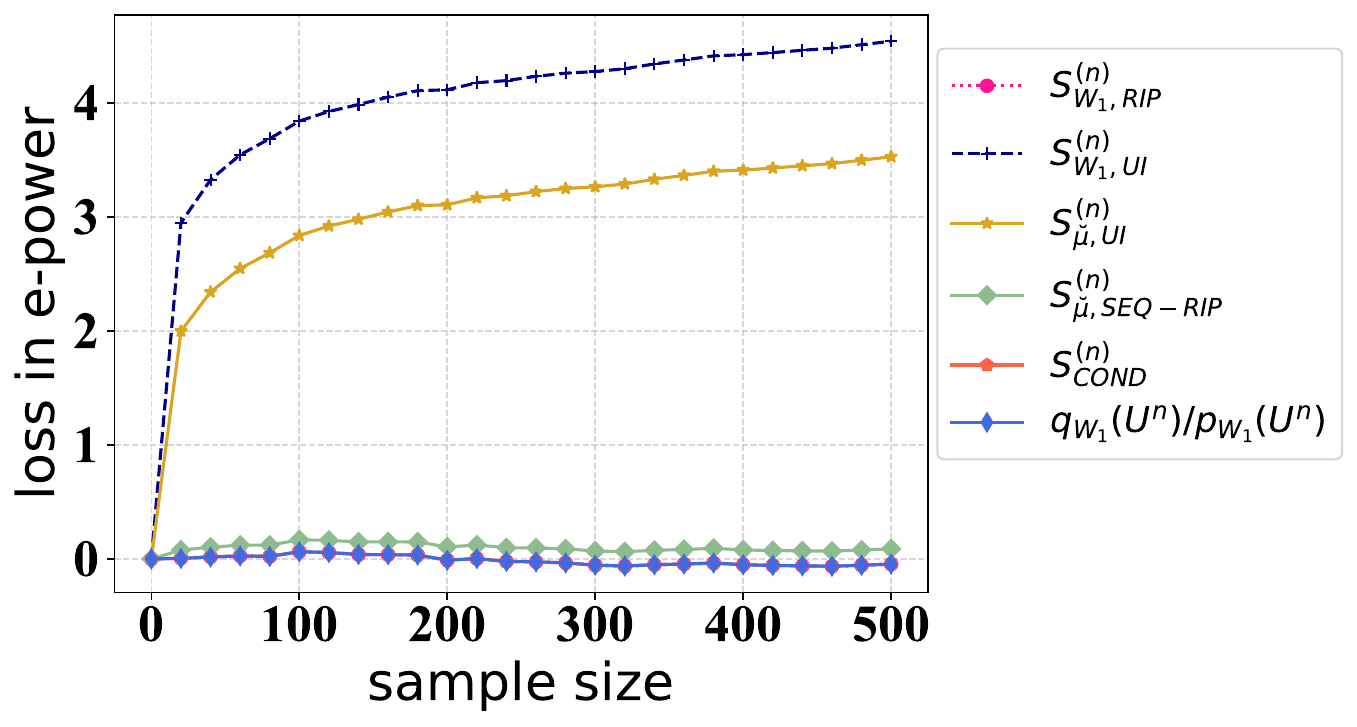}
    \caption{$\delta^*= 1$. E-powers are computed for multiple (pseudo) e-variables for the 2-sample Bernoulli test. The data sequences are generated from $Q^*= Q_{\vm^*}= R_{\mu_a,\mu_b}$  with $\mu_a= 2/7\approx 0.286, \mu_b = 2/(5e +2)\approx 0.128$ with mean $\vm^*= \mu_a + \mu_b \approx 0.41$ and $ D(Q^*|| P_{\vec{\mu}^*}) \approx  0.0385$.}
    \label{fig:simulation_results_delta1}
    \centering
    \includegraphics[width=0.7\linewidth]{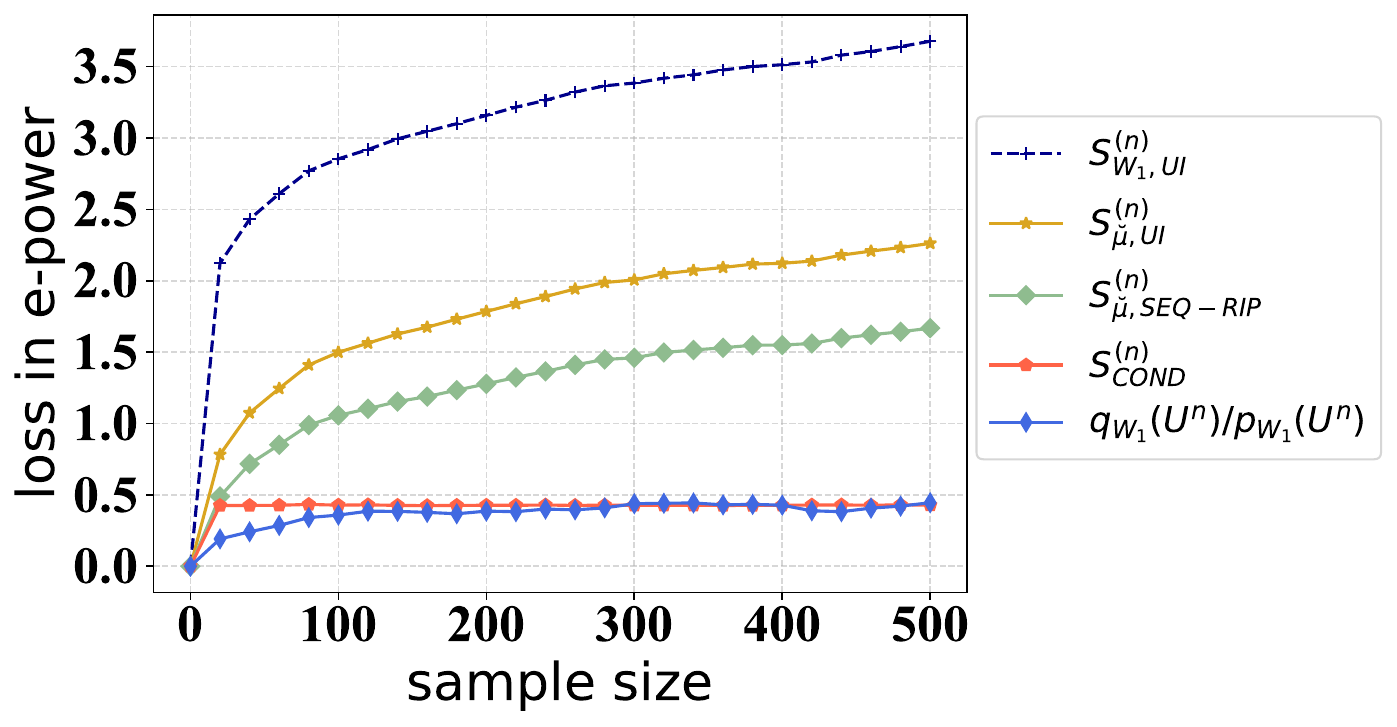}
    \caption{$\delta^* = 2\log 19\approx 5.9$. E-powers are computed for multiple (pseudo) e-variables for the 2-sample Bernoulli test. The data sequences are generated from $Q^*= Q_{\vm^*}
    R_{\mu_a,\mu_b}$ with $(\mu_a,\mu_b)= (0.95,0.05)$ so $\vm^*=1$, $P_{\vm^*} = R_{0.5,05}$ and $D(Q^*|| P_{\vec{\mu}^*}) \approx 0.9893$.}
    \label{fig:extreme_param}
\end{figure}

    

\subsection{The Anti-Simple Case: Exponential 2-Sample Test}
Consider the setting and notations described in  the beginning of Section~\ref{sec:simulations} but now let ${\cal P}^{\circ}$ be the family of exponential distributions rather than Bernoulli. 
Hao et al.~\cite{HaoGLLA23} implicitly demonstrate that, for arbitrary $\mu_a, \mu_b > 0$ with $\mu_a \neq \mu_b$ and $\vm^* = \mu_a+\mu_b$, the RIPr of \(Q_{\vm^*} = R_{\mu_a,\mu_b} \) on \(\mathcal{P} = \{P_{\mu}: \mu \in (0,\infty) \}\), the model under which $(Y_a,Y_b)$ are i.i.d. exponentially distributed with arbitrary mean $\mu$, is {\em not\/}  \(P_{\vm^*}(X)\). This means that for every $\vm^* \in \meanspace_q$, we are now in the $\vm^*$-strict-anti-simple case. 

We investigate the same  (pseudo-) e-variables (\ref{notation-e}) as before, while, as above in Figure~\ref{fig:extreme_param}, excluding \(S_{W_1,\rip}^{(n)}\) for computational reasons. This is of no great concern since, as stated earlier, 
Theorem~\ref{thm:compositeH1general} shows  that, under \(Q\), the asymptotic e-power of \(S_{W_1,\rip}^{(n)}\) lies between that of \(S_{\cond}^{(n)}\) and \(\frac{q_{W_1}(U^n)}{p_{W_1}(U^n)}\), both of which are easily computed. Our simulation results (we include just one plot but tried several more) invariably show that \(\mathbb{E}_Q\left[\log S_{\cond}\right]\) closely approximates \(\mathbb{E}_Q\left[\log \frac{q_{W_1}(U^n)}{p_{W_1}(U^n)}\right]\) even at quite small sample sizes (in Figure~\ref{fig:expo_simulation} they are indistinguishable). 
Thus, our first take-away from the experiments is that $S_{\cond}$ serves as a practical and computationally efficient alternative of $S_{W_1,\rip}$. 

To describe the simulations in more detail, we start by setting  $Q= R_{\mu_a,\mu_b}$ for some $(\mu_a,\mu_b) > 0$ and calculate the corresponding 
$\althyp:= \althyp^{\textsc{gen}}$ with the same sufficient statistic  $X=Y_a+Y_b$ as $\nulhyp$. Defining \(\delta(\mu_a,\mu_b) := \frac{1}{\mu_a} - \frac{1}{\mu_b}\), it turns out that $\althyp^{\textsc{gen}}$ is the set of all distributions $R_{\mu'_a,\mu'_b}$ with $\delta(\mu'_a,\mu'_b)= \delta(\mu_a,\mu_b)$ for the $(\mu_a,\mu_b)$ we started with. Thus $\delta$ is our new notion of effect size. 
Following a similar calculation as for the Bernoulli case, 
%
the mean parameter space of \(\mathcal{Q}^{\textsc{gen}}\) turns out to be: 
\begin{equation}\label{eq:H1_para_space_Expo}
    \mathtt{M}_q^{\textsc{gen}} = \left\{ \left(\frac{\mu_a}{1 - \mu_a \beta}, \frac{\mu_b}{1 - \mu_b \beta}\right)^\top : \beta \in \mathbb{S} \right\},
\end{equation}
which, for fixedd $\delta^*$, is the same for every $(\mu_a,\mu_b)$ with $\delta(\mu_a,\mu_b) = \delta^*$. 
Here \(\mathbb{S}\) is the set of \(\beta\) such that both \(\frac{\mu_1}{1 - \mu_1 \beta}\) and \(\frac{\mu_2}{1 - \mu_2 \beta}\) are positive. 

The simulation setup mirrors the Bernoulli case, with the exception that we now fix \(\delta^* = -1\). We continue to use \(W_1\) as a uniform prior over \(\mathtt{M}_q^{\textsc{gen}}\). Specifically, we fix a particular $(\mu_a,\mu_b)$ with $\delta(\mu_a,\mu_b)= \delta^*$ (here we arbitrarily took $\mu_a = 2, \mu_b = 2/3$), and we sample \(\beta\) uniformly from $\mathbb{S}\cap [-10,10]$ and compute the corresponding parameters \((\mu_a', \mu_b')^\top \in \mathtt{M}_q^{\textsc{gen}}\) as described by (\ref{eq:H1_para_space_Expo}). 
The simulation results are presented in Figure~\ref{fig:expo_simulation} and are consistent with Theorem~\ref{thm:compositeH1general}. It is seen that the analysis very closely resembles that of the Bernoulli model. The one difference worth noting relates to \(S_{\breve{\vec{\mu}},\seqrip}^{(n)}\). Since the 2-sample test under an exponential model falls into the anti-simple category, the e-power of \(S_{\breve{\vec{\mu}},\seqrip}^{(n)}\) should align with equation~(\ref{eq:seqbad}). Specifically, we expect that \(n D(Q_{\vm^*} \| P_{\vm^*}) - \mathbb{E}_{Q^*}[\log S_{\breve{\vec{\mu}},\seqrip}^{(n)}] \geq n \epsilon\) for some \(\epsilon > 0\), which is indeed witnessed by  Figure~\ref{fig:expo_simulation}. Since calculating  \(S_{\breve{\vec{\mu}},\seqrip}^{(n)}\) is computationally very intensive, we had to limit the computatin to a maximum sample size of \(n=200\). Nonetheless, the results clearly indicate that the e-power of \(S_{\breve{\vec{\mu}},\seqrip}^{(n)}\) will be lower than that of \(S_{W_1,\ui}^{(n)}\) and \(S_{\breve{\vm},\ui}^{(n)}\) for sufficiently large \(n\) --- even though it is interesting to see that there is a large initial regime in which $S_{\breve{\vec{\mu}},\seqrip}^{(n)}$ outperforms universal inference. 

\begin{figure}[h]
    \centering
    \includegraphics[width=0.7\linewidth]{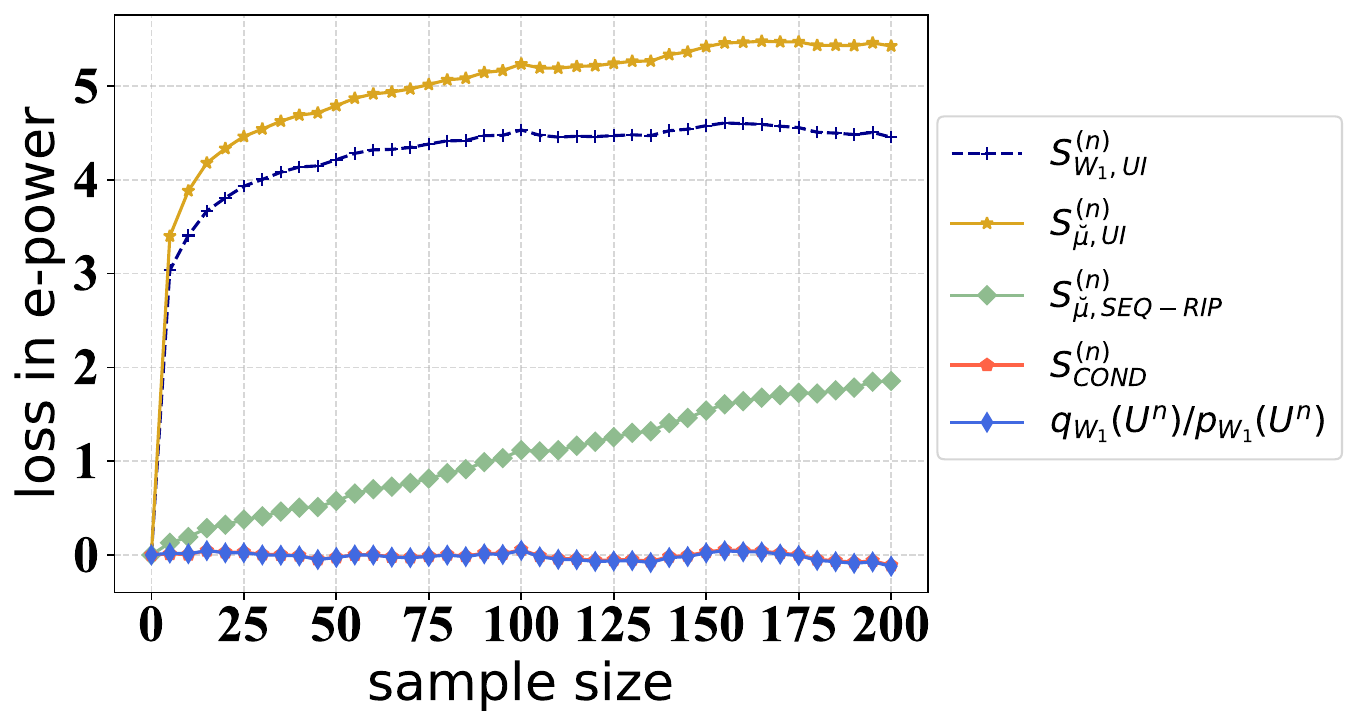}
    \caption{$\delta = -1$. E-powers are computed for multiple (pseudo) e-variables under exponential model. The data sequences are generated from $Q^*$ with mean $\mu^*$. $D(Q^*|| P_{\vec{\mu}^*}) \approx 0.2877$ when $\vec{\mu}^* = (2, 2/3)$.}
    \label{fig:expo_simulation}
\end{figure}

\section{Proofs for Section~\ref{sec:gauss}}\label{sec:gaussproofs}

\subsection{Preparation for both proofs}
We gradually build up a series of remarkable properties of KL divergences between multivariate Gaussians. These will then readily lead to the proofs of Theorem~\ref{thm:simpleH1gauss} and~\ref{thm:compositeH1gauss}.

Let $P, Q, R$ and $R_{\gaussvar}$ be distributions on $ X=U \in \reals^d$.  Suppose that, under $R$, ${\mathbb E}_{R}[X] = \vm^*$ and the nondegenerate covariance matrix is $\Sigma_r$, that $R_{\gaussvar}$ is Gaussian with the same mean and covariance, and suppose that $Q$  and $P$ are Gaussian with ${\mathbb E}_{Q}[X] = {\mathbb E}_{P}[X] = \vm^*$ and nondegenerate covariance matrices  $\Sigma_q$ and $\Sigma_p$ respectively. In Section~\ref{sec:gausscomposite} we introduced the notation $D_{  \Sigma_r}(\Sigma_q \| \Sigma_p)$. The following equation implies that $D_{  \Sigma_r}(\Sigma_q \| \Sigma_p)= D_R(Q \| P)$, as claimed in (\ref{eq:forevergauss}); even more generally, if $P$ is replaced by $P_0$ with mean $\vm_0$ and still covariance matrix $\Sigma_p$,  we obtain a generalization of the well-known formula for the KL divergence between two multivariate normals that do not necessarily have the same mean: 
\begin{align}\label{eq:normalKL}
    D_{R}(Q \| P_0) & = 
{\bf E}_R\left[\log \frac{q(X)}{p_{0}(X)}\right] =
{\bf E}_{R_{\gaussvar}} \left[\log \frac{q(X)}{p_{0}(X)}\right] 
    \\ & =
    D_{\gauss}(\Sigma_r \Sigma_p^{-1}) -D_{\gauss}(\Sigma_r \Sigma_q^{-1}) + 
    \frac{1}{2} (\vm^* - \vm_0)^\top \Sigma^{-1}_p (\vm^* - \vm_0) \nonumber \\ &= 
    D_{ \Sigma_r}(\Sigma_q \| \Sigma_p) + 
    \frac{1}{2} (\vm^* - \vm_0)^\top \Sigma^{-1}_p (\vm^* - \vm_0), \nonumber
\end{align}
with $D_{\gauss}$ as in (\ref{eq:gaussrules}) and $D_{\Sigma_r}$ as in (\ref{eq:triplegauss}). Here the first equality is definition, the second follows by writing out the definitions,
and in the third we used the standard (easily derived) formula for the KL divergence between two multivariate Gaussians.
\commentout{\begin{align}\label{eq:Rgauss}
    D_{ \Sigma_r}(\Sigma_q \| \Sigma_p) & =   
    D_{\gauss}(\Sigma_r \Sigma_p^{-1}) -D_{\gauss}(\Sigma_r \Sigma_q^{-1}) \nonumber \\ & =
    - \log \det (\Sigma_q \Sigma_p^{-1}) + \frac{1}{2} 
    \tr(\Sigma_r (\Sigma_p^{-1} - \Sigma_q^{-1})). 
\end{align}}

The following is a standard result (see for example \cite[Chapter 2.3.]{bishop2006pattern}):
\begin{proposition}\label{prop:bishop}
The marginal distribution of $\hvm_{|n} \in \reals^d$ under $P_{W}$, with $W=N(\vm^*,\Pi)$ is given by:
\begin{equation}\label{eq:marginalmle}
\hvm_{|n} \sim N(\vec{\mu}^*, \Pi + \Sigma_\nuli/n).
\end{equation}
\end{proposition}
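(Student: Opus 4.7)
The plan is to prove this by direct computation using the tower property and standard Gaussian conditioning facts. Since $P_W$ is the Bayes marginal defined via the two-stage sampling $\vm \sim W$ and then $U_1, \ldots, U_n \mid \vm \text{ i.i.d.} \sim P_{\vm} = N(\vm, \Sigma_p)$, and $\hvm_{|n} = n^{-1} \sum_{i=1}^n U_i$ (since $X_i = U_i$ in the Gaussian location setting), I would first compute the conditional distribution of $\hvm_{|n}$ given $\vm$.

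By the standard closure of Gaussians under i.i.d. averaging, $\hvm_{|n} \mid \vm \sim N(\vm, \Sigma_p / n)$. I would then decompose
\begin{equation*}
\hvm_{|n} = \vm + (\hvm_{|n} - \vm),
\end{equation*}
noting that $\hvm_{|n} - \vm \mid \vm \sim N(0, \Sigma_p/n)$ has a distribution that does not depend on $\vm$, and hence the noise term is independent of $\vm$ unconditionally. Since $\vm \sim N(\vm^*, \Pi)$, the random vector $\hvm_{|n}$ is the sum of two independent Gaussian vectors with means $\vm^*$ and $0$ and covariances $\Pi$ and $\Sigma_p/n$. The result (\ref{eq:marginalmle}) then follows from the standard fact that the sum of independent Gaussians is Gaussian with summed means and summed covariances.

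There is no real obstacle here, this is entirely routine, which justifies citing a textbook reference in lieu of writing out the details. The only minor points worth checking explicitly are (a) the degeneracy issue when $n=0$ (vacuous, since the proposition is applied for $n \geq 1$) and (b) that $\Pi + \Sigma_p/n$ is genuinely positive definite, which follows because $\Sigma_p$ was assumed positive definite throughout Section~\ref{sec:gauss} (and $\Pi$ is positive semidefinite). Given the compactness of this argument, I would keep the proof to a single short paragraph in the paper and just cite \cite{bishop2006pattern} for the Gaussian algebra.
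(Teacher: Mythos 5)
Your argument is correct and is exactly the standard Gaussian-conjugacy calculation that the paper delegates to \cite[Chapter 2.3]{bishop2006pattern} rather than spelling out: write $\hvm_{|n} = \vm + \epsilon$ with $\epsilon \mid \vm \sim N(0,\Sigma_p/n)$ independent of $\vm$ (since the conditional law is constant in $\vm$), then add the two independent Gaussians. Your remark (b) is also the right thing to flag, since $\Pi$ may be only positive semidefinite in the anti-simple applications (e.g.\ $\Pi_0 = (\Sigma_q-\Sigma_p)/n$), but $\Pi + \Sigma_p/n$ remains positive definite because $\Sigma_p$ is.
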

Proposition \ref{prop:bishop} will be used in several in the proofs of results; directly for $S_{\seqrip}$ and indirectly via its use in the proof of the following lemma, which will itself be used in the proofs for $S_{\cond}$ and $S_{\rip}$ in both Theorem \ref{thm:simpleH1gauss} and \ref{thm:compositeH1gauss}.

\begin{lemma}\label{lem:referee}
 Let $\nulhyp$ and $\althyp$ be two $d$-dimensional Gaussian location families for i.i.d. data $U, U_1, U_2, \ldots$ as in Section~\ref{sec:gausssimple}, with nondegenerate covariance matrices $\Sigma_p$ and $\Sigma_q$. 
Let $R$ be a distribution as above, with ${\mathbb E}_R[X] = \vm^*$, extended to  $U_1, U_2, \ldots$
by independence. Let $W_0$ and $W_1$ be arbitrary prior distributions on $\meanspace_p = \meanspace_q= \reals^d$ (which may be degenerate, i.e. put all their mass on a single point). We have:
\begin{align}
&    \label{eq:condrules}
S_{\cond} = \frac{q_{W_1}(U^n | Z)}{p_{W_0}(U^n | Z)}
= \frac{q_{\vm^*}(U^n | Z)}{p_{\vm^*}(U^n | Z)} \text{\ and\ } \\
\label{eq:preconditionalb}
&    D_R(Q_{W_1}^{(n)} | Z \; \| \; P_{W_0}^{(n)}| Z ) = 
    D_R(Q_{\vm^*}^{(n)} | Z \; \| \; P_{\vm^*}^{(n)}| Z ) = 
    (n-1) D_R(Q_{\vm^*}\| P_{\vm^*}).
\end{align}
Further, let either (i)  $W_1$ put mass 1 on $\{\vm^*\}$ and $W_0= N(\vm^*,\Pi_0)$ with $\Pi_0$ such that 
\begin{equation}\label{eq:magica}
    \Pi_0 + \Sigma_p/n =  \Sigma_q/n,
\end{equation}
or (ii) $W_1= N(\vm^*,\Pi_1)$ with $\Pi_1$ some nondegenerate covariance matrix and $W_0= N(\vm^*,\Pi_0)$ with $\Pi_0$ such that
\begin{equation}\label{eq:magicb}
    \Pi_0 + \Sigma_p/n = \Pi_1 + \Sigma_q/n.
\end{equation}
In both case (i) and (ii) we have:
\begin{equation}
    \label{eq:condmagic}
    \frac{q_{W_1}(U^n)}{p_{W_0}(U^n)} = 
    \frac{q_{W_1}(U^n \mid Z) q_{W_1}^{\circ}(Z)}{p_{W_0}(U^n \mid Z){p_{W_0}^{\circ}(Z)}} = S_{Q,\cond}^{(n)} = S_{Q,\rip}^{(n)}.
 \end{equation}
where $q^{\circ}_{W_1}$ denotes the density of $Z$ under $Q_{W_1}$, and similarly for $p^{\circ}_{W_0}$, and we have $q^{\circ}_{W_0}=p^{\circ}_{W_0}$.
\end{lemma}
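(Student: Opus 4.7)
The lemma has three assertions: the sufficiency identity~(\ref{eq:condrules}), the chain-rule identity~(\ref{eq:preconditionalb}), and the ``magic'' factorization~(\ref{eq:condmagic}) together with the identification $S_{Q,\cond}^{(n)} = S_{Q,\rip}^{(n)}$. My plan is to handle these in sequence. The single observation that drives the whole argument is that the prior $W_0$ prescribed by~(\ref{eq:magica}) or~(\ref{eq:magicb}) is designed precisely so that the marginal distribution of $Z = n^{1/2}\hat{\vm}_{|n}$ under $P_{W_0}$ coincides with that under $Q_{W_1}$; once this marginal-matching is in hand, everything else is bookkeeping with sufficiency and the Gaussian convolution from Proposition~\ref{prop:bishop}.

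For~(\ref{eq:condrules}), I would use sufficiency of $Z$ for the regular Gaussian family $\nulhyp$, which gives $p_{\vm}(U^n\mid Z) = p_{\vm^*}(U^n\mid Z)$ for every $\vm \in \meanspace_p$. Integrating the decomposition $p_\vm(U^n) = p_\vm(U^n\mid Z)\,p_\vm^\circ(Z)$ against any prior $W_0$ and dividing by $p_{W_0}^\circ(Z)$ then yields $p_{W_0}(U^n\mid Z) = p_{\vm^*}(U^n\mid Z)$, and symmetrically for $\althyp$ and $W_1$. By~(\ref{eq:condrules}), verifying~(\ref{eq:preconditionalb}) reduces to the degenerate priors; for that I would apply the chain rule for generalized KL divergence,
\[ n\, D_R(Q_{\vm^*}\|P_{\vm^*}) = D_R\bigl(Q_{\vm^*}^{(n)}(U^n\mid Z)\,\big\|\, P_{\vm^*}^{(n)}(U^n\mid Z)\bigr) + D_R\bigl(Q_{\vm^*}^{(n)}(Z)\,\big\|\, P_{\vm^*}^{(n)}(Z)\bigr), \]
whose left-hand side is $n D_R(Q_{\vm^*}\|P_{\vm^*})$ by the i.i.d.\ assumption. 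Because $Z$ is Gaussian under both $Q_{\vm^*}^{(n)}$ and $P_{\vm^*}^{(n)}$, the log-density ratio is quadratic in $Z$, so its expectation under $R^{\otimes n}$ depends only on the mean $n^{1/2}\vm^*$ and covariance $\Sigma_r$ of $Z$ under $R^{\otimes n}$. Substituting into~(\ref{eq:normalKL}) and exploiting the scale invariance of $D_{\gauss}$, the marginal term collapses to $D_{\Sigma_r}(\Sigma_q\|\Sigma_p) = D_R(Q_{\vm^*}\|P_{\vm^*})$, producing the $(n-1)$ coefficient.

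For~(\ref{eq:condmagic}), Proposition~\ref{prop:bishop} gives $\hat{\vm}_{|n}\sim N(\vm^*, \Pi_0 + \Sigma_p/n)$ under $P_{W_0}$ with $W_0 = N(\vm^*,\Pi_0)$, and $\hat{\vm}_{|n} \sim N(\vm^*, \Pi_1 + \Sigma_q/n)$ under $Q_{W_1}$ (with $\Pi_1 = 0$ in case (i)). The conditions~(\ref{eq:magica})/(\ref{eq:magicb}) are exactly what is needed to equate these two Gaussians, so the marginals on $Z$ agree and $p_{W_0}^\circ(Z) = q_{W_1}^\circ(Z)$. Factorizing numerator and denominator across $Z$ and combining with~(\ref{eq:condrules}) then gives
\[ \frac{q_{W_1}(U^n)}{p_{W_0}(U^n)} = \frac{q_{W_1}(U^n\mid Z)\,q_{W_1}^\circ(Z)}{p_{W_0}(U^n\mid Z)\,p_{W_0}^\circ(Z)} = \frac{q_{\vm^*}(U^n\mid Z)}{p_{\vm^*}(U^n\mid Z)} = S_{Q,\cond}^{(n)}. \]
Finally, to conclude $S_{Q,\cond}^{(n)} = S_{Q,\rip}^{(n)}$, I would invoke the optimality criterion recalled after~(\ref{eq:ripr}): $P_W$ is the RIPr of $Q_{W_1}^{(n)}$ as soon as it minimizes $D(Q_{W_1}^{(n)}\|P_W^{(n)})$ over priors $W$. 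Splitting this KL by the chain rule, the conditional piece is independent of $W$ by~(\ref{eq:condrules}), while the marginal piece $D(Q_{W_1}^{(n)}(Z)\|P_W^{(n)}(Z))\geq 0$ is annihilated exactly by $W = W_0$ thanks to the marginal-matching identity just established; hence $P_{W_0}$ is the RIPr.

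The only genuinely nontrivial step is the marginal-matching recipe for $\Pi_0$; once that is noticed, the rest is routine. A small technical check not spelled out in the statement is that the prescribed $\Pi_0$ be a valid (positive semidefinite) covariance, which in case (i) requires $\Sigma_q - \Sigma_p \succeq 0$ (the anti-simple regime of Theorem~\ref{thm:simpleH1gauss}) and in case (ii) is exactly the nondegeneracy condition on $\Pi_1 + (\Sigma_q-\Sigma_p)/n$ already in force in Theorem~\ref{thm:compositeH1gauss}.
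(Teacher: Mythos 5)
Your proof is correct and for \eqref{eq:condrules}, \eqref{eq:preconditionalb}, and the marginal-matching step (the first two equalities of \eqref{eq:condmagic}) it tracks the paper's argument essentially verbatim: sufficiency collapses all conditionals to the $\vm^*$ version, \eqref{eq:normalKL} and the Gaussianity of $Z$ give $D_R(Q_{\vm^*}(Z)\|P_{\vm^*}(Z))=D_R(Q_{\vm^*}\|P_{\vm^*})$, and Proposition~\ref{prop:bishop} with the tuning \eqref{eq:magica}/\eqref{eq:magicb} equates the $Z$-marginals.

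Where you genuinely diverge is the last equality $S^{(n)}_{Q,\cond}=S^{(n)}_{Q,\rip}$. The paper argues: the factorization shows $q_{W_1}/p_{W_0}=S_{\cond}$, which is therefore an e-variable, and then invokes Corollary~1 of Gr\"unwald et al.\ (there is at most one prior $W_0$ making $q_{W_1}/p_{W_0}$ an e-variable, and if it exists the result is the GRO e-variable) to conclude $P_{W_0}$ is the RIPr. You instead directly verify the defining property of the RIPr as the KL-minimizer: you split $D\bigl(Q_{W_1}^{(n)}\,\|\,P_{W}^{(n)}\bigr)$ by the chain rule into a conditional term, which is $W$-independent by \eqref{eq:condrules}, plus the nonnegative marginal term $D\bigl(Q_{W_1}^{(n)}(Z)\,\|\,P_W^{(n)}(Z)\bigr)$, which marginal-matching drives to zero exactly at $W=W_0$; you then invoke the characterization recalled after \eqref{eq:ripr} (the $W$ achieving $\min_W D(Q^*\|P_W)$ gives the RIPr). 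Both routes rest on the same Theorem~1 of that reference, but your version is arguably cleaner: it shows the minimum is attained at $W_0$ by inspection, rather than going through e-variable validity plus a uniqueness lemma, and it makes visible exactly which term in the chain rule the prior tuning kills. Your closing observation that positive semidefiniteness of $\Pi_0$ requires $\Sigma_q-\Sigma_p\succeq 0$ in case (i) and $\Pi_1+(\Sigma_q-\Sigma_p)/n\succeq 0$ in case (ii) is a worthwhile sanity check that the paper leaves implicit.
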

\begin{proof}
We have, by the fact that scaled sums of  Gaussians are Gaussians, with $Z= n^{-1/2} \sum_{i=1}^n X_i = n^{1/2} \cdot \hvm_{|n}$,
that
\begin{equation*}
D_R(Q_{\vm^*}(Z) \| P_{\vm^*}(Z)) = D_R(Q_{\vm^*} \| P_{\vm^*}).
\end{equation*}
(\ref{eq:condrules}) now follows  by sufficiency.
Further, abbreviating $Q(U^n)$ to $Q^{(n)}$ and similarly for $P^{(n)}$,
\begin{equation}\label{eq:preconditional}
n D_R(Q_{\vm^*}\| P_{\vm^*})= D_R(Q^{(n)}_{\vm^*} \| P_{\vm^*}^{(n)})  =
D_R(Q^{(n)}_{\vm^*} | Z \;  \| \; P_{\vm^*}^{(n)} | Z) + 
D_R(Q_{\vm^*}(Z) \| P_{\vm^*}(Z)).
\end{equation} 
Combining (\ref{eq:condrules}) and (\ref{eq:preconditional}) 
now gives (\ref{eq:preconditionalb}).

Next,  (\ref{eq:marginalmle}) implies that if (i) holds, 
then we get by (\ref{eq:marginalmle}) that $\hvm_{|n}$ and therefore $Z$ has the same distribution under $q_{W_1}$ as under $p_{W_0}$  and then the first two equations of  (\ref{eq:condmagic}) follow by (\ref{eq:condrules}); the same reasoning holds under (ii).

To show the final equality of (\ref{eq:condmagic}), note that,
for both choices of $W_1$, if a corresponding prior satisfying (\ref{eq:magica}) or, respectively, (\ref{eq:magicb}) exists, the Bayes factor $q_{W_1}(U^n)/p_{W_0}(U^n)$ is equal to $S_{\cond}$ and hence becomes an e-variable. 
From Theorem 1 (specifically Corollary 1) of \cite{GrunwaldHK19} we know that for every $W_1$ on $\meanspace_q$, there can be at most one corresponding prior $W_0$ such that this Bayes factor is an e-variable, and if this prior exists, this e-variable is the GRO e-variable relative to $Q$; the equality follows. 
\end{proof}

We need one more proposition before giving the actual proof of Theorem~\ref{thm:simpleH1gauss} and~\ref{thm:compositeH1gauss}. 
It provides variations of standard results, 
whose proofs we delegate to Appendix~\ref{app:longproofssimpleH1gauss}  --- note the intriguing similarity between (\ref{eq:MLfullsample}) and (\ref{eq:prepreq}) if $\breve{\vm}_{|n}$ is chosen equal to $\hvm_{|n}$; the reader may recognize these as variations of the asymptotic equalities used in the derivation of Akaike's classic AIC criterion. 
\begin{proposition}\label{prop:basicgauss}
Let $\breve\mu$ be as before with $n_0 \geq 0$, and let  $W_1 = N(\vm_1,\Pi_1)$ be a multivariate normal on $\reals^d$. We have, with $O_{\textsc{a}}$--$O_{\textsc{c}}$ as in (\ref{eq:Oabc}):
\begin{multline}
  {\mathbb E}_{U, U^n \sim R} \left[ \log \frac{q_{\vm^*}(U)}{q_{\breve{\vm}_{|n}}(U)} \right] =  {\mathbb E}_{U^n \sim R} \left[D(
Q_{\vm^*} \| Q_{\breve{\vm}_{|n}}) \right] 
   = \label{eq:newproof}\\
   \frac{1}{2} \left(
 \frac{1}{(1 + (n/n_0))^2} (\vec{\mu}^* - x_0)^\top \Sigma_\alti^{-1} (\vec{\mu}^* - x_0) 
+ \left(\frac{n}{(n_0 + n)^2} \right) d_{rq} \right)
= \frac{1}{n} \frac{d_{rq}}{2} + O \left( \frac{1}{n^2}\right),
\end{multline}
where the second equality holds if $n_0 > 0$. If $n_0 = 0$ (i.e. $\breve{\vm}_{|n}$ is chosen equal to $\hvm_{|n}$) and $n > 1$ we get:
\begin{align}\label{eq:prepreq}
{\mathbb E}_{U, U^n \sim R} \left[n \cdot \log \left(
q_{\vm^*}(U)/ q_{\hat{\vm}_{|n}}(U) 
\right) \right]  
= {\mathbb E}_{U^n \sim R} \left[n \cdot D(
Q_{\vm^*} \| Q_{\hat{\vm}_{|n}}) \right] = \frac{d_{rq}}{2}.
\end{align}
We further have:
    \begin{align}
\label{eq:preqfullsample}
& {\mathbb E}_{U^n \sim R} \left[\log \frac{q_{\vm^*}(U^n)}{ \prod_{i=1}^n q_{\breve{\vm}_{|i-1}}(U_i)}\right] = \sum_{i=1}^n
{\mathbb E}_{U^{i-1} \sim R} \left[D(
Q_{\vm^*} \| Q_{\breve{\vm}_{|i-1}}) \right] 
=
 \\ \nonumber &  \frac{d_{rq}}{2} \log n + \frac{1}{2} O_{\textsc{a}}(\| x_0 - \vm^* \|_2^2)+ O_{\textsc{b}}(1) \cdot  \frac{d_{rq}}{2}.
\\ \label{eq:MLfullsample}
& {\mathbb E}_{U^n \sim R} \left[\log  \frac{q_{\hvm_{|n}}(U^n)}{q_{\vm^*}(U^n)}\right] = {\mathbb E}_{U^n \sim R} \left[n \cdot D(
 Q_{\hvm_{|n}} \| Q_{\vm^*}) \right]
= \frac{d_{rq}}{2}.
 \\ \label{eq:gaussredundancy}
& {\mathbb E}_{U^n \sim R} \left[\log 
\frac{q_{\vm^*}(U^n)}{q_{W_1}(U^n)} \right]  = 
D_{ \Sigma_r}(\Sigma_q \|  (n \Pi_1 + \Sigma_q)  ) +
\frac{1}{2} O_{\textsc{c}}(\| \vm^* - \vm_1\|_2^2)  = 
\\
& \frac{d}{2} \log \frac{ n}{2 \pi} - \frac{1}{2} \log \det \Sigma_q - \log  w_1(\vm^*) - \frac{d_{rq}}{2} + o(1). \nonumber
\end{align}
\end{proposition}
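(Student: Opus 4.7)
\textbf{Proof plan for Proposition~\ref{prop:basicgauss}.} All identities in this proposition reduce, after some algebra, to three elementary facts about the Gaussian model: (i) for two $d$-dimensional Gaussians with common covariance $\Sigma_q$ and means $\vm_1,\vm_2$, one has $D(Q_{\vm_1} \| Q_{\vm_2}) = \tfrac{1}{2}(\vm_1-\vm_2)^\top \Sigma_q^{-1}(\vm_1-\vm_2)$; (ii) the expectation of a quadratic form $\mathbb{E}[(V-\mu)^\top A (V-\mu)] = \tr(A\Sigma)$ when $V$ has mean $\mu$ and covariance $\Sigma$; and (iii) Proposition~\ref{prop:bishop}, which gives the marginal law of $\hvm_{|n}$ under any Gaussian prior. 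The plan is to handle (\ref{eq:newproof})--(\ref{eq:MLfullsample}) by direct algebra, and reduce (\ref{eq:gaussredundancy}) via sufficiency to a one-shot Gaussian ratio.

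For (\ref{eq:newproof}), write $\breve{\vm}_{|n}-\vm^* = \frac{n_0(x_0-\vm^*) + \sum_{i=1}^n (X_i - \vm^*)}{n+n_0}$. Under $R$ the second term has zero mean and covariance $n\Sigma_r$, and is independent of the first; taking $\mathbb{E}_R$ of $\tfrac{1}{2}(\vm^*-\breve{\vm}_{|n})^\top\Sigma_q^{-1}(\vm^*-\breve{\vm}_{|n})$ and using (ii) then gives the bias-variance decomposition claimed, with $\frac{n_0^2}{(n+n_0)^2}=\frac{1}{(1+n/n_0)^2}$ and the variance contribution $\frac{n \cdot d_{rq}}{(n+n_0)^2}$. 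The second equality in (\ref{eq:newproof}) uses the bijection between $\log q_{\vm^*}/q_{\breve{\vm}_{|n}}$ and the quadratic form, together with $\mathbb{E}_R[U_i]=\vm^*$ and the independence of $U_i$ from $\breve{\vm}_{|n}$ when $n_0>0$ (so $\breve{\vm}_{|n}$ depends only on $U^n$ and a fixed $x_0$, not on the future $U_i$). For (\ref{eq:prepreq}) we set $n_0=0$, so $\breve{\vm}_{|n}=\hvm_{|n}$; here the independence argument fails, but one can directly expand $\sum_i (U_i-\vm^*)^\top \Sigma_q^{-1} (U_i-\vm^*) - \sum_i (U_i-\hvm_{|n})^\top \Sigma_q^{-1} (U_i-\hvm_{|n})$, noting that $\sum_i (U_i-\hvm_{|n})=0$ kills the cross term, reducing the expression to $\tfrac{n}{2}(\hvm_{|n}-\vm^*)^\top\Sigma_q^{-1}(\hvm_{|n}-\vm^*)$, whose expectation is $\tfrac{1}{2} d_{rq}$. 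Exactly the same algebraic identity, together with $\mathrm{Cov}_R(\hvm_{|n})=\Sigma_r/n$, also yields (\ref{eq:MLfullsample}).

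Equation (\ref{eq:preqfullsample}) follows by the chain-rule decomposition $\log q_{\vm^*}(U^n)/\prod_{i=1}^n q_{\breve{\vm}_{|i-1}}(U_i) = \sum_{i=1}^n \log q_{\vm^*}(U_i)/q_{\breve{\vm}_{|i-1}}(U_i)$; since $\breve{\vm}_{|i-1}$ depends only on $U^{i-1}$ and $U_i$ is independent of $U^{i-1}$ under $R$ with mean $\vm^*$, each term equals $\mathbb{E}_R[D(Q_{\vm^*}\|Q_{\breve{\vm}_{|i-1}})]$. Substituting (\ref{eq:newproof}) with $n$ replaced by $i-1$ and summing yields $\tfrac{1}{2}\sum_{i=0}^{n-1}(1+i/n_0)^{-2}$ multiplying the quadratic form $(\vm^*-x_0)^\top \Sigma_q^{-1}(\vm^*-x_0)$ (this is exactly $\tfrac{1}{2} O_{\textsc{a}}$) and $\tfrac{d_{rq}}{2}\sum_{i=1}^{n-1}\frac{i}{(n_0+i)^2}$, which by definition of $O_{\textsc{b}}$ equals $\tfrac{d_{rq}}{2}(\log n + O_{\textsc{b}}(1))$.

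For (\ref{eq:gaussredundancy}), the key idea is sufficiency: since $Z=\sqrt{n}\hvm_{|n}$ is sufficient for $\cQ$, for every $\vm$ we have $q_{\vm}(U^n)=q_\vm^{\circ}(\hvm_{|n})\cdot q(U^n\mid \hvm_{|n})$ where the conditional factor does not depend on $\vm$; integrating over $W_1$ gives $q_{W_1}(U^n)=q_{W_1}^{\circ}(\hvm_{|n})\cdot q(U^n\mid \hvm_{|n})$, so the conditional factor cancels and $\log q_{\vm^*}(U^n)/q_{W_1}(U^n)=\log q_{\vm^*}^{\circ}(\hvm_{|n})/q_{W_1}^{\circ}(\hvm_{|n})$. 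By Proposition~\ref{prop:bishop}, $q_{W_1}^{\circ}$ is the density of $N(\vm_1,\Pi_1+\Sigma_q/n)$ while $q_{\vm^*}^{\circ}$ is the density of $N(\vm^*,\Sigma_q/n)$; the expectation $\mathbb{E}_R[\cdot]$ of their log-ratio is then a direct Gaussian calculation using $\mathbb{E}_R[\hvm_{|n}]=\vm^*$ and $\mathrm{Cov}_R(\hvm_{|n})=\Sigma_r/n$. The quadratic-form expectation yields $\tr((\Pi_1+\Sigma_q/n)^{-1}\Sigma_r/n)=\tr((n\Pi_1+\Sigma_q)^{-1}\Sigma_r)$ plus the squared-bias $O_{\textsc{c}}$ term; combining with the log-determinant and $d_{rq}$ terms reproduces $D_{\Sigma_r}(\Sigma_q\|n\Pi_1+\Sigma_q)+\tfrac{1}{2}O_{\textsc{c}}$ using the identity $\det(\Pi_1+\Sigma_q/n)/\det(\Sigma_q/n)=\det(n\Pi_1+\Sigma_q)/\det(\Sigma_q)$. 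The second line of (\ref{eq:gaussredundancy}) is then a Taylor expansion: $\log\det(n\Pi_1+\Sigma_q)=d\log n+\log\det\Pi_1+O(1/n)$ and $\tr((n\Pi_1+\Sigma_q)^{-1}\Sigma_r)=O(1/n)$; collecting terms and rewriting $\tfrac{1}{2}\log\det\Pi_1+\tfrac{1}{2}(\vm^*-\vm_1)^\top\Pi_1^{-1}(\vm^*-\vm_1)$ as $-\log w_1(\vm^*)-\tfrac{d}{2}\log(2\pi)$ yields the stated expression. The only mild obstacle is bookkeeping: tracking the $n_0$- and $n$-dependent constants so that they collapse into the compact $O_{\textsc{a}},O_{\textsc{b}},O_{\textsc{c}}$ notation defined in (\ref{eq:Oabc}); conceptually nothing beyond the three facts above is required.
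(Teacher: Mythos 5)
Your proposal is essentially correct and follows the same route as the paper's own proof in Appendix~\ref{app:longproofssimpleH1gauss}: reduce (\ref{eq:newproof})--(\ref{eq:MLfullsample}) to Gaussian quadratic-form expectations, and for (\ref{eq:gaussredundancy}) use sufficiency to pass to the marginal law of $\hat{\vm}_{|n}$, apply Proposition~\ref{prop:bishop}, and then the closed-form Gaussian KL computation (the paper phrases this via (\ref{eq:marginalmle}) and (\ref{eq:normalKL}) rather than spelling out the conditional-density cancellation, but it is the same idea). One passage is confused, though: you assert that for (\ref{eq:prepreq}) ``the independence argument fails'' because $n_0=0$. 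This is not so. In both (\ref{eq:newproof}) and (\ref{eq:prepreq}) the auxiliary $U$ is, by the notation $\mathbb{E}_{U,U^n\sim R}$, a fresh outcome independent of $U^n$ regardless of $n_0$, so the first equality in (\ref{eq:prepreq}) follows exactly as in (\ref{eq:newproof}); the only reason the paper separates the cases is that the bias factor $1/(1+n/n_0)^2$ in the displayed formula is formally undefined at $n_0=0$. Moreover, the ``direct expansion'' you then invoke, namely $\sum_i (U_i-\vm^*)^\top \Sigma_q^{-1}(U_i-\vm^*) - \sum_i(U_i-\hat{\vm}_{|n})^\top\Sigma_q^{-1}(U_i-\hat{\vm}_{|n})$, is $2\log\bigl(q_{\hat{\vm}_{|n}}(U^n)/q_{\vm^*}(U^n)\bigr)$ --- the quantity appearing in (\ref{eq:MLfullsample}), not (\ref{eq:prepreq}). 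It happens that both reduce to $\tfrac{n}{2}(\hat{\vm}_{|n}-\vm^*)^\top\Sigma_q^{-1}(\hat{\vm}_{|n}-\vm^*)$ by the symmetry of the Gaussian KL under equal covariances, so the numbers come out right, but as written the two identities are conflated. With that aside corrected, the remaining steps --- the bias-variance split for (\ref{eq:newproof}), the chain-rule sum for (\ref{eq:preqfullsample}) with the $O_{\textsc{a}},O_{\textsc{b}}$ bookkeeping, and the Taylor expansion producing the second line of (\ref{eq:gaussredundancy}) --- are correct and match the paper's argument.
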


\subsection{Proof of Theorem~\ref{thm:simpleH1gauss}}
\noindent 
Finiteness of $D_R(Q \| P)$ is immediate from evaluating the definition. 

\ownparagraph{\bf UI}
(\ref{eq:gaussui}) follows almost immediately from Proposition~\ref{prop:basicgauss}, (\ref{eq:MLfullsample}) applied with model $\nulhyp$ in the role of $\althyp$, i.e. all $q$'s replaced by $p$'s 
and using that
${\mathbb E}_R[\log ( q_{\vm^*}(U^n)/ p_{\vm^*}(U^n))] = n D_R(Q_{\vm^*} \| P_{\vm^*})$.

\ownparagraph{\bf COND}
(\ref{eq:gausscond}) is a direct consequence of (\ref{eq:condrules}) and (\ref{eq:preconditionalb}) in Lemma~\ref{lem:referee}. 

\ownparagraph{\bf seq-RIPr}
Assume first that $\Sigma_q - \Sigma_p$ is negative semidefinite.
According to Theorem 1 in \cite{GrunwaldLHBJ24}, this implies that the RIPr of \(Q(U)\) (i.e. $Q$ restricted to a single outcome) is equal to \(P_{\vec{\mu}^*}(U)\), i.e. the single element of $\nulhyp$ with the same mean as $Q$. This implies that $S_{Q,\seqrip}$ is as in (\ref{eq:raar}). Thus, $q(U^n)/p_{\vm^*}(U^n)$ is an e-variable. The corollary of Theorem~1 of \cite{GrunwaldHK19} (`there can be only one e-variable with $Q$ in the numerator and an element $P'$ of $\nulhyp$ in the denominator, and if it exists  then $P'$ is the RIPr so $q/p$ is growth-rate optimal') implies that $P_{\vm^*}(U^n)$ is the RIPr of $Q(U^n)$, and hence $S_{Q,\seqrip}^{(n)}= S_{Q,\rip}^{(n)}$.

Now assume $\Sigma_q - \Sigma_p$ is positive semidefinite. $S_{Q,\seqrip}^{(n)}=1$ then follows because in this case, using the result for the Bayes marginal distribution (\ref{eq:marginalmle}) applied with $n=1$, we find that there exists a prior $W_0$ such that $p_{W_0}(X) = q(X)$, namely, $W_0=N(\vm^*,(\Sigma_q - \Sigma_p))$. It follows that 
the RIPr of $Q$ onto $\nulhyp$ for a single outcome $U$ must be given by this very prior, i.e. $P_{\leftsquigarrow q(U)}(U)= P_{W_0}(U)$, and then the corresponding ratio is $1$.

\ownparagraph{\bf RIPR, Anti-Simple Case}
If we set 
$ 
\Pi_0 = \frac{\Sigma_q - \Sigma_p}{n}
$
then by positive semidefinitess of $\Sigma_q - \Sigma_p$, the prior $W_0=N(\vm^*,\Pi_0)$ is well-defined and (\ref{eq:magica}) holds and we can conclude   (\ref{eq:condmagic}) with prior $W_1$ that puts all mass on $\vm^*$ (i.e. on $Q$).  (\ref{eq:condmagic}) implies the  result (\ref{eq:ripisgauss}) and the fact that $P_{W_1}$ is the RIPr. 
\subsection{Proof of Theorem~\ref{thm:compositeH1gauss}}
Finiteness of $D_R(Q \| P)$ is immediate from evaluating the definition. 

Concerning {\bf UI},
(\ref{eq:seq_UI_gauss}) 
can be obtained  from simple algebra from Proposition~\ref{prop:basicgauss}, combined with (\ref{eq:MLfullsample}) and (\ref{eq:preqfullsample}), the former applied with model $\nulhyp$ in the role of $\althyp$, i.e. all $q$'s replaced by $p$'s (evidently it is still valid then), and using that ${\mathbb E}_R[\log ( q_{\vm^*}(U^n)/ p_{\vm^*}(U^n))] =  n D_R(Q_{\vm^*} \| P_{\vm^*})$. (\ref{eq:seq_UI_gaussb})  can be obtained similarly from  (\ref{eq:MLfullsample}) and (\ref{eq:gaussredundancy}). 
As to {\bf COND},
(\ref{eq:condB}) is a direct consequence of Lemma~\ref{lem:referee}, (\ref{eq:preconditionalb}) and (\ref{eq:condrules}).
To prove Part 3, {\bf seq-RIPr},
we `sequentialize' the reasoning of Theorem~\ref{thm:simpleH1gauss}, Part 3: 
first consider the case that $\Sigma_q - \Sigma_p$ is negative semidefinite.
According to Theorem 1 in \cite{GrunwaldLHBJ24}, this implies that the RIPr of 
$Q_{\breve{\vm}_{|i-1}}(U_i)$ onto $\textsc{conv}(\nulhyp(U_i))$
is an element of $\nulhyp$. More specifically, the theorem says it is 
$P_{\breve{\vm}_{|i-1}}(U_i)$, 
i.e. the element of $\nulhyp$ with the same mean. This proves the first part of the result. (\ref{eq:Gaussseq_LVA}) then follows by applying  (\ref{eq:preqfullsample}) in Proposition~\ref{prop:basicgauss} twice, first directly, with elements of $\althyp$, and then again with $\nulhyp$ in the role of $\althyp$, i.e. replacing all $q$'s by $p$'s, and then piecing together both equations by simple algebra. 

Now consider the positive semidefinite case. The fact that $S_{\breve{\vec\mu},\seqrip}^{(n)}
= 1$ now follows because, at each $i$, 
using the result for the Bayes marginal distribution (\ref{eq:marginalmle}) applied with $n=1$, we find that 
with the prior $W_0 = N(\breve{\vec{\mu}}_{|i-1},\Sigma_{\alti}- \Sigma_{\nuli})$, we get $q_{\breve{\vm}}(U_i)= p_{W_0}(U_i)$. It follows that 
the sequential RIPr $\Pseqrip$ must be given by this very prior, i.e. $\Pseqrip= P_{W_0}(U_i)$, and then the corresponding ratio is $1$. Since this holds for each $i$, the result follows. 

Finally, to prove the anti-simple case for {\bf RIPr},
all results follow  analogously to the RIPr Anti-Simple Case of Theorem~\ref{thm:simpleH1gauss}.
In particular, (\ref{eq:W0depends}) and (\ref{eq:ripisgaussb})
follow by the same reasoning, but now using case (ii) insteaed of (i) in Lemma~\ref{lem:referee}. 

\section{Proofs for Section~\ref{sec:general}}
\label{sec:generalproofs}
\subsection{Preparation for both proofs}
We start with Proposition~\ref{prop:basicgeneral}, a direct asymptotic analogue of Proposition~\ref{prop:basicgauss},  for general exponential families. It will play an  analogous role in the proofs. 
The one (important!) difference to Proposition~\ref{prop:basicgauss} is that now  $W_1$ is not required to be Gaussian. Note again the  symmetry, now between (\ref{eq:prepreqg}) and (\ref{eq:MLfullsampleg}).
\begin{proposition}
\label{prop:basicgeneral}
Let $\althyp$ be a regular exponential family as in Section~\ref{sec:general}. Extend the definition of $D(P_{\hat{\vm}_{|n}} \| P_{\vm^*})$ as in (\ref{eq:fullrobustness}), Appendix~\ref{app:preparinggeneraltheoremsA}, so that it is defined on the whole co-domain of $\hat{\vm}_{|n}$. 
Further, suppose that the first $m$ moments of $X$ under $R$ exist,  with $m \geq 3$ odd, and ${\mathbb E}_R[X] = \vm^* \in \meanspace_q$. Let 
$W_1$ be a fixed distribution on $\meanspace_q$ with density $w$ that is strictly positive and continuous in a neighborhood of $\vm^*$. If Condition~\ref{cond:plugin}  (`{\bf plug-in}') holds, then we have: 
\begin{align}
&  {\mathbb E}_{U^n \sim R} \left[D(
Q_{\vm^*} \| Q_{\breve{\vm}_{|n}}) \right] 
= \frac{1}{n} \frac{d_{rq}}{2} + O \left( {n^{-5/4}}\right) \text{\ so that}\nonumber \\
& \label{eq:prepreqg}
{\mathbb E}_{U, U^n \sim R} \left[n \cdot \log \left(
q_{\vm^*}(U)/ q_{\breve{\vm}_{|n}}(U) 
\right) \right]  
= {\mathbb E}_{U^n \sim R} \left[n \cdot D(
Q_{\vm^*} \| Q_{\breve{\vm}_{|n}}) \right] = \frac{d_{rq}}{2} +O(n^{-1/4})  \\
\nonumber
& \text{and\ } {\mathbb E}_{U^n \sim R} \left[\log \frac{q_{\vm^*}(U^n)}{ \prod_{i=1}^n q_{\breve{\vm}_{|i-1}}(U_i)}\right] = \sum_{i=1}^n
{\mathbb E}_{U^{i-1} \sim R} \left[D(
Q_{\vm^*} \| Q_{\breve{\vm}_{|i-1}}) \right] =
\frac{d_{rq}}{2} \log n + O(1).
\end{align}
\commentout{
Thus, if $n_0 = 0$ (i.e. $\breve{\vm}_{|n}$ is chosen equal to $\hvm_{|n}$) and $n > 1$ we get:\begin{align}\label{eq:prepreqg}
{\mathbb E}_{U, U^n \sim R} \left[n \cdot \log \left(
q_{\vm^*}(U)/ q_{\breve{\vm}_{|n}}(U) 
\right) \right]  
= {\mathbb E}_{U^n \sim R} \left[n \cdot D(
Q_{\vm^*} \| Q_{\breve{\vm}_{|n}}) \right] = \frac{d_{rq}}{2}.
\end{align}}
We further have:   
    \begin{align}\label{eq:MLfullsampleg}
{\mathbb E}_{U^n \sim R} \left[\log  \frac{q_{\vm^*}(U^n)}{q_{\hvm_{|n}}(U^n)}\right] = - {\mathbb E}_{U^n \sim R} \left[n \cdot D(
 Q_{\hvm_{|n}} \| Q_{\vm^*}) \right]
\leq - \frac{d_{rq}}{2} + o(1),
\end{align}
where the inequality becomes an equality if further Condition~\ref{cond:uinew} (`{\bf UI}$^{\geq}$') holds. And finally, if $R$ has density $r$ and for some $\epsilon > 0$, ${\mathbb E}_{R} |\log q_{\vm^*}(U)/r(U)|^{1+\epsilon} < \infty$, then 
\begin{align}
 \label{eq:gaussredundancyg}
 {\mathbb E}_{U^n \sim R} \left[\log \frac{
 q_{\vm^*}(U^n)}
 { q_{W_1}(U^n)} \right]  = 
\frac{d}{2} \log \frac{ n}{2 \pi} - \frac{1}{2} \log \det \Sigma_q(\vm^*) - \log  w_1(\vm^*) 
- \frac{d_{rq}(\vm^*)}{2}
+ o(1).
\end{align}
\end{proposition}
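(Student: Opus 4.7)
The plan is to mirror the proof of Proposition~\ref{prop:basicgauss}, replacing the exact Gaussian moment and Bayes-marginal computations by (i) second-order Taylor expansions of the relevant KL divergences around $\vm^*$, and (ii) a Laplace approximation for $q_{W_1}(U^n)$. The common backbone is the standard local expansion that for any regular exponential family in mean parameterization,
\begin{equation*}
D(Q_{\vm^*}\|Q_{\vm}) \;=\; \tfrac{1}{2}(\vm-\vm^*)^\top \Sigma_q^{-1}(\vm^*)(\vm-\vm^*) + O(\|\vm-\vm^*\|_2^{3}),
\end{equation*}
with the same expression, up to the same-order remainder, when $\vm^*$ and $\vm$ are swapped. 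The cubic remainder is uniform on compact neighborhoods of $\vm^*$ because the cumulant-generating function of $\althyp$ is $C^\infty$ on the open canonical space. Throughout, I would split every expectation into a ``local'' event $A_n := \{\|\hvm_{|n}-\vm^*\|_2 < n^{-\gamma}\}$ (where the Taylor expansion is valid) and its complement $A_n^c$, which will be handled by Condition~\ref{cond:plugin} or Condition~\ref{cond:uinew} as appropriate.

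For (\ref{eq:prepreqg}), note that $\breve{\vm}_{|n}-\vm^* = (n/(n+n_0))(\hvm_{|n}-\vm^*) + (n_0/(n+n_0))(x_0 - \vm^*)$, so $\mathrm{Cov}_R(\breve{\vm}_{|n}) = n\Sigma_r/(n+n_0)^2 = \Sigma_r/n + O(n^{-2})$ and $\mathbb{E}_R[\breve{\vm}_{|n}-\vm^*] = O(1/n)$. Plugging into the quadratic part of the Taylor expansion on $A_n$ gives $\mathbb{E}_R[D(Q_{\vm^*}\|Q_{\breve{\vm}_{|n}})\mathbf{1}_{A_n}] = d_{rq}/(2n) + O(n^{-2})$ from the covariance contribution, plus an $O(\mathbb{E}\|\hvm_{|n}-\vm^*\|_2^3) = O(n^{-3/2})$ cubic remainder (using three moments via Marcinkiewicz--Zygmund). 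On $A_n^c$, Condition~\ref{cond:plugin}(b) bounds $n\cdot \mathbb{E}_R[D(Q_{\vm^*}\|Q_{\breve{\vm}_{|n}})\mathbf{1}_{A_n^c}]$ by $o(1)$; tightening this to the stated $O(n^{-5/4})$ rate will require a moderately quantitative version of the plug-in bound (available from (\ref{eq:more}) combined with a Markov tail bound for $\hvm_{|n}$ using $m\ge 3$ moments). Summing the per-step expectations over $i=1,\dots,n$ telescopes the log-likelihood ratio, and the harmonic partial sum $\sum_{i\ge 1} d_{rq}/(2i) = (d_{rq}/2)\log n + O(1)$ delivers the second display.

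For (\ref{eq:MLfullsampleg}) the same Taylor expansion, now centered at $\hvm_{|n}$, gives $n\cdot D(Q_{\hvm_{|n}}\|Q_{\vm^*}) = \tfrac{1}{2}\sqrt{n}(\hvm_{|n}-\vm^*)^\top \Sigma_q^{-1}(\vm^*)\sqrt{n}(\hvm_{|n}-\vm^*) + o_P(1)$ on $A_n$; taking expectation, using $\mathrm{Cov}(\sqrt{n}\hvm_{|n}) = \Sigma_r$, yields $d_{rq}/2 + o(1)$. The upper bound in (\ref{eq:MLfullsampleg}) is immediate from dropping the (non-negative) contribution of $A_n^c$; equality holds precisely when Condition~\ref{cond:uinew} (``${\bf UI}^{\geq}$'') is imposed, since it is exactly the statement that the $A_n^c$ contribution vanishes asymptotically.

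For the Laplace statement (\ref{eq:gaussredundancyg}) I would write
\begin{equation*}
\log \frac{q_{\vm^*}(U^n)}{q_{W_1}(U^n)} \;=\; \log \frac{q_{\vm^*}(U^n)}{q_{\hvm_{|n}}(U^n)} + \log \frac{q_{\hvm_{|n}}(U^n)}{q_{W_1}(U^n)},
\end{equation*}
where the first summand contributes $-d_{rq}/2 + o(1)$ in expectation by (\ref{eq:MLfullsampleg}) applied with $R=R$. For the second, the Hessian of $-\log q_{\vm}(U^n)$ at the MLE equals $n\,\Sigma_q^{-1}(\hvm_{|n})$, so the classical Laplace expansion gives
\begin{equation*}
q_{W_1}(U^n) \;=\; q_{\hvm_{|n}}(U^n)\cdot w_1(\hvm_{|n})\cdot (2\pi/n)^{d/2}\det\Sigma_q(\hvm_{|n})^{1/2}\cdot(1+o_P(1)),
\end{equation*}
and taking logarithms, using continuity and positivity of $w_1$ at $\vm^*$ together with consistency $\hvm_{|n}\!\to\!\vm^*$ under $R$, yields the advertised expansion in probability. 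The main obstacle, and the reason for the integrability hypothesis $\mathbb{E}_R|\log q_{\vm^*}(U)/r(U)|^{1+\epsilon}<\infty$, is upgrading this $o_P(1)$ remainder to $o(1)$ in expectation: a uniform-integrability argument is needed to control the tails of $\log q_{\hvm_{|n}}(U^n)/q_{W_1}(U^n)$ outside shrinking neighborhoods of $\hvm_{|n}$. This is the usual Laplace-approximation bookkeeping and, combined with Condition~\ref{cond:plugin} to handle the rare event that $\hvm_{|n}$ leaves a fixed compact neighborhood of $\vm^*$, closes the argument.
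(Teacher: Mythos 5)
Your proposal is correct and follows essentially the same route as the paper. The paper proves this proposition as an immediate corollary of Lemma~\ref{lem:KLbounds}, which is itself established by exactly the ingredients you describe: a uniform second-order Taylor expansion of the KL divergence around $\vm^*$ with cubic remainder (Proposition~\ref{prop:thirdorderTaylor}), a split of each expectation into the local event $\{\|\hvm_{|n}-\vm^*\|_2 < n^{\alpha-1/2}\}$ and its complement, Markov/moment tail bounds to control the complement (with the plug-in condition supplying the quantitative bound on $D(Q_{\vm^*}\|Q_{\breve\vm_{|n}})$ far from $\vm^*$), and a Laplace approximation of $q_{W_1}(U^n)$ with a uniform-integrability argument (the $1+\epsilon$ moment hypothesis) to upgrade the in-probability expansion to an expansion in expectation. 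Your $A_n$ is the paper's local event with $\gamma = 1/2 - \alpha$ and $\alpha = 1/4$, your telescoping of per-step KLs to get $(d_{rq}/2)\log n + O(1)$ is (\ref{eq:preqfullsample})-style summation, and your observation that the upper bound in (\ref{eq:MLfullsampleg}) comes for free by dropping the nonnegative tail contribution, with equality restored by Condition~\ref{cond:uinew}, is exactly the structure of the paper's argument. The only real difference is organizational: the paper factors all these estimates out into the reusable one-sided bounds of Lemma~\ref{lem:KLbounds} and then assembles the proposition by summation, whereas you sketch the same estimates inline. Your Marcinkiewicz--Zygmund cubic-moment bound is an acceptable substitute for the paper's bound of the Taylor remainder via Proposition~\ref{prop:thirdorderTaylor} (which instead caps one factor by the local radius $n^{\alpha-1/2}$ and takes the expectation of the remaining quadratic); both give control well within the stated $O(n^{-5/4})$.
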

\begin{proof}
The  proposition arises as an immediate corollary of Lemma~\ref{lem:KLbounds}, which provides one-sided versions of the statements above. It is stated in Appendix~\ref{app:preparinggeneraltheoremsA} and proved in Appendix~\ref{app:KLboundsproof}.  To prove the corollary, take $\alpha =1/4$ in Lemma~\ref{lem:KLbounds} and simply piece together the right equations in each case. 
\end{proof}
\subsection{Proof of Theorem~\ref{thm:simpleH1general}}
Part 1, {\bf UI.\ } is proved in exactly the same way as the {\bf UI} proof for the Gaussian case, Theorem~\ref{thm:simpleH1gauss}, 
using Proposition~\ref{prop:basicgeneral} instead of Proposition~\ref{prop:basicgauss}. 
As to Part 2, {\bf COND},
With $Z:= n^{1/2} \hvm_{|n}$  defined as in the proof of Theorem~\ref{thm:simpleH1gauss} in the Gaussian case,
we can write 
\begin{align*}
S_{Q,\cond}^{(n)} = \frac{q(\sn{U}{n} \mid Z)}{p(\sn{U}{n} \mid Z)}
= \frac{q(\sn{U}{n} )}{p_{\vec{\mu}^*}(\sn{U}{n} )} 
\cdot \frac{p^{\circ}_{\vec{\mu}^*}( Z)}{q^{\circ}(Z)},
\end{align*}
where $p^{\circ}$ and $q^{\circ}$ are the densities, under $P$ and $Q$ respectively, of $Z= \sqrt{n}\hvm_{|n}$.
Taking logarithms and expectation this gives
\begin{align}\label{eq:condy}
{\mathbb E}_R[\log S_{Q,\cond}^{(n)}] = 
n D_R(Q \| P_{\vec{\mu}^*}) + 
{\mathbb E}_R \left[\log \left( {p^{\circ}_{\vec{\mu}^*}(Z)}/ {q^{\circ}( Z)}\right)
\right].
\end{align}
\newcommand{\cA}{\ensuremath{\mathcal{A}}}
\newcommand{\cB}{\ensuremath{\mathcal{B}}}
The difference to the Gaussian case is that $p^{\circ}_{\vm^*}$
and $q^{\circ}$ are now not Gaussian densities, but rather the densities or mass functions of two distributions which, by the Central Limit theorem (CLT), both converge to a normal distribution. This 
already suggests that the desired result (\ref{eq:gencond}) might hold, but the convergence implied by the standard CLT is too weak: we need convergence, including rates, of the densities (not just the distributions) to the density of a normal. This is provided by a version of the multivariate local central limit theorem
due to Bhattacharya and Rao \citep{BhattacharyaR76}. The actual proof, bounding the final term in (\ref{eq:condy}) and thereby giving (\ref{eq:gencond}), follows by Lemma~\ref{lem:condsimpleH1} provided in Appendix~\ref{app:longproofssimpleH1general}, which crucially uses the results from \cite{BhattacharyaR76} and combines them with exponential concentration bounds. 

As to Part 3, {\bf seq-RIPr},
the simple case goes precisely as in the Gaussian case of Theorem~\ref{thm:simpleH1gauss}.
It remains to prove (\ref{eq:noncompetitive}) in the strict anti-simple case. Since $\Sigma_q - \Sigma_p(\vm^*)$ is now assumed positive definite, it follows from \cite[Proposition 2]{GrunwaldLHBJ24} that $q(U^n)/p_{\vm^*}(U^n)$ is not an e-variable. Yet $P_{\vm^*}$ achieves $\min_{\vm \in \meanspace_p} D(Q \| P_{\vm})$. For such a combination, Theorem 1 in \cite{GrunwaldHK19} gives that the infimum over all distributions $P \in \textsc{conv}(\nulhyp(U))$ of $D\left(Q(U) \| P(U) \right)$ lies outside $\nulhyp(U)$, which implies that
\begin{align}\label{eq:afterarxiv}
    {\mathbb E}_Q \left[ \log {q(U)}/{
    p_{ \leftsquigarrow q(U)}(U)} \right] = {\mathbb E}_Q\left[\log {q(U)}/{p_{\vm^*}(U)} \right] - \epsilon
\end{align}
for some $\epsilon > 0$. The result then follows by the definition (\ref{eq:seqripnew}).

Finally, as to  {\bf RIPr}, the anti-simple case:
(\ref{eq:gekb}) is based on the following:
\begin{lemma}\label{prop:difficult}
In the setting of Theorem~\ref{thm:simpleH1general}, the anti-simple case (i.e. $\Sigma_q - \Sigma_p(\vm^*)$ is positive semidefinite),  we have, with $W_{0,(n)}$ as in the theorem statement:  
$D(Q(U^n) \| P_{W_{0,(n)}}(U^n))
    ] \leq  n 
    D(Q||P_{\vec{\mu}^*}) - D_{\gauss}(\Sigma_q \Sigma_p^{-1}) 
    + o(1).
$
\end{lemma}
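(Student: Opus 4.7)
The plan is to perform a nonstandard Laplace approximation of the Bayes marginal $p_{W_{0,(n)}}(U^n) = \int p_{\vm}(U^n)\, w_{0,(n)}(\vm)\, d\vm$. It is ``nonstandard'' because the prior $W_{0,(n)}$ itself shrinks at rate $1/\sqrt{n}$, so both the likelihood and the prior have width $O(1/\sqrt{n})$ around $\vm^*$ and neither can play the role of the ``slowly varying factor''. Writing
\begin{equation*}
D(Q(U^n)\|P_{W_{0,(n)}}(U^n)) = n\,D(Q\|P_{\vm^*}) - {\mathbb E}_Q\!\left[\log \frac{p_{W_{0,(n)}}(U^n)}{p_{\vm^*}(U^n)}\right],
\end{equation*}
the claim reduces to establishing the lower bound ${\mathbb E}_Q[\log (p_{W_{0,(n)}}(U^n)/p_{\vm^*}(U^n))] \geq -D_{\gauss}(\Sigma_q\Sigma_p^{-1}) + o(1)$, where $\Sigma_p$ abbreviates $\Sigma_p(\vm^*)$ as in the lemma statement.

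First I would substitute $\vm = \vm^* + t/\sqrt{n}$; under this change of variables $w_{0,(n)}(\vm)\,d\vm$ transforms into the $n$-free Gaussian weight $\phi(t;0,\Sigma_q-\Sigma_p)\,dt$. Using the standard mean-value parameterization identities (gradient $n\Sigma_p^{-1}(\hat\vm_{|n}-\vm^*)$ and Hessian $-n\Sigma_p^{-1}$ at $\vm^*$), a second-order Taylor expansion gives
\begin{equation*}
\log \frac{p_{\vm^* + t/\sqrt n}(U^n)}{p_{\vm^*}(U^n)} = t^\top \Sigma_p^{-1} Z_n - \tfrac12\, t^\top \Sigma_p^{-1} t + R_n(t),
\end{equation*}
with $Z_n := \sqrt{n}(\hat\vm_{|n}-\vm^*)$ and cubic remainder $|R_n(t)| \leq C n^{-1/2}\|t\|^3$ uniformly on $\|t\|\leq n^{\alpha}$ for any fixed $\alpha \in (0,1/6)$. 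To get the one-sided bound I would lower-bound the integral by restricting to this ball, which lies inside $\sqrt{n}(\meanspace_p - \vm^*)$ for all large $n$ because $\vm^*$ is interior to $\meanspace_p$; on this set $R_n$ is uniformly $o(1)$, while the truncated Gaussian tail beyond $n^\alpha$ contributes at most $\exp(-\Theta(n^{2\alpha}))$ and is absorbable into $o(1)$. The remaining Gaussian integral is evaluated by completing the square: the key algebraic identity is that the precision matrices combine via $\Sigma_p + (\Sigma_q - \Sigma_p) = \Sigma_q$---exactly the identity driving the exact Gaussian result of Theorem~\ref{thm:simpleH1gauss}, Part 4---which yields the closed form
\begin{equation*}
\int \exp\!\bigl(t^\top\Sigma_p^{-1}Z_n - \tfrac12\, t^\top\Sigma_p^{-1}t\bigr)\,\phi(t;0,\Sigma_q-\Sigma_p)\,dt = \det(\Sigma_p\Sigma_q^{-1})^{1/2}\exp\!\bigl(\tfrac12\, Z_n^\top(\Sigma_p^{-1}-\Sigma_q^{-1})Z_n\bigr).
\end{equation*}

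Taking expectations under $Q$, the identity ${\mathbb E}_Q[Z_n Z_n^\top] = \Sigma_q$ turns the quadratic into $\tfrac12(d_{qp}(\vm^*) - d)$, and reassembling the pieces via $D_{\gauss}(\Sigma_q\Sigma_p^{-1}) = \tfrac12\bigl(-\log\det(\Sigma_q\Sigma_p^{-1}) - (d - d_{qp}(\vm^*))\bigr)$ produces exactly $-D_{\gauss}(\Sigma_q\Sigma_p^{-1}) + o(1)$, completing the argument. The main obstacle is to simultaneously control the cubic Taylor remainder, the Gaussian tail of the $t$-integral, and the restriction $\vm^* + t/\sqrt{n} \in \meanspace_p$; the choice $\alpha \in (0,1/6)$ is dictated by the first two (remainder $\lesssim n^{-1/2+3\alpha}$ versus tail $\exp(-\Theta(n^{2\alpha}))$). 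A secondary subtlety is the exchange of limit and expectation for the quadratic in $Z_n$, which is handled via uniform integrability of $\|Z_n\|^2$ under $Q$; the three-moment hypothesis on $X$ already in force in Theorem~\ref{thm:simpleH1general} is exactly what is needed.
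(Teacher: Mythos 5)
Your approach is essentially the one the paper uses: decompose the KL so that the residual is ${\mathbb E}_Q\bigl[\log\bigl(p_{W_{0,(n)}}(U^n)/p_{\vm^*}(U^n)\bigr)\bigr]$, perform a nonstandard Laplace approximation in which both prior and likelihood live on the $1/\sqrt n$ scale, and close with the Gaussian convolution identity $\Sigma_p + (\Sigma_q - \Sigma_p) = \Sigma_q$. The paper pivots on the MLE and expands the KL divergence $D(P_{\hvm_{|n}}\|P_\vm)$ via the robustness identity; you expand the log-likelihood directly around $\vm^*$ after rescaling $\vm = \vm^* + t/\sqrt n$. These are the same argument up to a change of variables and choice of expansion point, and your Gaussian integral and the expectation of its logarithm are computed correctly.

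Two issues. First, a sign slip: the reduced claim must read ${\mathbb E}_Q[\log(p_{W_{0,(n)}}/p_{\vm^*})] \geq +D_{\gauss}(\Sigma_q\Sigma_p^{-1}) + o(1)$, and your own arithmetic --- $\tfrac12\log\det(\Sigma_p\Sigma_q^{-1}) + \tfrac12(d_{qp}(\vm^*)-d)$ --- equals $+D_{\gauss}(\Sigma_q\Sigma_p^{-1})$, not $-D_{\gauss}(\Sigma_q\Sigma_p^{-1})$. You flipped the sign in the reduced claim and flipped it again in the final line, so the underlying computation is correct, but both stated signs need to change.

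Second, a genuine gap: the claim that $|R_n(t)| \leq Cn^{-1/2}\|t\|^3$ holds uniformly on $\|t\| \leq n^\alpha$ is false, because $R_n$ is data-dependent. The exact Hessian of $\log p_\vm(U^n)$ at $\vm^*$ is $-n\Sigma_p^{-1}$ plus a term of order $n\|\hvm_{|n}-\vm^*\| = \sqrt n\|Z_n\|$; absorbing this into $R_n$ produces an additional piece of order $n^{-1/2}\|t\|^2\|Z_n\|$. So the $o(1)$ remainder control only holds after restricting to a data event such as $\|Z_n\| \leq n^\gamma$, and the complementary event must be handled separately. The paper's proof does exactly this: it disposes of the bad event with (\ref{eq:bayesbound}) of Lemma~\ref{lem:KLbounds} and works with two exponents $\alpha < \beta < 1/6$, $\alpha$ restricting $\hvm_{|n}$ and $\beta$ the integration ball, so the effective Gaussian bulk sits strictly inside the ball. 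Your appeal to uniform integrability of $\|Z_n\|^2$ concerns only the completed-square quadratic (for which in fact ${\mathbb E}_Q[Z_nZ_n^\top] = \Sigma_q$ holds exactly at every $n$, so nothing asymptotic is needed there) and does not repair the remainder-control gap.
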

We place the proof in Appendix~\ref{app:difficult}.
Now note that 
\begin{align*}
{\mathbb E}_{ Q} \left[
    \log S_{Q,\cond}^{(n)}
    \right] \leq 
{\mathbb E}_{ Q} \left[
    \log S_{Q,\rip}^{(n)}
    \right] = \inf_{W } D\left(Q(\sn{U}{n})\big|\big|P_{W}(\sn{U}{n}) \right) 
    \leq
    D\left(Q(\sn{U}{n})\big|\big|P_{W_{0,(n)}}(\sn{U}{n}) \right),
\end{align*}
where the first inequality follows from (\ref{eq:basicepower}), the infimum is over all priors on $\meanspace_p$ and the  equality follows from Theorem 1 in \cite{GrunwaldHK19} (see also underneath (\ref{eq:ripr})). 
The second inequality is immediate. Together with the equality it implies (\ref{eq:gekb}). The first equality together with (\ref{eq:gencond}) (use $R=Q$) implies that, if Condition~\ref{cond:cond} (`\ccond') holds, then (\ref{eq:gekb}) holds with equality.

\commentout{(not needed any more!)
\subsection{Proof for Example~\ref{ex:twosamplebernoulli}}
We can evaluate
$$
Z^{\circ}_q(\theta) = \int_{u \in \reals}  
e^{(\theta- \theta_1) U^2 } \cdot 
e^{c \theta^* (U- m)^2} d u 
$$
by completing the squares inside the integral and find 
$$
\log Z^{\circ}_q(\theta) = \frac{1}{2} \log\frac{\pi}{\theta + (c-1) \theta^*} + m^2 c \theta^*  - \frac{m^2 c^2 \theta^{*2}}{(\theta + (c-1) \theta^*}.
$$
Taking derivatives and using the fact that we can write
\begin{equation}\label{eq:msubstitute}
m^2 = \sigma^{*2} - s^2 = \sigma^{*2} (1- c^{-1}) = -
\frac{1}{2 \theta^*} \cdot \frac{c-1}{c},
\end{equation}
we find 
\begin{align*}
{\mathbb E}_{Q_{\theta}}[U^2] & = 
\frac{d}{d \theta} \log Z_q(\theta) = 
- \frac{1}{2} \left( 
\frac{1}{\theta + (c-1) \theta^*} + 
\frac{\theta^* c (c-1)}{(\theta + (c-1) \theta^*)^2 } 
\right) \\
& = - \frac{1}{2} \left( 
\frac{1}{(\alpha + c-1) \theta^*} + 
\frac{c (c-1)}{(\alpha + c-1)^2 \theta^* } 
\right) 
=: g_q(\alpha)
\end{align*}
where we defined $\alpha$ by $\theta = \alpha \theta^*$. 
We also have, using that in the null model $\sigma^2 = - 1/2 \theta$, 
$$
{\mathbb E}_{P_{\theta}}[U^2] = - \frac{1}{2 \theta} = 
- \frac{1}{2 \alpha \theta^*} := g(\alpha).
$$
We now find using long but standard algebraic manipulations that 
$$
\frac{g_q(\alpha)}{g(\alpha)} = 
\alpha \cdot \frac{\alpha +c^2 -1}{(\alpha+ c-1)^2}= 
1 + (\alpha-1) \cdot \left(\frac{c-1}{\alpha + c-1} \right)^2 \  
\begin{cases}
    < 1 & \text{\ if $0 < \alpha < 1$} \\
> 1 &  \text{\ if $\alpha > 1$}.
\end{cases}
$$
This implies LVA-1 and hence also LVA-2.}
\commentout{
where the second term follows again by...  For the second term, we note that 
for every fixed prior $W^*$ with positive continuous density $w^*$, we have,  with $Q$-probability 1, that $\hat{\vec\mu} \rightarrow \vec\mu_1$ and 
\begin{equation}\label{eq:late}
\log \frac{p_{\hat{\vec\mu}}(\sn{u}{n})}{p_{W^*}(\sn{u}{n})} 
= \frac{d}{2} \log \frac{n}{2 \pi} - \log 
\frac{w(\hat{\vec\mu})}{\sqrt{\det I_0(\hat{\vec\mu})} } + o(1),
\end{equation}
as follows by the results from \cite{grunwald2007minimum}, Chapter 8 (themselves a variation of various results on worst-case log-loss regret of Bayesian marginal distributions going back to \cite{ClarkeB90}). 
 Now, let us plug in a Gaussian prior centered at $\vec\mu_1$ and with covariance matrix $n^{-1} \Sigma_\alti^{-1}(\vec\mu_1)$
 (we follow the covariance of the data generating distribution rather than that of $P_{\vec\mu_1}$!) --- this is the only point where the derivation is informal; we return to this later. 
 Then we have \petr{THIS IS NOT RIGHT YET BECAUSE ARGUMENT WOULD ALSO WORK IF RIPR IS PSEUDORIPR AND THEN IT SHOULD NOT WORK. INSTEAD ASSUME DIAGONAL COVARIANCE MATRIX AT $\hat\mu$ AND USE PRIOR WITH DIAGONAL COVARIANCE BUT DIFFERENT ENTRIES AND DO SECOND ORDER TAYLOR ABOUT MAP ESTIMATE?} 
 $$g(\vec\mu) : = - \log 
\frac{w(\vec\mu)}{\sqrt{\det \Sigma^{-1}_1(\vec\mu)} } = - \frac{d}{2} \log \frac{n}{2 \pi} 
+ \frac{1}{2} (\vec\mu - \vec\mu_1)^\top \Sigma_\alti^{-1}(\vec\mu_1) (\vec\mu - \vec\mu_1)
$$ and then, 
if we further perform a second-order Taylor approximation  
around $\vec\mu_1$, (\ref{eq:late}) becomes 
\begin{align}
&
- \log \frac{\sqrt{\det \Sigma^{-1}_1}}{\sqrt{\det I_0(\vec\mu_1)}} +
(\vec\mu_1 - \hat{\vec\mu}) g'(\vec\mu_1) + \frac{1}{2} (\vec\mu_1 - \hat{\vec\mu})^\top g''(\vec\mu_1)   (\vec\mu_1 - \hat{\vec\mu})^\top + O (\| \vec\mu_1 - \hat{\vec\mu} \|_2^3)= 
\\
& \frac{1}{2} \log \det(\Sigma_\alti I_0) +  \frac{1}{2}
(\vec\mu_1 - \hat{\vec\mu})^\top \Sigma_\alti^{-1} (\vec\mu_1)   (\vec\mu_1 - \hat{\vec\mu})^\top + O (\| \vec\mu_1 - \hat{\vec\mu} \|_2^3)
\end{align}
In expectation over $Q$, this is equal to 
$$
\frac{1}{2} \left(
\log \det(\Sigma_\alti I_0) +  d\right)
$$
Combining this with (\ref{eq:almost}) and using the definition of $D_{\gauss}$ gives the desired result. 

The only part of the proof which requires a leap of faith is our choice of prior $w^*$, which heavily depends on $n$. The theorems do not allow this \petr{although a quick scan suggests that they would work for this choice as well...} and requires a prior that remains unchanged as $n$ increases. }

\subsection{Proof of Theorem~\ref{thm:compositeH1general}}
Part 1, {\bf UI} is proved in exactly the same way  as the {\bf UI} proof for the Gaussian case,
using Proposition~\ref{prop:basicgeneral} instead of Proposition~\ref{prop:basicgauss}.
\commentout{
Let $S_{\breve{\vec\mu},\ui}^{(n)} = \frac{\prod\limits_{i=1}^n q_{\breve{\vec{\mu}}_{i-1}}(X_i)}{p_{\hat{\vec{\mu}}|\sn{X}{n}}(\sn{X}{n})}$, then
\begin{align}
  \mathbb{E}_{\sn{U}{n} \sim R} \left[
    \log S_{\breve{\vec\mu},\ui}^{(n)}
    \right]
    =&
    \mathbb{E}_{\sn{U}{n} \sim R} \left[
    \log \frac{\prod\limits_{i=1}^n q_{\breve{\vec{\mu}}_{i-1}}(X_i)}{q_{\vec{\mu}^*}(\sn{X}{n})}
    \right]
    + \mathbb{E}_{\sn{U}{n} \sim R} \left[
    \log \frac{q_{\vec{\mu}^*}(\sn{X}{n})}{p_{\vec{\mu}^*}(\sn{X}{n})}
    \right]\nonumber\\
    &+ \mathbb{E}_{\sn{U}{n} \sim R} \left[
    \log \frac{p_{\vec{\mu}^*}(\sn{X}{n})}{p_{\hat{\vec{\mu}}|\sn{X}{n}}(\sn{X}{n})}
    \right]\nonumber\\
    =&
    n \left(D_R(Q \| P_{\vec{\mu}^*}) \right) - \frac{1}{2}\textsc{tr}\left(\Sigma_r \Sigma_\alti^{-1}(\vec{\mu}^*) \right) \log n - \frac{1}{2}\textsc{tr}\left(\Sigma_r \Sigma_\nuli^{-1}(\vec{\mu}^*) \right) + O(1),
   \end{align}
which follows from Lemma \ref{lemma:prequential_MLE} and Lemma \ref{lem:UIsimpleH1}.
}
Part 2, {\bf COND},  follows in exactly the same way as the corresponding result in Theorem~\ref{thm:simpleH1general}. Part 3, {\bf seq-RIPr}, the simple case,
follows again in exactly the same way as the corresponding proof for the Gaussian case, {\bf seq-RIPr} in Theorem~\ref{thm:compositeH1gauss}, with all applications of Proposition~\ref{prop:basicgauss} replaced by corresponding applications of Proposition~\ref{prop:basicgeneral}.
As to Part 3, {\bf seq-RIPr}, the anti-simple case, (\ref{eq:seqbad}) follows by extending the proof of (\ref{eq:noncompetitive}) with straightforward continuity arguments; details are in Appendix~\ref{app:noncompetitive}.
Finally, for Part 4, {\bf RIPr}, to 
prove (\ref{eq:gekc}), we equip $\meanspace_p$ with the {\em same\/} prior $W_1$ as $\meanspace_q$ (with density set to $0$ on the set $\meanspace_p \setminus \meanspace_q$). We then  use the familiar Laplace approximation of the Bayesian marginal likelihood given by (\ref{eq:gaussredundancyg}), both for $\althyp$ and $\nulhyp$. 
Thus, we apply these bounds twice, both times with $R=Q$ and then first with $q_{W_1}$ and $q_{\vm^*}$ and then with $p_{W_1}$ and $p_{\vm^*}$ respectively; the moment regularity condition automatically holds, because it reduces to the expectation under $Q$ of an expression that is a quadratic form of $X$, and $Q$ is a member of an exponential family with sufficient statistic $X$, implying that such expectations are finite.
This gives that 
\begin{align*}
    {\mathbb E}_Q\left[ \log\frac{q_{W_1}(U^n)}{p_{W_1}(U^n)}\right] & = 
    n D(Q_{\vm^*} \| P_{\vm^*}) +
   {\mathbb E}_Q\left[ \log\frac{q_{W_1}(U^n)}{q_{\vm^*}(U^n)} 
   + \log \frac{p_{\vm^*}(U^n)}{p_{W_1}(U^n)}\right] \nonumber \\ 
& = n D(Q_{\vm^*} \| P_{\vm^*})+ \frac{1}{2}  \log \frac{\det \Sigma_q(\vm^*)}{\det \Sigma_p(\vm^*)} + \frac{d}{2} - \frac{d_{qp}(\vm^*)}{2} + o(1),
\end{align*}
where in the second equality we used (\ref{eq:gaussredundancyg}). The result follows by recognizing that the latter three terms together are equal to $- D_{\gauss}(\Sigma_q \Sigma_p^{-1})$.

\section{Implications, conclusions and future work}
\label{sec:conclusion}
\newcommand{\seqcond}{\ensuremath{{\textsc{seq-cond}}}}
It is of interest to determine when a sequence of e-variables $(S^{(n)})_{n \geq 1}$ constitutes an e-process, since in that case we can use it not just in optional continuation but also in optional stopping scenarios. 
While  $S_{\seqrip}^{(n)}$ and $S_{\ui}^{(n)}$ provide e-processes  by construction, {\em e-processness\/} is not so clear for $S_{\cond}$ and $S_{\rip}$. 
In Appendix~\ref{app:eprocess} we provide the definition of e-process and a proposition that allows us to identify several  $\nulhyp$ and $\althyp$ for which they are {\em not\/} e-processes; on the other hand, we also identify cases in which they do provide an e-process.
As discussed there, one may informally conjecture that with the conditional and RIPr e-variables we `almost' obtain an e-process, leading perhaps to `approximate' handling of optional stopping. Investigating further and formalizing `approximate optional stopping' (perhaps based on asymptotic anytime-validity \cite{waudby2021time}) is a main avenue for further research.

\commentout{
This strong condition aside, we provided a  general analysis of e-power for the most prominent e-variables that have been proposed for parametric family nulls: $S_{\rip}, S_{\ui}, S_{\cond}$ and $S_{\seqrip}$. However, there exist at least two additional useful e-variables that we did not investigate. First, we mention 
the {\em sequential conditional e-variable}
\begin{align}\label{eq:seqcond}
S_{Q,\textsc{seq-cond}}^{(n)} & := 
\prod_{i=1}^n \frac{q(\si{U}{i} \mid \sn{U}{i-1},\si{X}{i})}{p(\si{U}{i} \mid \si{X}{i})},
\end{align}
where  $q$ and $p$ refer to the density of the conditional distribution of $\g{U}$ given $\g{X}$, with $p(u|x)$  again identical for all $P \in \nulhyp$. 
$S_{Q,\textsc{seq-cond}}$ is the sequentialized version of $S_{\cond}$:  it applies conditioning for each outcome to reduce the null to a point null, for which the likelihood ratio $q/p$ is the natural e-variable to use. 
$S_{Q,\seqcond}$ has been used in classical sequential testing for the contingency table setting \citep{Wald:1947} but can also be fruitfully used for $k$-sample tests with general exponential families \cite{HaoGLLA23}. 
Second, we note that, for any alternative $\althyp$ and any prior $W_1$ on $\meanspace_q$, the random variables
\begin{equation}
    \label{eq:sara}
{\mathbb E}_{\vm \sim W_1}\left[ S^{(n)}_{Q_{\vm},\rip}\right] \text{\ and\ } {\mathbb E}_{\vm \sim W_1}\left[ S^{(n)}_{Q_{\vm},\seqrip}\right],
\end{equation}
being weighted averages of e-variables, are themselves an e-variable (in the simple case they will both be equal to each other, as implied by Theorem~\ref{thm:simpleH1general}, Part 3). 
It can be seen that these e-variable in general do {\em not\/} coincide with either $S^{(n)}_{W_1,\rip}$ or  $S^{(n)}_{W_1,\seqrip}$ since the order of averaging and taking projections has been reversed; in contrast, for $\ui$ we do have ${\mathbb E}_{\vm \sim W_1}\left[ S^{(n)}_{Q_{\vm},\ui}\right]= S^{(n)}_{W_1,\ui}$. 
It should be interesting to further study the e-variables (\ref{eq:seqcond}) and (\ref{eq:sara})  in the exponential family setting of this paper.}

\section{Acknowledgements}
Peter Grünwald is also affiliated with the Mathematical Institute of Leiden University. We had substantial, generous and much appreciated help from Michael Lindon (Netflix) with an earlier version of Theorem 1, Part 3.  Shaul K. Bar-Lev and G\'erard Letac pointed us towards the Landau family (see discussion underneath Condition~\ref{cond:plugin}). We would also like to thank Tyron Lardy for various discussions. This work is partly supported by the China Scholarship Council State Scholarship Fund Nr.202006280045.

\section*{Supplementary Material}
This supplementary material contains
Appendices A to D, which include the proofs of technical lemmas and propositions and contain additional details for Section~3 and Section~6.


\DeclareRobustCommand{\VANDER}[3]{#3}
\bibliographystyle{imsart-nameyear.bst}
\bibliography{references,SAVIreferences}

\begin{thebibliography}{27}

\bibitem[\protect\citeauthoryear{Bar-Lev}{2023}]{Barlev23}
\begin{barticle}[author]
\bauthor{\bsnm{Bar-Lev},~\bfnm{Shaul~K}\binits{S.~K.}}
(\byear{2023}).
\btitle{The Exponential Dispersion Model Generated by the {L}andau Distribution {—A} Comprehensive Review and Further Developments}.
\bjournal{Mathematics}
\bvolume{11}
\bpages{4343}.
\bdoi{https://doi.org/10.3390/math11204343}
\end{barticle}
\endbibitem

\bibitem[\protect\citeauthoryear{Barndorff-Nielsen}{2014}]{BarndorffNielsen78}
\begin{bbook}[author]
\bauthor{\bsnm{Barndorff-Nielsen},~\bfnm{Ole}\binits{O.}}
(\byear{2014}).
\btitle{Information and exponential families: in statistical theory}.
\bpublisher{John Wiley}.
\bdoi{10.1002/9781118857281}
\end{bbook}
\endbibitem

\bibitem[\protect\citeauthoryear{Bhattacharya and Rao}{1976}]{BhattacharyaR76}
\begin{bbook}[author]
\bauthor{\bsnm{Bhattacharya},~\bfnm{R.~N.}\binits{R.~N.}} \AND \bauthor{\bsnm{Rao},~\bfnm{R.~Ranga}\binits{R.~R.}}
(\byear{1976}).
\btitle{Normal approximation and asymptotic expansions}.
\bpublisher{John Wiley}.
\bdoi{10.1137/1.9780898719895.ch1}
\end{bbook}
\endbibitem

\bibitem[\protect\citeauthoryear{Bishop}{2006}]{bishop2006pattern}
\begin{bbook}[author]
\bauthor{\bsnm{Bishop},~\bfnm{Christopher}\binits{C.}}
(\byear{2006}).
\btitle{Pattern recognition and machine learning}.
\bpublisher{Springer}.
\bdoi{10.5555/1162264}
\end{bbook}
\endbibitem

\bibitem[\protect\citeauthoryear{Brown}{1986}]{brown1986fundamentals}
\begin{barticle}[author]
\bauthor{\bsnm{Brown},~\bfnm{Lawrence~D.}\binits{L.~D.}}
(\byear{1986}).
\btitle{Fundamentals of Statistical Exponential Families with Applications in Statistical Decision Theory}.
\bjournal{Lecture Notes-Monograph Series}
\bvolume{9}
\bpages{i--279}.
\bdoi{10.1214/lnms/1215466757}
\end{barticle}
\endbibitem

\bibitem[\protect\citeauthoryear{Dawid}{1984}]{Dawid84}
\begin{barticle}[author]
\bauthor{\bsnm{Dawid},~\bfnm{A~Philip}\binits{A.~P.}}
(\byear{1984}).
\btitle{Present position and potential developments: Some personal views: statistical theory, the prequential approach}.
\bjournal{J. Roy. Statist. Soc. Ser. A}
\bvolume{147}
\bpages{278--290}.
\bdoi{https://doi.org/10.2307/2981683}
\end{barticle}
\endbibitem

\bibitem[\protect\citeauthoryear{Efron}{2022}]{efron_2022}
\begin{bbook}[author]
\bauthor{\bsnm{Efron},~\bfnm{Bradley}\binits{B.}}
(\byear{2022}).
\btitle{Exponential families in theory and practice}.
\bseries{{IMS} Textbooks}.
\bpublisher{Cambridge University Press}.
\bdoi{https://doi.org/10.1017/9781108773157}
\end{bbook}
\endbibitem

\bibitem[\protect\citeauthoryear{Gr{\"u}nwald}{2007}]{grunwald2007minimum}
\begin{bbook}[author]
\bauthor{\bsnm{Gr{\"u}nwald},~\bfnm{Peter}\binits{P.}}
(\byear{2007}).
\btitle{The minimum description length principle}.
\bpublisher{MIT press}.
\bdoi{https://doi.org/10.7551/mitpress/4643.001.0001}
\end{bbook}
\endbibitem

\bibitem[\protect\citeauthoryear{Gr{\"u}nwald and De~Rooij}{2005}]{grunwald2005asymptotic}
\begin{binproceedings}[author]
\bauthor{\bsnm{Gr{\"u}nwald},~\bfnm{Peter}\binits{P.}} \AND \bauthor{\bsnm{De~Rooij},~\bfnm{Steven}\binits{S.}}
(\byear{2005}).
\btitle{Asymptotic log-loss of prequential maximum likelihood codes}.
In \bbooktitle{COLT}
\bpages{652--667}.
\bpublisher{Springer}.
\bdoi{https://doi.org/10.1007/11503415_44}
\end{binproceedings}
\endbibitem

\bibitem[\protect\citeauthoryear{Gr\"unwald et~al.}{2024}]{GrunwaldLHBJ24}
\begin{barticle}[author]
\bauthor{\bsnm{Gr\"unwald},~\bfnm{Peter}\binits{P.}}, \bauthor{\bsnm{Lardy},~\bfnm{Tyron}\binits{T.}}, \bauthor{\bsnm{Hao},~\bfnm{Yunda}\binits{Y.}}, \bauthor{\bsnm{Lev},~\bfnm{Shaul K.~Bar}\binits{S.~K.~B.}} \AND \bauthor{\bparticle{de} \bsnm{Jong},~\bfnm{Martijn}\binits{M.}}
(\byear{2024}).
\btitle{Optimal E-Values for Exponential Families: the Simple Case}.
\bjournal{arXiv preprint arXiv:2404.19465}.
\bdoi{https://doi.org/10.48550/arXiv.2404.19465}
\end{barticle}
\endbibitem

\bibitem[\protect\citeauthoryear{Grünwald, {D}e Heide and Koolen}{2024}]{GrunwaldHK19}
\begin{barticle}[author]
\bauthor{\bsnm{Grünwald},~\bfnm{Peter}\binits{P.}}, \bauthor{\bparticle{{D}e} \bsnm{Heide},~\bfnm{Rianne}\binits{R.}} \AND \bauthor{\bsnm{Koolen},~\bfnm{Wouter}\binits{W.}}
(\byear{2024}).
\btitle{Safe Testing}.
\bjournal{J. R. Stat. Soc. Ser. B. Stat. Methodol.}
\bdoi{10.1093/jrsssb/qkae011}
\end{barticle}
\endbibitem

\bibitem[\protect\citeauthoryear{Hao et~al.}{2024}]{HaoGLLA23}
\begin{barticle}[author]
\bauthor{\bsnm{Hao},~\bfnm{Yunda}\binits{Y.}}, \bauthor{\bsnm{Gr{\"u}nwald},~\bfnm{Peter}\binits{P.}}, \bauthor{\bsnm{Lardy},~\bfnm{Tyron}\binits{T.}}, \bauthor{\bsnm{Long},~\bfnm{Long}\binits{L.}} \AND \bauthor{\bsnm{Adams},~\bfnm{Reuben}\binits{R.}}
(\byear{2024}).
\btitle{E-values for k-Sample Tests with Exponential Families}.
\bjournal{Sankhya A}
\bvolume{86}
\bpages{596--636}.
\bdoi{https://doi.org/10.1007/s13171-024-00339-9}
\end{barticle}
\endbibitem

\bibitem[\protect\citeauthoryear{Hendriksen, de~Heide and Gr{\"u}nwald}{2021}]{HendriksenHG21}
\begin{barticle}[author]
\bauthor{\bsnm{Hendriksen},~\bfnm{Allard}\binits{A.}}, \bauthor{\bparticle{de} \bsnm{Heide},~\bfnm{Rianne}\binits{R.}} \AND \bauthor{\bsnm{Gr{\"u}nwald},~\bfnm{Peter}\binits{P.}}
(\byear{2021}).
\btitle{{Optional Stopping with Bayes Factors: A Categorization and Extension of Folklore Results, with an Application to Invariant Situations}}.
\bjournal{Bayesian Anal.}
\bvolume{16}
\bpages{961 -- 989}.
\bdoi{10.1214/20-BA1234}
\end{barticle}
\endbibitem

\bibitem[\protect\citeauthoryear{Kotłowski and Grünwald}{2011}]{kotlowski2011maximum}
\begin{binproceedings}[author]
\bauthor{\bsnm{Kotłowski},~\bfnm{Wojciech}\binits{W.}} \AND \bauthor{\bsnm{Grünwald},~\bfnm{Peter}\binits{P.}}
(\byear{2011}).
\btitle{Maximum Likelihood vs. Sequential Normalized Maximum Likelihood in On-line Density Estimation}.
In \bbooktitle{COLT}.
\bseries{Proc. Mach. Learn. Res. (PMLR)}
\bvolume{19}
\bpages{457--476}.
\bdoi{https://proceedings.mlr.press/v19/kotlowski11a.html}
\end{binproceedings}
\endbibitem

\bibitem[\protect\citeauthoryear{Lardy, Gr\"unwald and Harremo\"es}{2024}]{LardyHG24}
\begin{barticle}[author]
\bauthor{\bsnm{Lardy},~\bfnm{Tyron}\binits{T.}}, \bauthor{\bsnm{Gr\"unwald},~\bfnm{Peter}\binits{P.}} \AND \bauthor{\bsnm{Harremo\"es},~\bfnm{Peter}\binits{P.}}
(\byear{2024}).
\btitle{Reverse Information Projections and Optimal E-statistics}.
\bjournal{IEEE Trans. Inform. Theory}
\bvolume{70}
\bpages{7616-7631}.
\bdoi{doi: 10.1109/TIT.2024.3444458}
\end{barticle}
\endbibitem

\bibitem[\protect\citeauthoryear{Larsson, Ramdas and Ruf}{2024}]{larsson2024numeraireevariablereverseinformation}
\begin{barticle}[author]
\bauthor{\bsnm{Larsson},~\bfnm{Martin}\binits{M.}}, \bauthor{\bsnm{Ramdas},~\bfnm{Aaditya}\binits{A.}} \AND \bauthor{\bsnm{Ruf},~\bfnm{Johannes}\binits{J.}}
(\byear{2024}).
\btitle{The num{\'e}raire e-variable and reverse information projection}.
\bjournal{arXiv preprint arXiv:2402.18810}.
\bdoi{https://doi.org/10.48550/arXiv.2402.18810}
\end{barticle}
\endbibitem

\bibitem[\protect\citeauthoryear{Lindon et~al.}{2024}]{LindonRegression22}
\begin{barticle}[author]
\bauthor{\bsnm{Lindon},~\bfnm{Michael}\binits{M.}}, \bauthor{\bsnm{Ham},~\bfnm{Dae~Woong}\binits{D.~W.}}, \bauthor{\bsnm{Tingley},~\bfnm{Martin}\binits{M.}} \AND \bauthor{\bsnm{Bojinov},~\bfnm{Iavor}\binits{I.}}
(\byear{2024}).
\btitle{Anytime-Valid Linear Models and Regression Adjusted Causal Inference in Randomized Experiments}.
\bjournal{arXiv preprint arXiv:2210.08589}.
\end{barticle}
\endbibitem

\bibitem[\protect\citeauthoryear{P{\'e}rez-Ortiz et~al.}{2024}]{perez2024estatistics}
\begin{barticle}[author]
\bauthor{\bsnm{P{\'e}rez-Ortiz},~\bfnm{Muriel~Felipe}\binits{M.~F.}}, \bauthor{\bsnm{Lardy},~\bfnm{Tyron}\binits{T.}}, \bauthor{\bparticle{{D}e} \bsnm{Heide},~\bfnm{Rianne}\binits{R.}} \AND \bauthor{\bsnm{Gr{\"u}nwald},~\bfnm{Peter}\binits{P.}}
(\byear{2024}).
\btitle{E-Statistics, Group Invariance and Anytime Valid Testing}.
\bjournal{Ann. Statist.}
\bvolume{52}
\bpages{1410--1432}.
\bdoi{http://dx.doi.org/10.1214/24-AOS2394}
\end{barticle}
\endbibitem

\bibitem[\protect\citeauthoryear{Ramdas et~al.}{2023}]{ramdas2023savi}
\begin{barticle}[author]
\bauthor{\bsnm{Ramdas},~\bfnm{Aaditya}\binits{A.}}, \bauthor{\bsnm{Gr{\"u}nwald},~\bfnm{Peter}\binits{P.}}, \bauthor{\bsnm{Vovk},~\bfnm{Vladimir}\binits{V.}} \AND \bauthor{\bsnm{Shafer},~\bfnm{Glenn}\binits{G.}}
(\byear{2023}).
\btitle{Game-theoretic statistics and safe anytime-valid inference}.
\bjournal{Statist. Sci.}
\bvolume{38}
\bpages{576--601}.
\bdoi{10.1214/23-STS894}
\end{barticle}
\endbibitem

\bibitem[\protect\citeauthoryear{Shafer}{2021}]{Shafer:2021}
\begin{barticle}[author]
\bauthor{\bsnm{Shafer},~\bfnm{Glenn}\binits{G.}}
(\byear{2021}).
\btitle{Testing by betting: a strategy for statistical and scientific communication (with discussion and response)}.
\bjournal{J. Roy. Statist. Soc. Ser. A}
\bvolume{184}
\bpages{407--478}.
\bdoi{https://doi.org/10.1111/rssa.12647}
\end{barticle}
\endbibitem

\bibitem[\protect\citeauthoryear{Tse and Davison}{2022}]{tse2022note}
\begin{barticle}[author]
\bauthor{\bsnm{Tse},~\bfnm{Timmy}\binits{T.}} \AND \bauthor{\bsnm{Davison},~\bfnm{Anthony~C}\binits{A.~C.}}
(\byear{2022}).
\btitle{A note on universal inference}.
\bjournal{Stat}
\bvolume{11}
\bpages{e501}.
\bdoi{https://doi.org/10.1002/sta4.501}
\end{barticle}
\endbibitem

\bibitem[\protect\citeauthoryear{Turner, Ly and Gr{\"u}nwald}{2024}]{TurnerLG24}
\begin{barticle}[author]
\bauthor{\bsnm{Turner},~\bfnm{Rosanne~J}\binits{R.~J.}}, \bauthor{\bsnm{Ly},~\bfnm{Alexander}\binits{A.}} \AND \bauthor{\bsnm{Gr{\"u}nwald},~\bfnm{Peter~D}\binits{P.~D.}}
(\byear{2024}).
\btitle{Generic e-variables for exact sequential k-sample tests that allow for optional stopping}.
\bjournal{J. Statist. Plann. Inference}
\bvolume{230}
\bpages{106116}.
\bdoi{https://doi.org/10.1016/j.jspi.2023.106116}
\end{barticle}
\endbibitem

\bibitem[\protect\citeauthoryear{Vovk and Wang}{2021}]{VovkW21}
\begin{barticle}[author]
\bauthor{\bsnm{Vovk},~\bfnm{Vladimir}\binits{V.}} \AND \bauthor{\bsnm{Wang},~\bfnm{Ruodu}\binits{R.}}
(\byear{2021}).
\btitle{E-values: Calibration, combination and applications}.
\bjournal{Ann. Statist.}
\bvolume{49}
\bpages{1736--1754}.
\bdoi{10.1214/20-AOS2020}
\end{barticle}
\endbibitem

\bibitem[\protect\citeauthoryear{Wang, Wang and Ziegel}{2022}]{wang2023ebacktesting}
\begin{barticle}[author]
\bauthor{\bsnm{Wang},~\bfnm{Qiuqi}\binits{Q.}}, \bauthor{\bsnm{Wang},~\bfnm{Ruodu}\binits{R.}} \AND \bauthor{\bsnm{Ziegel},~\bfnm{Johanna}\binits{J.}}
(\byear{2022}).
\btitle{E-backtesting}.
\bjournal{arXiv preprint arXiv:2209.00991}.
\bdoi{https://dx.doi.org/10.2139/ssrn.4206997}
\end{barticle}
\endbibitem

\bibitem[\protect\citeauthoryear{Wasserman, Ramdas and Balakrishnan}{2020}]{wasserman2020universal}
\begin{barticle}[author]
\bauthor{\bsnm{Wasserman},~\bfnm{Larry}\binits{L.}}, \bauthor{\bsnm{Ramdas},~\bfnm{Aaditya}\binits{A.}} \AND \bauthor{\bsnm{Balakrishnan},~\bfnm{Sivaraman}\binits{S.}}
(\byear{2020}).
\btitle{Universal inference}.
\bjournal{Proc. Natl. Acad. Sci. USA}
\bvolume{117}
\bpages{16880--16890}.
\bdoi{https://doi.org/10.1073/pnas.1922664117}
\end{barticle}
\endbibitem

\bibitem[\protect\citeauthoryear{Waudby-Smith and Ramdas}{2024}]{waudby2024estimating}
\begin{barticle}[author]
\bauthor{\bsnm{Waudby-Smith},~\bfnm{Ian}\binits{I.}} \AND \bauthor{\bsnm{Ramdas},~\bfnm{Aaditya}\binits{A.}}
(\byear{2024}).
\btitle{Estimating means of bounded random variables by betting}.
\bjournal{J. R. Stat. Soc. Ser. B. Stat. Methodol.}
\bvolume{86}
\bpages{1--27}.
\bnote{With Discussion}.
\bdoi{https://doi.org/10.1093/jrsssb/qkad009}
\end{barticle}
\endbibitem

\bibitem[\protect\citeauthoryear{Waudby-Smith et~al.}{2024}]{waudby2021time}
\begin{barticle}[author]
\bauthor{\bsnm{Waudby-Smith},~\bfnm{Ian}\binits{I.}}, \bauthor{\bsnm{Arbour},~\bfnm{David}\binits{D.}}, \bauthor{\bsnm{Sinha},~\bfnm{Ritwik}\binits{R.}}, \bauthor{\bsnm{Kennedy},~\bfnm{Edward~H.}\binits{E.~H.}} \AND \bauthor{\bsnm{Ramdas},~\bfnm{Aaditya}\binits{A.}}
(\byear{2024}).
\btitle{Time-uniform central limit theory and asymptotic confidence sequences}.
\bjournal{Ann. Statist.}
\bvolume{52}
\bpages{2613--2640}.
\bdoi{10.1214/24-aos2408}
\bmrnumber{4842820}
\end{barticle}
\endbibitem

\end{thebibliography}
\ \\ \ \\

\begin{supplement}
\appendix

\section{Proofs underlying Theorem~\ref{thm:simpleH1gauss} and~\ref{thm:compositeH1gauss}:  Gaussian case}
\label{app:longproofssimpleH1gauss}

\ownparagraph{Proof of Proposition~\ref{prop:basicgauss}}
    We first prove (\ref{eq:MLfullsample}). 
    The first equality follows by simple algebra (it is also a consequence of the robustness property of exponential families, see Lemma~\ref{lem:KLbounds} later on). We first establish that the first expression is equal to the third. Filling in the densities we find:
    \begin{align*}
   {\mathbb E}_R \left[\log q_{\hat{\vm}_{|n}}(U^n)/q_{\vm^*}(U^n)\right]  = & 
   \frac{1}{2} \sum_{i=1}^n {\mathbb E}_R \left[
   (X_i - {\vm}^*)^\top \Sigma_q^{-1} (X_i - \vm^*) \right] \nonumber \\ & 
   - \frac{1}{2} \sum_{i=1}^n {\mathbb E}_R \left[
   (X_i - \hat{\vm}_{|n})^\top \Sigma_q^{-1} (X_i - \hat\vm_{|n}) \right] .
    \end{align*}
    We have $\mathbb{E}_R[X_i - \vm^*] = 0$ and 
    $\text{\sc cov}_R(X_i - \vm^*, X_i - \vm^*) = \Sigma_r$. By the property of expectation of a quadratic form, we get
    that for the $i$-th term in the first sum above that it is equal to $\textsc{tr}(\Sigma_r \Sigma_q^{-1})$.
It is easy to show $\mathbb{E}_R[X_i - \hvm_{|n}] = 0$ and $\text{\sc cov}_R(X_i - \hvm_{|n}, X_i - \hvm_{|n}) = \left(1 - \frac{1}{n} \right)\Sigma_r$. Again by the property of expectation of a quadratic form, we thus get for the $i$-th term in the second sum:
$$
\mathbb{E}_{\sn{X}{n}\sim R}\left[(X_i - \hvm_{|n})^\top \Sigma_q^{-1} (X_i - \hvm_{|n}) \right]
=
\textsc{tr}\left(\left(1 - \frac{1}{n} \right)\Sigma_r \Sigma_q^{-1} \right), 
$$
so that the right-hand side of (\ref{eq:MLfullsample}) becomes $\frac{1}{2} \tr(\Sigma_r \Sigma_q^{-1})$.
\\ \ \\ \noindent
We proceed to prove (\ref{eq:newproof}). The first equality is straightforward.
Filling in the densities, we find that it is equal to 

  \begin{align}
\frac{1}{2} 
   \mathbb{E}_{X, \sn{X}{n}\sim R}\left[(X - \breve{\vec{\mu}}_{|n})^\top \Sigma_\alti^{-1} (X - \breve{\vec{\mu}}_{|n}) \right]
   -
   \frac{1}{2}{\mathbb E}_R \left[
   (X - {\vm}^*)^\top \Sigma_q^{-1} (X - \vm^*) \right]. \nonumber 
     \end{align}
We already showed the second term is equal to $(1/2) \textsc{tr}(\Sigma_r \Sigma_q^{-1})$.
As to the first term, it is easy to show $\mathbb{E}_R[X - \breve{\vec{\mu}}_{|n}] = 
\frac{n_0(\vec{\mu}^* - x_0)}{n_0 + n}$ and $\text{\sc cov}_R(X - \breve{\vec{\mu}}_{|n}, X - \breve{\vec{\mu}}_{|n}) = \left(1 + \frac{n}{(n_0 + n)^2} \right)\Sigma_r$.
By the property of expectation of a quadratic form, we have
\begin{align*}
\mathbb{E}_{X\sim R}\left[(X - \breve{\vec{\mu}}_{|n})^\top \Sigma_\alti^{-1} (X - 
\breve{\vec{\mu}}_{|m}) \right]
=&
\frac{n_0^2}{(n_0 + n)^2} (\vec{\mu}^* - x_0)^\top \Sigma_\alti^{-1} (\vec{\mu}^* - x_0)\\
&+
\textsc{tr}\left(\left(1 + \frac{n}{(n_0 + n)^2} \right) \Sigma_r \Sigma_\alti^{-1} \right),
\end{align*}
and from this the second equality in (\ref{eq:newproof}) follows. The third equality is immediate, and so is 
(\ref{eq:prepreq}). 
(\ref{eq:preqfullsample}) follows by repeatedly applying, for each $i$, the first equality in (\ref{eq:newproof}) and then the second equality in (\ref{eq:newproof}), both with $i$ in the role of $n$. 

We now prove (\ref{eq:gaussredundancy}). If $W_1 =  N(\vm_1,\Pi_1)$ is a normal prior, we find  
\begin{equation}\label{eq:r_KL}
D_R(Q_{\vm^*}(Z) \| Q_{W_1}(Z)) = 
D_{\Sigma_r}(\Sigma_q \| (n \Pi_1 + \Sigma_q)   ) 
+ \frac{1}{2} (\vm^* - \vm_1)^\top (\Pi_1 + \Sigma_q/n)^{-1} (\vm^* - \vm_1)
\end{equation}
by combining  (\ref{eq:marginalmle}) and (\ref{eq:normalKL}) above (setting $P$ in (\ref{eq:normalKL}) to $Q_W(Z)$) .
This gives the first equality in (\ref{eq:gaussredundancy}).
In combination with (\ref{eq:preconditionalb}) and (\ref{eq:prepreq}), which we already proved, 
and using that 
\begin{align*}
- \log w_1(\vm^*) = \frac{1}{2} ( \vm^* - \vm_1)^\top \Pi_1^{-1} ( \vm^* - \vm_1) 
+ \frac{d}{2} \log 2 \pi + \frac{1}{2} \log \det \Pi_1, 
\end{align*}
we get the second equality, by  plugging it into (\ref{eq:r_KL}) directly; the $o(1)$ derives from $\Pi_1 + \frac{\Sigma_q}{n} \rightarrow \Pi_1$ as $n$ increases.

\ownparagraph{Proof of (\ref{eq:haariscond})}
We have:
\begin{align}
  &  {\mathbb E}_{\sn{U}{n} \sim R} [
    \log S_{\haar}^{(n)}
    ] =  (n-1) D_{\Sigma_r}(\Sigma_q \| \Sigma_p) + \nonumber \\
&    
    {\mathbb E_{X_1 \sim R}} \left[ 
    {\mathbb E}_{X_2, \ldots, X_n} \left[ \log \frac{q_{W_1\mid X_1}(X_2, \ldots, X_n)}{q_{\vm^*}(
    X_2, \ldots, X_n
    )} - \log \frac{p_{W_0 \mid X_1}(X_2, \ldots, X_n)}{
p_{\vm^*}(
    X_2, \ldots, X_n
    )    
    } \right] \right]  
    \nonumber \\ & = {\mathbb E}_R\left[ 
    - D_{ \Sigma_r}(\Sigma_q \|  ((n-1) \Sigma_q + \Sigma_q)  ) 
- \frac{1}{2} (\vm^*  - X_1)^\top (\Sigma_q + \Sigma_q/(n-1))^{-1} (\vm^* - X_1)\right] +
\nonumber\\ \label{eq:pushthrough}
&{\mathbb E}_R\left[ 
    D_{\Sigma_r}(\Sigma_p|| ((n-1) \Sigma_p + \Sigma_p)  ) 
+ \frac{1}{2} (\vm^*  - X_1)^\top (\Sigma_p + \Sigma_p/(n-1))^{-1} (\vm^* - X_1)\right]+ \nonumber\\
& (n-1) D_{  \Sigma_r}(\Sigma_q \| \Sigma_p)   \\ 
&=  
    - D_{  \Sigma_r}(\Sigma_q \| n \Sigma_q )  
- \frac{1}{2} \tr(\Sigma_r (\Sigma_q + \Sigma_q/(n-1))^{-1})  \nonumber \\ &  + D_{\Sigma_r}(\Sigma_p|| n \Sigma_p  ) 
+ \frac{1}{2} \tr(\Sigma_r (\Sigma_p + \Sigma_p/(n-1))^{-1} )
+ (n-1) D_{  \Sigma_r}(\Sigma_q \| \Sigma_p) \nonumber \\
& 
=   - \frac{1}{2} \tr(\Sigma_r ((n\Sigma_q )^{-1}- \Sigma_q^{-1})
- \frac{1}{2} \tr(\Sigma_r (\Sigma_q \cdot n/(n-1))^{-1})  \nonumber \\ &  +\frac{1}{2} \tr(\Sigma_r ((n\Sigma_p )^{-1}- \Sigma_p^{-1})
+ \frac{1}{2} \tr(\Sigma_r (\Sigma_p \cdot n/(n-1))^{-1} )
+ (n-1) D_{  \Sigma_r}(\Sigma_q \| \Sigma_p) \nonumber
\\ &= (n-1) D_{  \Sigma_r}(\Sigma_q \| \Sigma_p),\nonumber
\end{align} 
where for (\ref{eq:pushthrough}) we used (\ref{eq:gaussredundancy}) in Proposition~\ref{prop:basicgauss}, and in the final equation we used definition (\ref{eq:triplegauss}).

\section{Proofs underlying Theorem~\ref{thm:simpleH1general} and~\ref{thm:compositeH1general}: the general case}
\label{app:longproofssimpleH1general}
\subsection{Definitions for and statement of Lemma~\ref{lem:KLbounds}}
\label{app:preparinggeneraltheoremsA}
Condition~\ref{cond:uinew} and Lemma~\ref{lem:KLbounds} refer to the KL divergence $D(P_{\hat{\vec{\mu}}_{|n} } \| P_{\vec{\mu}^*} )$, which  is undefined if $\hvm_{|n} \not \in \meanspace_p$ (which may happen even for the simple Bernoulli model if $\hvm_{|n} \in \{0,1\}$). It is highly convenient to extend its definition to such cases, and this can be done in a straightforward manner. 
We first set  $\bar{\meanspace}_p$ to be the union of the mean-value parameter space $\meanspace_p$ of $\nulhyp$ and the set of values that $\hvm_{|n} = n^{-1} \sum X_i$ can take.  For all  $\hvm_{|n} \in \bar\meanspace_p \setminus \meanspace_p$, we first set 
\begin{equation}\label{eq:fulrobust}
\frac{p_{\hvm_{|n}}(U^n)}{p_{\vec{\mu}^*}(U^n)} := 
\sup_{\vm \in \meanspace_p} \frac{p_{\vm}(U^n)}{p_{\vec{\mu}^*}(U^n)},
\end{equation}
and we note that then, in terms of the canonical parameterization, 
by (\ref{eq:canonical}), with $\vb^*$ the canonical parameter corresponding to $\vm^*$ and $X_i = (t_1(U_i), \ldots, t_d(U_i))^\top$, 
$$
\log \frac{p_{\hvm_{|n}}(U^n)}{p_{\vec{\mu}^*}(U^n)}
= \sup_{\vb \in \canspace_p} ({\vb} - \vb^*)^\top \sum_{i=1}^n X_i
- \log (Z_p(\vb) / Z_p(\vb^*) ),
$$
which can be written as a function of $\hat{\vm}_{|n}$. 
Therefore,  for  $\hvm_{|n} \in \bar\meanspace_p \setminus \meanspace_p$, we can unambiguously set 
\begin{equation}\label{eq:fullrobustness}
D(P_{\hat{\vm}_{|n}} \| P_{\vm^*}) := \frac{1}{n} \log \frac{p_{\hat{\vm}_{|n}}(U^n)}{p_{\vm^*}(U^n)},
\end{equation}
since the expression on the right only depends on $U^n$ through $\hat{\vm}_{|n}$. 
The rationale underlying definition (\ref{eq:fullrobustness}) is that (\ref{eq:fullrobustness}) holds any way as long as $\hat{\vm}_{|n} \in \meanspace_p$. This is a well-known result, with straightforward proof, sometimes referred to as
\commentout{we  will  then have for all $U^n \in \cU^n$, 
\begin{equation}\label{eq:fullrobustness}
n D(P_{\hvm_{|n}} \| P_{\vec{\mu}^*} ) = \log \frac{p_{\hvm_{|n}}(U^n)}{p_{\vec{\mu}^*}(U^n)}.
\end{equation}
This result, known as}
the {\em KL robustness property for exponential families\/} \citep[Chapter 19]{grunwald2007minimum}.
We have thus extended it to hold for all $\hat{\vm}_{|n} \in \bar\meanspace_p$ by definition (\ref{eq:fulrobust}), and thereby already proved the first statement in Lemma~\ref{lem:KLbounds} below.

We only give the remaining results in the lemma under existence of an odd number of moments $m$ under $R$, since,  for even $m'$, if $X$ has $m'$ moments then it also has $m=m'-1$ moments and due to certain cancellations, the result one obtains with $m'$ in the proof would not be stronger than what one obtains for $m$. 
\begin{lemma}\label{lem:KLbounds}
First, extend the definition of $D(P_{\hat{\vm}_{|n}} \| P_{\vm^*})$ as above. Then the  robustness property for exponential families holds for all $U^n \in \cU^n$:
\begin{equation}\label{eq:fullrobustnessb}
n D(P_{\hvm_{|n}} \| P_{\vec{\mu}^*} ) = \log \frac{p_{\hvm_{|n}}(U^n)}{p_{\vec{\mu}^*}(U^n)}.
\end{equation}
Suppose that the first $m$ moments of $X$ under $R$ exist,  with $m \geq 3$ odd, and ${\mathbb E}_R[X] = \vm^*$. Fix $0 < \alpha < 1/2$. Let $\Sigma_p$ be a general $d \times d$ positive definite matrix. We have: 
\begin{align}
    \label{eq:mlesquare}
& 
{\mathbb E}_R \left[{\bf 1}_{\| \hvm_{|n}- \vec{\mu}^* \|_2 \geq n^{\alpha - 1/2}}  \cdot  n \cdot \frac{1}{2} (\hat\vm_{|n} - \vm^*)^\top \Sigma_p^{-1} (\hat\vm_{|n} - \vm^*)\right]
= 
 O\left(n^{-(m-2) \alpha -  1/2} \right), \\
 \label{eq:KLboundsSquare}
 & 
 {\mathbb E}_R \left[ n \cdot \frac{1}{2} (\hvm_{|n} - \vm^*)^\top \Sigma_p^{-1}(\vm^*) (\hvm_{|n} -\vm^*)\right]
= \frac{\tr(\Sigma_r\Sigma_p^{-1}(\vm^*))}{2}.
 \end{align}
Next, let $\nulhyp$ be a regular exponential family with $\vm^* \in \meanspace_p$. 
Fix $c > 0$ and let $\meanspace' \subset \meanspace_p$ be an open neighborhood of $\vm^*$.
Let $(W_{0,(n)})_{n \in \naturals}$ be any sequence of distributions on $\meanspace_p$ with continuous densities $w_{0,(n)}$ such that  for every $n$, $w_{0,(n)}(\vm) > c$ for all $\vm \in \meanspace'$. We have: 
 \begin{align}
\label{eq:bayesbound} &  {\mathbb E}_R \left[{\bf 1}_{\| \hvm_{|n}- \vec{\mu}^* \|_2 \geq n^{\alpha - 1/2}}  \cdot  \log 
\frac{p_{\vm^*}(U^n)}{p_{W_{0,(n)}}(U^n)}  \right] \leq 
 O( n^{-(m-2) \alpha -  1/2}  ),
 \end{align}
and, if $W_{0,(n)}$ is as above and for some $\epsilon > 0$, ${\mathbb E}_{R} |\log p_{\vm^*}(U)/r(U)|^{1+\epsilon} < \infty$, then 
\begin{align}
\label{eq:bayesboundb} &  {\mathbb E}_R \left[{\bf 1}_{\| \hvm_{|n}- \vec{\mu}^* \|_2 \geq n^{\alpha - 1/2}}  \cdot  \log 
\frac{p_{W_{0,(n)}}(U^n)}{p_{\vm^*}(U^n)}  \right] \leq O(n^{- \alpha m \cdot (1 \wedge \epsilon)}).
 \end{align}
Now let  $\breve\vm_{|n}$ be defined as in (\ref{eq:preq}), with $n_0 > 0$. If (c) in  
Condition~\ref{cond:plugin} (`{\bf plug-in}') holds (with $P_{\cdot}$ in the role of $Q_{\cdot}$), then  
\begin{align}
    \label{eq:mleoutrightB} & 
{\mathbb E}_R \left[{\bf 1}_{\| \hvm_{|n}- \vec{\mu}^* \|_2 \geq n^{\alpha - 1/2}}  \cdot  n \cdot  D( P_{\vec{\mu}^*}\| P_{\breve{\vec{\mu}}_{|n}})  \right] = 
 O( n^{-m \alpha +  1/2}  ).
\end{align} 
With
 $$
f(n) :=  {\mathbb E}_R \left[{\bf 1}_{\| \hvm_{|n}- \vec{\mu}^* \|_2 <  n^{\alpha - 1/2}}  \cdot  n \cdot \frac{1}{2} (\hat\vm_{|n} - \vm^*)^\top \Sigma_p^{-1}(\vm^*) (\hat\vm_{|n} -\vm^*)\right],
 $$
 we have, with $W$ now a fixed distribution on $\meanspace_p$ with density $w$ that is positive and continuous in a neighborhood of $\vm^*$,
 \begin{align}
 \label{eq:KLboundsLeft}
&
{\mathbb E}_R \left[{\bf 1}_{\| \hvm_{|n}- \vec{\mu}^* \|_2 <  n^{\alpha - 1/2}}  \cdot  n \cdot D(P_{\hvm_{|n}} \| P_{\vec{\mu}^*} )\right] = f(n) + O\left(n^{\alpha - 1/2}\right),
\\ \label{eq:KLboundsRight}
 & 
{\mathbb E}_R \left[{\bf 1}_{\| \hvm_{|n}- \vec{\mu}^* \|_2 < n^{\alpha - 1/2}}  \cdot  n \cdot  D( P_{\vec{\mu}^*}\| P_{\breve{\vec{\mu}}_{|n}})  \right] 
= f(n) + O\left(n^{\alpha - 1/2}\right), \\ \label{eq:bic}
 & 
{\mathbb E}_R \left[{\bf 1}_{\| \hvm_{|n}- \vec{\mu}^* \|_2 < n^{\alpha - 1/2}}  \cdot  \log \frac{p_{\hat{\vm}_{|n}}(U^n)}{p_W(U^n)}   \right] \\ \nonumber
=& \frac{d}{2} \log \frac{n}{2\pi} -  \log   w(\vm^*) - \frac{1}{2}  \log  \det \Sigma_p(\vm^*) + o(1).
 \end{align}
\end{lemma}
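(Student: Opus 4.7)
\textbf{Proof strategy for Lemma~\ref{lem:KLbounds}.} Equation (\ref{eq:fullrobustnessb}) needs no further proof: on $\hvm_{|n}\in\meanspace_p$ it is the standard robustness property of regular exponential families, and on $\hvm_{|n}\in\bar\meanspace_p\setminus\meanspace_p$ it is built into the definition (\ref{eq:fulrobust}). The remaining seven displays split naturally into three groups (moment tail bounds on $\hvm_{|n}$, central-event Taylor expansions of the KL divergence, and Bayesian marginal bounds). All three groups rely on the same toolbox: (i) the Marcinkiewicz--Zygmund inequality, giving $\mathbb{E}_R\|\hvm_{|n}-\vm^*\|_2^k = O(n^{-k/2})$ for every $k\leq m$; (ii) the local Taylor expansion $D(P_{\vm_1}\|P_{\vm_2}) = \tfrac12(\vm_1-\vm_2)^\top \Sigma_p^{-1}(\vm_2)(\vm_1-\vm_2) + O(\|\vm_1-\vm_2\|^3)$ in the mean parameterization; and (iii) the robustness identity (\ref{eq:fullrobustnessb}), which converts log-likelihood ratios into KL divergences.

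For the tail group (\ref{eq:mlesquare}), (\ref{eq:KLboundsSquare}), (\ref{eq:mleoutrightB}), I would prove (\ref{eq:mlesquare}) via the pointwise inequality $\mathbf{1}_{\|\hvm-\vm^*\|\geq t}\,\|\hvm-\vm^*\|^2 \leq \|\hvm-\vm^*\|^m/t^{m-2}$ with $t=n^{\alpha-1/2}$, after which the Marcinkiewicz--Zygmund bound and the oddness of $m$ (which allows peeling off one extra half-power) deliver the rate $O(n^{-(m-2)\alpha-1/2})$. Equation (\ref{eq:KLboundsSquare}) is a direct computation from $\mathrm{Cov}_R(\hvm_{|n})=\Sigma_r/n$ and the identity $\mathbb{E}[V^\top AV]=\tr(A\,\mathrm{Cov}(V))$ for centered $V$. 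For (\ref{eq:mleoutrightB}) I would combine the $\hvm$-tail bound with Condition~\ref{cond:plugin}(c) applied in the $p$-role: its polynomial growth assumption (\ref{eq:more}) controls $D(P_{\vm^*}\|P_{\breve\vm_{|n}})$ on the far-away event, while the regularization constant $n_0$ in $\breve\vm_{|n}$ is precisely what lets (\ref{eq:more}) be invoked with $\alpha = n_0/(n+n_0)$ and $x_0$ the anchor.

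For the central-event group (\ref{eq:KLboundsLeft}), (\ref{eq:KLboundsRight}) I would Taylor-expand $nD(P_{\hvm}\|P_{\vm^*})$ around $\vm^*$; the quadratic part equals $f(n)$ by definition, and the cubic remainder contributes at most $n\cdot O(\mathbb{E}\|\hvm-\vm^*\|^3)=O(n^{-1/2})\subseteq O(n^{\alpha-1/2})$. Equation (\ref{eq:KLboundsRight}) requires one extra step: writing $\breve\vm_{|n}-\vm^* = (\hvm_{|n}-\vm^*) + O(1/n)$ (with the $O$ absorbing both $n_0(x_0-\vm^*)/(n+n_0)$ and the ratio rescaling), the quadratic form in $\breve\vm_{|n}-\vm^*$ matches the one in $\hvm_{|n}-\vm^*$ up to cross-terms of order $\|\hvm_{|n}-\vm^*\|/n$ and pure $1/n^2$-terms, both comfortably absorbed into $O(n^{\alpha-1/2})$. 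For the BIC-type identity (\ref{eq:bic}) I would perform a standard Laplace approximation of $\int p_\vm(U^n)w(\vm)\,d\vm$ around $\hvm_{|n}$, using that on the central event $\vm\mapsto \log p_\vm(U^n)$ is strictly concave with Hessian $\approx n\Sigma_p^{-1}(\hvm_{|n})$ at its maximizer and that $\hvm_{|n}\to\vm^*$; continuity of $w$ and $\Sigma_p^{-1}$ at $\vm^*$ then yields (\ref{eq:bic}) with a uniform $o(1)$ remainder.

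The main obstacle will be the pair (\ref{eq:bayesbound})--(\ref{eq:bayesboundb}): unlike the other items, log-Bayes ratios depend on the global behaviour of $w_{0,(n)}$ (not just near $\vm^*$), so the tail-of-$\hvm$ bounds do not translate automatically. For the lower-sided (\ref{eq:bayesbound}) I would lower bound $p_{W_{0,(n)}}(U^n)$ by $c\int_{B_\epsilon(\vm^*)}p_\vm(U^n)\,d\vm$ for a fixed $B_\epsilon(\vm^*)\subset\meanspace'$, rewrite the integrand via (\ref{eq:fullrobustnessb}) as $p_{\vm^*}(U^n)\exp\!\bigl(nD(P_{\hvm_{|n}}\|P_{\vm^*}) - nD(P_{\hvm_{|n}}\|P_\vm)\bigr)$, and bound the resulting Gaussian-like integral from below; multiplying by the tail indicator and using (\ref{eq:mleoutrightB}) (with $P$ in place of $Q$) gives the stated rate. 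For the upper-sided (\ref{eq:bayesboundb}), which I view as the hardest display, $p_{W_{0,(n)}}(U^n)\leq \|w_{0,(n)}\|_\infty\cdot p_{\hvm_{|n}}(U^n)\cdot(\text{vol})$ reduces the task to controlling $\mathbb{E}[\mathbf{1}_{\text{tail}}\cdot nD(P_{\hvm_{|n}}\|P_{\vm^*})]$ and $\mathbb{E}[\mathbf{1}_{\text{tail}}\cdot|\log p_{\vm^*}(U^n)/r(U^n)|]$; the $(1+\epsilon)$-moment assumption on $\log p_{\vm^*}(U)/r(U)$ is inserted here, and H\"older's inequality against the polynomial $\hvm$-tail bound delivers $O(n^{-\alpha m\cdot(1\wedge\epsilon)})$. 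Once these two items are in hand, the rest of the lemma follows by stitching the three tools together.
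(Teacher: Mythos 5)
Most of your proposal runs parallel to the paper's proof: you handle (\ref{eq:fullrobustnessb}) by pointing to the definition, (\ref{eq:KLboundsSquare}) by the quadratic-form identity, (\ref{eq:mlesquare}) by the Markov-type pointwise inequality plus the odd-$m$ moment bound of Lemma~\ref{lemma:lceil_rceil}, (\ref{eq:mleoutrightB}) via the annulus decomposition using Condition~\ref{cond:plugin}(c), (\ref{eq:KLboundsLeft})/(\ref{eq:KLboundsRight}) via a cubic-remainder Taylor expansion of the KL divergence, and (\ref{eq:bic}) by Laplace. All of this is substantively what the paper does (the paper simply cites \cite{grunwald2007minimum} for (\ref{eq:bic}) rather than re-proving it).

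The genuine gaps are in the two Bayes-marginal bounds, which you correctly flagged as the hardest part. For (\ref{eq:bayesbound}), your plan to integrate over a \emph{fixed}-radius ball $B_\epsilon(\vm^*)$ and conclude via (\ref{eq:mleoutrightB}) has two problems. First, (\ref{eq:mleoutrightB}) is only available under Condition~\ref{cond:plugin}(c), but (\ref{eq:bayesbound}) is stated \emph{without} that hypothesis (only the first $m$ moments of $X$ under $R$ and positivity of $w_{0,(n)}$ near $\vm^*$). Second, the fixed $\epsilon$-ball is the wrong scale: over $B_\epsilon$ the map $\vm\mapsto \log p_\vm(U^n)$ varies by $\Theta(n\epsilon^2)$ due to the quadratic term, so the naive $\min$-$\max$ bound is useless, and a Laplace-around-$\hvm_{|n}$ argument requires $\hvm_{|n}$ to be inside the ball, which is exactly what fails on the tail event. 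The paper instead integrates over a \emph{shrinking} ball $\cA_n := \{\vm: \|\vm-\vm^*\|_2\leq n^{-1/2}\}$. At that scale the likelihood-ratio is effectively linear: a first-order Taylor expansion of $f(\hvm_{|n},\vm)=D(P_{\hvm_{|n}}\|P_\vm)-D(P_{\hvm_{|n}}\|P_{\vm^*})$ around $\vm=\vm^*$, after discarding a negative-definite quadratic term, gives the pointwise bound $O(\sqrt{n}\,\|\hvm_{|n}-\vm^*\|_2) + O(\tfrac{d}{2}\log n)$, which weakens on the tail to $O(n\|\hvm_{|n}-\vm^*\|_2^2)$ and is then killed by (\ref{eq:mlesquare}) --- no plug-in condition needed.

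For (\ref{eq:bayesboundb}), the gap is more fundamental. Bounding $p_{W_{0,(n)}}(U^n)\leq p_{\hvm_{|n}}(U^n)$ reduces the task to controlling $\mathbb{E}_R[\mathbf{1}_{\text{tail}}\cdot n D(P_{\hvm_{|n}}\|P_{\vm^*})]$ --- but this is \emph{precisely} Condition~\ref{cond:uinew} ({\bf UI}$^{\geq}$), which (\ref{eq:bayesboundb}) deliberately does not assume. The paper's trick, which is the key idea you're missing, is to avoid $p_{\hvm_{|n}}$ entirely by inserting the \emph{true} density $r(U^n)$ as a pivot:
\[
\log\frac{p_{W_{0,(n)}}(U^n)}{p_{\vm^*}(U^n)}
= \log\frac{p_{W_{0,(n)}}(U^n)}{r(U^n)} + \log\frac{r(U^n)}{p_{\vm^*}(U^n)}.
\]
Conditioning on $\cA_n$ and applying Jensen's inequality, the first summand has conditional expectation $\leq \log\bigl(P_{W_{0,(n)}}(\cA_n)/R(\cA_n)\bigr)\leq \log(1/R(\cA_n))$, yielding the contribution $R(\cA_n)\log(1/R(\cA_n))$, which is $O(n^{-\alpha m})$ by the tail bound of Lemma~\ref{lemma:Using Markov Inequality}. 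The second summand is then controlled by H\"older's inequality against $R(\cA_n)^{\epsilon/(1+\epsilon)}$, and this is where --- and only where --- the $(1+\epsilon)$-moment hypothesis on $\log p_{\vm^*}(U)/r(U)$ enters. In short, the true $r$ replaces $p_{\hvm_{|n}}$ so that the Jensen argument gives a sign, sparing the need for any control on $nD(P_{\hvm_{|n}}\|P_{\vm^*})$. Without this substitution, the claimed rate in (\ref{eq:bayesboundb}) cannot be obtained from the stated hypotheses.
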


\subsection{Preparatory results for Lemma~\ref{lem:KLbounds} and Lemma~\ref{prop:difficult}}
\label{app:preparinggeneraltheoremsB}
Throughout the proofs for Lemma~\ref{lem:KLbounds} and Lemma~\ref{prop:difficult} and Theorem~\ref{thm:simpleH1general}, Part 3 (Appendix~\ref{app:condproof}), we will use the following additional notations and abbreviations: 
\begin{align}
    \label{eq:notations}
\vm^* = {\bf 0}, \Sigma_p:= \Sigma_p({\bf 0}), \Sigma_q:= \Sigma_q({\bf 0 }),
K := (\Sigma_q - \Sigma_p)^{-1}, J(\vm) := \Sigma_p(\vm) ^{-1}, \Jp:= J({\bf 0})= \Sigma_p^{-1}.
\end{align}
Note that the first equation, $\vm^* = {\bf 0 }$ --- which will allow substantially shortening equations ---  is really a definition rather than a notation; but we can enforce this without loss of generality --- since $\vm^*$ is fixed throughout all proofs, we can get the same result as for ${\bf 0}$ with arbitrary $\vm^*$ simply by adding $\vm^*$ as a constant to each outcome $X$ and transforming the resulting equations. Or, alternatively, one may modify all equations by putting in `$- \vm^*$' at the appropriate places and note that nothing in the derivation changes.

We will make repeated use of the following basic results whose proof  can be found in, for example, \cite{grunwald2005asymptotic}.
\begin{lemma}\label{lemma:Using Markov Inequality}
Fix $n_0 \geq 0$ and $x_0 \in \reals$. Let $X, X_1, \ldots$ be (scalar-valued) i.i.d. random variables,
define $\breve{\mu}_{|n} := ({n_0\cdot x_0 + \sum\limits_{i=1}^n X_i})/({n+n_0})$.
Suppose the first $m \in \mathbb{N}$ moments of $X$ exist and let $\delta > 0$. Then $\Pr(|\breve{\mu}_n -  \mathbb{E}[X] | \geq \delta) = O(n^{-\lceil \frac{m}{2} \rceil} \delta^{-m})$. As a consequence, via the union bound, for fixed $d$, our $d$-dimensional regularized MLE $\breve{\vm}_{|n}$ also satisfies $\Pr(\|\breve{\vm}_{|n} - \vm^* \|_2 \geq \delta) = O(n^{-\lceil \frac{m}{2} \rceil} \delta^{-m})$.
\end{lemma}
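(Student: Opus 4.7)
The plan is to reduce both assertions to a standard application of Markov's inequality combined with a moment bound on centered sums of i.i.d.\ random variables. First I would decompose
\[
\breve{\mu}_{|n} - \mu \;=\; \frac{n_0(x_0 - \mu) + \sum_{i=1}^n (X_i - \mu)}{n + n_0},
\]
with $\mu := \mathbb{E}[X]$. This separates a deterministic bias of order $1/n$ from $(n+n_0)^{-1}$ times a centered i.i.d.\ sum $S_n - n\mu$, where $S_n := \sum_{i=1}^n X_i$. The whole point of the $n_0$-regularization is that it leaves the centered sum structurally unchanged, so the rate will be driven entirely by the i.i.d.\ part.

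Next I would apply Markov's inequality at power $m$, $\Pr(|\breve{\mu}_{|n}-\mu|\geq\delta) \leq \delta^{-m}\, \mathbb{E}|\breve{\mu}_{|n}-\mu|^m$, and split the resulting moment using $|a+b|^m \leq 2^{m-1}(|a|^m+|b|^m)$. The deterministic contribution is $O(n^{-m})$, which is absorbed by the claimed rate, so everything reduces to controlling $(n+n_0)^{-m}\,\mathbb{E}|S_n - n\mu|^m$. The key step is the moment bound $\mathbb{E}|S_n - n\mu|^m = O(n^{\lceil m/2 \rceil})$ under existence of the $m$-th moment. For even $m$ this is immediate from multinomial expansion of $(S_n-n\mu)^m$: any monomial $\prod_i (X_i - \mu)^{k_i}$ with some $k_i = 1$ has expectation zero by centering, so the surviving partitions of $m$ into parts $\geq 2$ involve at most $m/2$ distinct indices, producing only $O(n^{m/2})$ non-vanishing terms of bounded size. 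For odd $m$ one additional Cauchy--Schwarz (or Marcinkiewicz--Zygmund) step reduces matters to the next even exponent and yields the claimed rate. Dividing by $(n+n_0)^m = \Theta(n^m)$ gives $\Pr(|\breve{\mu}_{|n}-\mu|\geq\delta) = O(n^{-\lceil m/2 \rceil} \delta^{-m})$ as stated.

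The multivariate statement then follows by a straightforward union bound: the event $\|\breve{\vm}_{|n} - \vm^*\|_2 \geq \delta$ implies $|\breve{\mu}_{|n,j} - \mu^*_j| \geq \delta/\sqrt{d}$ for at least one coordinate $j$, so summing the $d$ scalar tail bounds gives $O(d \cdot (\sqrt{d})^m \cdot n^{-\lceil m/2 \rceil} \delta^{-m})$, which preserves the rate up to a constant depending only on the fixed dimension $d$. There is no real obstacle here---as the authors signal by citing \cite{grunwald2005asymptotic} for details, this is essentially a higher-moment Chebyshev bound dressed up for a regularized mean and its fixed-dimensional vector extension. The only bookkeeping point is verifying that all moments appearing in the multinomial expansion are finite, which is automatic from the $m$-th moment hypothesis on $X$.
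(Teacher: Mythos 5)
Your argument is clean and correct for \emph{even} $m$, and the union-bound reduction to the scalar case is fine. The gap is in the odd-$m$ case, which is in fact the only case the paper ever invokes (all the relevant conditions and theorems assume ``$m\geq 3$ odd'').

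For odd $m$, $\lceil m/2\rceil = (m+1)/2$, so the lemma asserts $\Pr(|\breve{\mu}_{|n}-\mu|\geq\delta) = O(n^{-(m+1)/2}\delta^{-m})$. Direct Markov at power $m$ cannot reach this rate: you need a bound on $\mathbb{E}|S_n-n\mu|^m$, and by Lyapunov this is \emph{always} $\geq (\mathrm{Var}(S_n))^{m/2} = (\sigma^2 n)^{m/2}$, so $\mathbb{E}|\breve{\mu}_{|n}-\mu|^m = \Omega(n^{-m/2})$. Markov therefore gives at best $O(\delta^{-m}n^{-m/2})$, which is a factor $n^{1/2}$ short of $n^{-(m+1)/2}$. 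Your proposed fix does not close this gap: ``Cauchy--Schwarz to the next even exponent'' would mean either going up to $m+1$ (requires an $(m+1)$-th moment, which is not assumed) or down to $m-1$ (gives only $n^{-(m-1)/2}$, even weaker). Marcinkiewicz--Zygmund similarly reduces to $\mathbb{E}(\sum_i (X_i-\mu)^2)^{m/2}$, and lower-bounding by the second-moment term shows again that it is $\Omega(n^{m/2})$. Be careful also not to conflate this lemma with Lemma~\ref{lemma:lceil_rceil}, which concerns the \emph{signed} moment $\mathbb{E}[(\breve{\mu}_{|n}-\mu)^m]$; there, odd-degree monomials with a lone factor vanish by centering and you do genuinely land at $\lfloor m/2\rfloor$ distinct indices, hence $O(n^{-\lceil m/2\rceil})$. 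That cancellation is unavailable once absolute values enter through Markov.

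The correct route is a truncation (Fuk--Nagaev type) argument rather than a single Markov bound. Split each $X_i-\mu$ at a level $T$ into a bounded part and a tail part. The bounded part is controlled by an exponential (Bernstein/Hoeffding) concentration bound, contributing a term of the form $\exp(-c\,n\delta^2/(\sigma^2+T\delta))$, which is $o(n^{-k})$ for every $k$ once $\delta>0$ is fixed. The tail part is controlled by a union bound plus the $m$-th moment: $n\,\Pr(|X-\mu|>T)\leq n\,T^{-m}\,\mathbb{E}|X-\mu|^m$. Choosing $T\asymp n\delta$ yields the polynomial term $O(n^{1-m}\delta^{-m})$, and $1-m\leq -(m+1)/2$ precisely when $m\geq 3$, which is why the paper restricts to $m\geq 3$. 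This is a genuinely different mechanism from your multinomial expansion: for odd $m$ the improvement over $n^{-m/2}$ comes from separating a light-tailed bulk from a heavy-tailed fringe, not from any moment interpolation. Incidentally, this also shows the lemma would fail as stated for $m=1$, where $1-m=0$; but since the paper never uses $m<3$, this is moot.
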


\begin{lemma}\label{lemma:lceil_rceil}
Consider the setting of Lemma~\ref{lemma:Using Markov Inequality}. If the first $m$ moments of (scalar) $X$ exist, then $\mathbb{E}\left[\left( \breve{\mu}_{|n} - \mathbb{E}[X] \right)^m \right] = O(n^{-\lceil \frac{m}{2} \rceil})$.
\end{lemma}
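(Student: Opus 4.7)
The plan is to write $\breve{\mu}_{|n} - \mu$ (where $\mu := \mathbb{E}[X]$) as the sum of a deterministic regularization error and a centered i.i.d.\ average, then expand the $m$th power binomially and bound each term. Concretely, setting $Y_i := X_i - \mu$ (so $\mathbb{E}[Y_i]=0$ and $\mathbb{E}|Y_i|^m < \infty$) and $S_n := \sum_{i=1}^n Y_i$, we have
\[
\breve{\mu}_{|n} - \mu \;=\; a_n + \frac{S_n}{n+n_0}, \qquad a_n := \frac{n_0(x_0-\mu)}{n+n_0} = O(n^{-1}).
\]
By the binomial theorem,
\[
\mathbb{E}\bigl[(\breve{\mu}_{|n}-\mu)^m\bigr] \;=\; \sum_{k=0}^m \binom{m}{k}\, a_n^{\,m-k}\, \frac{\mathbb{E}[S_n^{k}]}{(n+n_0)^{k}}.
\]

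The first key step is the standard moment bound $\mathbb{E}[S_n^{k}] = O(n^{\lfloor k/2 \rfloor})$ for $k \leq m$. This follows by expanding $S_n^k = \sum_{i_1,\dots,i_k} Y_{i_1}\cdots Y_{i_k}$ and noting that, by independence and zero mean, a term vanishes unless every distinct index in the multi-index $(i_1,\dots,i_k)$ appears at least twice; hence at most $\lfloor k/2\rfloor$ indices are distinct, so at most $O(n^{\lfloor k/2 \rfloor})$ multi-indices contribute, and each contributing expectation is bounded by $\mathbb{E}|Y|^k$ by the generalized H\"older inequality (or by AM--GM on $|Y_{i_1}|\cdots|Y_{i_k}|$). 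This requires only the first $k \leq m$ moments of $X$, which we have by assumption.

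Plugging this bound in, the $k$th summand is of order $n^{-(m-k)}\cdot n^{\lfloor k/2 \rfloor - k} = n^{-(m-k)-\lceil k/2\rceil}$, using $k-\lfloor k/2\rfloor = \lceil k/2\rceil$. Hence
\[
\mathbb{E}\bigl[(\breve{\mu}_{|n}-\mu)^m\bigr] \;=\; \sum_{k=0}^m O\!\left(n^{-(m-k)-\lceil k/2 \rceil}\right).
\]
The exponent $(m-k)+\lceil k/2\rceil$ is a decreasing function of $k$ on $\{0,1,\dots,m\}$ (each unit increment in $k$ lowers the first part by $1$ and raises the second part by at most $1$, and by $\tfrac12$ on average), so it is minimized at $k=m$, giving the dominant rate $n^{-\lceil m/2 \rceil}$. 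This yields the claim.

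There is no serious obstacle: the only non-trivial ingredient is the moment bound on $\mathbb{E}[S_n^k]$, which is entirely combinatorial and uses just $k \leq m$ moments. The regularization term $a_n$ contributes only lower-order corrections because $a_n = O(n^{-1})$ decays strictly faster per unit of $m-k$ than the fluctuation scale $n^{-1/2}$ of $S_n/(n+n_0)$, which is why the maximum-$k$ (pure fluctuation) term dominates.
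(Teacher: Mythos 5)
Your proof is correct. The paper does not give its own proof of this lemma — it cites \cite{grunwald2005asymptotic} for both Lemma~\ref{lemma:Using Markov Inequality} and Lemma~\ref{lemma:lceil_rceil} — so there is no in-text argument to compare against, but the route you take (decompose $\breve{\mu}_{|n}-\mu$ into the $O(n^{-1})$ regularization bias $a_n$ plus the centered average $S_n/(n+n_0)$, binomially expand, invoke the standard bound $\mathbb{E}[S_n^k]=O(n^{\lfloor k/2\rfloor})$, and observe that the exponent $(m-k)+\lceil k/2\rceil$ is non-increasing in $k$ so the pure-fluctuation term $k=m$ dominates) is the standard one and is exactly what the cited reference does. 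All three ingredients check out: the bias $a_n=n_0(x_0-\mu)/(n+n_0)$ is indeed $O(n^{-1})$; the combinatorial moment bound correctly requires only $k\le m$ moments since each surviving multi-index has all multiplicities between $2$ and $k$; and the identity $k-\lfloor k/2\rfloor=\lceil k/2\rceil$ together with the monotonicity of $(m-k)+\lceil k/2\rceil$ (which decreases by one when $k$ goes from odd to even and is flat when $k$ goes from even to odd) gives the claimed rate $n^{-\lceil m/2\rceil}$.
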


We also need the following two propositions: 
\begin{proposition}
    \label{prop:thirdorderTaylor}
    Let $0 < \alpha < 1/2$ and fix $c > 0$. Then  there is a constant $C> 0$ such that for all large $n$, 
    for all $\vm= (\vm_1, \ldots, \vm_d)$, $\vm'= (\vm'_1, \ldots, \vm'_d)$ with $(\| \vm \|_2 \vee \| \vm' \|_2  )<   n^{\alpha -1/2}$, 
    we have
    $$\left| 
    D(P_{\vm'}\| P_{\vm}) - \frac{1}{2} (\vm- \vm')^\top J (\vm- \vm')  \right| \leq
    C \sum_{i,j,k \in [d]} |\mu_i - \mu'_i |\cdot | \mu_j- \mu'_j |\cdot  |\mu_k - \mu'_k|
    \leq C n^{3 \alpha - 3/2}.$$
\end{proposition}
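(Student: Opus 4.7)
The plan is to identify $D(P_{\vm'}\|P_{\vm})$ with a Bregman divergence and apply Taylor's theorem. Working in the canonical parameterization of $\nulhyp$ with log-partition function $\psi$ and Legendre conjugate $\phi$, a standard calculation (using $\psi(\vb) + \phi(\vm) = \vb^\top \vm$ whenever $\vb = \nabla\phi(\vm)$) gives
\begin{equation*}
D(P_{\vm'}\|P_{\vm}) = D_\phi(\vm'\|\vm) = \phi(\vm') - \phi(\vm) - \nabla\phi(\vm)^\top(\vm'-\vm),
\end{equation*}
with $\nabla^2\phi(\vm) = \Sigma_p(\vm)^{-1} = J(\vm)$. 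Since $\nulhyp$ is regular, $\phi$ is $C^\infty$ on the interior of $\meanspace_p$, and hence on an open ball $B_\epsilon$ around $\vm^* = {\bf 0}$.

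Next, I would apply Taylor's theorem with Lagrange-form remainder to $\phi$ expanded at $\vm$ and evaluated at $\vm'$, obtaining
\begin{equation*}
D(P_{\vm'}\|P_{\vm}) = \tfrac{1}{2}(\vm'-\vm)^\top J(\vm)(\vm'-\vm) + R_3(\vm,\vm'),
\end{equation*}
where $R_3$ is a trilinear form in $\vm'-\vm$ whose coefficients are third partial derivatives of $\phi$ at some $\xi$ on the segment $[\vm,\vm']$. As $\phi$ is $C^3$ on the compact ball $\overline{B_{\epsilon/2}}$, its third derivatives are uniformly bounded there by some $C_0$, giving $|R_3| \le C_0 \sum_{i,j,k}|\mu_i-\mu'_i|\,|\mu_j-\mu'_j|\,|\mu_k-\mu'_k|$ for all $\vm,\vm' \in B_{\epsilon/2}$, which holds for all sufficiently large $n$ since $n^{\alpha-1/2}\to 0$.

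Finally, to replace $J(\vm)$ by $J = J({\bf 0})$, I would use the fact that $J$ is $C^1$ on $B_\epsilon$, so there exists $L > 0$ with $\|J(\vm) - J\|_{\mathrm{op}} \le L\|\vm\|_2$. Hence the correction
\begin{equation*}
\bigl|\tfrac{1}{2}(\vm'-\vm)^\top (J(\vm)-J)(\vm'-\vm)\bigr| \;\le\; \tfrac{L}{2}\|\vm\|_2\,\|\vm'-\vm\|_2^{2}
\end{equation*}
is $O(n^{3\alpha - 3/2})$ under the hypothesis $\|\vm\|_2 \vee \|\vm'\|_2 < n^{\alpha-1/2}$. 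Combining this with the $R_3$ bound and using $\|\vm'-\vm\|_2 \le 2n^{\alpha-1/2}$ yields the asserted final bound $Cn^{3\alpha-3/2}$.

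The main subtlety is that the intermediate ``cubic-in-differences'' bound $C\sum_{i,j,k}|\mu_i-\mu'_i|\,|\mu_j-\mu'_j|\,|\mu_k-\mu'_k|$ matches $R_3$ cleanly but not the Hessian-correction term, which is naturally of order $\|\vm\|_2\,\|\vm'-\vm\|_2^{2}$ rather than purely cubic in $\vm'-\vm$. Under the uniform hypothesis $\|\vm\|_2 \vee \|\vm'\|_2 < n^{\alpha-1/2}$ both contributions are $O(n^{3\alpha-3/2})$, which is what is actually invoked downstream in the paper; the displayed cubic-in-differences bound should accordingly be read with the implicit understanding that it absorbs a symmetric combination of cubic products in $\vm$, $\vm'$, and $\vm-\vm'$, all uniformly controlled by the norm hypothesis.
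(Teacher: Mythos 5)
Your proof is correct and follows essentially the same route as the paper's — a third-order Taylor expansion of the KL divergence in the mean-value (i.e.\ Bregman/conjugate) coordinates, with uniform boundedness of third derivatives on a shrinking neighborhood of $\vm^*={\bf 0}$. The identification $D(P_{\vm'}\|P_{\vm}) = D_\phi(\vm'\|\vm)$ and the observation $\nabla^2\phi = \Sigma_p^{-1}=J(\cdot)$ make the remainder analysis transparent, and your final bound $Cn^{3\alpha-3/2}$ is exactly what the proposition asserts and what is used downstream.

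You have also put your finger on a genuine imprecision in the proposition as stated, and your flag is correct: the Hessian-correction term $\tfrac{1}{2}(\vm'-\vm)^\top(J(\vm)-J)(\vm'-\vm)$ is of order $\|\vm\|_2\,\|\vm'-\vm\|_2^2$, which for general $\vm,\vm'$ is not dominated by the displayed cubic-in-differences sum $\sum_{i,j,k}|\mu_i-\mu'_i|\,|\mu_j-\mu'_j|\,|\mu_k-\mu'_k|$ (take $\vm$ of order $n^{\alpha-1/2}$ and $\vm'$ very close to $\vm$). The paper's own proof sidesteps this by Taylor-expanding $D(P_{\bf 0}\|P_{\vm})$ in $\vm$ — i.e.\ it only treats the case $\vm'={\bf 0}$, for which $J(\vm')=J({\bf 0})=J$ and the remainder really is cubic in $\vm-\vm'=\vm$ — and then states the conclusion for general $\vm'$. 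This is harmless in practice because the first inequality is only invoked (in the proofs of (\ref{eq:KLboundsLeft}) and (\ref{eq:KLboundsRight})) with one of the two arguments equal to ${\bf 0}$, and elsewhere only the final $Cn^{3\alpha-3/2}$ bound is used. Your write-up makes this explicit rather than leaving it implicit, which is a strict improvement.

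One small stylistic note: you could avoid the Hessian-correction step entirely, at the cost of losing the cubic-in-differences intermediate display, by doing a joint third-order Taylor expansion of $(\vm',\vm)\mapsto D_\phi(\vm'\|\vm)$ around $({\bf 0},{\bf 0})$. The quadratic part then comes out as $\tfrac{1}{2}(\vm-\vm')^\top J({\bf 0})(\vm-\vm')$ directly, and the cubic remainder is a symmetric trilinear form in the components of $(\vm,\vm')$, uniformly $O(n^{3\alpha-3/2})$ on the set in question — which, as you observe, is all that matters.
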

\begin{proof}
Note that for all  $n$ large enough, with $\cA_n := \{ \vm: \| \vm \|_2 <   n^{\alpha - 1/2} \}$ and $\cA'_n := \{ \vm: \| \vm \|_2 <   n^{\alpha - 1/2} \}$, we have that  $\cA_n \cup \cA'_n\in \meanspace_p$.
A third order Taylor-expansion of $D(P_{\bf 0 } \| P_{\vm})$ in terms of $\vm$ then gives that: 
$$
    D(P_{\vm'} \| P_{\vm}) = \frac{1}{2} (\vm - \vm')^\top J (\vm - \vm') + 
    O\left( \sum_{i,j,k \in [d]} |C_{i,j,k}| |\mu_i- \mu'_i |\cdot | \mu_j- \mu'_j |\cdot  |\mu_k- \mu'_k |
    \right),$$
where $C_{i,j,k}$ are the corresponding third-order derivatives. But since, for large $n$, $\cA_n$ is a closed set in the interior of $\meanspace_p$, these derivatives are bounded; the result follows.  
\end{proof}

\begin{proposition}\label{prop:frombrevetohat}
We have:
$$
\sup_{U^{n}: \| \hat{\vm}_{|n} \|_2 < n^{\alpha - 1/2} }\left| D(P_{\bf 0} \| P_{\hat{\vm}_{|n}}) - D(P_{\bf 0} \| P_{\breve{\vm}_{|n}}) \right| = O(n^{\alpha - 3/2}).
$$
\end{proposition}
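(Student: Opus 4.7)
\textbf{Proof plan for Proposition~\ref{prop:frombrevetohat}.} The plan is to reduce the claim to a single application of the mean value theorem after observing that $\breve{\vm}_{|n}$ and $\hat{\vm}_{|n}$ differ by only $O(n^{-1})$, while the gradient of $D(P_{\bf 0} \| P_{\vm})$ in the relevant region is only $O(\| \vm \|_2) = O(n^{\alpha - 1/2})$.

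First I would write, using the definitions of $\hat{\vm}_{|n}$ and $\breve{\vm}_{|n}$ from (\ref{eq:preq}),
\begin{equation*}
\breve{\vm}_{|n} - \hat{\vm}_{|n} = \frac{n_0 (x_0 - \hat{\vm}_{|n})}{n + n_0},
\end{equation*}
so that on the event $\| \hat{\vm}_{|n} \|_2 < n^{\alpha - 1/2}$ (recall $\vm^* = {\bf 0}$ by the convention in (\ref{eq:notations})), we get $\| \breve{\vm}_{|n} - \hat{\vm}_{|n} \|_2 \leq n_0 (\| x_0 \|_2 + n^{\alpha - 1/2}) / (n + n_0) = O(n^{-1})$, and hence also $\| \breve{\vm}_{|n} \|_2 \leq \| \hat{\vm}_{|n} \|_2 + O(n^{-1}) = O(n^{\alpha - 1/2})$. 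In particular, for all large $n$, both $\hat{\vm}_{|n}$ and $\breve{\vm}_{|n}$ lie in a fixed compact neighborhood $\cA$ of ${\bf 0}$ contained in $\meanspace_p$.

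Next I would invoke smoothness of the KL divergence on $\cA$. Since $\nulhyp$ is a regular exponential family, the map $\vm \mapsto D(P_{\bf 0} \| P_{\vm})$ is infinitely differentiable on the interior of $\meanspace_p$, and its gradient vanishes at its (unique) minimum $\vm = {\bf 0}$. Hence, using the Hessian $J = \Sigma_p^{-1}({\bf 0})$ (notation from (\ref{eq:notations})), a first-order Taylor expansion around ${\bf 0}$ yields, uniformly for $\vm \in \cA$,
\begin{equation*}
\nabla_{\vm} D(P_{\bf 0} \| P_{\vm}) = J \vm + O(\| \vm \|_2^2),
\end{equation*}
so that, for every $\vm \in \cA$ with $\| \vm \|_2 \leq n^{\alpha - 1/2}$, the gradient satisfies $\| \nabla_{\vm} D(P_{\bf 0} \| P_{\vm}) \|_2 = O(n^{\alpha - 1/2})$. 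Applying the mean value theorem to the differentiable function $\vm \mapsto D(P_{\bf 0} \| P_{\vm})$ between $\vm = \hat{\vm}_{|n}$ and $\vm = \breve{\vm}_{|n}$, there is some $\tilde{\vm}$ on the line segment between these two (which still lies in the region $\| \tilde{\vm} \|_2 = O(n^{\alpha - 1/2})$) such that
\begin{equation*}
\bigl| D(P_{\bf 0} \| P_{\hat{\vm}_{|n}}) - D(P_{\bf 0} \| P_{\breve{\vm}_{|n}}) \bigr| = \bigl| \nabla_{\vm} D(P_{\bf 0} \| P_{\tilde{\vm}})^\top (\hat{\vm}_{|n} - \breve{\vm}_{|n}) \bigr| = O(n^{\alpha - 1/2}) \cdot O(n^{-1}) = O(n^{\alpha - 3/2}),
\end{equation*}
uniformly over all $U^n$ in the specified set. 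This gives the claimed bound.

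The only delicate point—not really an obstacle—is to make sure the $O(\cdot)$ in the Taylor expansion of the gradient is uniform over the shrinking region $\{\vm : \| \vm \|_2 \leq n^{\alpha - 1/2}\}$; this is immediate from continuity of the third derivatives of $D(P_{\bf 0} \| P_{\vm})$ on the fixed compact set $\cA$, just as was used in the proof of Proposition~\ref{prop:thirdorderTaylor}. Note that a naive application of Proposition~\ref{prop:thirdorderTaylor} to both $\hat{\vm}_{|n}$ and $\breve{\vm}_{|n}$ separately would only produce $O(n^{3\alpha - 3/2})$, which is strictly worse when $\alpha > 0$; the key improvement comes from expanding about $\breve{\vm}_{|n}$ (equivalently, using the mean value theorem directly on the segment) so that the small separation $\| \hat{\vm}_{|n} - \breve{\vm}_{|n} \|_2 = O(n^{-1})$ is exploited.
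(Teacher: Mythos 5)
Your proof is correct and follows essentially the same route as the paper: you expand $D(P_{\bf 0}\|P_{\cdot})$ to first order between $\hat{\vm}_{|n}$ and $\breve{\vm}_{|n}$, then bound the gradient at the intermediate point by a second Taylor expansion about ${\bf 0}$ (where the gradient vanishes), producing the key $O(n^{\alpha-1/2})\cdot O(n^{-1})$ factorization. The paper phrases the gradient bound component-wise via $g_j(\vm') = g_j({\bf 0}) + (\nabla g_j(\vm^\circ_{[j]}))^\top\vm'$ with bounded second derivatives on a compact neighborhood, but this is the same estimate you get from $\nabla D = J\vm + O(\|\vm\|_2^2)$.
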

\begin{proof}
Define $f(\vm) := D(P_{{\bf 0}} \| P_{\vm})$ and $g(\vm) := \nabla f(\vm)$. A first-order Taylor approximation gives that 
\begin{align}\label{eq:eierbal}
D(P_{\bf 0} \| P_{\breve{\vm}_{|n}})=  D(P_{\bf 0} \| P_{\hat{\vm}_{|n}})
+ (\breve{\vm}_{|n}- \hat{\vm}_{|n})^{\top} g(\vm')
=  D(P_{\bf 0} \| P_{\hat{\vm}_{|n}})
+ \frac{n_0}{n+n_0} (x_0 - \hat{\vm}_{|n})^{\top} \cdot g(\vm'),
\end{align}
where $\vm'$ is some point with $\| \vm' \|_2 < n^{\alpha - 1/2}$,
and the second equation is straightforward rewriting.

We can write $g(\vm) = (g_1(\vm), \ldots, g_d(\vm))^{\top}$
with $g_j(\vm) = (\partial/\partial \vm_j) f(\vm)$.
Taylor approximating $g_j(\vm)$ itself gives for each $j=1..d$: 
$$
g_j(\vm') = g_j({\bf 0}) + \left(\frac{\partial}{\partial \vm_1} g_j (\vm^{\circ}_{[j]}), \ \ldots \  , \frac{\partial}{\partial \vm_d} g_j (\vm^{\circ}_{[j]}) \right)^\top \vm'
$$
\commentout{
$$
g(\vm') = g({\bf 0}) + \left(\vm'_1  \cdot \frac{\partial}{\partial \vm_1} g_1 (\vm^{\circ}_{[1]}), \ \ldots \  , \vm'_d \cdot \frac{\partial}{\partial \vm_d} g_d (\vm^{\circ}_{[d]}) \right)^{\top}.
$$}
for $d$ points $\vm^{\circ}_{[j]} = (\vm^{\circ}_{[j],1}, \ldots, \vm^{\circ}_{[j],d})^\top$, where, for $j=\{1, \ldots, d\}$, we have  $\| \vm^{\circ}_{[j]} \|_2 < n^{1/2-\alpha}$.
But since $\frac{\partial}{\partial \vm_j} g_j (\vm) = \frac{\partial^2}{ \partial \vm_j^2} f(\vm)$ and for all large $n$, the set $\{ \vm: \| \vm\|_2 < n^{1/2 - \alpha}\} \in \meanspace_p$, these $d$ second-derivatives are uniformly bounded. 
Also, $g({\bf 0})=0$ since the KL divergence is minimized at ${\bf 0}$ and, for exponential families, is a smooth function in the mean-value parameters. It follows that the final term in (\ref{eq:eierbal}) is $O(n^{-1} n^{\alpha-1/2})$, and the result follows.
\end{proof}

\subsection{Proof of Lemma~\ref{lem:KLbounds}}
\label{app:KLboundsproof}
Throughout the proof, we adopt the notations of (\ref{eq:notations}). (\ref{eq:fullrobustnessb}) was already proven. (\ref{eq:bic}) is a well-known result, see for example \cite[Chapter 8]{grunwald2007minimum}. (\ref{eq:KLboundsSquare}) is also known; basically, the expectation on the right can be computed exactly like we did in the proof of Proposition~\ref{prop:basicgauss}, yielding the desired result. 

    \ownparagraph{Proof of (\ref{eq:mlesquare})}
In a variation of the proof of Markov's inequality, we can write the left-hand side, for fixed $\gamma > 0$, as 
\begin{align*}
  &   \mathbb{E}_{R}\left[{\bf 1}_{\| \hvm_{|n} \|_2 \geq n^{\alpha -1/2}} \cdot \frac{(\hvm_{|n}^\top \Sigma_p^{-1} \hvm_{|n} )^{\gamma+1}}{(\hvm_{|n}^\top \Sigma_p^{-1} \hvm_{|n} )^{\gamma}}
  \right] \leq C \cdot  \mathbb{E}_{R}\left[
  \frac{(\hvm_{|n}^\top \Sigma_p^{-1} \hvm_{|n} )^{\gamma+1}}{n^{\gamma (2 \alpha - 1)}}
  \right] \\
& =  C  \cdot  \mathbb{E}_{R}\left[
  \frac{(\hvm_{|n}^\top \Sigma_p^{-1} \hvm_{|n} )^{m/2}}{n^{
  m \alpha - m/2 - 2\alpha +1}}
  \right] = O \left( n^{- m \alpha +m/2 + 2 \alpha -1 } \cdot n^{- (m+1)/2} \right) \\
  & =  O \left( n^{- m \alpha  + 2\alpha  - 3/2 } \right) = 
  O \left( n^{- (m-2)\alpha -3/2} \right)
  \end{align*}
for some constant $C > 0$, where we used that $\Sigma_p^{-1}$ is positive definite and fixed, independent of $\hvm_{|n}$. In the second line we set $\gamma = m/2 -1$, using that the $m$-th moment of the expectation exists, and then again the fact that $\Sigma_p^{-1}$ is positive definite and Lemma~\ref{lemma:lceil_rceil}, using that $m$ is odd. (\ref{eq:mlesquare}) follows by multiplying left- and right-hand side by $n$.

\commentout{According to (\ref{eq:mlesquare}), which we already proved,   we get \\
$\frac{n}{2} \mathbb{E}_R\left[{\bf 1}_{\| \hat{\vec{\mu}} \|_2 > n^{\alpha -1/2}} \cdot \sum\limits_{i=1}^d \sum\limits_{j=1}^d \hat{\vec{\mu}}_i \hat{\vec{\mu}}_j J_{ij}(0) \right] = O(n^{1- m/2})$. Then
\begin{align*}
\frac{n}{2} \mathbb{E}_R\left[{\bf 1}_{\| \hat{\vec{\mu}} \|_2 \leq n^{\alpha -1/2}} \cdot \sum\limits_{i=1}^d \sum\limits_{j=1}^d \hat{\vec{\mu}}_i \hat{\vec{\mu}}_j J_{ij}(0) \right]
=
\frac{n}{2} \mathbb{E}_R\left[\sum\limits_{i=1}^d \sum\limits_{j=1}^d \hat{\vec{\mu}}_i \hat{\vec{\mu}}_j J_{ij}(0) \right]
+
O(n^{1- m/2}).
\end{align*}}

\ownparagraph{Proof of (\ref{eq:bayesbound})}
Assume without loss of generality that $n$ is large enough so that ${\cal A }_n := \{ \vm: \|\vec{\mu} \|_2 \leq n^{-1/2}\}$ is contained in $\meanspace_p$ and for some constant $c> 0$, $w_{(0),n}(\vm) \geq c$ for all $\vm \in {\cal A}_n$. Then 
\begin{align}\label{eq:maxC}
&    - \log 
\frac{\int p_{\vec{\mu}}(U^n) w_{0,(n)}(\vec{\mu}) d \vec{\mu}}{p_{\mathbf 0}(U^n)} 
\leq  - \log 
\frac{\int {\bf 1}_{
{\cal A}_n}
\cdot \left(\min_{\vm' \in \cA_n} p_{\vec{\mu}'}(U^n) \right) w_{0,(n)}(\vec{\mu}) d \vec{\mu}}{p_{\mathbf 0}(U^n)} \\ \nonumber
= & \left( \max_{\vec{\mu} \in {\cal A}_n} 
\ - \log \frac{p_{\vec{\mu}}(U^n)}{p_{\mathbf 0}(U^n)} \right) - \log W_{0,(n)}(\cA_n)
= n \cdot \max_{\vec{\mu} \in {\cal A}_n } f(\hat{\vm}_{|n},\vm) + O\left(\frac{d}{2} \log n \right),
\end{align}
where 
\begin{align*}
     f(\vm^{\circ},\vm) = 
D(P_{\vm^{\circ}} \| P_{\vm}) -D(P_{\vm^{\circ}} \| P_{\bf 0}), 
\end{align*}
and we used the robustness property of exponential families (\ref{eq:fullrobustnessb}) in the final step.

We now move to the canonical parameterization to further analyze the function $f$.
Let $\vec{\beta}_{\vec{\mu}}$ be the function mapping mean-value parameter vector $\vec{\mu}$ to the corresponding canonical parameter vector. 
For general $\vec{\mu}^{\circ} \in \meanspace_p$, we can write 
\begin{align}
& \nabla_{\vec\mu}  f(\vm^{\circ}; \vm)  = \nabla_{\vec\mu} \left( 
D(P_{\vm^{\circ}} \| P^{\textsc{can}}_{\vb_{\vm}}) -D(P_{\vm^{\circ}} \| P_{\bf 0})
\right) = \nonumber \\
& \nabla_{\vec\mu} \left( - (\vec{\beta}_{\vec{\mu}} - \vec{\beta}_{\mathbf 0})^\top 
{\vec{\mu}}^{\circ} + \log Z(\vec{\beta}_{\vec{\mu}})
\right) =  J(\vec{\mu}) \cdot (\vec{\mu}- {\vec{\mu}}^{\circ}), \nonumber
\end{align}
which can be found by the fact that $J(\vm)$ is the matrix of partial derivatives of $\vec{\beta}_{\vec{\mu}}$ as a function of $\vec{\mu}$, and that $\vec{\mu}$ is the gradient of $\log Z(\vec{\beta})$, and using the chain rule of differentiation.
A first order Taylor approximation of $f(\hat{\vm}_{|n}, \vm)$  around $\vec{\mu}=0$, and  then bounding its value on the set $\{ \vm: ||\vm||_2 \leq n^{-1/2} \}$, now gives: 
\begin{align*}
      n \cdot f(\hat{\vm}_{|n}, \vm)
    =  & 
 - n \vec{\mu}^\top  J(\vec{\mu}') \cdot (\vec{\mu}'- \hvm_{|n}) 
\\ 
= &  - n \vec{\mu}^\top  J(\vec{\mu}') c\vec{\mu} + n \vec{\mu}^\top  J(\vec{\mu}') \cdot 
\hvm_{|n}
\leq
n \vec{\mu}^\top  J(\vec{\mu}') \cdot 
\hvm_{|n}
\\
= &  O( n \vec{\mu}^\top  \cdot 
\hvm_{|n}) = 
O(n \cdot  \| \vec{\mu}\|_2 \cdot \| \hvm_{|n} \|_2)
= O( \sqrt{n} \cdot \| \hvm_{|n}\|_2),
\end{align*}
for $\vec{\mu}'= c \vec{\mu}$, for some $c\in [0,1]$.
Here we used that the maximum eigenvalue of $J(\vm')$ is bounded away from $0$ since $\vm$ is inside a compact set in the interior of the parameter space, and we used Cauchy-Schwartz. 

This gives that (\ref{eq:maxC}) is $O (\sqrt{n} \cdot \| \hvm_{|n} \|_2 + (d/2) \log n)$.
On the set with $\| \hvm_{|n} \|_2 \geq n^{\alpha -1/2}$, we can weaken this 
to give that (\ref{eq:maxC}) is $O ({n} \cdot \| \hvm_{|n} \|^2_2)$.
We can now bound this further using (\ref{eq:mlesquare}) from Lemma~\ref{lem:KLbounds}, and the result follows.

\ownparagraph{Proof of (\ref{eq:bayesboundb})}
Let $\cA_n$ denote the event that $\| \hat{\vm}_{|n}- \vm^*\|_2 \geq n^{\alpha-1/2}$. We have, with $W= W_{0,(n)}$: 
\begin{align}\label{eq:louiseb}
   {\mathbb E}_R\left[ {\bf 1}_{\cA_n} \cdot \log \frac{p_W(U^n)}{p_{\vm^*}(U^n)}\right] = {\mathbb E}_R\left[ {\bf 1}_{\cA_n} \cdot \log \frac{p_W(U^n)}{r(U^n)}\right]+ {\mathbb E}_R\left[ {\bf 1}_{\cA_n} \cdot \log \frac{r(U^n)}{p_{\vm^*}(U^n)}\right].
\end{align}
By Lemma~\ref{lemma:Using Markov Inequality} we have $R(\cA_n) = O(n^{(1/2 - \alpha)m}\cdot n^{-(m+1)/2} = O(n^{-\alpha m - 1/2})$, which we use to further bound the first term  by
\begin{align*}
& R(\cA_n) \cdot  {\mathbb E}_R\left[ {\bf 1}_{\cA_n} \cdot \log \frac{p_W(U^n\mid \cA_n)}{r(U^n \mid \cA_n)} \mid \cA_n \right] + R(\cA_n) \log \frac{P_W(\cA_n)}{R(\cA_n)}
\leq \\ & R(\cA_n) \log (1/R(\cA_n)) = O(n^{- \alpha m}),
\end{align*}
where the penultimate inequality follows by Jensen and the fact that $P_W(\cA_n) \leq 1$.
The second term in (\ref{eq:louiseb}) can be bounded by H\"older's inequality, for arbitrary $\epsilon > 0$, as
\begin{align*}
   (R(\cA_n))^{\epsilon/(1+\epsilon)} \cdot  
  \left(  {\mathbb E}_R\left( \log \frac{r(U^n)}{p_{\vm^*}(U^n)} \right)^{1+\epsilon}  \right)^{1/(1+\epsilon)} = O(n^{(-\alpha m - 1/2) \epsilon}),
\end{align*}
and it can be seen that both terms in (\ref{eq:louiseb}) go to $0$ with increasing $n$. The result follows. 

\ownparagraph{Proof of (\ref{eq:mleoutrightB})}
Fix $A> 0$ as in Condition~\ref{cond:plugin} (`{\bf plug-in}') with $P$ in the role of $Q$. Define $ {\cal E}_j =  \{\vec{\mu} \in \bar{\meanspace}_p: \| \vec{\mu} \|_2 \in [A + j-1, A + j]\}$.
With $\alpha= n_0/(n+n_0)$, we have:
\begin{align*}
& {\mathbb E}_R\left[{\bf 1}_{\| \hat{\vec{\mu}}_{|n} - \vec{\mu}^*\|_2 \geq A}
\cdot  D( P_{\vec\mu^*} \| P_{\breve{\vec{\mu}}_{|n}})
\right] 
        \leq
        \sum_{j \in \naturals} R( \hat{\vec{\mu}}_{|n} \in {\cal E}_j ) \max\limits_{\hat{\vec{\mu}}_{|n} \in {\cal E}_j} D( P_{\vec\mu^*} \| P_{(1-\alpha) \hat{\vec{\mu}}_{|n} + \alpha x_0})
        \\
        \overset{\text{(a)}}{=}  & \sum_{j \in \naturals} R( \hat{\vec{\mu}}_{|n} \in {\cal E}_j ) O\left( (A+j)^{m-s} \right)
             {\leq}   \sum_{j \in \naturals} 
        R\left(\| \hat{\vec{\mu}}_{|n}  \|_2 \geq A+ j-1 ) \right)
         \cdot O\left( (A+j)^{m-s} \right) \\
         \overset{\text{(b)}}{=} &  \sum_{j \in \naturals}  O \left(n^{- \lceil m/2 \rceil} \cdot \left( A+ j-1 \right)^{-m} \right) \cdot \left(  A + j   \right )^{m-s} 
       =   O(n^{ - \lceil m/2 \rceil}) \cdot   \sum_{j \in \naturals} j^{-s   } \overset{\text{(c)}}{=}  O(n^{ -  m/2 - 1/2}),
    \end{align*}
      where (a) follows from Part (c) of Condition~\ref{cond:plugin}, (b) follows from Lemma~\ref{lemma:Using Markov Inequality}, and (c) follows from the assumption $s> 1$ in Part (c) of Condition~\ref{cond:plugin}, and we use that $m$ is odd. 
    It follows that
    \begin{equation}\label{eq:inbetween}
{\mathbb E}_R\left[{\bf 1}_{\| \hat{\vec{\mu}}_{|n} - \vec{\mu}^*\|_2 \geq A}
\cdot  D(
P_{\vec\mu^*} \| P_{\breve{\vec{\mu}}_{|n}}
)
\right]  = O(n^{ -  m/2 - 1/2}).
\end{equation}
We now show how (\ref{eq:inbetween}) implies the result. A  second order Taylor approximation gives:
    \begin{align*}
        & 
        {\mathbb E}_R \left[{\bf 1}_{\| \hvm_{|n} \|_2 \geq n^{\alpha - 1/2}}  \cdot 
 n  D(
 P_{\vec\mu^*} \| P_{\breve{\vec{\mu}}_{|n}}
 )     \right] \\
\leq
&  {\mathbb E}_R \left[{\bf 1}_{n^{\alpha - 1/2} \leq \| \hvm_{|n} \|_2  \leq A}  \left( 
        \frac{1}{2} n \sup_{\vec{\mu}: \| \vec{\mu}\|_2 \leq A} \left( \left(\frac{n\hvm_{|n} + n_0 x_0}{n+n_0}\right)^\top  
        J(\vec{\mu}) \left(\frac{n {\mu} + n_0 x_0}{n+n_0}\right)\right)       \right)\right] + \\
        &
{\mathbb E}_R \left[{\bf 1}_{\| {\vec{\mu}} \|_2 > A} 
\cdot  n D(
 P_{\vec\mu^*} \| P_{\breve{\vec{\mu}}_{|n}}
 )     
\right].
\end{align*}
The second term is, from the above, $O(n^{1/2-m/2 })$. 
The first term can be bounded further, using that the supremum over $\vm$ is on a bounded set in the interior of the parameter space, so that the eigenvalues of $J(\mu)$ are all bounded on this set, and using  again Lemma~\ref{lemma:Using Markov Inequality}, 
by 
\begin{multline*}
O\left((n \cdot  A^2 R(n^{\alpha - 1/2} \leq \| \hvm_{|n} \|_2  \leq A
)\right)= O\left( n \cdot  R( \| \hvm_{|n} \|_2  \geq 
n^{\alpha - 1/2} )\right) \\
=
O(n\cdot n^{- \lceil m/2 \rceil} n^{-m (\alpha-1/2)}) = O(n^{-m \alpha + 1/2}).
\end{multline*}
The result follows.

\ownparagraph{Proof of (\ref{eq:KLboundsLeft}) and (\ref{eq:KLboundsRight})}
We first prove (\ref{eq:KLboundsLeft}). 
Combining  Proposition~\ref{prop:frombrevetohat} with the first inequality in Proposition~\ref{prop:thirdorderTaylor} (in which  we bound $|\hat{\vm}_i|$ by $n^{\alpha -1/2}$), we find that, absorbing $d$-factors into the $O(\cdot)$-notation:
\begin{align*}
&    {\mathbb E}\left[{\bf 1}_{ \| \hat{\vm}_{|n} \|_2 < n^{\alpha - 1/2} } 
    \cdot D(P_{\bf 0} \| P_{\breve{\vm}_{|n}})
    \right] =
 {\mathbb E}\left[{\bf 1}_{ \| \hat{\vm}_{|n} \|_2 < n^{\alpha - 1/2} } 
    \cdot D(P_{\bf 0} \| P_{\hat{\vm}_{|n}}) + O\left(n^{\alpha - 3/2}\right)
    \right]  =  \\
   & {\mathbb E}\left[{\bf 1}_{ \| \hat{\vm}_{|n} \|_2 < n^{\alpha - 1/2} } 
    \cdot \left( \frac{1}{2} \vm^{\top} J \vm+ O\left( n^{\alpha -1/2} \sum_{j,k \in [d]} |\hat{\vm}_j | |\hat{\vm}_k| \right) \right) + O\left(n^{\alpha - 3/2}\right)
   \right] = \\& 
   {\mathbb E}\left[{\bf 1}_{ \| \hat{\vm}_{|n} \|_2 < n^{\alpha - 1/2} } 
    \cdot \frac{1}{2} \vm^{\top} J \vm+ \left( O\left( n^{\alpha -1/2} \max_{k \in [d]} |\hat{\vm}_k |^2 \right)  \right)+ O\left(n^{\alpha - 3/2}\right)
   \right] =  \\& {\mathbb E}\left[{\bf 1}_{ \| \hat{\vm}_{|n} \|_2 < n^{\alpha - 1/2} } 
    \cdot \frac{1}{2} \vm^{\top} J \vm+  O\left( n^{\alpha -3/2} \right) \right], 
\end{align*}
where in the final equality we used Lemma~\ref{lemma:lceil_rceil}. This proves (\ref{eq:KLboundsLeft}). The proof of (\ref{eq:KLboundsRight}) is analogous (now only using Proposition~\ref{prop:thirdorderTaylor}, with the components of $D(\cdot \| \cdot)$ interchanged, and without the need to `match' $\breve{\vm}_{|n}$ with $\hat{\vm}_{|n}$); we omit further details.

\subsection{Proof for COND part}
\label{app:condproof}
Throughout the proof, we adopt the notations of (\ref{eq:notations}). 

We write $p^{\circ}_{\vm^*}$ and  $q^{\circ}$ as the densities of $Z = n^{1/2} \hat{\vm}_{|n}$ under $P_{\vm^*}$ and $Q=Q_{\vm^*}$, respectively. 
\begin{lemma}\label{lem:condsimpleH1}
    Under the regularity Condition~\ref{cond:cond} (`\ccond'), we have:
    \begin{equation}\label{neq:Lower_bound_CLT}
        {\mathbb E}_R \left[\log \frac{p^{\circ}_{\vec{\mu}^*}( \sqrt{n} \hvm_{|n})}{q^{\circ}( \sqrt{n} \hvm_{|n})}
\right] \geq - D_{\Sigma_r}(\Sigma_q(\vm^*) \| \Sigma_p(\vm^*)  )
+ o(1). 
    \end{equation}
\end{lemma}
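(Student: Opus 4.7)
Starting from the identity $\log S_{Q,\cond}^{(n)} = \log q(U^n)/p_{\vec\mu^*}(U^n) + \log p^{\circ}_{\vec\mu^*}(Z)/q^{\circ}(Z)$, taking expectation under $R$ gives ${\mathbb E}_R[\log S_{Q,\cond}^{(n)}] = n D_R(Q \| P_{\vec\mu^*}) + {\mathbb E}_R[\log p^{\circ}_{\vec\mu^*}(Z)/q^{\circ}(Z)]$, so the lemma is exactly what is needed to finish the proof of the \textbf{COND} part of Theorem~\ref{thm:simpleH1general}. The plan is to split the expectation into a ``good set'' $\mathcal{G}_n := \{\|\hat{\vec\mu}_{|n} - \vec\mu^*\|_2 \le n^{\delta - 1/2}\}$ for a small fixed $\delta \in (0, 1/2)$, and its complement.

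On $\mathcal{G}_n$ I would apply the multivariate local central limit theorem of Bhattacharya and Rao \cite{BhattacharyaR76} to both $p^{\circ}_{\vec\mu^*}$ and $q^{\circ}$. Condition~\ref{cond:cond} (lattice or bounded continuous density, plus MGF hence all moments of $X$ under both $P_{\vec\mu^*}$ and $Q$) is precisely what is needed to invoke it. This yields, uniformly for $\|w\|_2 \le n^{\delta}$,
\begin{equation*}
  p^{\circ}_{\vec\mu^*}(\sqrt{n}\vec\mu^* + w) = \phi_{\Sigma_p(\vec\mu^*)}(w)(1+o(1)), \qquad q^{\circ}(\sqrt{n}\vec\mu^* + w) = \phi_{\Sigma_q}(w)(1+o(1)),
\end{equation*}
where $\phi_\Sigma$ is the centered Gaussian density with covariance $\Sigma$. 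Writing out $\log(\phi_{\Sigma_p}/\phi_{\Sigma_q})$ as an explicit quadratic form in $w = \sqrt{n}(\hat{\vec\mu}_{|n} - \vec\mu^*)$ and taking expectation under $R$ reduces the problem to computing ${\mathbb E}_R[n (\hat{\vec\mu}_{|n} - \vec\mu^*)^\top (\Sigma_p^{-1}(\vec\mu^*) - \Sigma_q^{-1})(\hat{\vec\mu}_{|n} - \vec\mu^*)]$. Using (\ref{eq:KLboundsSquare}) to evaluate the full-expectation version and (\ref{eq:mlesquare}) to discard the part of this expectation outside $\mathcal{G}_n$, the good-set contribution reduces exactly to $\tfrac{1}{2}\log\det(\Sigma_q\Sigma_p^{-1}(\vec\mu^*)) - \tfrac{1}{2}(d_{rp}(\vec\mu^*) - d_{rq}(\vec\mu^*)) + o(1) = -D_{\Sigma_r}(\Sigma_q \| \Sigma_p(\vec\mu^*)) + o(1)$ by the definition in (\ref{eq:triplegauss}).

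On the complement $\mathcal{G}_n^c$ the existence of an MGF for $X$ under $R$ gives a Cram\'er-type bound $R(\mathcal{G}_n^c) \le \exp(-c n^{2\delta})$, super-polynomially small in $n$. To absorb this against a potentially large log ratio one needs a polynomial (in $n$ and $\|\hat{\vec\mu}_{|n} - \vec\mu^*\|_2$) bound on $|\log p^{\circ}_{\vec\mu^*}(Z)/q^{\circ}(Z)|$. This is where Condition~\ref{cond:cond}, in particular (\ref{eq:thisisweak}), enters: a convenient way to exploit it is the tilting identity $q^{\circ}(z)/p^{\circ}_{\vec\mu^*}(z) = {\mathbb E}_{P_{\vec\mu^*} \mid Z = z}[\prod_i (q/p_{\vec\mu^*})(U_i)]$, since (\ref{eq:thisisweak}) directly controls the per-sample ratio $p_{\vec\mu^*}^{[x]}/q^{[x]}$ inside balls around $\vec\mu^*$, and the MGF assumption keeps all $\|X_i - \vec\mu^*\|_2$ under control via a union bound whose probability is again only exponentially small. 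Multiplying the resulting sub-exponential bound by $R(\mathcal{G}_n^c)$ yields a tail contribution that is $o(1)$.

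The main obstacle, as I anticipate, is precisely this last step: turning the pointwise bound (\ref{eq:thisisweak}) on the per-outcome likelihood ratio into a usable tail bound on the ratio of the marginal densities of the sufficient statistic $Z$. A clean way to proceed is a two-level truncation within $\mathcal{G}_n^c$: separate the event that some $\|X_i - \vec\mu^*\|_2$ is atypically large --- whose probability is still exponentially small by the MGF assumption and therefore absorbs any sub-exponential growth of the log ratio --- from the complementary event on which all $X_i$ are moderate, where (\ref{eq:thisisweak}) applied through the tilting identity provides at most a polynomial bound. Once this tail lemma is in place, the argument reduces to routine bookkeeping analogous to the proof of Lemma~\ref{lem:KLbounds} in Appendix~\ref{app:KLboundsproof}.
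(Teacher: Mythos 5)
Your overall decomposition (local CLT on a good set, tilting-plus-polynomial-bound on the tail) is the same structure as the paper's proof in Appendix B.4, and your tilting identity $q^{\circ}(z)/p^{\circ}_{\vm^*}(z) = {\mathbb E}_{P_{\vm^*}|Z=z}[\prod_i (q/p_{\vm^*})(U_i)]$ is a clean (and arguably more transparent) version of the paper's limiting argument in (\ref{eq:epsilontozero}). The paper also performs a two-level truncation on the tail (its sets $\bar\cA_n \cap \cB_n$ versus $\bar\cA_n \cap \bar\cB_n$, corresponding to your ``moderate versus atypically large $X_i$'' split), so that part of your plan also matches.

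The genuine gap is your choice of good set. You take $\mathcal{G}_n = \{\|\hat{\vm}_{|n} - \vm^*\|_2 \le n^{\delta - 1/2}\}$ for a \emph{fixed} $\delta \in (0, 1/2)$, which in the $w = \sqrt n(\hat\vm_{|n} - \vm^*)$ scale is $\|w\|_2 \le n^{\delta}$, and you assert that the Bhattacharya--Rao expansion yields $p^{\circ}_{\vm^*}(\sqrt n \vm^* + w) = \phi_{\Sigma_p}(w)(1+o(1))$ uniformly there. That is false. The Edgeworth-type expansion (\ref{eq:bhatta}) from \cite{BhattacharyaR76} has an \emph{absolute} error $o(n^{-t/2})$; divided by $\phi_\Sigma(w)$ it becomes $o(n^{-t/2}/\phi_\Sigma(w))$, which on $\|w\|_2 \le n^{\delta}$ can be as large as $n^{-t/2}\exp(c n^{2\delta})$ --- diverging for any fixed $t$ and $\delta>0$. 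The paper therefore restricts the local-CLT region to $\cA_n = \{\|\hat\vm_{|n} - \vm^*\|_2 \le \sqrt{A\log n/n}\}$, i.e.\ $\|w\|_2 \le \sqrt{A \log n}$, on which $1/\phi_\Sigma(w) = O(n^{cA})$, and then chooses the expansion order $t$ large enough (depending on $A$ and the extreme eigenvalues of $\Sigma_p^{-1}, \Sigma_q^{-1}$) to kill the remainder. The price is that the bad set is only \emph{polynomially} small, $R(\bar\cA_n) = O(n^{-B})$ rather than your $\exp(-c n^{2\delta})$ --- but since $A$ and hence $B$ can be taken as large as you like, the tail bound still closes against the polynomial log-ratio growth from (\ref{eq:thisisweak}). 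In short: you traded a sharper tail probability for a good set on which your central approximation breaks down. Replacing your $n^{\delta-1/2}$ radius with $\sqrt{A\log n/n}$ and adjusting the Cram\'er step accordingly repairs the argument and recovers the paper's proof.
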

\begin{proof}
For the continuous case ($X$ has Lebesgue density), the proof is based on the following immediate corollary of Theorem 19.2 of Bhattacharya and Rao \cite{BhattacharyaR76} (page 192): consider an  i.i.d. sequence of random vectors $X, X_1, X_2, \ldots$, with $X \sim \Psi$ with ${\mathbb E}_{\Psi}[X] = {\bf 0}$ where $\Psi$ has  bounded continuous (Lebesgue) density  and a moment generating function (in particular, all moments of each component of $X$ exist). Then we have for $\psi$ the density of $n^{-1/2} (\sum_{i=1}^n X_i)$ that, for all integers $t \geq 1$, 
uniformly for all $y \in \reals^d$,  
\begin{align}\label{eq:bhatta}
1 + \sum_{j=1}^{t} n^{-j/2} f_j(y) -  o\left(\frac{n^{-t/2}}{\phi_\Sigma(y)} \right)  \leq     \frac{\psi(y)}{\phi_\Sigma(y)} \leq 1 + \sum_{j=1}^t n^{-j/2} f_j(y) + o\left(\frac{n^{-t/2}}{\phi_\Sigma(y)} \right),  
\end{align}
\commentout{
\begin{align}\label{eq:bhatta}
1 +  n^{-1/2} f(y) -  o\left(\frac{n^{-1/2}}{\phi_\Sigma(y)} \right)  \leq     \frac{\psi(y)}{\phi_\Sigma(y)} \leq 1 + n^{-1/2} f(y) + o\left(\frac{n^{-1/2}}{\phi_\Sigma(y)} \right),  
\end{align}}
where $\phi_{\Sigma}$ is the density of a normal distribution with mean $0$ and covariance matrix $\Sigma$ and the
$f_j: \reals^d \rightarrow \reals$ are specific  $3j$-degree 
degree polynomials in the components of $y$ (to see how this follows using Theorem 19.2. of \cite[page 192]{BhattacharyaR76} note that what we call $t$  is, in their notation, $s-2$,  
note that their $P_0(- \phi(0,V): \{ \chi_v\}) = \phi_{0,V}$ is the density of a normal distribution with mean ${\bf 0}$ and covariance matrix $V$, and their 
$P_1(- \phi(0,V): \{ \chi_v\})(y) = f(x) \phi_{0,V}$ \cite[page 55--56]{BhattacharyaR76}); their precondition on the characteristic function is equivalent to our condition of bounded Lebesgue density by their Theorem 19.1 and their precondition on moments being finite holds automatically because we assume $Q$ has a moment generating function.  

Below we only give the proof for the continuous case, based on (\ref{eq:bhatta}); the proof for the discrete (lattice) case goes in exactly the same way, but now we use Theorem 22.1 of \cite{BhattacharyaR76}, which is their analogue of  Theorem 19.2 for the discrete, lattice case; we omit details and continue with the continuous case.  

We again adopt the notations (\ref{eq:notations}), i.e. we set $\vm^* = {\bf 0}$ 
and we write $\Sigma_p := \Sigma_p(\vm^*) = \Sigma_p({\bf 0})$.
For fixed $A > 0$, we define 
\begin{align*}
    \cA_n & = \left\{ x^n: \| \hvm_{|n} \|_2 \leq \sqrt{\frac{A \log n }{n}} \right\} \\
    \cB_n & = \{ x^n: \max\{ \|X_i\|_2: i \in [n] \}  \leq n \} \\
    \bar{\cB}_{n,j} & = \{ x^n:  n+ j -1< \max\{ \|X_i \|: i \in [n] \}  \leq n +j  \}     , 
\end{align*}
and $\bar{\cA}_n,\bar{\cB}_n$ their respective complements. 
The expectation in (\ref{neq:Lower_bound_CLT})  can be rewritten as
\begin{align}\label{eq:threeterms}
{\mathbb E}_R \left[{\bf 1}_{\cA_n} \log \frac{p^{\circ}_{\vec{\mu}^*}( \sqrt{n} \hvm_{|n})}{q^{\circ}( \sqrt{n} \hvm_{|n})}
\right]+ {\mathbb E}_R \left[{\bf 1}_{\bar{\cA}_n \cap \cB_n}\log \frac{p^{\circ}_{\vec{\mu}^*}( \sqrt{n} \hvm_{|n})}{q^{\circ}( \sqrt{n} \hvm_{|n})} 
\right] + {\mathbb E}_R \left[{\bf 1}_{\bar{\cA}_n \cap \bar{\cB}_n}\log \frac{p^{\circ}_{\vec{\mu}^*}( \sqrt{n} \hvm_{|n})}{q^{\circ}( \sqrt{n} \hvm_{|n})} 
\right].
\end{align}
We now use the local central limit theorem 
with expansion as in (\ref{eq:bhatta}) above to analyze (minus) the first term.
Since $\Sigma_\nuli$ is positive definite, we find
$$
1/\phi_{\Sigma_\nuli}(y) = O\left(
\exp\left(\frac{1}{2} \|y\|^2_2 \lambda_{\max, p }\right)
\right),
$$
where $\lambda_{\max,p}$ is the maximum eigenvalue of $\Sigma_p^{-1}$.  
Setting $y= n^{-1/2} \sum_{i=1}^n x_i$ 
and plugging in $\| y\| \leq \sqrt{A \log n}$
which will hold on the set ${\cal A}_n$ (note scaling takes care of $1/\sqrt{n}$)  we find that 
$
\frac{n^{-t/2}}{\phi_{\Sigma_\nuli}(y)} = o(1)   
$
if $t \geq A \lambda_{\max,p}$. The same derivation holds with $\Sigma_q$ instead of $\Sigma_p$, so we get 
$$
\frac{n^{-t/2}}{\phi_{\Sigma_\nuli}(y)}  = o(1), 
\frac{n^{-t/2}}{\phi_{\Sigma_\alti}(y)}  = o(1),
$$
if $t \geq A \cdot  \max \{ \lambda_{\max,p},\lambda_{\max,q} \}$.
We then find that the remainder terms $n^{-t/2} f_j(y)$ are all $O( (\log n)^{3j}/n)^{j/2})= o(1)$ on all $y$ corresponding to $x^n$ in $\cA_n$. 
We can thus set  $\psi$ to $p^{\circ}$ and use the left inequality in (\ref{eq:bhatta}) with this $t$ and then set $\psi$ to $q^{\circ}$ and using the right inequality with the same $t$ 
to get
\begin{align}\label{eq:firsttermahoy}
{\mathbb E}_R \left[{\bf 1}_{\cA_n} \log \frac{q^{\circ}( \sqrt{n} \hvm_{|n})}{p^{\circ}_{\vec{\mu}^*}( \sqrt{n} \hvm_{|n})}
\right]
\leq&  {\mathbb E}_R \left[{\bf 1}_{\cA_n} \log \frac{\phi_{\Sigma_\alti}(\sqrt{n} \hvm_{|n})(1 + o(1))}{\phi_{\Sigma_\nuli} (\sqrt{n} \hvm_{|n}) (1- o(1))} \right]\nonumber\\
=& 
 D_{\Sigma_r}(\Sigma_q(\vm^*) \| \Sigma_p(\vm^*)  )
 + o(1), 
\end{align}
where the final equality follows because $Q(\cA_n) \rightarrow 1$, which follows because we assume that $Q$ has a moment generating function.  This deals with the first term in (\ref{eq:threeterms}).

Now consider the second term in (\ref{eq:threeterms}). 
Since we assume $R$ has a moment generating function, we can with some work (details omitted) employ the Cram\`er-Chernoff method to get that
\commentout{
Fix $B > 0$ and fix $C > 0$ such that the hypercube $[-C,C]^d$ is contained in the mean-value parameter space. Such a $C$ exists because the mean-value parameter space $\meanspace_p$ is open. 
Consider the $j$ components $\hat{\vec{\mu}}_j$, $j \in [d]$, separately.  By the Cramer-Chernoff method, we have for each $j$, $n$ (where in the notation we omit the dependence of $\hat{\vec{\mu}}, \vec{\mu}^{\circ}$ on $n$), 
\begin{align*}
& R( \| \hat{\vec{\mu}}_j \|^2 \geq c_n, \hat{\mu}_j > 0) 
= R( \frac{1}{2} I(\vmc_j) \| \hat{\vec{\mu}}_j \|^2 \geq c_n
\frac{1}{2} I(\vmc_j), \hat{\mu}_j > 0)  = \\
& R( D( \hat{\vec{\mu}}_j \| 0) \geq c_n
\frac{1}{2} I(\vmc_j), \hat{\mu}_j > 0) \leq  \exp\left(
- n  c_n
\frac{1}{2} I(\vmc_j)
\right) \leq \exp\left(
- n  c_n
\frac{1}{2} I_{n,j}
\right)  \\
& = O\left( n^{- B}
\right),
\end{align*}
where we set $c_n = \min \{C,2 (\log n^B)/(n I_{j})\} $ $I(\vmc_j)$ is the Fisher information at $\vmc_j$ of the 1-dimensional family TOD and $I_{j} = \min \{I(\vm_j): \| \vm_j\| \leq C \}$. 

Repeating the reasoning $2d$ times, once for each $j \in [d]$ and for the case $\hvm_j >0$ and $\hvm_j < 0$, and  using the union bound we find that, for all $A > 0$, all $B > 0$,
for all sufficiently large $n$,
$$
R \left( \| \hat{\vec{\mu}}_{|n} \|^2 \geq 
\left( \frac{2 B}{n \underline{I}} \cdot \log n \right)
\right) = O(n^{-B}).  
$$
where $\underline{I} = \min \{I_1, \ldots, I_d \}$.
}
for all $B> 0$, if we take $A$ sufficiently large, we get: 
\begin{equation}\label{eq:friday}
R \left( \| \hat{\vec{\mu}}_{|n} \|_2 \geq 
 \sqrt{\frac{A \log n}{n}} \right) = O(n^{-B}).  
\end{equation}
Now fix $B= a+2$, with $a$ the exponent in Condition~\ref{cond:cond},  and choose $A$ large enough so that (\ref{eq:friday}) holds and then $t$ large enough so that (\ref{eq:firsttermahoy}) also holds.  
For the second term we then get, 
\begin{align}\label{eq:second_term}
  &  {\mathbb E}_R \left[{\bf 1}_{\bar{\cA}_n \cap \cB_n}\log \frac{p^{\circ}_{\vec{\mu}^*}( \sqrt{n} \hvm_{|n})}{q^{\circ}( \sqrt{n} \hvm_{|n})} 
\right] = O(n^{-a-2}) \cdot \sup_{x^n\in \bar{\cA}_n \cap \cB_n } \log \frac{p_{\vm^*}(x^n)}{q(x^n)} = 
\\
=& O(n^{-a-2}) \cdot \sup_{x^n\in \bar{\cA}_n \cap \cB_n } \sum_{i=1}^n \log \frac{p_{\vm^*}(x_i)}{q(x_i)} 
\leq O(n^{-a-2}) \cdot n \cdot  \sup_{x^n\in \cB_n }
\max_{i\in [n]} \log \frac{p_{\vm^*}(x_i)}{q(x_i)} 
\nonumber \\
=& \nonumber
O(n^{-1}),
\end{align}
where we in the penultimate equality we used $x^n \in {\cal B}_n$ and Condition~\ref{cond:cond}. The first equality follows because with ${\cal G}_{\epsilon}(V^n) = \{ x^n: | \sqrt{n} ( \hat{\vm}_{|v^n} - \hat{\vm}_{|x^n}) | \leq \epsilon \}$, we have: 
\begin{align}\label{eq:epsilontozero}
 &   \sup_{v^n\in \bar{\cA}_n \cap \cB_n } 
    \frac{p^{\circ}_{\vec{\mu}^*}( \sqrt{n} \hat{\vec{\mu}}_{|v^n})}{q^{\circ}( \sqrt{n} \hat{\vec{\mu}}_{|v^n})}  = 
\sup_{v^n\in \bar{\cA}_n \cap \cB_n } 
    \lim_{\epsilon \downarrow 0} 
    \frac{ \epsilon^{-1}  \int_{{\cal G}_{\epsilon}(v^n)} p_{\vec{\mu}^*}(x^n ) d x^n}{
    \epsilon^{-1} \int_{{\cal G}_{\epsilon}(v^n)} q(x^n )
    d x^n} \\
& =   \sup_{v^n\in \bar{\cA}_n \cap \cB_n } 
    \lim_{\epsilon \downarrow 0}  
     \frac{  \int_{{\cal G}_{\epsilon}(v^n)} p_{\vec{\mu}^*}(x^n ) d x^n}{
   \int_{{\cal G}_{\epsilon}(v^n)} q(x^n )
    d x^n}  = \sup_{v^n\in \bar{\cA}_n \cap \cB_n } 
    \frac{p_{\vm^*}(v^n)}{q(v^n)},\nonumber
\end{align}
where the final inequality follows by the assumed continuity of $p_{\vm^*}$ and $q$. 

(\ref{eq:second_term}) gives the second term. For the third term we get, with again $a$ the exponent in Condition~\ref{cond:cond} and using, in the first inequality below, analogously to (\ref{eq:epsilontozero}):
\begin{align*}
   &  {\mathbb E}_R \left[{\bf 1}_{\bar{\cA}_n \cap \bar{\cB}_n}\log \frac{p^{\circ}_{\vec{\mu}^*}( \sqrt{n} \hvm_{|n})}{q^{\circ}( \sqrt{n} \hvm_{|n})} 
\right] \leq 
\sum_{j \in \naturals} R(\bar{\cA}_n, \bar{\cB}_{n,j}) \cdot n \cdot (n +j)^a  \nonumber \\
\leq 
 & \sum_{j \in \naturals} R(\exists i \in [n]: |X_i| \geq n+j-1) \cdot n \cdot (n +j)^a  
\leq
 n^2 \sum R(|X| \geq n + j-1) (n +j)^a 
\\ \leq & 2 n^{2} \min_k {\mathbb E}_R[|X|^k]\sum_{j \in \naturals} (n + j-1)^{-k}  (n+j)^a    \\
\leq &
2 \min_k {\mathbb E}_R[|X|^k]\sum_{j \in \naturals} (n + j-1)^{-k}  (n+j)^{2+a} = O  
\left( n^{-1} \right)
\end{align*}
%
where the penultimate equality follows by the fact that we can employ Markov's inequality with $X^k$, for all $k$, and the final equality is obtained by setting $k=a+4$. This proves (\ref{neq:Lower_bound_CLT}) and hence (\ref{eq:gencond}). The statement right below (\ref{eq:gencond}) follows by the fact that the role of $p_{\vm^*}$ and $q$ becomes completely symmetric if Condition~\ref{cond:cond} (`\ccond') holds with $q$ and $p_{\vm^*}$ interchanged.  
\end{proof}

\subsection{Proof of Lemma~\ref{prop:difficult}, underlying Anti-Simple case part of Theorem~\ref{thm:simpleH1general}}
\label{app:difficult}

\begin{proof}
We use the notations as summarized in (\ref{eq:notations}). 
Fix $0 < \alpha < 1/6$. Without loss of generality, let $\vec{\mu}^*=0$. The integral below is over $\meanspace_p$. 
Since $n$ remains fixed throughout the proof, we abbreviate $W_{0,(n)}$ to $W$ and $w_{0,(n)}$ to $w$. Let $S_{Q,\rip'}^{(n)} = \frac{q(\sn{U}{n})}{p_{W}(\sn{U}{n})}$.
It is easy to get that
\begin{align*}
{\mathbb E}_{\sn{U}{n} \sim Q} [\log S_{Q,\rip'}^{(n)}]
= nD(Q||P_{\vec{\mu}^*})
+
{\mathbb E}_{\sn{U}{n} \sim Q} \left[\log \frac{p_{\vec{\mu}^*}(\sn{U}{n})}{\int w(\vec{\mu}) p_{\vec{\mu}}(\sn{U}{n}) d\vec{\mu} } \right],
\end{align*}
so we just need to focus on the second term.
By (\ref{eq:bayesbound}) in Lemma~\ref{lem:KLbounds},
 we can write it as:
\begin{align}\label{eq:fasttrain}
& {\mathbb E}_{ Q} \left[\log \frac{p_{\vec{\mu}^*}(\sn{U}{n})}{\int w(\vec{\mu}) p_{\vec{\mu}}(\sn{U}{n}) d\vec{\mu} } \right]
= 
{\mathbb E}_{Q} \left[{\bf 1}_{\| \hvm_{|n}\|_2 < n^{\alpha -1/2} } \cdot \log \frac{p_{\vec{\mu}^*}(\sn{U}{n})}{\int w_n(\vec{\mu}) p_{\vec{\mu}}(\sn{U}{n}) d\vec{\mu} } \right] + o(1)
= \\ \nonumber 
& {\mathbb E}_{Q} \left[{\bf 1}_{\| \hvm_{|n}\|_2 < n^{\alpha -1/2} } \cdot \log \frac{p_{\vec{\mu}^*}(\sn{U}{n})}{p_{\hvm_{|n}}(\sn{U}{n})} \right] 
+
{\mathbb E}_{Q} \left[{\bf 1}_{\| \hvm_{|n}\|_2 < n^{\alpha -1/2} } \cdot \log \frac{p_{\hvm_{|n}}(\sn{U}{n})}{\int w(\vec{\mu}) p_{\vec{\mu}}(\sn{U}{n}) d\vec{\mu} } \right]
+ o(1) = \\ & \nonumber
-\frac{1}{2} \textsc{tr}\left(\Sigma_\alti \Sigma_\nuli^{-1} \right)
 + {\mathbb E}_{Q} \left[{\bf 1}_{\| \hvm_{|n}\|_2 < n^{\alpha -1/2} } \cdot \log \frac{p_{\hvm_{|n}}(\sn{U}{n})}{\int w(\vec{\mu}) p_{\vec{\mu}}(\sn{U}{n}) d\vec{\mu} } \right]
+
o(1),
\end{align}
where the final equality is obtained by Lemma \ref{lem:KLbounds}, (\ref{eq:KLboundsLeft}). 
So we just need to upper bound the expectation in (\ref{eq:fasttrain}). 
For this, we fix any  $\beta$ with $\alpha < \beta < 1/6$ and note that, uniformly for all $U^n \in \cU^n$ with $\|\hvm\|_2 \leq n^{\beta - 1/2}$, we have:
\begin{align}
&   -  \log \int w(\vec{\mu}) \frac{p_{\vec{\mu}}(\sn{U}{n}) }{p_{\hvm_{|n}}(\sn{U}{n})} d\vec{\mu}  =  -  \log \int w(\vec{\mu}) \exp\left( -n D(P_{\hat{\vm}_{|n}} \| P_{\vm}) \right)  d\vec{\mu}  \leq \nonumber \\
&  -  \log \int_{\vm: \| \vm \|_2 \leq n^{\beta - 1/2}} w(\vec{\mu}) \exp\left( -n D(P_{\hat{\vm}_{|n}} \| P_{\vm}) \right)  d\vec{\mu}  \leq \nonumber \\
& -  \log \int_{\vm: \| \vm \|_2 \leq n^{\beta - 1/2}}  w(\vec{\mu}) \exp\left( -\frac{n}{2} (\vm - \hvm_{|n})^{\top} J (\vm - \hvm_{|n}) 
+  O\left(n \cdot n^{3 \beta -3/2}\right) \right)  d\vec{\mu} \leq 
\nonumber \\ \nonumber 
& -  \log \int_{\vm: \| \vm \|_2 \leq n^{\beta- 1/2}} w(\vec{\mu}) \exp\left( -\frac{n}{2} (\vm - \hvm_{|n})^{\top} J (\vm - \hvm_{|n})  \right)  d\vec{\mu} + o(1) \leq \\ 
&\frac{d}{2} \log \frac{2\pi}{n}
- \frac{1}{2} \log \det K \nonumber \\ & - \log 
\int_{\vm: \| \vm \|_2 \leq n^{\beta- 1/2}}  \exp\left(- \frac{n}{2} \left( \vm^{\top} K \vm  + (\vm - \hvm_{|n})^{\top} J (\vm - \hvm_{|n}) \right) \right)  d\vec{\mu} + o(1),\label{eq:integral}
\end{align}
where we first used the robustness property of exponential families (Lemma~\ref{lem:KLbounds}, (\ref{eq:fullrobustnessb})) and then for the second inequality, we used the second inequality in Proposition~\ref{prop:thirdorderTaylor} (with the $\alpha$ in that proposition set to $\alpha \vee \beta = \beta$). The third inequality uses  that $\beta$ was set $< 1/6$,
and in the final line we used
the definition of $w$ as the density of a Gaussian with mean ${\bf 0}$ and covariance $K^{-1}/n$. By a little computation, we can rewrite the expression in the exponent in the integral in (\ref{eq:integral}) as
\begin{align*}
\vec{\mu}^\top K \vec{\mu} + (\vec{\mu} - \hvm_{|n})^\top \Jp (\vec{\mu} - \hvm_{|n})
=
(\vec{\mu} - \vec{m})^\top (K + \Jp) (\vec{\mu} -  \vec{m})
+ C,
\end{align*}
where $\vec{m} = (K + \Jp)^{-1} \Jp\hvm_{|n}$ and $C = \hvm_{|n}^\top \Jp \hvm_{|n} - \vec{m}^\top (K+\Jp) \vec{m}$. 
which can be simplified  to
$C= \hvm_{|n}^\top \Sigma_q^{-1}  \hvm_{|n}$ by first noting
$C = \hvm_{|n}^\top (\Jp - \Jp (K + \Jp)^{-1} \Jp) \hvm_{|n}$ and then \begin{align}
& \Jp - \Jp(K+ \Jp)^{-1} \Jp =
\Jp \left(I - (K+ \Jp)^{-1} \Jp  \right)= 
\Jp \left(I - (I +\Jp^{-1} K)^{-1} \right)= \nonumber \\
&
\Jp \left(I - (I- \Jp^{-1}(I+ K \Jp^{-1} )^{-1} K) \right) = 
\Jp \left( \Jp^{-1} (I+ K \Jp^{-1} )^{-1} K\right) = \nonumber
(K^{-1} + \Jp^{-1})^{-1} =
\Sigma_q^{-1},
\end{align}
\commentout{
\begin{align}& K - K(K+ \Jp)^{-1} K =
K \left(I - (K+ \Jp)^{-1} K  \right)= 
K \left(I - (I +K^{-1} \Jp)^{-1}   \right)= \nonumber \\
&
K \left(I - (I- K^{-1}(I+ \Jp K^{-1} )^{-1} \Jp) \right) = 
K \left( K^{-1} (I+ \Jp K^{-1} )^{-1} \Jp\right) = \nonumber
(\Jp^{-1} + K^{-1})^{-1} =
\Sigma_q^{-1},
\end{align}}
where the third equality follows by the reduced Woodbury matrix identity (see e.g. wikipedia). 
It follows that (\ref{eq:integral}) can be further rewritten as 
\begin{align}\label{eq:integralb}
&  
\frac{d}{2} \log \frac{2\pi}{n}
- \frac{1}{2} \log \det K 
+ \frac{n}{2} \cdot \hvm_{|n}^\top \Sigma_q^{-1}  \hvm_{|n} \\ & 
- \log
\int_{\vm: \| \vm \|_2 \leq n^{\beta- 1/2}}  \exp\left(- \frac{n}{2} \left( (\vm - \vec{m})^{\top} (K+J) (\vm - \vec{m}) \right) \right)  d\vec{\mu} + o(1), \nonumber
\end{align}
where $\vec{m} = (K + \Jp)^{-1} \Jp\hvm_{|n}$ does not depend on $\vm$ and has norm $\| \vec{m} \|_2 = O(\|\hat{\vm}_{|n}\|_2)= O(n^{\alpha -1/2})$ (recall that we are only evaluating the integral for values of $\hat{\vm}_{|n}$ with $\|\hat{\vm}_{|n}\|_2 < n^{\alpha - 1/2}$). Now note (a) the distance between $\vec{m}$ and the boundary of the set over which we integrate, $\{ \vm: \| \vm \|_2 \leq n^{\beta- 1/2} \}$ is therefore of order $n^{\beta-1/2} - n^{\alpha - 1/2}= n^{-1/2} g(n)$ for a function $g(n)$ with $\lim_{n \rightarrow \infty} g(n) = \infty$ (recall we chose $\beta$ larger than $\alpha$) ; and (b) since $K$ and $J$ are inverses of positive definite matrices, they are themselves positive definite and so is $K+J$. Therefore the integral in (\ref{eq:integralb}) converges, with increasing $n$, to a Gaussian integral with covariance $(K+J)/n$, so we can rewrite (\ref{eq:integralb}) as:
\begin{align*}
&  
\frac{d}{2} \log \frac{2\pi}{n}
- \frac{1}{2} \log \det K 
+ \frac{n}{2} \cdot \hvm_{|n}^\top \Sigma_q^{-1}  \hvm_{|n} \nonumber 
- \frac{d}{2} \log \frac{2\pi}{n}
+  \frac{1}{2} \log \det (K+J) + o(1)
\\ 
=& \frac{1}{2} \log \frac{\det (K+J)}{\det K} + \frac{n}{2} \cdot \hvm_{|n}^\top \Sigma_q^{-1}  \hvm_{|n} +o(1),
\end{align*}
and then plugging this back into (\ref{eq:fasttrain}) we see that 
(\ref{eq:fasttrain})  can be bounded as 
\begin{align*}
 -\frac{1}{2} \textsc{tr}\left(\Sigma_\alti \Sigma_\nuli^{-1} \right)
 + 
 \frac{1}{2} \log \frac{\det (K+J)}{\det K}  + 
\frac{n}{2} \cdot   {\mathbb E}_{Q} \left[{\bf 1}_{\| \hvm_{|n}\|_2 < n^{\alpha -1/2} } \cdot  \hvm_{|n}^\top \Sigma_q^{-1}  \hvm_{|n} 
 \right] +
o(1),   
\end{align*}
so that , by (\ref{eq:mlesquare}) and (\ref{eq:KLboundsSquare}) (recall we set $\vm^* = {\bf 0}$), (\ref{eq:fasttrain}) can be further bounded as
\begin{align}\label{eq:final}
    &\frac{1}{2} \log 
    \frac{\det (K+J)}{\det K} + \frac{d}{2} - \frac{\tr\left(\Sigma_\alti \Sigma_\nuli^{-1} \right)}{2} + o(1)\\
    =& \frac{1}{2} \log \det (\Sigma_p^{-1} \Sigma_q )  + \frac{d}{2} - \frac{\tr\left(\Sigma_\alti \Sigma_\nuli^{-1} \right)}{2} + o(1)\nonumber
    = - D_{\gauss}(\Sigma_q \Sigma_p^{-1}) + o(1).
\end{align}
The penultimate equality holds by direct computation, using the definitions of $K$ and $J$.  But (\ref{eq:final}) implies the result.
\end{proof}
\commentout{

\begin{lemma}\label{lemma:L1=o(L2)}
Let $\hvm_{|n}$, $\sn{U}{n}$ and $Q$ be the same as in Lemma~\ref{prop:difficult}. Let $C =
\hvm_{|n}^\top \Sigma_q^{-1} \hvm_{|n}$, then
\begin{align*}
& \mathbb{E}_{\sn{U}{n} \sim Q} \left[{\bf 1}_{\| \hvm_{|n}\|_2 \leq n^{\alpha-1/2} } \cdot \log \int_{\vec{\mu} \in \mathtt{M}} f_n(\vec{\mu}, \hvm_{|n}) d\vec{\mu} \right]\\
=&
\mathbb{E}_{\sn{U}{n} \sim Q} \left[{\bf 1}_{\| \hvm_{|n}\|_2 \leq n^{\alpha-1/2} } \cdot \log \frac{\exp\left(-\frac{n}{2} C \right) \left(\sqrt{\frac{2\pi}{n}} \right)^{d}}{\sqrt{|K + \Jp|}} \right] + o(1).
\end{align*}
\end{lemma}

\begin{proof}
{\bf (of Lemma \ref{lemma:L1=o(L2)})} Let $\hvm_{|n} := (\hvm_1, \hvm_2, \ldots, \hvm_k)^\top$. Let a parameter integral $B_n = (B_{n,1}, \ldots, B_{n,k})^\top$, where $B_{n,i} = [\hat{\vec{\mu}}_i - n^{-1/2 + \beta}, \hat{\vec{\mu}}_i + n^{-1/2 + \beta}]$. We get
$$
\int_{\vec{\mu} \in \mathtt{M}} f_n(\vec{\mu}, \hvm_{|n}) d\vec{\mu}
=\mathcal{L}_1 (\hvm_{|n}) + \mathcal{L}_2 (\hvm_{|n}),
$$
where
$$
\mathcal{L}_1 (\hvm_{|n})
= \int_{\vec{\mu} \in \mathtt{M} \backslash B_n} f_n(\vec{\mu}, \hvm_{|n}) d\vec{\mu},\quad
\mathcal{L}_2 (\hvm_{|n})
= \int_{\vec{\mu} \in B_n} f_n(\vec{\mu}, \hvm_{|n}) d\vec{\mu}.
$$

\ownparagraph{Stage 1: Bounding $\mathcal{L}_1 (\hvm_{|n})$.}

Let $x = (x_1, \ldots, x_k)^\top$, where $x_i := \vec{\mu}_i - \hat{\vec{\mu}}_i$. Since $J(\vec{\mu}')$ is symmetric, we have
\begin{equation}
S_1^\top J(\vec{\mu}') S_1
= 
\left(                
  \begin{array}{cccc}   
    \lambda_1 & & &\\
     & \lambda_2 & &\\
    & & \ddots &\\
    & & & \lambda_{d}\\
  \end{array}
\right)
:= \Lambda,
\end{equation}
where $S_1$ is an orthogonal matrix, $\lambda_i (> 0)$ is an eigenvalue. Let $y = S_1^\top x$. The determinant of $S_1$ $\det S_1 = 1$ since $\mathcal{L}_1 \geq 0$, so $x$ is rotation transformation of $y$ around the point $(0, \ldots, 0)^\top$ and $x, y$ have same support set. Let $w_n(\vec{\mu})$ be the same prior with (\ref{eq:final_expectation}), we get \petr{we use both $|.|$ and $\det$ below; make nicer!}
\begin{align}\label{eq:KL_to_y}
\mathcal{L}_1 (\hvm_{|n})
&=
\sqrt{\frac{(2\pi)^{d}}{n \left|K \right|}}
\int_{\vec{\mu} \in \mathtt{M} \backslash B_n} w_n(\vec{\mu}) \exp(-nD(P_{\hvm_{|n}}||P_{\vec{\mu}})) d\vec{\mu}\nonumber\\
&\leq
\sqrt{\frac{(2\pi)^{d}}{n \left|K \right|}}
\int_{\vec{\mu} \in \mathtt{M} \backslash B_n} w_n(\vec{\mu}) \exp(-nD(P_{\hvm_{|n}}||P_{\vec{\mu}_B})) d\vec{\mu}\nonumber\\
&=
\sqrt{\frac{(2\pi)^{d}}{n \left|K \right|}}
\int_{y \in \mathtt{M} \backslash B^*_n} w_n(\vec{\mu}) \exp\left(-\frac{n}{2} (\vec{\mu}_B - \hvm_{|n})^\top J(\vec{\mu}') (\vec{\mu}_B - \hvm_{|n}) \right) \det S_1 dy\nonumber\\
&=
\sqrt{\frac{(2\pi)^{d}}{n \left|K \right|}}
\int_{y \in \mathtt{M} \backslash B^*_n} w_n(\vec{\mu}) \exp\left(-\frac{n}{2} \lambda^\top y^2 \right) \det S_1 dy\nonumber\\
&\leq
\sqrt{\frac{(2\pi)^{d}}{n \left|K \right|}}
\int_{y \in \mathtt{M} \backslash B^*_n} w_n(\vec{\mu}) \exp\left(-\frac{n}{2} \min\limits_{i \in [d]}\lambda_i n^{-1+2\beta} \right) dy\nonumber\\
&\leq
\sqrt{\frac{(2\pi)^{d}}{n \left|K \right|}}
\exp\left(-\frac{n^{2\beta}}{2} \min\limits_{i \in [d]}\lambda_i \right)
\longrightarrow
0, \qquad n \longrightarrow \infty,
\end{align}
where $\vec{\mu}_B$ is a point on the boundary of $B_n$, $\vec{\mu}' = c\vec{\mu}_B + (1-c)\hvm_{|n}$, $c \in [0, 1]$, so $\vec{\mu}' \in B_n$ and also $\vec{\mu}' \in S_n$, $S_n = [-2n^{-1/2 + \beta}, 2n^{-1/2 + \beta}]^{d}$ because $||\hvm_{|n}||_2 \leq n^{-1/2 + \alpha}$ with $0 < \alpha < \beta < 1/6$. $\lambda = (\lambda_1, \ldots, \lambda_{d})^\top$ which is depend on $\vec{\mu}'$, we pick the minimum $\lambda_i$ in (\ref{eq:KL_to_y}) for $\lambda$ of any possible $\vec{\mu}'$ in $S_n$; $y^2 = (y_1^2, \ldots, y_{d}^2)^\top$, $B^*_n$ is the interval for $y$ corresponding to $B_n$. The equality holds since $dy/dx = S_1$, i.e. $S_1$ is the Jacobian matrix. Therefore, (\ref{eq:KL_to_y}) holds uniformly.

\ownparagraph{Stage 2: Bounding $\mathcal{L}_2 (\hvm_{|n})$.}
\newcommand{\cL}{\mathcal{L}}
$$
\mathcal{L}_2 (\hvm_{|n})
= \int_{\vec{\mu} \in B_n} \exp\left(-\frac{n}{2} \vec{\mu}^\top K \vec{\mu} \right) \exp\left(-\frac{n}{2} \sum\limits_{i=1}^{d} \sum\limits_{j=1}^{d} (\vec{\mu}_i - \hat{\vec{\mu}}_i) (\vec{\mu}_j - \hat{\vec{\mu}}_j) J_{ij}(\vec{\mu}') \right) d\vec{\mu}.
$$
Because $\vec{\mu}, \hvm_{|n}$ and $\vec{\mu}'$ are inside $S_n$ ($S_n = [-2n^{-1/2 + \beta}, 2n^{-1/2 + \beta}]^{d}$ is same as above), which shrinks to the point $0$, and also we note that for every compact subset of $\meanspacenul$, the partial derivatives of $J(\vec{\mu})$ are uniformly bounded for all $\vec{\mu} \in \meanspacenul$ \citep{BarndorffNielsen78}. Then we take Taylor expansion for $J_{ij}(\vec{\mu}')$ around $0$, i.e. \petr{Don't we do something like this in some other proof?}
$$
J_{ij}(\vec{\mu}')
=
J_{ij}(0) + \nabla J_{ij}(\vec{\mu}'')^\top \vec{\mu}'
\leq
J_{ij}(0) + ||\nabla J_{ij}(\vec{\mu}'')||_2 \cdot ||\vec{\mu}'||_2
\leq
J_{ij}(0) + J_G \cdot 4\sqrt{d}n^{\beta - 1/2},
$$
where $\nabla J_{ij}(\vec{\mu}'') := \nabla_{\vec{\mu}} J_{ij}(\vec{\mu})\big|_{\vec{\mu}=\vec{\mu}''}$, $\vec{\mu}'' = c\vec{\mu}', \ c \in [0, 1]$; $J_G := \max\limits_{\vec{\mu}: \vec{\mu} \in S_n} ||\nabla J_{ij}(\vec{\mu})||_2$ could be a constant. In the other direction, we can get $J_{ij}(\vec{\mu}') \geq J_{ij}(0) - J_G \cdot 4\sqrt{d}n^{\beta - 1/2}$. Then
$$
J_{ij}(\vec{\mu}')
=
J_{ij}(0) + O(n^{\beta-1/2}).
$$
Let $\Jp := J(0)$, then
\begin{align*}
\mathcal{L}_2 (\hvm_{|n})
= \int_{\vec{\mu} \in B_n} \exp\left(-\frac{n}{2} \left(\vec{\mu}^\top K \vec{\mu} + (\vec{\mu} - \hvm_{|n})^\top \Jp (\vec{\mu} - \hvm_{|n}) \right) \right) \cdot 
\exp\left(O(n^{1/2+\beta}) ||\vec{\mu} - \hvm_{|n}||_2^2 \right) d\vec{\mu}.
\end{align*}
Because $\vec{\mu}$ and $ \hvm_{|n}$ are inside $S_n$, then $0 \leq ||\vec{\mu} - \hvm_{|n}||_2^2 \leq 16d n^{2\beta-1}$. Let $0 < \beta < 1/6$, then
$$
\mathcal{L}_2(\hvm_{|n})
=
\exp\left(O(n^{3\beta-1/2}) \right) \int_{\vec{\mu} \in B_n} \exp\left(-\frac{n}{2} \left(\vec{\mu}^\top K \vec{\mu} + (\vec{\mu} - \hvm_{|n})^\top \Jp (\vec{\mu} - \hvm_{|n}) \right) \right) d\vec{\mu}.
$$
By a little computation, we get
\begin{align*}
\vec{\mu}^\top K \vec{\mu} + (\vec{\mu} - \hvm_{|n})^\top \Jp (\vec{\mu} - \hvm_{|n})
=
(\vec{\mu} - \vec{m})^\top (K + \Jp) (\vec{\mu} -  \vec{m})
+ C,
\end{align*}
where $K = K$, $\vec{m} = (K + \Jp)^{-1} \Jp\hvm_{|n}$ and $C = \hvm_{|n}^\top \Jp \hvm_{|n} - \vec{m}^\top (K+\Jp) \vec{m}$. Because $K$ and $\Jp$ are symmetric, then $K + \Jp$ is symmetric, we have
\begin{equation}\label{eq:diagonal_matrix}
S^\top (K+\Jp) S
= 
\left(       
  \begin{array}{cccc}
    \lambda_1 & & &\\
     & \lambda_2 & &\\
    & & \ddots &\\
    & & & \lambda_{d}\\
  \end{array}
\right)
:= \Lambda,
\end{equation}
where $S$ is an orthogonal matrix, $\lambda_i (> 0)$ is an eigenvalue. Let $\vec{\mu} - \vec{m} = Sy$. The determinant of $S$ $\det S = 1$, so $\vec{\mu} - \vec{m}$ is rotation transformation of $y$ around the point $(0, \ldots, 0)^\top$ and $\vec{\mu} - \vec{m}, y$ have same support set. Then $\mathcal{L}_2 (\hvm_{|n})$ can be written as
\begin{align}\label{ineq:lowerBound_L2}
\mathcal{L}_2 (\hvm_{|n})
=
\exp\left(O(n^{3\beta-1/2}) \right) \cdot \exp\left(-\frac{n}{2} C \right) \int_{y \in B_n^*} \exp\left(-\frac{n}{2} y^\top \Lambda y \right) dy,
\end{align}
where $y$ depends on $\vec{\mu}$ but $C$ is independent on $\vec{\mu}$. Also we have
\begin{align}
\mathcal{L}_2 (\hvm_{|n})
=
\exp\left(O(n^{3\beta-1/2}) \right) \exp\left(-\frac{n}{2} C \right) \prod\limits_{i=1}^{d} \int_{y_i \in B^*_{n,i}} \exp \left(-\frac{n}{2} \lambda_i y_i^2 \right) dy_i.
\end{align}
Let $z_i := \sqrt{n\lambda_i} y_i$, then
\begin{align}\label{ineq:L2_lower}
\mathcal{L}_2 (\hvm_{|n})
=&
\exp\left(O(n^{3\beta-1/2}) \right) \exp\left(-\frac{n}{2} C \right) \prod\limits_{i=1}^{d} \frac{1}{\sqrt{n\lambda_i}} \int_{z_i \in B'_{n,i}} \exp \left(-\frac{z_i^2}{2} \right) dz_i\nonumber\\
=&
\exp\left(O(n^{3\beta-1/2}) \right) \exp\left(-\frac{n}{2} C \right) \prod\limits_{i=1}^{d} \frac{1}{\sqrt{n\lambda_i}} \int_{z_i \in \mathbb{R}} \exp \left(-\frac{z_i^2}{2} \right) dz_i \cdot (1 + g(n, \hvm_{|n}))\nonumber\\
=&
\exp\left(O(n^{3\beta-1/2}) \right) \cdot \frac{\exp\left(-\frac{n}{2} C \right) \left(\sqrt{\frac{2\pi}{n}} \right)^{d}}{\sqrt{|K + \Jp|}} \cdot (1 + g(n, \hvm_{|n})),
\end{align}
where $B_{n,i}'$ is the interval for $z_i$ corresponding to $B_{n,i}^*$, $g(n, \hvm_{|n}) = \frac{\int_{z_i \in B'_{n,i}} \exp \left(-\frac{z_i^2}{2} \right) dz_i}{\int_{z_i \in \mathbb{R}} \exp \left(-\frac{z_i^2}{2} \right) dz_i} - 1$; the last step holds since $|K + \Jp| = \prod\limits_{i=1}^{d} \lambda_i$. 

We will explain $g(n, \hvm_{|n}) = o(1)$ holds. Because $\vec{\mu} \in B_n$, then $\vec{\mu}_i - \vec{m}_i \in [\hat{\vec{\mu}}_i - \vec{m}_i - n^{-1/2+\beta}, \hat{\vec{\mu}}_i - \vec{m}_i + n^{-1/2+\beta}]$, where $\hvm_{|n} - \vec{m} = (\hat{\vec{\mu}}_1 - \vec{m}_1, \ldots, \hat{\vec{\mu}}_{d} - \vec{m}_{d})^\top$. $\vec{m}$ can be rewritten as
\begin{align}\label{eq:rewrite_m}
\vec{m}
=
(K + \Jp)^{-1} (\Jp\hvm_{|n} + K(\hvm_{|n} - \hvm_{|n}))
=
\hvm_{|n} - (K + \Jp)^{-1} K \hvm_{|n}.
\end{align}
That is $\vec{\mu}_i - \vec{m}_i \in [\left((K + \Jp)^{-1} K \hvm_{|n} \right)_i - n^{-1/2+\beta}, \left((K + \Jp)^{-1} K \hvm_{|n} \right)_i + n^{-1/2+\beta}]$ and $|y_i| = |\vec{\mu}_i - \vec{m}_i|$. We already have $||\hvm_{|n}||_2 \leq n^{\alpha - 1/2}$, $0 < \alpha < 1/2$. Let $0 < \alpha < \beta < 1/6$, it shows that the interval of $z_i$ (i.e. $B_{n,i}'$) converges to $\mathbb{R}$.

\ownparagraph{Stage 3: Proving $\log (\mathcal{L}_1 (\hvm_{|n}) + \mathcal{L}_2 (\hvm_{|n})) = \log \mathcal{L}_2 (\hvm_{|n}) + o(1)$.}
From Stage 1, we have $\mathcal{L}_1 (\hvm_{|n}) \leq 
\sqrt{\frac{(2\pi)^{d}}{n \left|K \right|}}
\exp\left(-\frac{n^{2\beta}}{2} \min\limits_{i \in [d]}\lambda_i \right), \beta \in (0, 1/6)$. Now we would like to prove $\mathcal{L}_1 (\hvm_{|n}) = o(\mathcal{L}_2 (\hvm_{|n}))$, i.e. $\frac{\mathcal{L}_1 (\hvm_{|n})}{\mathcal{L}_2 (\hvm_{|n})} \longrightarrow 0, n \longrightarrow \infty$.
Using (\ref{eq:rewrite_m}), we have
\begin{align*}
C
=&
\hvm_{|n}^\top \Jp \hvm_{|n} - \left(\hvm_{|n} - (K + \Jp)^{-1} K \hvm_{|n} \right)^\top (K + \Jp) \left(\hvm_{|n} - (K + \Jp)^{-1} K \hvm_{|n} \right)\\
=&
\hvm_{|n}^\top (K - K (K + \Jp)^{-1} K) \hvm_{|n}.
\end{align*}
Let $C' := \hvm_{|n}^\top K \hvm_{|n}$, we have $C' > C$. Because $K$ is symmetric matrix, we have $C' = y^\top \Lambda y$, where $y$ is the orthogonal transformation of $\hvm_{|n}$ and $\Lambda$ is diagonal matrix with eigenvalues of $K$. Furthermore, $C' = \sum\limits_{i=1}^{d} \lambda_i y_i^2 = \sum\limits_{i=1}^{d} \lambda_i \hat{\vec{\mu}}_i^2 \leq \sum\limits_{i=1}^{d} \lambda_i n^{2\alpha-1}$, $0 < \alpha < 1/2$. Then
\begin{align*}
\frac{\mathcal{L}_1 (\hvm_{|n})}{\mathcal{L}_2 (\hvm_{|n})}
\leq&
\frac{
\sqrt{\frac{(2\pi)^{d}}{n \left|K \right|}}
\exp\left(-\frac{n^{2\beta}}{2} \min\limits_{i \in [d]} \lambda_i \right) \sqrt{|K + \Jp|}}{\exp\left(O(n^{3\beta-1/2}) \right) \cdot \exp\left(-\frac{n}{2} C' \right) \left(\sqrt{\frac{2\pi}{n}} \right)^{d} \cdot (1 + g(n, \hvm_{|n}))}\\
\leq&
\frac{{\left|K\right|^{-1/2}} \exp\left(-\frac{n^{2\beta}}{2} \min\limits_{i \in [d]}\lambda_i \right) \sqrt{|K + \Jp|}}{\exp\left(O(n^{3\beta-1/2}) \right) \cdot \exp\left(-\frac{n}{2} \sum\limits_{i=1}^{d} \lambda_i n^{2\alpha-1} \right) \cdot (1 + g(n, \hvm_{|n}))}\\
\leq&
\exp\left(O \left(n^{2\alpha} - n^{2\beta} - n^{3\beta-1/2} \right) \right), \ 0 < \alpha < \beta < 1/6,
\end{align*}
where the last step holds since $g(n, \hvm_{|n}) = o(1)$. Then $\frac{\mathcal{L}_1 (\hvm_{|n})}{\mathcal{L}_2 (\hvm_{|n})} \longrightarrow 0, n \longrightarrow \infty$ since $\mathcal{L}_1 (\hvm_{|n})$ and $\mathcal{L}_2 (\hvm_{|n})$ are positive.
Because $||\hvm_{|n}||_2 \leq n^{-1/2 + \alpha}$, all possible $\vec{\mu}$ and $\hvm_{|n}$ are belong to $S_n$ in the above inequalities. Also $S_n$ is a closed set, so the above inequalities holds uniformly. That means $\frac{\mathcal{L}_1 (\hvm_{|n})}{\mathcal{L}_2 (\hvm_{|n})} \longrightarrow 0, n \longrightarrow \infty$ follows uniformly. Then we have $\log (\mathcal{L}_1 (\hvm_{|n}) + \mathcal{L}_2 (\hvm_{|n})) = \log \mathcal{L}_2 (\hvm_{|n}) + \log \left(\frac{\mathcal{L}_1 (\hvm_{|n})}{\mathcal{L}_2 (\hvm_{|n})} + 1 \right)  = \log \mathcal{L}_2 (\hvm_{|n}) + o(1)$ uniformly. Therefore,
\begin{align}
\log (\mathcal{L}_1 (\hvm_{|n}) + \mathcal{L}_2 (\hvm_{|n}))
=&
\log \left(\exp\left(O(n^{3\beta-1/2}) \right) \cdot \frac{\exp\left(-\frac{n}{2} C \right) \left(\sqrt{\frac{2\pi}{n}} \right)^{d}}{\sqrt{|K + \Jp|}} \cdot (1 + g(n, \hvm_{|n})) \right)
+
o(1)\nonumber\\
=&
\log \frac{\exp\left(-\frac{n}{2} C \right) \left(\sqrt{\frac{2\pi}{n}} \right)^{d}}{\sqrt{|K + \Jp|}}
+
o(1).
\end{align}
Then Lemma \ref{lemma:L1=o(L2)} follows by noting
that we can simplify 
$C=\hvm_{|n}^\top (K - K (K + \Jp)^{-1} K) \hvm_{|n}$ to
$C= \hvm_{|n}^\top \Sigma_q^{-1}  \hvm_{|n}$ by noting: 
\begin{align}
& K - K(K+ \Jp)^{-1} K =
K \left(I - (K+ \Jp)^{-1} K  \right)= 
K \left(I - (I +K^{-1} \Jp)^{-1}   \right)= \nonumber \\
&
K \left(I - (I- K^{-1}(I+ \Jp K^{-1} )^{-1} \Jp   \right) = 
K \left( K^{-1} (I+ \Jp K^{-1} )^{-1} \Jp\right) = \nonumber
(\Jp^{-1} + K^{-1})^{-1} =
\Sigma_q^{-1},
\end{align}
where the third equality follows by the reduced Woodbury matrix identity (see e.g. wikipedia). 

\end{proof}
\subsection{Extension of Anti-Simple case part of Theorem~\ref{thm:simpleH1general} to the Misspecified Case}
\label{app:moredifficult}
\petr{following text must be changed, is adaptation of basic result to general $R$ rather than $Q$
\begin{align}\label{eq:gekd}
     {\mathbb E}_{\sn{U}{n} \sim Q} [
    \log S_{Q,\rip}^{(n)}
    ] & \leq n D(Q||P_{\vec{\mu}^*}) - D_{\gauss}(\Sigma_\alti \Sigma_\nuli^{-1}(\vec{\mu}^*)) + o(1).
    \end{align}
Furthermore, if $\Sigma_r - \Sigma_{\nuli}(\vec{\mu}^*)$ is positive semidefinite, then  we have \petr{Check why no $\Sigma_q$ below?}
\begin{align}\label{eq:Rboundb}
{\mathbb E}_{\sn{U}{n} \sim R} [
    \log S_{Q,\rip}^{(n)}
    ] & \geq n D_R(Q||P_{\vec{\mu}^*}) - D_{\gauss}(\Sigma_r \Sigma_\nuli^{-1}(\vec{\mu}^*)) + o(1).
\end{align}
To prove (\ref{eq:Rboundb}), we note: 
\begin{align*}
{\mathbb E}_{\sn{U}{n} \sim R} [
    \log S_{Q,\rip}^{(n)}
    ] &= {\mathbb E}_{\sn{U}{n} \sim R} \left[\log \frac{q(\sn{U}{n})}{r(\sn{U}{n})} \right] + {\mathbb E}_{\sn{U}{n} \sim R} \left[\log \frac{r(\sn{U}{n})}{p_{ \leftsquigarrow q(\sn{U}{n})}(\sn{U}{n})} \right]\\
    &\geq
    -nD(R||Q) + {\mathbb E}_{\sn{U}{n} \sim R} \left[\log \frac{r(\sn{U}{n})}{p_{ \leftsquigarrow r(\sn{U}{n})}(\sn{U}{n})} \right]\\
    &=
    -nD(R||Q) + {\mathbb E}_{\sn{U}{n} \sim R} \left[\log \frac{r(\sn{U}{n})}{p_{W_n'}(\sn{U}{n})} \right] - o(1)\\
    &=
    n D_R(Q||P_{\vec{\mu}^*}) - D_{\gauss}(\Sigma_r \Sigma_\nuli^{-1}(\vec{\mu}^*)) - o(1),
\end{align*}
where the inequality comes from Theorem 1 in \cite{GrunwaldHK19}. $W_n'(\vec{\mu}) = \mathcal{N}\left(\vec{\mu}\big| \vec{\mu}^*, \frac{\Sigma_r - \Sigma_\nuli(\vec{\mu}^*)}{n} \right)$, then the second equality holds by combining (\ref{eq:gencond}), (\ref{eq:gekb}) and Lemma~\ref{prop:difficult} (Using $R$ instead of $Q$), i.e. 
$$
{\mathbb E}_{\sn{U}{n} \sim R} [
    \log S_{R,\rip}^{(n)}
    ] = {\mathbb E}_{\sn{U}{n} \sim R} \left[\log \frac{r(\sn{U}{n})}{p_{W_n'}(\sn{U}{n})} \right] 
    - o(1).
$$
}
}
\subsection{Proof of (\ref{eq:seqbad}) in Theorem~\ref{thm:compositeH1general}}
\label{app:noncompetitive}
Let $\cA_{\gamma} := \{\vm' \in \meanspace_p \cap \meanspace_q: \| \vm' - \vm^* \|_2 \leq \gamma \}$.

Define  $S_{Q_{\vm'}}^{(1)}:= \frac{q_{\vm'}(U)}{
    p_{ \leftsquigarrow q_{\vm'}(U)}(U)
    }$ and note that $S_{Q_{\vm'}}^{(1)}$ is an e-variable for all $\vm' \in \cA_{\gamma}$, and in particular $S_{Q_{\vm^*}}^{(1)}$ is the GRO (optimal) e-variable relative to $Q_{\vm^*}$. Therefore, we must have, for some $\epsilon > 0$, that 
$$
{\mathbb E}_{Q_{\vm^*}} \left[ \log \frac{q_{\vm'}(U)}{
    p_{ \leftsquigarrow q_{\vm'}(U)}(U)} \right]= 
{\mathbb E}_{Q_{\vm^*}} \left[\log S^{(1)}_{Q_{\vm'}} \right]
\leq {\mathbb E}_{ Q_{\vm^*}} \left[\log S^{(1)}_{Q_{\vm^*}} \right]
=  {\mathbb E}_{Q_{\vm^*}} \left[ \log \frac{q_{\vm^*}(U)}{
    p_{ \vm^*}(U)} \right] - \epsilon,
$$    
where the final equality follows because, by the same reasoning as in the proof of Theorem~\ref{thm:simpleH1general}, Part 3, we have (\ref{eq:afterarxiv}), with $Q_{\vm^*}$ in the role of $Q$.
But, taking expectations over $U_1, U_2, \ldots \text{i.i.d.} \sim Q_{\vm^*}$, from the above we immediately get, using Fubini's theorem, 
\begin{align*}
&     {\mathbb E}_{U^n \sim Q_{\vm^*}}
     \left[ \log \prod_{i=1}^n 
     \frac{q_{\breve{\vm}_{|i-1}}(\bar{U}_i)}
{p_{ \leftsquigarrow q_{\breve{\vm}_{|i-1}}(\bar{U}_i)}}
     \right] = 
     \sum_{i=1}^n {\mathbb E}_{U^{i-1} \sim Q_{\vm^*}}
     {\mathbb E}_{U \sim Q_{\vm^*}}\left[
\log 
     \frac{q_{\breve{\vm}_{|i-1}}(U)}
{p_{ \leftsquigarrow q_{\breve{\vm}_{|i-1}}(U)}
(U)}
     \right]  \\ \leq& n (D(Q_{\vm^*} \| P_{\vm^*}) - \epsilon),
\end{align*}
and the result is proved. 
\commentout{
    for all $P \in \textsc{conv}(\nulhyp)$, 
    we have ${\mathbb E}_{Q_{\vm^*}}[S] \leq 1$, so in particular, using that all $P \in \nulhyp$ and $Q_{\vm^*}$ have a density relative to the same underlying measure $\nu$,  we have TOD SUPPORT CONDITION: 
\begin{align}
    {\mathbb E}_{Q_{\vm^*}} \left[ 
    \frac{p_{W_{(\vm^*)}}(U)}{q_{\vm^*}(U)} \cdot 
     \frac{q_{\vm'}(U)}{
    p_{ \leftsquigarrow q_{\vm'}(U)}(U)
    }
    \right] \leq 1.
\end{align}
Taking logarithms and then using Jensen's inequality and then re-arranging, this gives
\begin{align}
     {\mathbb E}_{Q_{\vm^*}} \left[ \log  
     \frac{q_{\vm'}(U)}{
    p_{ \leftsquigarrow q_{\vm'}(U)}(U)
    }
    \right] \leq D(Q_{\vm^*} \| P_{W_{(\vm^*)}}).\end{align}
    But the right-hand side can be further bounded by
    \begin{align}
D(Q_{\vm^*} \| P_{\leftsquigarrow q_{\vm^*}(U)}) + \delta
\leq D(Q_{\vm^*} \| P_{\vm^*}) - \epsilon' + \delta,
\end{align}
where we first used (\ref{eq:lichthuis}) and then (\ref{eq:darkhouse}). 
But by choosing $\gamma$ small enough, $-\epsilon'+ \delta$ can be made smaller than $0$; and (\ref{eq:wgmb}) follows. 

Armed with results (\ref{eq:lichthuis}) and (\ref{eq:darkhouse}), we can choose constants $\delta > 0$ and $\gamma > 0$, such that for all $\vm' \in \cA_{\gamma}$:
\begin{align*}
&    {\mathbb E}_{Q_{\vm^*}}\left[ 
    \log \frac{q_{\vm'}(U)}{
    p_{ \leftsquigarrow q_{\vm'}(U)}(U)
    }
    \right] = 
     {\mathbb E}_{Q_{\vm^*}}\left[ 
    \log \frac{p_{W_{(\vm^*)}}(U)}{q_{\vm^*}(U)}\frac{q_{\vm'}(U)}{
    p_{ \leftsquigarrow q_{\vm'}(U)}(U)
    }
    \right]
    \leq  \\
    &
     {\mathbb E}_{Q_{\vm^*}}\left[ 
    \log \frac{q_{\vm^*}(U)}{
    p_{W_{(\vm')}}(U)
    }
    \right] + \delta_1  \leq 
   {\mathbb E}_{Q_{\vm'}}\left[ 
    \log \frac{q_{\vm^*}(U)}{p_{W_{(\vm')}}(U)
    }
    \right] + \delta_2 \leq \\ & 
     {\mathbb E}_{Q_{\vm'}}\left[ 
    \log \frac{q_{\vm^*}(U)}{p_{
    \leftsquigarrow q_{\vm'}(U)
    }(U)
    }
    \right] +  \delta_3 \leq 
  {\mathbb E}_{Q_{\vm'}}\left[ 
    \log \frac{q_{\vm^*}(U)}{p_{
    \leftsquigarrow q_{\vm^*}(U)
    }(U)
    }
    \right] +  \delta_3 \leq \\ &
    {\mathbb E}_{Q_{\vm'}}\left[ 
    \log \frac{q_{\vm^*}(U)}{p_{W_{(\vm^*)}
    }(U)
    }
    \right] +  \delta_4 \leq 
      {\mathbb E}_{Q_{\vm^*}}\left[ 
    \log \frac{q_{\vm^*}(U)}{p_{W_{(\vm^*)}}(U)
    }
    \right] +  \delta_5 \leq \\ & 
  {\mathbb E}_{Q_{\vm^*}}\left[ 
    \log \frac{q_{\vm^*}(U)}{p_{
    \leftsquigarrow q_{\vm^*}(U)
    }(U)
    }
    \right] +  \delta_6 \leq  {\mathbb E}_{Q_{\vm^*}}\left[ 
    \log \frac{q_{\vm^*}(U)}{p_{\vm^*
    }(U)
    }
    \right] +  \delta_6 - \epsilon',
\end{align*}
and this implies the result. 
Here the first inequality 

Now, by rewriting the above equation in terms of the canonical parameterization (recall that $\vm$ is a smooth function of the canonical parameter $\vb$ corresponding to it) and writing out the formulas for the densities explicitly, we see that the left-hand side is continuous in $\vm'$. It follows that, for any $0 <\epsilon'' < \epsilon'$, we can choose $\gamma$ small enough so that, for all $\vm' \in \cA_{\gamma}$,
$$
D(Q_{\vm^*} \| P_{W_{(\vm')}}) \leq {\mathbb E}_{Q_{\vm'}} \left[ \log \frac{q_{\vm'}(U)}{
    p_{ \leftsquigarrow q_{\vm'}(U)}(U)} \right] 
   + \epsilon''.
$$
But by definition of the RIPr, we have $D(Q_{\vm^*} \| P_{{ \leftsquigarrow Q_{\vm^*}(U)}})< D(Q_{\vm^*} \| P_{W_{(\vm')}})$
so that 

\ownparagraph{Stage 2}
Let
\begin{align}\label{eq:epsilon}
\epsilon_{(i)}:= {\mathbb E}_{\bar{U}_i\sim Q_{\vm^*}} \left[\log \frac{q_{\breve{\vm}_{|i-1}}(\bar{U}_i)}{p_{\breve{\vm}_{|i-1}}(\bar{U}_i)} - \log \frac{q_{\breve{\vm}_{|i-1}}(\bar{U}_i)}
{
p_{ \leftsquigarrow q_{\breve{\vm}_{|i-1}}(\bar{U}_i)}
(\bar{U}{(i)})}
\right],
\end{align}
be a function of $\breve{\vm}_{|i-1}$ and hence of random vector $U^{i-1}$, 
and note that, by definition of the RIPr $\pseqrip$ for all $i$, all instantiations of $U^{i-1}$, we have that $\epsilon_{(i)} \geq 0$.
We need to show that there exists $\epsilon > 0$ such that $\lim \inf_{n \rightarrow \infty} n^{-1} \sum_{i=1}^n  {\mathbb E}_{U^{i-1} \sim Q_{\vm^*}}[\epsilon_{(i)}] \geq  \epsilon$. For this it is sufficient to show that, for some $\gamma, \epsilon > 0$, for all large $n$:
\begin{equation}\label{eq:liminf}
 n^{-1} \sum_{i=1}^n  {\mathbb E}_{U^{i-1} \sim Q_{\vm^*}}[{\bf 1}_{\|\breve{\vm}_{|i-1} - \vm^* \|_2 \leq \gamma } \cdot \epsilon_{(i)}] \geq  \epsilon.
\end{equation}
Reasoning analogously to the proof of (\ref{eq:noncompetitive}) in Theorem~\ref{thm:simpleH1general}, we find that for some $\delta > 0$, 
\commentout{\bunda{Here I think it is not so clear because $Q_{\vm'}$ is the same distribution as $q_{\vm'}$ in (\ref{eq:wgm}), but $Q_{\vm^*} \neq q_{\breve{\vm}_{|i-1}}$ in (\ref{eq:epsilon}).}
\peter{something is indeed unclear, but Yunda's solution does not work (inequality goes in the wrong direction)}}
\begin{align}\label{eq:wgm}
    {\mathbb E}_{Q_{\vm'}} \left[ \log \frac{q_{\vm'}(U)}{
    p_{ \leftsquigarrow q_{\vm'}(U)}(U)} \right] 
    \leq {\mathbb E}_{Q_{\vm'}} \left[\log \frac{q_{\vm'}(U)}{p_{\vm'}(U)} \right] - \delta
\end{align}
for some $\delta > 0$, if we set $\vm'= \vm^*$. By standard continuity properties of exponential families, the result then holds uniformly for all $\vm' \in \meanspace'$, where $\meanspace'$ is a small enough open neighborhood of $\{ \vm^* \}$. In particular, there exists $\gamma>0$ such that for all large enough $n$, (\ref{eq:wgm}) holds for all $\vm'$ in the set $\{\vm: \| \vm - \vm^* \|_2 \leq \gamma\}$. Since the probability, under $Q$, that $\breve{\vm}_{|i-1}$ lies in this set goes to $1$ with increasing $n$, it follows that (\ref{eq:liminf}) holds; the result follows.}
\commentout{
\bunda{The following way might be more clear to prove the statement. Actually, we want to prove
$$
\sum\limits_{i=1}^n {\mathbb E}_{\bar{U}_i, \sn{U}{i-1}\sim Q_{\vm^*}} \left[\log \frac{q_{\vm^*}(\bar{U}_i)}{p_{\vm^*}(\bar{U}_i)} - \log \frac{q_{\breve{\vm}_{|i-1}}(\bar{U}_i)}{
p_{ \leftsquigarrow q_{\breve{\vm}_{|i-1}}(\bar{U}_i)}
(\bar{U}{(i)})}
\right]
\geq n\epsilon.
$$
That is, it is sufficient to prove
\begin{align*}
&\sum\limits_{i=1}^n {\mathbb E}_{\bar{U}_i, \sn{U}{i-1}\sim Q_{\vm^*}} \left[\log \frac{q_{\vm^*}(\bar{U}_i)}{p_{\vm^*}(\bar{U}_i)}
- \log \frac{q_{\vm^*}(\bar{U}_i)}{
p_{ \leftsquigarrow q_{\breve{\vm}_{|i-1}}(\bar{U}_i)}
(\bar{U}{(i)})}
+ \log \frac{q_{\vm^*}(\bar{U}_i)}{q_{\breve{\vm}_{|i-1}}(\bar{U}_i)}
\right]\\
\geq&
\sum\limits_{i=1}^n {\mathbb E}_{\bar{U}_i, \sn{U}{i-1}\sim Q_{\vm^*}} \left[\log \frac{q_{\vm^*}(\bar{U}_i)}{p_{\vm^*}(\bar{U}_i)}
- \log \frac{q_{\vm^*}(\bar{U}_i)}{
p_{ \leftsquigarrow q_{\breve{\vm}_{|i-1}}(\bar{U}_i)}
(\bar{U}{(i)})}
\right]\\
\geq&
n\epsilon,
\end{align*}
where the first inequality holds since ${\mathbb E}_{\bar{U}_i, \sn{U}{i-1}\sim Q_{\vm^*}}\left[\log \frac{q_{\vm^*}(\bar{U}_i)}{q_{\breve{\vm}_{|i-1}}(\bar{U}_i)} \right] = {\mathbb E}_{\sn{U}{i-1}\sim Q_{\vm^*}} \left[D(Q_{\vm^*}|| Q_{\breve{\vm}_{|i-1}}) \right] \geq 0$, the second inequality can be proved by (\ref{eq:wgm}).}
}

\section{Additional details for Section~\ref{sec:general}: checking {\bf UI} and {\bf plug-in} regularity conditions for example families}
\commentout{\subsection{Further Observations regarding the Gaussian Results}
\ownparagraph{Beyond simple and anti-simple in Theorem~\ref{thm:simpleH1gauss}} In the multivariate case, $d>1$, it may happen that $\Sigma_{\alti}- \Sigma_{\nuli}$ is neither positive nor negative semidefinite.  While we have not sorted out this situation in full generality, the proof of Theorem~\ref{thm:simpleH1gauss} does indicate what happens then in the special case that $\Sigma_{\alti}$ and $\Sigma_{\nuli}$ are diagonal, so that $X_1, \ldots, X_{d}$ are independent. The RIPr prior $W$ will then be a degenerate Gaussian 
    with mean $\vec{\mu}^*$ and diagonal covariance matrix where, for those $j \in \{1, \ldots, d\}$ with
    $(\Sigma_{\alti})_{jj} \leq  (\Sigma_{\nuli})_{jj}$, there will be a $0$ on the diagonal (the $j$th component of $\vec{\mu} \sim W$ is equal to the $j$-th component of $\vec{\mu}^*$ with $W$-probability 1), whereas for those  $j \in \{1, \ldots, d\}$ with 
    $(\Sigma_{\alti})_{jj} > (\Sigma_{\nuli})_{jj}$, the diagonal entry will be  $((\Sigma_{\alti})_{jj} -  (\Sigma_{\nuli})_{jj})/n$. 
    Thus, we get an e-value which for some components of $X$ behaves like $S_{\cond}$ and for others like a simple likelihood ratio, and the corresponding expected log-growth will be in between that of (\ref{eq:raar}) and (\ref{eq:gausscond}).

\ownparagraph{The role of $n_0$ in the definition of $\breve{\vm}$}
We take $n_0 > 0$ to ensure that $\breve{\vm}_{|0}$ is well-defined and, when later applied to exponential families rather than Gaussians, to make sure that the relevant KL divergences remain finite. 

An alternative (different from the Haar-prior based one) noninformative method can be obtained by taking $n_0=0$ in the definition of $\breve{\mu}$. Then $\breve\mu$ reduces to the MLE, which is undefined when based on an empty sample, so that  $S^{(1)}_{\breve\mu,\cdot}$ and hence $S^{(n)}_{\breve\mu,\cdot}$ are also undefined and Part 1 and 3 of Theorem~\ref{thm:compositeH1gauss} cannot be applied any more (if were to naively apply them, the $O(1)$ terms would become undefined as well, involving a $0/0$ factor). Nevertheless, we may modify $S^{(n)}_{\breve\mu,\cdot}$ simply by setting 
$S^{(1)}_{\breve\mu,\cdot}$ equal to $1$. A straightforward modification of the proof shows that the resulting procedure has e-power independent of $\vm^*$ and is thus, like $S_{\cond}$ or equivalently $S_{\haar}$, uninformative; but at the same time, the same reasoning also shows that it will have less e-power; we omit  details.}
\subsection{Condition {\bf UI} for the Poisson model}\label{app:checking}

\begin{example}\label{example:Poisson}{\bf [Poisson]}
{\rm Let $\nulhyp$ be the Poisson family, given in its standard parameterization as $p_{\mu}(x) = P_\mu(X=k) = \frac{\mu^k \cdot e^{-\mu}}{k!}$ with parameter $\mu > 0$, and suppose $X$ has $m$ moments under $R$. Standard and straightforward computations show that this is an exponential family with mean-value parameter $\mu = {\mathbb E}_{P_{\mu}}[X]$ and
%
$D(P_{\hat{\mu}_{|n}} \| P_{{\mu}^*} ) = \hat{\mu}_{|n} \log \frac{\hat{\mu}_{|n}}{\mu^*} + \mu^* - \hat{\mu}_{|n}$. 
Fix $0 < \gamma < 1/2$ and define $ {\cal E}_j =  \{{\mu} \in (0, \infty): | \mu - \mu^*| \in [n^{-\gamma} + j-1, n^{-\gamma} + j]\}$. We have
\begin{align*}
& {\mathbb E}_R\left[{\bf 1}_{\| \hat{{\mu}} - {\mu}^*\|_2 \geq n^{-\gamma}}
\cdot  D(P_{\hat{{\mu}}} \| P_{{\mu}^*} )
\right] 
        \leq \sum_{j \in \naturals^+} R( \hat{\mu}_{|n} \in {\cal E}_j )  \max\limits_{\hat{\mu}_{|n} \in {\cal E}_j} D(P_{\hat{\mu}_{|n}} \| P_{{\mu}^*})
        \\
        = & \sum_{j \in \naturals^+} 
        R( \hat{\mu}_{|n} \in {\cal E}_j ) \max\limits_{\hat{\mu}_{|n} \in {\cal E}_j} \left(\hat{\mu}_{|n} \log \frac{\hat{\mu}_{|n}}{\mu^*} + \mu^* - \hat{\mu}_{|n} \right)\\
             {\leq} &
             R\left(| \hat{\mu}_{|n} - \mu^*| \geq n^{-\gamma}\right) \cdot \left((-\log {\mu^*}) \cdot O(1) + \mu^* - O(1) \right)\\
             &+ \sum_{j \in \naturals^+} 
        R\left(| \hat{\mu}_{|n} - \mu^*| \geq n^{-\gamma} + j \right)
         \cdot O\left( (n^{-\gamma}+j + 1)\log(n^{-\gamma}+j + 1) \right) \\
         \overset{\text{(a)}}{=} &  O\left(n^{- \lceil m/2 \rceil} \cdot n^{\gamma m - \gamma} \right) + \sum_{j \in \naturals^+}  O\left(n^{- \lceil m/2 \rceil} \cdot \left(n^{-\gamma} + j \right)^{-m} \right) \cdot O\left( j \log j\right )\\
       =& O\left(n^{- \lceil m/2 \rceil} \cdot n^{\gamma m - \gamma} \right) + O(n^{- \lceil m/2 \rceil}) \cdot \sum_{j \in \naturals^+} j^{1-m} \log j
       := f(n, m, \gamma),
    \end{align*}
where (a) follows from Lemma~\ref{lemma:Using Markov Inequality} in Appendix~\ref{app:preparinggeneraltheoremsB}.
Plugging in any $m\geq 3$ and $\gamma=1/3$ we find that $f(n,m,\gamma) = O(n^{-4/3})= o(n^{-1})$, which proves the result. 
}
\end{example}
\begin{figure}
    \centering
    \subfigure[Gaussian mean parameter space]{\includegraphics[width=0.45\linewidth]{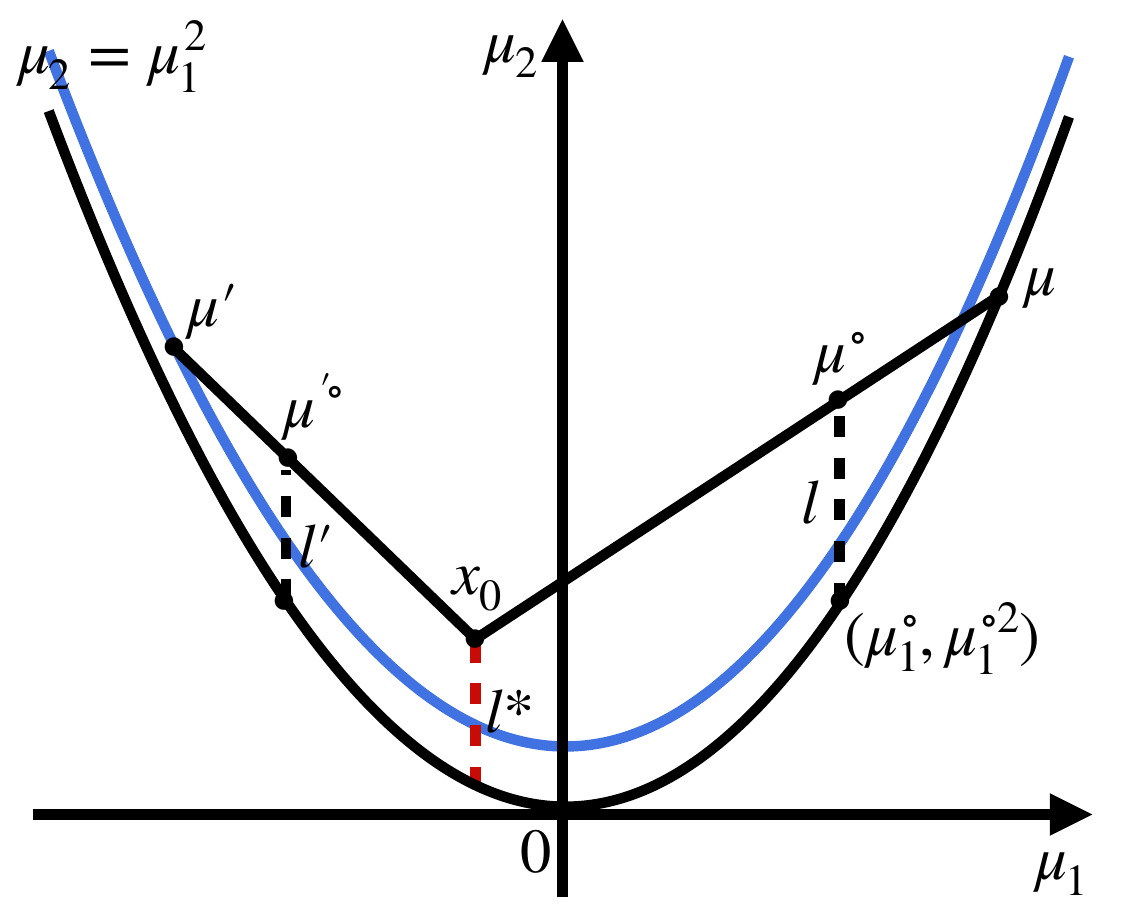}}
    \subfigure[Gamma mean parameter space]{\includegraphics[width=0.45\linewidth]{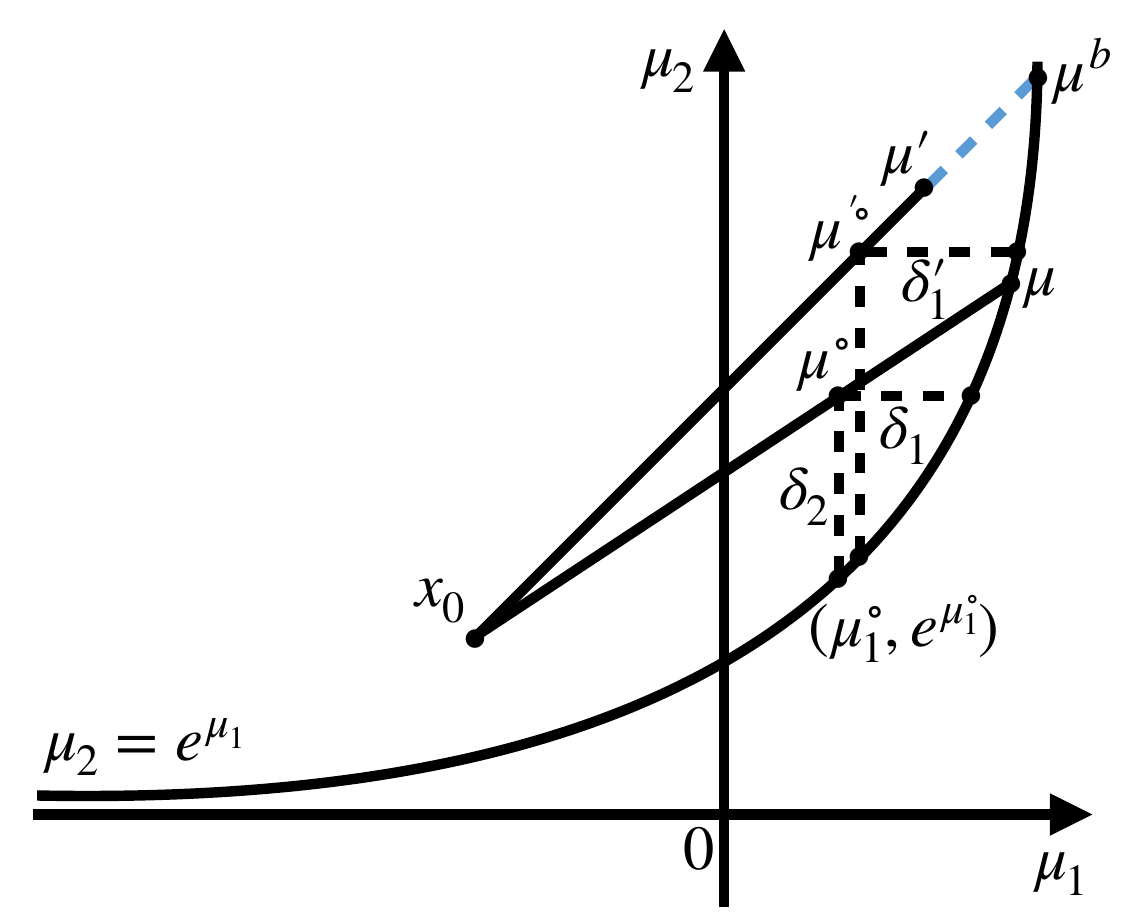}}
    \caption{In the left figure, the area above the black curve $\mu_2 = \mu_1^2$ is the Gaussian mean parameter space $\meanspace_q$. The blue curve is obtained by vertically shifting the black one upwards such that $\vec{\mu}'$ is located on it. (\ref{eq:inf_variance}) expresses that  $\alpha l^*$ is not larger than any other $l$ for every $\vec{\mu}$, $l'$ for every $\vec{\mu}'$ with $x_0 \in \meanspace_q$ and $\alpha \in (0,1)$.}
    \label{fig:mean-space}
\end{figure}
\subsection{Proof for Example~\ref{example:Gaussian-Location-Scale}: plug-in condition for the full Gaussian}\label{proof:guassian_example}
\ownparagraph{Proof of (\ref{eq:inf_variance})}
We first consider the case of a $\vec{\mu}$  located on the boundary $\partial \meanspace_q$ of ${\meanspace}_q$ (actually this cannot happen because $\meanspace_q$ is open, but it is still useful for proving the result as it constitutes a limiting worst-case). 
In this case, for fixed $0 < \alpha < 1$ and $x_0$, we have:
\begin{align}\label{eq:compute_var}
\mu^\circ_2 - \mu^{\circ 2}_1 = (1-\alpha)\mu_2 + \alpha x_{0,2} - \mu^{\circ 2}_1 = (1-\alpha)\mu^2_1 + \alpha x_{0,2} - ((1-\alpha)\mu_1 + \alpha x_{0,1})^2.
\end{align}
Taking the first derivative of $\mu^\circ_2 - \mu^{\circ 2}_1$: $\frac{d}{d\mu_1} (\mu^\circ_2 - \mu^{\circ 2}_1) = 2(1-\alpha)\alpha (\mu_1 - x_{0,1})$, we find
\begin{equation}\label{eq:argminmu}
\argmin\limits_{\vm \in \partial \meanspace_q}\  (\mu^{\circ}_2 - \mu_1^{\circ 2}) = (x_{0,1}, x_{0,1}^2).
\end{equation}
(note again $\vm^{\circ}$ is a function of $\vm$). Then (\ref{eq:inf_variance}) holds with the infimum over $\partial\meanspace_q$ instead of $\meanspace_q$, as seen by plugging (\ref{eq:argminmu}) into (\ref{eq:compute_var}).
The situation is depicted in Figure~\ref{fig:mean-space}(a). $l$ denotes $\mu^{\circ}_2 - (\mu^{\circ}_1)^2$, $l^*$ denotes $x_{0,2}- x_{0,1}^2$, and the statement (\ref{eq:inf_variance}) says that $\alpha l^* \leq l$. 

Now consider the case of a $\vec{\mu}' \in \meanspace_q$ not located on the boundary, and the corresponding $\vm'^{\circ}$. In terms of the figure, we now need to prove $\alpha l^* \leq l'$, where $l'$ denotes
$\mu'^{\circ}_2 - (\mu'^{\circ}_1)^2$. Let $\delta$ be the vertical translation distance between the black curve and the blue curve in the figure, i.e. the blue curve is $y=x^2 + \delta$. Extending the boundary-$\vm$ analysis to this case, we find $l' - \delta \geq \alpha (l^* - \delta)$. This implies $l' > \alpha l^*$ (the boundary case above is really the worst-case). Note that the same reasoning still applies when $x_0$ is below the blue curve --- then  $l^*$ is a negative real number and also $l$ might be negative in some cases.  (\ref{eq:inf_variance}) follows.

\ownparagraph{Proof that (\ref{eq:more}) holds}
For $\vm \in \mathtt{S}_q \subset \{ \vm \in  \meanspace_q: \| \vm - \vm^* \|_2 \geq A \}$ where $\mathtt{S}_q$ is compact, we have $\max\limits_{\vm \in \mathtt{S}_q} \frac{D(Q_{\vm^*} \| Q_{\vm^{\circ}})}{\| \vm - \vm^* \|^2_2} < \infty$ because $\frac{D(Q_{\vm^*} \| Q_{\vm^{\circ}})}{\| \vm - \vm^* \|^2_2}$ is continuous w.r.t. $\vec{\mu}$. Thus, we just need to check points that tend to the boundary of $\mathtt{M}_q$. 
For this, let $(\vm_{[m]})_{m \in \naturals}$ be a sequence in $
\{ \vm \in  \meanspace_q: \| \vm - \vm^* \|_2 \geq A \}$
such that the limit in (\ref{eq:finallimit}) below exists. 
We know by (\ref{eq:inf_variance}) that 
\begin{equation}\label{eq:finallimit}
 \lim_{m \rightarrow \infty} \frac{D(Q_{\vm^*} \| Q_{\vm^{\circ}_{[m]}})}{\| \vm_{[m]} - \vm^* \|^2_2} 
= O\left( 
\frac{\log ((\mu_2^{\circ})_{[m]} - (\mu_1^{\circ 2})_{[m]}) 
}{\| \vm_{[m]} - \vm^* \|^2_2} 
\right) +
O\left( 
\frac{
- 2 \mu_1^* (\mu_1^{\circ})_{[m]} + (\mu_1^{\circ 2})_{[m]}
}{\| \vm_{[m]} - \vm^* \|^2_2} 
\right).
\end{equation}
We need to show that for all such sequences with $\vm_{[m]}$ tending to the boundary of $\meanspace_q$, the above limit is finite. 
We first note that by (\ref{eq:inf_variance}), we cannot have
that $\mu_2^\circ - \mu^{\circ 2}_1 \rightarrow 0$, because   $\mu_2^\circ - \mu^{\circ 2}_1 \geq \alpha (x_{0,2} - x_{0,1}^2)$. Thus, we only need to consider the sequences with either  $\mu_2^\circ - \mu^{\circ 2}_1 \rightarrow \infty$ or  $|\mu_1^{\circ}| \rightarrow \infty$. Using that  $\| \vm - \vm^* \|^2_2 = (\mu_1 - \mu_1^*)^2 + (\mu_1^2 + \sigma_1^2 - \mu_1^{*2} - \sigma^{*2}_1)^2$, 
it is easily checked that (\ref{eq:finallimit}) is finite on all such sequences. 
\subsection{{\bf Plug-in} regularity conditions for Gamma model}
\begin{example}\label{example:Gamma}{\bf [Gamma]}
{\rm The argument follows analogous stages as the one for the Gaussian case, Example~\ref{example:Gaussian-Location-Scale}. Let $\althyp$ be the  $\textsc{Gamma}(\alpha,\beta)$ family with densities $q(x ;\alpha, \beta) = \frac{x^{\alpha-1} e^{-x/\beta}}{\Gamma(\alpha) \beta^\alpha}$, for $x>0$ and $\alpha, \beta>0$. In terms of the mean-value parameterization, we can parameterize this family by 
    $\vm = (\mu_1,\mu_2)$ with $\mu_1 = \psi(\alpha) + \log \beta$ and $\mu_2 = \alpha\beta$, where $\psi(\alpha)$ is the digamma function. 
    We  first claim
    \begin{equation}\label{eq:firstclaim}
    \meanspace_q = \{ (\mu_1, \mu_2): \mu_1 \in \reals, \mu_2 > e^{\mu_1}\}. 
    \end{equation}
To see this,     
rewrite $\vm = (\psi(\alpha) - \log\alpha + \log (\alpha\beta), \alpha\beta)$. For any fixed $c = \alpha\beta$, we have  that $\psi(\alpha) - \log\alpha$ is increasing and $\psi(\alpha) - \log\alpha \in (-\infty, 0), \alpha \in (0, \infty)$. Then the mean parameter space $\meanspace_q$ is located on the upper left part of $f(\alpha\beta) = e^{\alpha\beta}$, i.e. $\mu_2 = e^{\mu_1}$, which proves (\ref{eq:firstclaim}); see Figure~\ref{fig:mean-space}(b).
Next, it is well-known that 
\begin{align}\label{eq:gammakl}
D(Q_{\vm^*} \| Q_{\vm^{\circ}}) = 
\alpha^\circ \log \frac{\beta^\circ}{\beta^*} - \log\frac{\Gamma(\alpha^*)}{\Gamma(\alpha^\circ)} + (\alpha^* - \alpha^\circ)\psi(\alpha^*) - \left(\frac{1}{\beta^*} - \frac{1}{\beta^\circ} \right)\alpha^*\beta^*.
\end{align}
with $(\alpha^{\circ},\beta^{\circ})$ the parameters in the standard parameterization corresponding to $(\mu_1^{\circ}, \mu_2^{\circ})$.

Now let $\vm^{\circ} = (1-k) \vm + k x_0$. 
Further below we show
that for every  $0 < k < 1$, every  
$x_0 = (x_{0,1},x_{0,2}) \in \meanspace_q$, we have 
\begin{align}\label{eq:secondclaim}
\mu^\circ_1 - \log\mu^\circ_2 = \psi(\alpha^\circ) - \log\alpha^\circ \text{\ and\ }
0 < \alpha^\circ \leq \alpha_{x_0},    
\end{align}
where $\alpha_{x_0}$ is a constant depending on $x_0$. And 
we also show 
\begin{align}\label{eq:thirdclaim}
\mu^\circ_2 - e^{\mu^\circ_1} = (\alpha^\circ - e^{\psi(\alpha^\circ)})\beta^\circ \text{\ and\ }
\beta^\circ \geq \frac{k(x_{0,2} - e^{x_{0,1}})}{\alpha_{x_0} - e^{\psi(\alpha_{x_0})}}.
\end{align}
This provides constraints on the values  that $D(Q_{\vm^*} \| Q_{\vm^{\circ}})$ in (\ref{eq:gammakl}) can take. As we also show
below, it implies that for every $\vm^* \in \meanspace_q$, every $A > 0$, 
\begin{align}\label{neq:gamma_finite_bound}
\sup_{\vm \in \meanspace_q: \| \vm - \vm^* \|_2 > A} \frac{D(Q_{\vm^*} \| Q_{\vm^{\circ}})}{\| \vm - \vm^* \|^2_2} < \infty,
\end{align}
verifying Condition~\ref{cond:plugin}  as soon as $R$ has $5$ or more moments.}
\end{example}

\ownparagraph{Detailed proofs for Example~\ref{example:Gamma}}
To prove (\ref{eq:secondclaim}) and (\ref{eq:thirdclaim}), we first need to prove the following formulas (recall $\vm^{\circ}$ is a function of $\vm$):
\begin{align}
\sup_{\vm \in \meanspace_q} (\mu^\circ_1 - \log\mu^\circ_2) =& k (x_{0,1} - \log x_{0,2}),\label{eq:gamma_sup_alpha}\\
\inf_{\vm \in \meanspace_q} (\mu^\circ_2 - e^{\mu^\circ_1}) =& k (x_{0,2} - e^{x_{0,1}}).\label{eq:gamma_inf_beta}
\end{align}

\ownparagraph{Proof of (\ref{eq:gamma_sup_alpha})}
Reasoning analogously to the Gaussian case, Example~\ref{example:Gaussian-Location-Scale}, we first consider the case that $\vec{\mu}$ is located on the boundary $\partial \meanspace_q$ of ${\meanspace}_q$.

In this limiting case, for every fixed $0 < k < 1$, we get 
\begin{align*}
\delta_1 := \mu^\circ_1 - \log\mu^\circ_2
= (1-k) \log\mu_2 + k\cdot x_{0,1} - \log\left((1-k)\mu_2 + k\cdot x_{0,2} \right).
\end{align*}
Take the first derivative of $\delta_1$ w.r.t. $\mu_2$, we get $$
\argmax\limits_{\vm\in \partial \meanspace_q} (\mu^\circ_1 - \log\mu^\circ_2) = (\log x_{0,2}, x_{0,2}).$$ 
This implies (\ref{eq:gamma_sup_alpha}) holds (with $\meanspace_q$ replaced by $\partial \meanspace_q$), by plugging $(\log x_{0,2}, x_{0,2})$ into the above formula.
In the case of a $\vm'$ not located on the boundary (an instance is shown with corresponding $\vm'^{\circ}$ in 
Figure~\ref{fig:mean-space}(b)), we may consider the  line
connecting $x_0$ and $\vm'$ in the figure; it intersects the boundary of $\meanspace_q$ at some point $\vm^b$. Letting $\vm'^\circ = (1-k')\vm^b + k' x_0$, we have $k' > k$ since $\vm'^\circ = (1-k)\vm' + k\cdot x_0$. Using the above worst-case result, we get, in terms of Figure~\ref{fig:mean-space} (b),
$$
\delta_1' \leq k' (x_{0,1} - \log x_{0,2})
<
k (x_{0,1} - \log x_{0,2}),
$$
where the last inequality holds since $x_{0,1} - \log x_{0,2} < 0$. (\ref{eq:gamma_sup_alpha}) now follows.

\ownparagraph{Proof of (\ref{eq:gamma_inf_beta})}
We still consider the worst-case, $\vm$ on the boundary, and the case with $\vm'$ in the interior of $\meanspace_q$, as above.
In the boundary case, for every fixed $0 < k < 1$, we have:
\begin{align*}
\delta_2 := \mu^\circ_2 - e^{\mu^\circ_1}
= (1-k) e^{\mu_1} + k\cdot x_{0,2} - \exp\left((1-k)\mu_1 + k\cdot x_{0,1} \right).
\end{align*}
Take the first derivative of $\delta_2$ w.r.t. $\mu_1$, we get $\argmin_{\vm \in \partial \meanspace_q}  (\mu^\circ_2 - e^{\mu^\circ_1}) = (x_{0,1}, e^{x_{0,1}})$. Then this implies (\ref{eq:gamma_inf_beta}) holds (with $\meanspace_q$ replaced by $\partial \meanspace_q$) by plugging $(x_{0,1}, e^{x_{0,1}})$ into the above formula.
The case with $\vm' \in \meanspace_q$, not on the boundary, can now be proved in the same way as (\ref{eq:gamma_sup_alpha}); we omit further details.

\ownparagraph{Proof of (\ref{eq:secondclaim}) and (\ref{eq:thirdclaim})}
$\psi(\alpha^\circ) - \log\alpha^\circ$ is increasing and $\psi(\alpha^\circ) - \log\alpha^\circ = \mu^\circ_1 - \log\mu^\circ_2 \leq  k (x_{0,1} - \log x_{0,2})$ from (\ref{eq:gamma_sup_alpha}), which implies that $\alpha^{\circ}$ is bounded by some $\alpha_{x_0}$ depending on $x_0$. Further, $(\alpha^\circ - e^{\psi(\alpha^\circ)})\beta^\circ = \mu^\circ_2 - e^{\mu^\circ_1} \geq k (x_{0,2} - e^{x_{0,1}})$ from (\ref{eq:gamma_inf_beta}), $\alpha^\circ - e^{\psi(\alpha^\circ)}$ is increasing and $0 < \alpha^\circ \leq \alpha_{x_0}$, which implies $\beta^\circ \geq \frac{k(x_{0,2} - e^{x_{0,1}})}{\alpha_{x_0} - e^{\psi(\alpha_{x_0})}}$.

\ownparagraph{Proof of (\ref{neq:gamma_finite_bound})}
We have 
$||\vec{\mu} - \vec{\mu}^*||_2^2 = (\psi(\alpha) + \log\beta - \psi(\alpha^*) - \log\beta^*)^2 + (\alpha\beta - \alpha^* \beta^*)^2$. 
Reasoning analogously to the Gaussian case (\ref{eq:finallimit}), 
we need to show that for any sequence $(\vm_{[m]})_{m \in \naturals}$
with all $\vm_{[m]} \in \{ \vm \in \meanspace_q: \| \vm - \vm^* \|_2 \geq A \}$, 
that tends to the boundary of $\meanspace_q$, we have
\begin{align*}
\lim_{m \rightarrow \infty} \frac{D(Q_{\vm^*} \| Q_{(\vm^{\circ})_{[m]}})}{\| \vm_{[m]} - \vm^* \|^2_2} < \infty.
\end{align*}
We already know $0 < \alpha^\circ \leq \alpha_{x_0}$ and $\beta^\circ \geq \frac{k(x_{0,2} - e^{x_{0,1}})}{\alpha_{x_0} - e^{\psi(\alpha_{x_0})}}$, so the above limit satisfies, for some constant $C$,
\begin{align}
& \lim_{m \rightarrow \infty} \frac{D(Q_{\vm^*} \| Q_{(\vm^{\circ})_{[m]}})}{\| \vm_{[m]} - \vm^* \|^2_2} = 
O\left( \frac{\alpha^{\circ}_{[m]} \left( \log \beta^{\circ}_{[m]} + C \right) + \log \Gamma(\alpha^{\circ}_{[m]}) }{(\psi(\alpha_{[m]}) - \log \alpha_{[m]} + \log \beta_{[m]})^2 + (\alpha_{[m]} \beta_{[m]})^2} \right)\label{eq:lastone}  \\ 
& = \nonumber
O\left(
\frac{\alpha_{[m]}^{\circ} \log \beta_{[m]}^{\circ} }{(\psi(\alpha_{[m]})  + \log \beta_{[m]})^2}
\right) +
O\left(
\frac{\log \Gamma(\alpha_{[m]}^{\circ}) }{(\psi(\alpha_{[m]})  + \log \beta_{[m]})^2}
\right)
,
\end{align}
where we also plugged in (\ref{eq:gammakl}) and the definition of $\vm$ in terms of $\alpha,\beta$.
Any sequence tending to the boundary has  $\mu_2^\circ - e^{\mu^{\circ}_1} \rightarrow 0$ or  $\mu_2^\circ \rightarrow \infty$ or $\mu_1^\circ \rightarrow \infty$, i.e. $\alpha^\circ \rightarrow 0$ or  $\alpha^\circ \rightarrow \infty$ or  $\beta^\circ \rightarrow 0$ or $\beta^\circ \rightarrow \infty$. 
Again using that $\alpha^{\circ}$ is bounded above and $\beta^{\circ}$ is bounded below, 
we just need to check the cases (a) $\beta^{\circ} \rightarrow \infty, \alpha^{\circ}  \rightarrow \alpha^*$ with $\alpha^* \not \in \{0,\infty\}$, (b) $\beta^{\circ} \rightarrow \beta^*$, with $\beta^* \not \in \{0,\infty\}$, $\alpha^{\circ}  \rightarrow 0$, and (c) 
$\beta^{\circ} \rightarrow \infty, \alpha^{\circ}  \rightarrow 0$. 

In Case (a), $\beta^\circ \rightarrow \infty$,  and then $\mu^\circ_1 = \psi(\alpha^\circ) + \log\beta^\circ \rightarrow \infty$, so $\psi(\alpha) + \log \beta = \mu_1 \rightarrow \infty$ since $\mu^\circ_1 = (1-k)\mu_1 + k\cdot x_{0,1}$ and then the first term in (\ref{eq:lastone}) is $o(1)$ and the second $O(1)$. 
In Case (b), the first term in (\ref{eq:lastone}) is $o(1)$ and we can evaluate the second term by L'Hôpital's rule:
$$\lim\limits_{\alpha^\circ \rightarrow 0} \frac{\log\Gamma(\alpha^\circ)}{\psi(\alpha^\circ)^2} = \lim\limits_{\alpha^\circ \rightarrow 0} \frac{\psi(\alpha^\circ)}{2 \psi(\alpha^\circ) \cdot \psi'(\alpha^\circ)} = \lim\limits_{\alpha^\circ \rightarrow 0} \frac{1}{2 \psi'(\alpha^\circ)} < \infty,$$  since  $\psi'(\alpha^\circ) = \sum\limits_{k=0}^{\infty}\frac{1}{(\alpha^\circ + k)^2}$.
It remains to show Case (c); but this follows by combining Case (a) and Case (b) above.

\section{Additional simulation results for the setting of  Section~\ref{sec:simulations}}\label{app:Simulations}
The figures in this appendix are described in the main text, in Section~\ref{sec:simulations}.
\begin{figure}[h]
    \flushleft
    \subfigure[$\mu^* = (0.62, 0.38)$, $D(Q^*|| P_{\vec{\mu}^*}) \approx 0.0606$]{\includegraphics[width=0.4\linewidth]{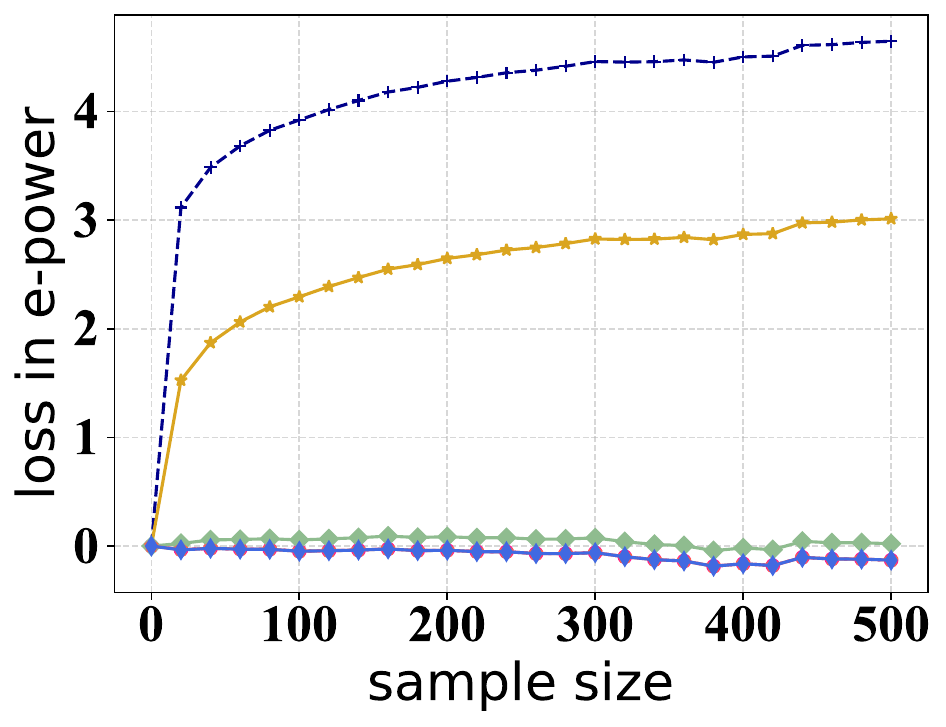}}
    \subfigure[$\mu^* = (0.83, 0.65)$, $D(Q^*|| P_{\vec{\mu}^*}) \approx 0.0458$]{ \includegraphics[width=0.575\linewidth]{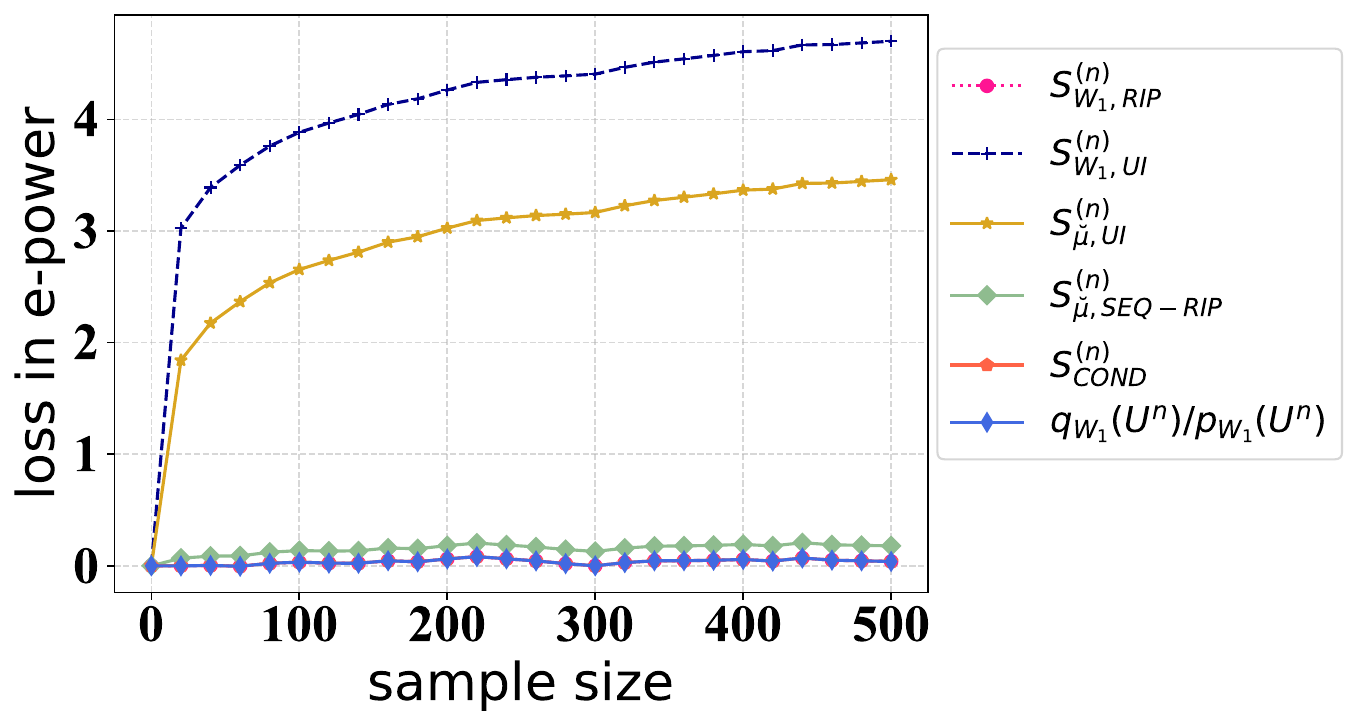}}
    
    \caption{$\delta = 1$. E-powers are computed for multiple (pseudo) e-variables. The data sequences are generated from $Q^*$ with mean $\mu^*$.}
    \label{fig:Appendix_simulation_results_delta1}
\end{figure}

\begin{figure}[h]
    \centering
    \subfigure[$\mu^* = (0.79, 0.34)$, $D(Q^*|| P_{\vec{\mu}^*}) \approx 0.2161$]{\includegraphics[width=0.65\linewidth]{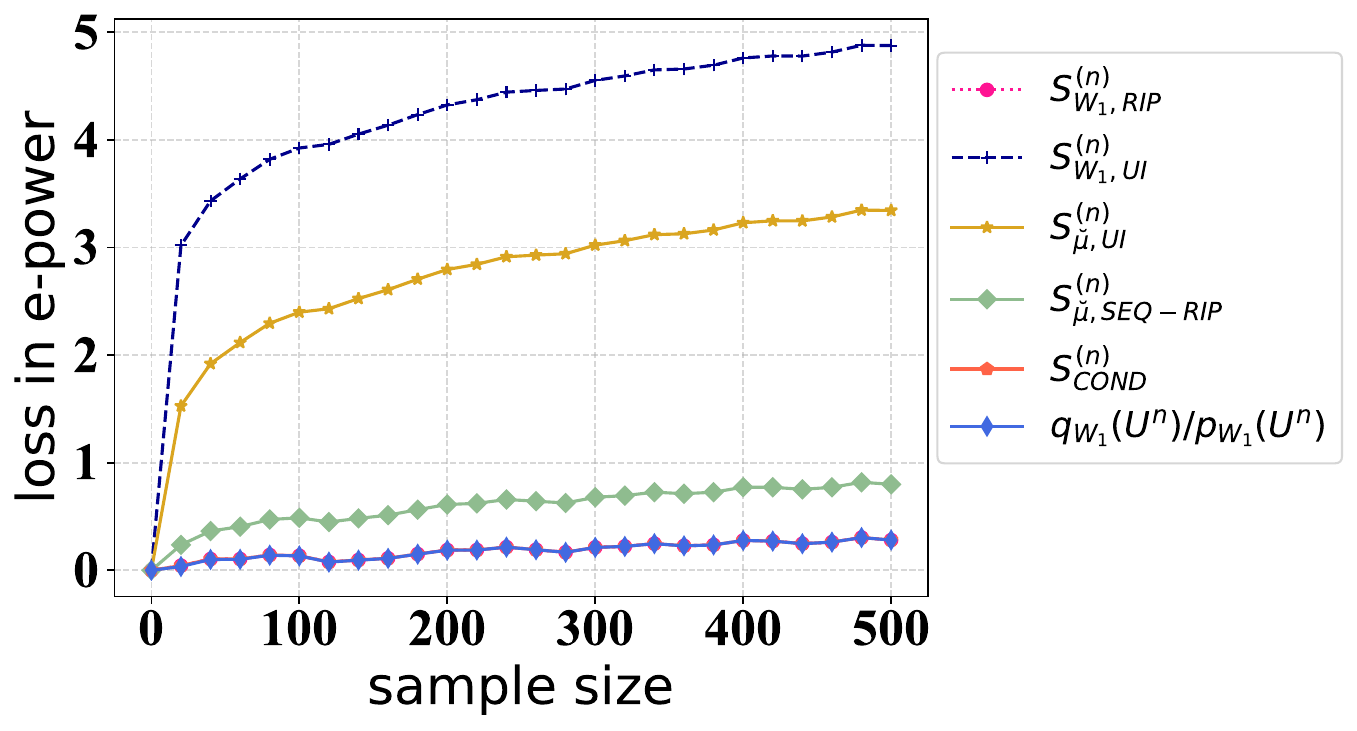}}
    \subfigure[$\mu^* = (0.46, 0.10)$, $D(Q^*|| P_{\vec{\mu}^*}) \approx 0.1673$]{\includegraphics[width=0.49\linewidth]{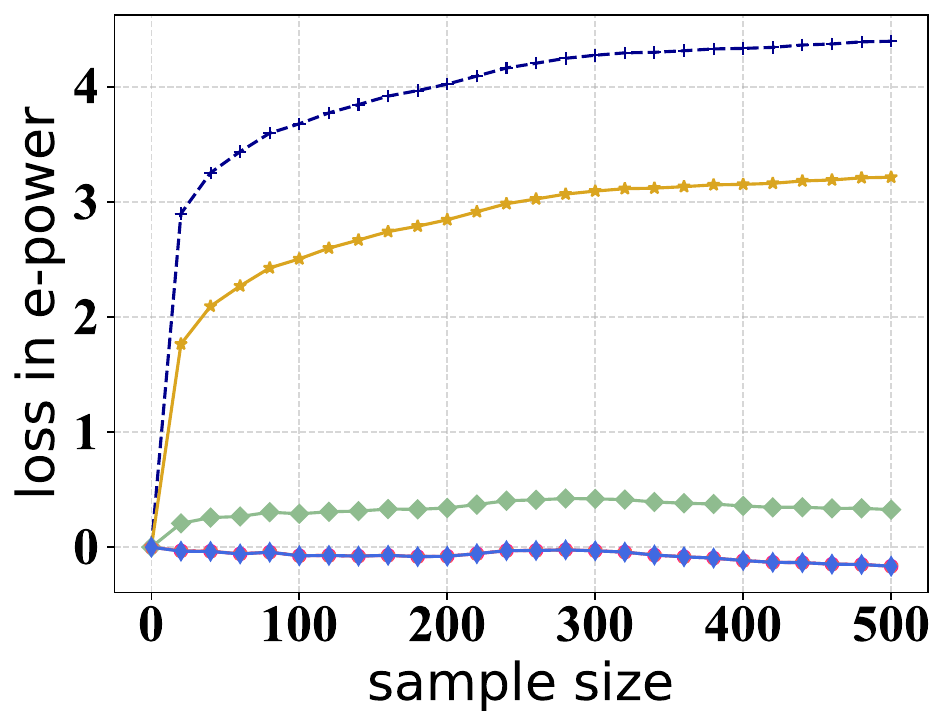}}
    \subfigure[$\mu^* = (0.97, 0.81)$, $D(Q^*|| P_{\vec{\mu}^*}) \approx 0.0712$]{\includegraphics[width=0.49\linewidth]{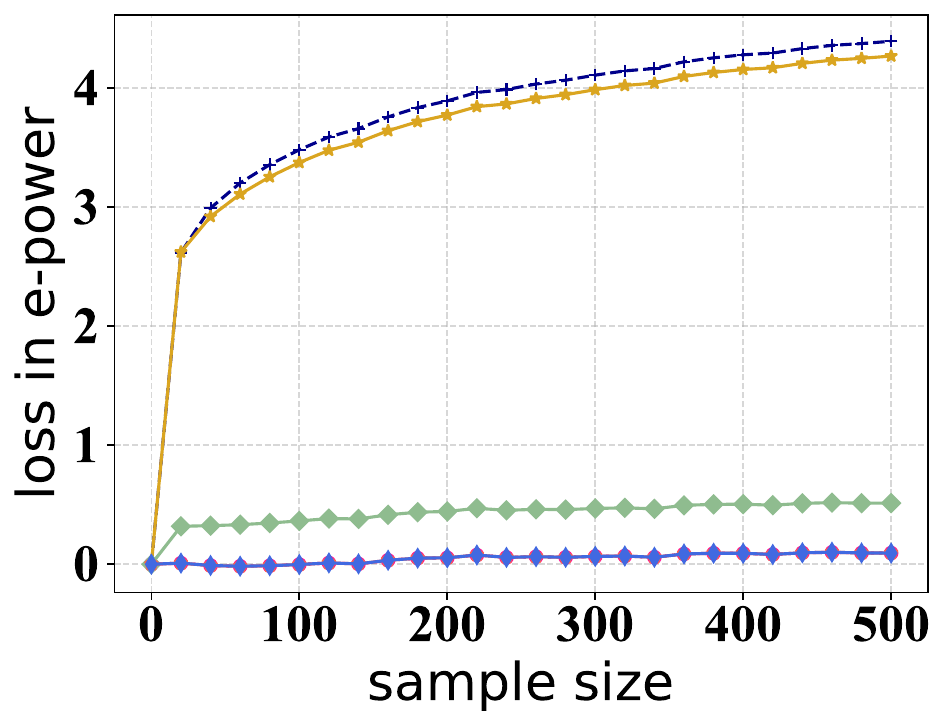}}
    
    \caption{$\delta = 2$. E-powers are computed for multiple (pseudo) e-variables. The data sequences are generated from $Q^*$ with mean $\mu^*$.}
    \label{fig:simulation_results_delta2}
\end{figure}

\section{Additional material on e-process-ness for Section~\ref{sec:conclusion}}\label{app:eprocess}
To define an e-process, suppose that ${\cal U} = \reals^d$ for some $d > 0$ and let $\sigma(U^n)$ be the $\sigma$-algebra generated by $U_1, \ldots, U_n$. We call $(\sigma(U^n))_{n \in \naturals}$ the {\em filtration induced by the data\/} or `data filtration' for short. 
An {\em e-process\/}  {\em relative to the data filtration\/} is a random process $(S^{(n)})_{n \in \naturals}$ defined relative to the data filtration such that for each stopping time $\tau$ (defined again relative to the data filtration), $S^{(\tau)}$ is an e-variable. We refer to \cite{ramdas2023savi} for more background on e-processes. 

The following result (proof further below) enables us to determine some scenarios in which either $S_{\cond}$ or $S_{\rip}$  do not provide e-processes.
\begin{proposition}
    \label{prop:notaprocess}
Suppose $\nulhyp = \{P_{\theta}: \theta \in \Theta_0 \}$ is a set of distributions (not necessarily an  exponential family) for $U$, extended to $n$ outcomes by independence, where $P_{\theta}$ has density $p_{\theta}$, and $Q$ is a distribution for random process $(U^n)_{n \geq 0}$ 
with densities $(q(U^n))_{n \geq 0}$. Let $W_1, W_2, \ldots$ be a sequence of distributions on $\Theta$ such that for every $n$, we have that $S^{(n)} := q(\sn{U}{n})/p_{W_{n}}(\sn{U}{n})$ is an e-variable for samples of size $n$.
Suppose that  for some $n$, 
\begin{equation}\label{eq:qunequal}
Q(P_{W_n}(U^n)\neq P_{W_{n+1}}(U^n)) > 0.\end{equation}
Then the process $(S^{(n)})_{n \geq 0}$ is not an e-process relative to the data filtration.
\end{proposition}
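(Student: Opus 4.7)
The plan is to argue by contradiction: assume $(S^{(n)})_{n \geq 0}$ is an e-process relative to the data filtration. The key tool is the Fubini extension of the e-process property to convex mixtures. For any prior $W$ on $\Theta$, let $\bar{P}_W$ denote the exchangeable law in which $\theta \sim W$ and then $U_1, U_2, \ldots$ are i.i.d.\ $\sim P_\theta$; under $\bar{P}_W$ the marginal density of $U^m$ is $p_W(U^m)$. Swapping integrals gives $E_{\bar{P}_W}[S^{(\tau)}] = \int E_{P_\theta}[S^{(\tau)}]\,dW(\theta) \leq 1$ for every data-adapted stopping time $\tau$. Moreover, by direct computation, $E_{\bar{P}_{W_m}}[S^{(m)}] = \int q(U^m)\,dU^m = 1$ exactly.

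The first step is to plug in the two-valued stopping time $\tau_A := n\,\mathbf{1}_A + (n+1)\,\mathbf{1}_{A^c}$ with arbitrary $A \in \sigma(U^n)$. Using the change-of-measure identity and the fact that $q/p_{W_{n+1}}$ integrates to $1$ under $\bar{P}_{W_{n+1}}$, routine integration yields
\[
E_{\bar{P}_{W_{n+1}}}[S^{(\tau_A)}] = E_Q[\mathbf{1}_A r(U^n)] + Q(A^c), \qquad r(U^n) := \frac{p_{W_{n+1}}(U^n)}{p_{W_n}(U^n)}.
\]
The hypothesis $Q(r \neq 1) > 0$ splits into two subcases: (a) $Q(\{r > 1\}) > 0$ and (b) $Q(\{r < 1\}) > 0$.

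In subcase (a) I will take $A = \{r > 1\} \in \sigma(U^n)$. Then $\mathbf{1}_A(r-1)$ is nonnegative and strictly positive on a set of positive $Q$-measure, so $E_Q[\mathbf{1}_A r] > Q(A)$, which gives $E_{\bar{P}_{W_{n+1}}}[S^{(\tau_A)}] > 1$. Applying Fubini in the reverse direction, $\int E_{P_\theta}[S^{(\tau_A)}]\,dW_{n+1}(\theta) > 1$, hence there must exist $\theta^* \in \mathrm{supp}(W_{n+1}) \subseteq \Theta_0$ with $E_{P_{\theta^*}}[S^{(\tau_A)}] > 1$, contradicting the e-process property for the null distribution $P_{\theta^*} \in \nulhyp$.

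In subcase (b) the same bound under $\bar{P}_{W_{n+1}}$ is only $\leq 1$, so I will instead use the dual stopping time $\tau'_B := (n+1)\,\mathbf{1}_B + n\,\mathbf{1}_{B^c}$ paired with the mixture $\bar{P}_{W_n}$. An analogous calculation, but now exploiting $E_{\bar{P}_{W_n}}[S^{(n)}] = 1$ exactly, produces
\[
E_{\bar{P}_{W_n}}[S^{(\tau'_B)}] = E_Q\bigl[\mathbf{1}_B\,\rho(U^{n+1})\bigr] + Q(B^c), \qquad \rho(U^{n+1}) := \frac{p_{W_n}(U^{n+1})}{p_{W_{n+1}}(U^{n+1})}.
\]
Choosing $B = \{r < 1\} \in \sigma(U^n)$ and using the factorization $\rho(U^{n+1}) = r(U^n)^{-1}\cdot \bigl(p_{W_n}/p_{W_{n+1}}\bigr)(U_{n+1}\mid U^n)$, I would apply the tower property and a conditional Jensen inequality on $Q\mid U^n$ to lower-bound $E_Q[\rho \mid U^n]$ strictly above $r(U^n)^{-1} \geq 1$ on $B$, thereby producing the required strict excess and, by Fubini, a contradicting $\theta^{**} \in \mathrm{supp}(W_n)$.

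The hard part will be making subcase (b) work: unlike (a), the discrepancy between $p_{W_n}$ and $p_{W_{n+1}}$ on $\sigma(U^n)$ does not by itself produce a strict violation at the level of $U^n$-marginals, and one is forced to extract the strict inequality from the \emph{conditional} ratio of densities at $U_{n+1}$ given $U^n$. The essential subtlety is that $\int p_{W_k}(U_{n+1}\mid U^n)\,dU_{n+1}=1$ under the \emph{base} measure and under $\bar{P}_{W_{n+1}}\!\mid\!U^n$ (respectively $\bar{P}_{W_n}\!\mid\!U^n$), but not under $Q\mid U^n$, so the relevant Jensen step must be performed carefully against $q(U_{n+1}\mid U^n)$, treating the support interactions between $Q$, $\bar{P}_{W_n}$, and $\bar{P}_{W_{n+1}}$ with care; once the conditional comparison is established the remainder is routine change-of-measure bookkeeping.
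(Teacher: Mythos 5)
Your subcase (a) is correct and already does essentially all the real work: the identity
\[
E_{\bar{P}_{W_{n+1}}}\bigl[S^{(\tau_A)}\bigr] \;=\; E_Q\bigl[\mathbf{1}_A\, r(U^n)\bigr] + Q(A^c),
\qquad r := p_{W_{n+1}}/p_{W_n},
\]
is derived correctly, and choosing $A=\{r>1\}$ gives a strict violation whenever $Q(r>1)>0$, which by Fubini produces a $\theta^*$ with $E_{P_{\theta^*}}[S^{(\tau_A)}]>1$.

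Subcase (b), however, has a genuine gap that cannot be repaired along the route you propose. After writing $\rho(U^{n+1})=r(U^n)^{-1}\cdot\sigma(U_{n+1}\mid U^n)$ with $\sigma := p_{W_n}(\cdot\mid U^n)/p_{W_{n+1}}(\cdot\mid U^n)$, you need $E_Q[\sigma\mid U^n]>1$ on a $Q$-positive subset of $B=\{r<1\}$. But $\sigma$ is a ratio of two \emph{null} conditional densities, neither of which is $q(\cdot\mid U^n)$, so $E_Q[\sigma\mid U^n]$ is just some positive ${\cal F}_n$-measurable quantity; no Jensen-type inequality forces it above $1$. In fact, if you re-run your own $\tau_A$ argument under $\bar{P}_{W_n}$ and let $A$ range over $\sigma(U^n)$, the e-process hypothesis directly yields $E_Q[\rho(U^{n+1})\mid U^n]\le 1$ $Q$-a.s., which combined with $\rho(U^n)\ge 1$ on $B$ actually gives $E_Q[\sigma\mid U^n]\le 1$ there --- the wrong direction for your plan. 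So subcase (b) cannot close as sketched.

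The good news is that subcase (b) is unnecessary. Once subcase (a) forces $p_{W_{n+1}}(U^n)\le p_{W_n}(U^n)$ $Q$-a.s., the paper's standing assumption (Section~1.1) that $Q$ and all $P\in\nulhyp$ are mutually absolutely continuous with $\nu$ upgrades this to $p_{W_{n+1}}\le p_{W_n}$ $\nu$-a.e. Since both integrate to one, the difference $p_{W_n}-p_{W_{n+1}}\ge 0$ has integral zero, hence $p_{W_n}=p_{W_{n+1}}$ $\nu$-a.e., hence $Q$-a.s., contradicting (\ref{eq:qunequal}). This total-mass step replaces your entire subcase (b).

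Finally, the paper's own proof takes a structurally different route from yours: it never constructs a data-dependent stopping time. It splits on whether $E_Q[p_{W_n}(U^n)/p_{W_{n+1}}(U^n)]\le 1$ and, in that case, applies Jensen's inequality (convexity of $x\mapsto 1/x$ under $Q$) to conclude $E_Q[p_{W_{n+1}}(U^n)/p_{W_n}(U^n)]>1$, which equals $E_{P_{W_{n+1}}}[S^{(n)}]$ and therefore already contradicts the \emph{e-variable} hypothesis on $S^{(n)}$, showing that case cannot arise; the complementary case is disposed of by appeal to the e-process definition. Your approach splits instead on the pointwise sign of $r-1$ and exploits an explicit two-valued stopping time. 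These are genuinely different decompositions; yours is arguably more concrete and makes the role of optional stopping visible, but it needs the total-mass step above, not the conditional-Jensen step, to finish.
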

Note that
(\ref{eq:qunequal}) implies that, but is stronger than,
$W_n \neq W_{n+1}$. 
The proposition implies that if the process of RIPr e-variables $(S^{(n)}_{Q,\rip})_{n =1,2,\ldots}$ takes on the form $(P_{W_n}(U^n))_{n = 1,2,\ldots}$ for some sequence $W_1, W_2, \ldots$ as above, then the RIPr does not yield an e-process relative to the data filtration. In particular, this will be the case in the Gaussian anti-simple cases of Theorem~\ref{thm:simpleH1gauss} and Theorem~\ref{thm:compositeH1gauss}. In these cases, $S_{\rip}$ is equal to $S_{\cond}$, implying that $S_{\cond}$ also does not provide an e-process. Although we have no formal proof, the proof for the anti-simple case for general exponential family nulls with simple alternative (anti-simple case in Theorem~\ref{thm:simpleH1general}) suggests that the same holds for $S_{\cond}$ and $S_{\rip}$ in the {\em strict\/} anti-simple setting, since then the RIPr prior $W_0$ can be approximated by a Gaussian prior with variance of order $\asymp 1/n$, i.e. again changing over time.

However, an interesting phenomenon happens in the case of composite alternative $\althyp$, both in the Gaussian and in the general case. 
First, in the anti-simple setting in the general case (Theorem~\ref{thm:compositeH1general}, Part 4, anti-simple case) the RIPr prior $W_0$ is approximated by taking a prior {\em identical} to the prior $W_1$ that was put on $\althyp$, and this prior does {\em not\/} change with the sample size. Similarly, in the Gaussian case with composite alternative, the exact RIPr prior changes with $n$, but only very minimally so. This suggests that somehow, with the conditional and RIPr e-variables we `almost' obtain an e-process, leading perhaps to `approximate' handling of optional stopping. Investigating this further and formalizing `approximate optional stopping' is a main avenue for further research. 

To strengthen this conjecture further, consider the sequence of e-variables $(S^{(n)}_{\haar})_{n \geq 0}$ based on the right-Haar (uniform) prior on $\meanspace_p$ and $\meanspace_q$, as in Section~\ref{sec:gausscomposite}. 
It is known that $(S^{(n)}_{\haar})_{n \geq 0}$ defines an e-process, but relative to a {\em coarser\/} filtration generated by $(V^n)_{n\geq 0}$ with $V^n = (X_{(2)}- X_1, \ldots, X_{(n)} - X_1)$, which still allows optional stopping to all practical intents and purposes \cite{HendriksenHG21,perez2024estatistics}. Theorem~\ref{thm:compositeH1gauss} implies, by the equality of $S_{\haar}$ and $S_{\cond}$, that, at least in the anti-simple case, when $\rip = \cond$, since the prior $W_0$ in (\ref{eq:W0depends}) depends on $n$, by Proposition~\ref{prop:notaprocess}, then $S_{\haar}$ does {\em not\/} define an e-process for the original, data-based filtration; but the fact that it {\em is\/} an e-process in a coarser filtration suggests that perhaps something similar holds (``being an approximate e-process in a coarser filtration''), for other conditional e-variables as well; the insights on asymptotic anytime-validity by \cite{waudby2021time} may be of use here. 

Finally, we note that  there do exist exponential family settings in which $S_{\cond}$ provides an e-process relative to the data filtration. This happens in the  settings of Proposition~\ref{prop:seqcondwins}, i.e. in Example~\ref{ex:twosample}, in which we are in the nonstrict simple and anti-simple settings at the same time.

\ownparagraph{Proof of Proposition~\ref{prop:notaprocess}}
If $S^{(n)}_{n \geq 0}$ were an e-process relative to null hypothesis $\nulhyp$, then, as is immediate from the definition of e-process, it must also be an e-process relative to the simple null hypothesis $\{P_{W_i}\}$ for any $i \geq 0$. 
    Thus, if  ${\mathbb E}_{U^n \sim P_{W_{n}}} [q(U^n)/p_{W_{n+1}}(U^n)] > 1$, then $(S^{(n)})_{n\geq 0}$ cannot form an e-process and the result is proved. Therefore, we may  assume that  ${\mathbb E}_{U^n \sim P_{W_{n}}} [q(U^n)/p_{W_{n+1}}(U^n)] \leq 1$, or equivalently,
    ${\mathbb E}_{U^n \sim Q} \left[\frac{p_{W_n}(U^n)}{p_{W_{n+1}}(U^n)}\right] \leq 1.$
    Using Jensen's inequality, which by our assumption (\ref{eq:qunequal}) is strict,  now gives 
    $${\mathbb E}_{U^n \sim Q} \left[\frac{p_{W_{n+1}}(U^n)}{p_{W_{n}}(U^n)} \right] > 1/\left( {\mathbb E}_{U^n \sim Q} \left[\frac{p_{W_n}(U^n)}{p_{W_{n+1}}(U^n)}\right] \right)  \geq 1, 
    $$ 
    or equivalently, ${\mathbb E}_{U^n \sim P_{W_{n+1}}} [q(U^n)/p_{W_{n}}(U^n)] > 1$, and the result again follows.
\end{supplement}
\end{document}